\titleformat*{\section}{\large\bfseries}
\titleformat*{\subsection}{\normalsize\bfseries}
\titleformat*{\subsubsection}{\normalsize\bfseries}
\newcolumntype{C}{>{$}c<{$}}
\renewcommand{\@dotsep}{10000}
\numberwithin{equation}{section}
\newcommand{\C}{{\mathbb C}}
\newcommand{\N}{{\mathbb N}}
\newcommand{\R}{{\mathbb R}}
\newcommand{\cS}{{\mathcal S}}
\newcommand{\SU}{\mathrm{SU}}
\newcommand{\f}{\frac}
\def\be#1\ee{\begin{equation}#1\end{equation}}
\def\beq#1\eeq{\begin{eqnarray}#1\end{eqnarray}}
\def\bea#1\eea{\begin{align}#1\end{align}}
\def\de{\mathrm{d}}
\newtheorem{Definition}{Definition}[section]
\newtheorem{Theorem}[Definition]{Theorem}
\newtheorem{Lemma}[Definition]{Lemma}
\newtheorem{Proposition}[Definition]{Proposition}
\newtheorem{Corollary}[Definition]{Corollary}
\newtheorem{Remarks}[Definition]{Remarks}
\newtheorem{Remark}[Definition]{Remark}
\newtheorem{Example}[Definition]{Example}
\let\oldbibliography\thebibliography
\renewcommand{\thebibliography}[1]{\oldbibliography{#1}
\setlength{\itemsep}{0\baselineskip}}
\begin{document}

\title{\Large{\textbf{\sffamily Transition Amplitudes in 3D Quantum Gravity:\\ Boundaries and Holography in the Coloured Boulatov Model}}}

\author{\sffamily
Christophe Goeller$^{1,}$\footnote{Electronic address: \href{mailto:c.goeller@physik.uni-muenchen.de}{c.goeller@physik.uni-muenchen.de}} \,,
Daniele Oriti$^{1,}$\footnote{Electronic address: \href{mailto:daniele.oriti@physik.lmu.de}{daniele.oriti@physik.lmu.de}} \,
and
Gabriel Schmid$^{1,}$\footnote{Electronic address: \href{mailto:gabriel.schmid@physik.uni-muenchen.de}{gabriel.schmid@physik.uni-muenchen.de}}
}
\date{\small{\textit{
$^1$Arnold Sommerfeld Center for Theoretical Physics,\\ Ludwig-Maximilians-Universität München, Theresienstraße 37, 80333 München, Germany\\}}}


\newgeometry{total={176mm,280mm}}

\maketitle

\vspace{-5mm}

\begin{abstract}
    We consider transition amplitudes in the coloured simplicial Boulatov model for three-dimensional Riemannian quantum gravity. First, we discuss aspects of the topology of coloured graphs with non-empty boundaries. Using a modification of the standard rooting procedure of coloured tensor models, we then write transition amplitudes systematically as topological expansions. We analyse the transition amplitudes for the simplest boundary topology, the $2$-sphere, and prove that they factorize into a sum entirely given by the combinatorics of the boundary spin network state and that the leading order is given by graphs representing the closed $3$-ball in the large $N$ limit. This is the first step towards a more detailed study of the holographic nature of coloured Boulatov-type GFT models for topological field theories and quantum gravity.
\end{abstract}

\thispagestyle{empty}
\setcounter{page}{1}

\tableofcontents

\restoregeometry
\newpage
\section*{Introduction}
\addcontentsline{toc}{section}{\hspace{15pt}Introduction}

\vspace*{-0.1cm}
The holographic principle, the study of boundary symmetries, boundary conditions and boundary states have become one of the main points of interest over the last years for many approaches to quantum gravity. The holographic principle, historically motivated from the study of the entropy of black holes \cite{BekensteinBH,HawkingsBH}, in particular from the discovery of the area law, and formulated in its original form by L.~Susskind \cite{Susskind} and G.~'t Hooft \cite{Hooft}, refers to the idea of fully describing a theory in a region of spacetime in terms of a dual theory solely living on its boundary. One of the prime examples is the famous AdS/CFT correspondence \cite{AdSCFT}, which conjectures a duality between (quantum) gravity on $d$-dimensional (asymptotically) anti-de Sitter (AdS) space and a conformal field theory (CFT) on its $(d-1)$-dimensional flat boundary at spatial infinity.
\bigskip

Quantum gravity in three dimensions turns out to be particularly useful when studying holographic dualities. It is an example of a topological field theory (classically as well as quantum mechanically, it only deals with constant curvature geometries, in absence of matter) and it is well-known that it can be formulated as a Chern-Simons theory \cite{ChernSimonsGravity}, or equivalently, as a BF-theory \cite{HorowitzBFTheory}. Due to the absence of local degrees of freedom, it provides us with a simple set-up for studying the interplay between the choice of boundary states and holographic dualities. Recently, there have been many works regarding quasi-local holographic dualities in the context of the Ponzano-Regge spin foam model for three-dimensional quantum gravity \cite{PonzanoReggeModel,BarrettPonzanoRegge,FreidelPonzanoRegge1,FreidelPonzanoRegge2,FreidelPonzanoRegge3}. The term quasi-local means that one is looking at a finite, bounded region of spacetime instead of an asymptotic one, as in the standard AdS/CFT correspondence. Spin foam models are background-independent approaches to quantum gravity, formulated as state sum models, in which one assigns local weights to discrete building blocks of spacetime. The Ponzano-Regge model mentioned above is a particular instance of a spin foam model \cite{BaezBFTheory,PerezSM} for three-dimensional Riemannian quantum gravity without a cosmological constant and can be understood as being the discretization of the quantum partition function of three-dimensional gravity formulated as a BF-theory \cite{FreidelPRDisc}. The model was in fact the first spin foam model ever proposed and has also been related to other approaches to 3d quantum gravity, such as loop quantum gravity (LQG) \cite{NouiPerez} and Chern-Simons theory \cite{OoguriSasakura,FreidelPonzanoRegge2}. Furthermore, it corresponds to the limit of the Turaev-Viro model \cite{TuraevViroModel} for vanishing cosmological constant \cite{BarrettPonzanoRegge}. The Turaev-Viro model, in turn, computes the Reshetikhin-Turaev invariant \cite{TuraevReshetikhin,RobertsTuraevViro}, which reflects the relation between three-dimensional quantum gravity and Chern-Simons theory \cite{Witten1,Witten2,Witten3}. With respect to holographic dualities, it has been shown that the Ponzano-Regge model on a 3-ball is dual to two copies of the two-dimensional Ising model on its boundary 2-sphere, in the sense that the partition function of the Ponzano-Regge model is proportional to the square of the boundary Ising partition function\cite{PRBallIsing1,PRBallIsing2}. In a recent series of paper \cite{TorusPR1,TorusPR2,TorusPR3,TorusPR4,ChristopheThesis}, the Ponzano-Regge model on the solid torus with boundary given by the $2$-torus was systematically studied and related to the BMS group \cite{BMS1,BMS2} --the asymptotic symmetry group of continuum three-dimensional asymptotically flat gravity-- for a boundary state encoding the intrinsic geometry of a solid torus. Both these works provide us with clear insights into the holographic nature of the Ponzano-Regge model.
\bigskip 

When discussing transition amplitudes in quantum gravity models, which are the physical scalar products between two spatial boundary topologies, it is natural to ask whether one should also include a sum over all topologies in addition to a sum over geometries, in order to treat also the topology as a dynamical variable. There are several arguments for the necessity of doing so \cite{TopChangeHorowitz,StringTheoryTopChange,Geons}. The next question however is how to do so in a systematic and controllable manner, in a given quantum gravity framework. In the context of spin foam models, initially defined on a given cellular complex, such a sum over (bulk) topologies can be defined by introducing the corresponding \textit{Group Field Theory} (GFT)  \cite{FreidelGFTOverview,OritiGFTApproach,OritiMicroscopicDynamics}. From the physical point of view, a GFT can be understood as the completion of a given spin foam model in the sense that it gives us a prescription on how to systematically organize the spin foam amplitudes corresponding to different complexes, for different topologies, but also for given topology, since in dimensions higher than three, where gravity is not topological, a restriction to a given complex implies a truncation to a subset of quantum gravity degrees of freedom, that has to be removed to define the full theory. GFTs are quantum field theories \textit{of} spacetime, instead of \textit{on} spacetime. In more technical terms, GFTs are generically non-local field theories defined on (copies of) a Lie group (or quantum group, homogeneous space, etc.) and can be viewed as generalizations of matrix models \cite{MatrixModels1,MatrixModels2} of (pure) two-dimensional quantum gravity to higher dimensions. They can also be understood as generalizations of random tensor models \cite{TensorModels1,TensorModels2,TensorModels3}, enriched with group theoretic data, which allows for imposing additional symmetry properties of their fields and for richer dynamical amplitudes.\footnote{Tensor models and GFTs can be seen as specific examples of models within the common framework of {\it Tensorial Group Field Theories} (TGFTs), defined to encompass all models with a tensorial field (regardless of the domain) and combinatorially non-local interactions (regardless of the specific action), such that the perturbative expansion produces a sum over cellular complexes as Feynman diagrams.}  Furthermore, the quantum states of GFT models are in fact (generalised) tensor networks, thus GFTs can be understood as defining a dynamics (probability distributions) for tensor networks, which in turn have proven themselves very useful to study holographic properties of quantum gravity models \cite{Chirco:2017wgl,Chirco:2019dlx,Colafranceschi:2021acz,Colafranceschi:2022dig,Colafranceschi:2022ual}.
Last but not least, GFT can also be seen as a second quantized formulation of LQG \cite{GFTandLQG1,GFTandLQG2}.
\bigskip

In this paper, we aim at setting up a formalism for studying holographic dualities in Boulatov-Ooguri type GFT models \cite{BoulatovModel,OoguriModel}. Focusing on the Boulatov model --the completion of the Ponzano-Regge model-- describing three-dimensional gravity, we will construct and classify amplitudes for boundary states describing the trivial topology --the sphere. It will allow us to exhibit a clear holographic behaviour of the model, in the sense that the amplitudes will only depend on boundary data. This is an important step in the context of discrete models for quantum gravity with spacetime emerging from more fundamental degrees of freedom. It expands insights from LQG and spin foam models into a broader framework, opening the road towards a better understanding of dualities in GFTs and tensor network models.
\bigskip

A first step towards a study of holographic properties of such models is to define boundary observables and transition amplitudes. For doing so, the {\it coloured} version of the Boulatov model \cite{GurauColouredGFT,BosonicGFT} is most convenient. A colouring of tensor models and GFTs has been proven to be useful for two main reasons. First, the colouring allows full control over the topology of (complexes dual to) the Feynman diagrams of the models. Second, these Feynman diagrams are then dual to manifolds or normal pseudomanifolds (topologies which contain at most isolated and point-like singularities). In other words, coloured GFTs do not produce more singular topologies, which are generically present in uncoloured models and which tend to dominate in power counting \cite{GurauColouredGFTPseudo}. These features also permit the definition of the large $N$ limit \cite{GurauLargeN1,GurauLargeN2,GurauLargeN3} of all such GFT (and tensor) models, the analytic study of the critical behaviour and continuum limit \cite{CritTM}, as well as to derive key universality results showing that the tensors are distributed by a Gaussian in the large $N$ limit \cite{CritTM2}. It has also been observed that colouring might be a crucial ingredient in order to define a suitable notion of a discrete counterpart (or, better, remnant) of diffeomorphism invariance \cite{GFTDiff1,GFTDiff2} in GFT, as a field theoretic counterpart of what has been done in simplicial gravity, e.g.~\cite{DiffeoSM,DiffeoReview}. 

While there is an extensive literature about the topology of \textit{closed} coloured graphs in the context of tensor models and GFTs \cite{GurauBook,GurauColouredTensorModelsReview,GurauColouredTensorModelsReview2,Bonzom:2010zh}, much less is known about \textit{open} coloured graphs, i.e.~graphs admitting external legs. In \cite{GurauBoundaryGraph}, the notion of a boundary graph and its corresponding complex was introduced. A further analysis of open coloured graphs and their degree of divergence can be found for example in \cite{DegreeOpenGraphs,GFTRen,SenseTM} and other works on renormalization in group field theory. However, it turns out that the topology of coloured graphs is not only studied in the context of quantum gravity, but also in \textit{Crystallization Theory} \cite{GagliardiBoundaryGraph,CTReview,Review2018}, a branch of geometric topology. Many result have been obtained in the crystallization theory literature, pioneered by M.~Pezzana, C.~Gagliardi, M.~Ferri and others in the late 1960s and 1970s. In order to find suitable tools for defining transition amplitudes, we will also give a detailed review of techniques developed in crystallization theory for the particular case of general open coloured graphs representing pseudomanifolds with non-empty boundaries, which can be viewed as generalizations of the well-known techniques used in coloured tensor models and GFTs to graphs with external legs.
\bigskip

This paper is organized as follows: In Section \ref{SectionI:Model}, we introduce the coloured (bosonic, simplicial) Boulatov model for three-dimensional quantum gravity and briefly review the different representations of its Feynman graphs. In particular, we systematically define both closed and open coloured graphs and explain their simplicial interpretation. We discuss the Feynman amplitudes corresponding to closed (vacuum) diagrams and briefly review their relation to the Ponzano-Regge spin foam model. This section can also be skipped by readers familiar with general notions of coloured graphs and coloured tensor models/GFTs.

In Section \ref{SectionII:CryTheo}, we turn our attention to open coloured graphs, i.e.~Feynman graphs of the coloured Boulatov model with external legs. We mainly discuss aspects of the topology of coloured graphs with non-empty boundary, based on the literature on crystallization theory. More precisely, we look at the bubble structure of these graphs, the relation between the boundary graph and the boundary complex, as well as moves allowing for transforming one graph into another (in a topology preserving way).

Next, we discuss transition amplitudes of the coloured Boulatov model in Section \ref{SectionIII:TransAmpl}. First of all, we define suitable boundary observables out of spin network states living on some fixed boundary graph representing a fixed topology. Using these observables, we then define transition amplitudes, which are given by a sum over all bulk topologies with respect to the fixed boundary graph. Afterwards, we rewrite this sum as a topological expansion, using a similar rooting procedure as introduced by R.~Gurau to study the large $N$ limit of the free energy.

In Section \ref{SectionIV:Sphere}, we apply the formalism to the simplest boundary topology, the $2$-sphere. We show that the transition amplitude factorizes into a sum entirely given by the combinatorics of the boundary spin network state. More precisely, we see that every manifold with spherical boundary has a contribution proportional to the spin network evaluation. We end this section by quickly discussing the case of the boundary $2$-torus to illustrate why the previous result is not just a consequence of the topological nature of the theory (which would diminish its general interest), but it is due to the simple topology of the chosen boundary, so that one can expect a similar holographic behaviour, but more intricate details of the map, for more involved topologies.

Finally, in Section \ref{SectionV:LeadingOrder}, we show that the leading order contribution to the transition amplitude of some spherical boundary graph, when restricted to manifolds, is given by certain graphs representing the closed $3$-balls. We show that these graphs generalize the melonic graphs from the large $N$ limit of coloured tensor models, in the sense that they are exactly those graphs for which a suitable generalization of the Gurau degree to open graphs vanishes.

In Appendix \ref{AppendixA:PseudoManifolds}, the reader can find a short discussion of pseudomanifolds and an overview of the terminology used for simplicial complexes. Furthermore, we give some further details on the topology of coloured graphs with non-empty boundaries by reviewing general existence theorems of crystallization theory and by discussing a connected sum operation in Appendix \ref{AppendixB:CryTheo}. Appendix \ref{AppendixC:SolidTorusGraphs} contains instead a derivation of a family of open coloured graphs representing the solid torus.
%
%
%
%
%
\section{The Coloured Boulatov Model}
\label{SectionI:Model}

\textit{This section mainly introduces notations, definitions and standard properties of coloured GFT model and their Feynman graphs, and can be safely skipped for readers familiar with the subject. For the notation of a particular set of coloured graphs, which we will use throughout the present paper, see Definition \ref{def:notation_graph}.}
\bigskip

The Boulatov model \cite{BoulatovModel} is defined using a single ($\R$-valued) bosonic scalar field. The colour extension of the model  \cite{GurauColouredGFT,BosonicGFT} were shown to be very useful for studying, for example, exact power counting \cite{GurauColouredGFTPseudo}, the large $N$ limit \cite{GurauLargeN1,GurauLargeN2,GurauLargeN3} and the critical behaviour and continuum limit \cite{CritTM}. In this paper, we consider the bosonic version of the model \cite{BosonicGFT,GFTDiff2,GFTVertex}. The bosonic model lacks an $\mathrm{SU}(4)$ colour symmetry of the fermionic one \cite{GurauColouredGFT,BosonicGFT} but this does not change the combinatorial structure of the Feynman diagrams, nor their amplitudes.
\bigskip

In this section, we start with the definition of the model, then we discuss the structure of its Feynman diagrams with and without external legs and discuss the amplitudes of closed (vacuum) diagrams. Furthermore, we review briefly the relation to the Ponzano-Regge spin foam model \cite{PonzanoReggeModel,BarrettPonzanoRegge,FreidelPonzanoRegge1,FreidelPonzanoRegge2,FreidelPonzanoRegge3}.

\subsection{Definition of the Model}
Let $\{\varphi_{l}\}_{l=0}^{3}\subset L^{2}(\SU(2)^{3},\de g;\C)$, with $\de g$ the normalized $\SU(2)$ Haar measure, be four bosonic and $\C$-valued scalar fields defined on three copies of $\SU(2)$. They are labelled by a ``\textit{colour index}'' $l\in\{0,\dots,3\}$ and we assume that they are $\SU(2)$ gauge invariant, i.e.
\begin{align}
    \forall h\in\SU(2) \quad \varphi_{l}(hg_{1},hg_{2},hg_{3})=\varphi_{l}(g_{1},g_{2},g_{3})
\end{align}
for all $g_{1},g_{2},g_{3}\in\SU(2)$ and $l\in\{0,1,2,3\}$.\footnote{The choice of imposing either right or left translation is just a convention as one can always redefine $\varphi_{l}\to\widetilde{\varphi}_{l}$, where $\widetilde{\varphi}_{l}(g_{1},g_{2},g_{3}):=\varphi_{l}(g_{1}^{-1},g_{2}^{-1},g_{3}^{-1})$ are now right-invariant fields. The action does not change under this transformation, i.e.~$\cS_{\lambda}[\varphi_{l},\overline{\varphi}_{l}]=\cS_{\lambda}[\widetilde{\varphi}_{l},\overline{\widetilde{\varphi}}_{l}]$ by unimodularity of the Haar measure of compact Lie groups.} Note that we do not assume any supplementary invariance of the fields. In particular, we do not assume any action of the permutation group (or any of its subgroups) leaving them invariant. Such assumption often appears in the uncoloured case to guarantee that only \textit{orientable} simplicial complexes are produced \cite{DePietriPetronio}. In the coloured case however, this is already guaranteed by takings the fields to be complex. Additionally, the colouring allows to describe the Feynman diagrams as bipartite edge-coloured graphs.
We define the $\mathrm{SU}(2)$ delta function at some cutoff\footnote{See \cite{Freidel:2009hd,Carrozza:2016vsq} and references therein for a discussion about this cut-off and its application for integration purposes.} $N \in \N/2$ using the Plancherel decomposition following \cite{OoguriSasakura}
\begin{align}
    \SU(2)\ni g\mapsto\delta^{N}(g):=\sum_{j\in\mathbb{N}/2, j\leq N}(2j+1)\chi^{j}(g) \; ,
\end{align}
where $\chi^{j}$ denote the characters of the unitary and irreducible representations of $\mathrm{SU}(2)$, labelled by spins $j\in\N/2$. The action of the coloured Boulatov model is then defined by
\begin{equation}
    \begin{aligned}
        \label{BoulatovAction}
        \cS_{\lambda}[\varphi_{l},\overline{\varphi}_{l}]
        :=
        & \sum_{l=0}^{3}\int_{\mathrm{SU}(2)^{3}}\bigg (\prod_{i=1}^{3}\mathrm{d}g_{i}\bigg )\,\vert\varphi_{l}(g_{1},g_{2},g_{3})\vert^{2} \\
        & -\frac{\lambda}{\sqrt{\delta^{N}(\mathds{1})}}\int_{\mathrm{SU}(2)^{6}}\bigg (\prod_{i=1}^{6}\mathrm{d}g_{i}\bigg )\,\varphi_{0}(g_{1},g_{2},g_{3})\varphi_{1}(g_{3},g_{4},g_{5})\varphi_{2}(g_{5},g_{2},g_{6})\varphi_{3}(g_{6},g_{4},g_{1}) + c.c.\; ,
    \end{aligned}
\end{equation}
where $\mathds{1}$ denotes the identity of $\SU(2)$. The scaling in the action coincides with \cite{GurauLargeN1,GurauLargeN2,GurauLargeN3,GFTVertex,Caravelli,Caravelli2} and is chosen in order for maximally divergent graphs to have a uniform degree of divergence at all orders. Indeed, providing this scaling, the degree of divergence of Feynman graphs is independent under a certain type of transformation, called ``internal proper $1$-dipole moves'', as we will discuss later on (see Subsection \ref{Subse:Rooting}).
\bigskip


The geometric interpretation of the action \eqref{BoulatovAction} is shown in figure \ref{GeoInt}. First, note that each field $\varphi_{l}(g_{1},g_{2},g_{3})$ encodes the kinematics of a quantum triangle described by three dual edges labelled by $g_{1},g_{2},g_{3}$ \cite{OritiMicroscopicDynamics,Barbieri}. In other words, the GFT field $\varphi_{l}$ lives on the space of possible geometries of the triangle. Having four distinct fields, we have four different triangles, labelled by the field colour index $l$. The four kinetic terms represent the gluing of two triangles of the same colour while the two interaction terms describe the gluing of four triangles along their edges such that they form a tetrahedron ($3$-simplex). We therefore have two different types of tetrahedra, one for the $\varphi_{l}$-fields and one for the $\overline{\varphi}_{l}$-fields, corresponding to the two different choices of orientation of a tetrahedron.

\begin{figure}[H]
    \captionsetup[subfigure]{labelformat=empty,position=top}
    \centering
    \subfloat{\includegraphics[clip,trim=0cm -0.5cm 0 0cm,scale=1.2]{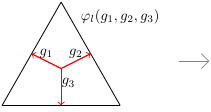}}\hspace{0.5cm}
    \subfloat[$\mathcal{L}_{\lambda,\mathrm{int}}\text{[}\varphi_{l}\text{]}\propto\varphi_{l}^{4}$]{\includegraphics[clip,trim=1cm 1.8cm 0 0,scale=0.3]{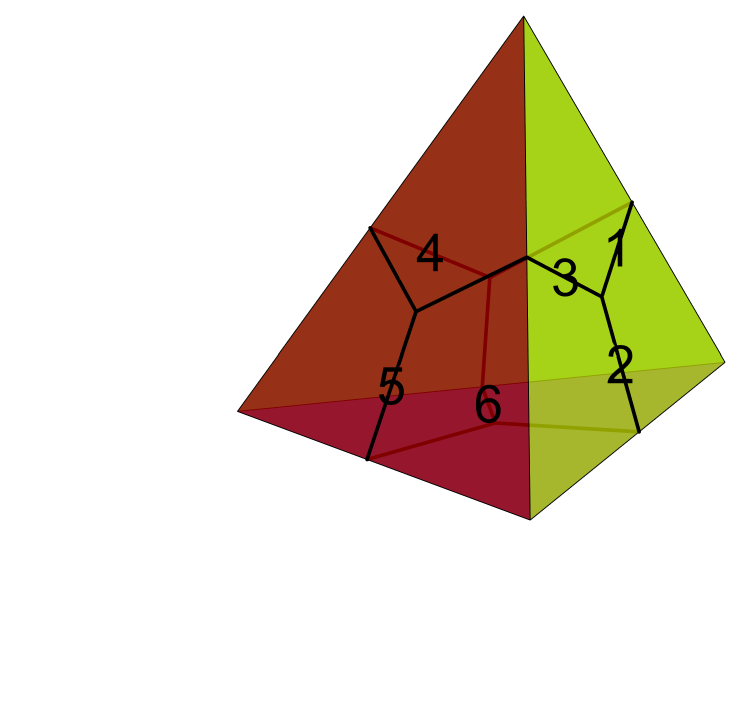}}\hspace{-0.5cm}
    \subfloat[$\mathcal{L}_{\lambda,\mathrm{int}}\text{[}\overline{\varphi}_{l}\text{]}\propto\overline{\varphi}_{l}^{4}$]{\includegraphics[clip,trim=1cm 1.8cm 0 0cm,scale=0.3]{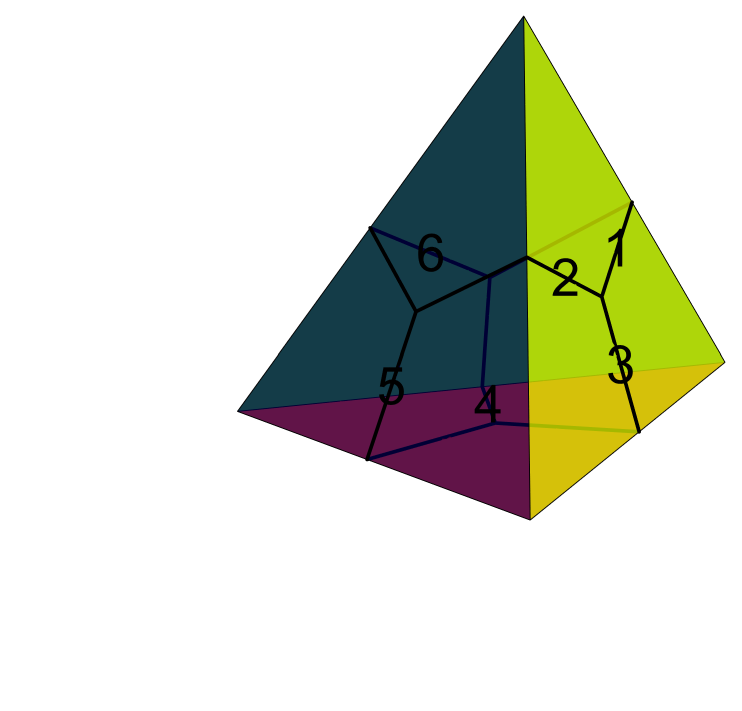}}
    \vspace{-0.5cm}
    \caption{The fields $\varphi_{l}$ describe triangles, equipped with a corresponding colour index $l$, and the interaction terms produce tetrahedra with opposite orientation.\label{GeoInt}}
\end{figure}

\subsection{Feynman Graphs: Closed and Open Coloured Graphs}
As usual in GFT, Feynman graphs can be represented as ``stranded diagrams'' \cite{FreidelGFTOverview,OritiGFTApproach,OritiMicroscopicDynamics}. Figure \ref{fig:StrandedDiagrams} shows the two interaction vertices together with their geometrical interpretation.

\vspace*{-0.5cm}
\begin{figure}[H]
    \captionsetup[subfigure]{labelformat=empty}
    \centering
    \subfloat[$\mathcal{L}_{\lambda,\mathrm{int}}\text{[}\varphi_{l}\text{]}\propto\varphi_{l}^{4}$]{\includegraphics[width=0.15\textwidth]{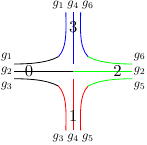}}
    \subfloat{\includegraphics[trim=1cm 2cm 0 0,width=0.2\textwidth]{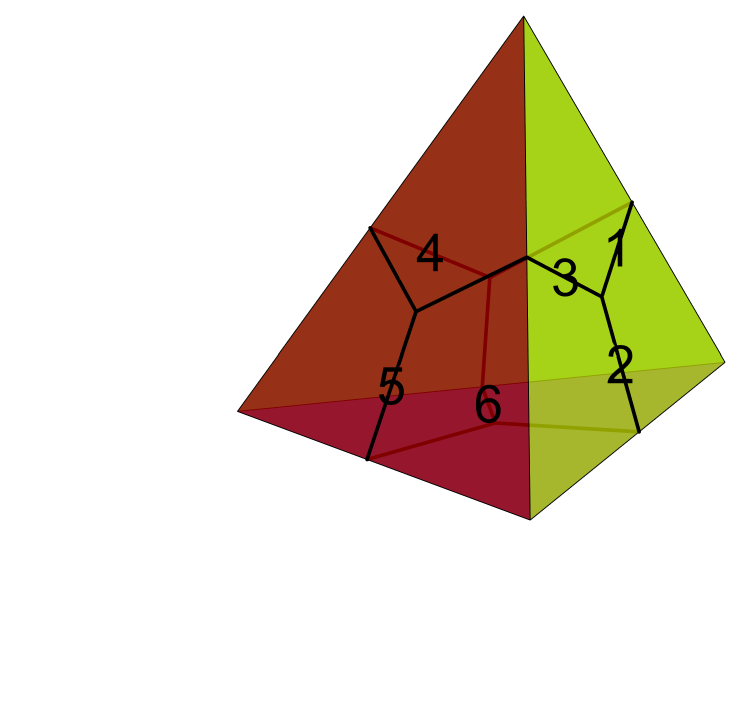}}\hspace{0.5cm}
    \subfloat[$\mathcal{L}_{\lambda,\mathrm{int}}\text{[}\overline{\varphi}_{l}\text{]}\propto\overline{\varphi}_{l}^{4}$]{\includegraphics[width=0.15\textwidth]{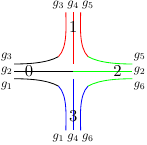}}
    \subfloat{\includegraphics[trim=1cm 2cm 0 0,width=0.2\textwidth]{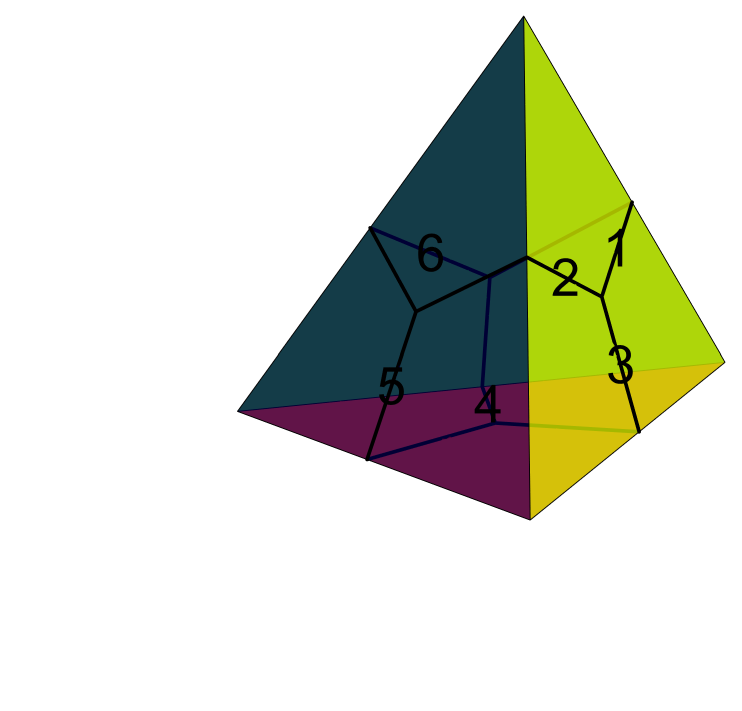}}\hspace{0.5cm}
    \caption{Interaction vertices of the coloured Boulatov model drawn in their stranded diagram representation and their corresponding geometric interpretation.}
    \label{fig:StrandedDiagrams}
\end{figure}

Each strand of colour $i$ represents a triangle of colour $i$ and a free line of colour $ij$ represents an edge, which connects the triangles of colours $i$ and $j$. Since we have not assumed any additional symmetry properties of the field arguments, the structure of the kinetic term tells us that we can glue two faces of the same colour belonging to two different tetrahedra only in a \textit{unique} way: in the stranded picture, a free line with colours $i,j\in\{0,1,2,3\}$ is always glued to a free line with the same pair of colours. Geometrically, it means that the colouring of faces of a tetrahedron induces a colouring of its vertices, obtained by labelling each vertex with the colour of the opposite triangle in the tetrahedron. The gluing of two faces is then such that all the colours of vertices agree. The stranded structure of the Feynman diagrams is therefore rigid and there are no twists within the strands such that we can collapse each strand to a single thin edge and represent Feynman graphs equivalently as edge-coloured graphs, see figure \ref{fig:edge_coloured_graphs}.

\begin{figure}[H]
    \captionsetup[subfigure]{labelformat=empty}
    \centering{\includegraphics[width=0.7\textwidth]{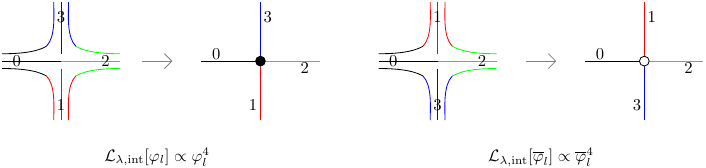}}
    \caption{Feynman graphs of the coloured Boulatov model can equivalently be viewed as coloured graphs.}
    \label{fig:edge_coloured_graphs}
\end{figure}

In this graphical representation, tetrahedra are represented as vertices and the coloured edges of the graph represent the corresponding coloured triangles. Whenever two vertices are connected by an edge of colour $i$, the corresponding tetrahedra are glued together on their faces of colour $i$ in the unique way explained above. Let us discuss the structure of these graphs in a more systematic way. To start with, let us briefly set up the following terminology from graph theory, which we will use throughout the paper:

\begin{itemize}
    \item A ``\textit{graph}'' is always meant to be a multigraph without loops. More precisely, this means that a graph is defined as a pair $\mathcal{G}=(\mathcal{V}_{\mathcal{G}},\mathcal{E}_{\mathcal{G}})$, where $\mathcal{V}_{\mathcal{G}}$ is a set called the ``\textit{vertex set}'' and where $\mathcal{E}_{\mathcal{G}}$ is a multiset containing sets of the form $\{v,w\}\in\mathcal{V}_{\mathcal{G}}\times\mathcal{V}_{\mathcal{G}}$, called the ``\textit{edge set}''. Allowing $\mathcal{E}_\mathcal{G}$ to be a multiset means that two vertices can be connected by several edges. However, note that an edge is by definition a proper set, which means that we do not allow for tadpole lines, i.e.~edges starting and ending at the same vertex.
    \item A graph $\mathcal{G}$ is called ``\textit{bipartite}'' if there is a partition $\mathcal{V}_{\mathcal{G}}=V_{\mathcal{G}}\cup\overline{V}_{\mathcal{G}}$ such that every edge connects a vertex in $V_{\mathcal{G}}$ with a vertex in $\overline{V}_{\mathcal{G}}$. If in addition $\vert V_{\mathcal{G}}\vert=\vert \overline{V}_{\mathcal{G}}\vert$, the graph is called ``\textit{balanced}''.
    \item A ``\textit{$(d+1)$-edge-colouring}'' is a map $\gamma:\mathcal{E}_{\mathcal{G}}\to\mathcal{C}_{d}$, where $\mathcal{C}_{d}$ is some set with cardinality $\vert\mathcal{C}_{d}\vert=d+1$, called the ``\textit{colour set}''. In the following, we will choose $\mathcal{C}_{d}:=\{0,\dots,d\}$ for definiteness. An edge-colouring is called ``\textit{proper}'' if $\gamma(e_{1})\neq\gamma(e_{2})$ for all edges $e_{1},e_{2}\in\mathcal{E}_{\mathcal{G}}$ incident to the same vertex $v\in\mathcal{V}_{\mathcal{G}}$
\end{itemize}

For the sake of generality, in the remaining of this section, we will consider the general $d$-dimensional case unless specified otherwise. The following discussion also applies to higher-dimensional Boulatov-Ooguri type models. Closed (vacuum) Feynman diagrams of the coloured Boulatov model are ``closed coloured graphs''.

\begin{Definition}[Closed Coloured Graphs]
    \label{ClosedColouredGraph} A ``closed $(d+1)$-coloured graph'' is a pair $(\mathcal{G},\gamma)$, where $\mathcal{G}$ is a $(d+1)$-valent and bipartite graph $\mathcal{G}=(\mathcal{V}_{\mathcal{G}},\mathcal{E}_{\mathcal{G}})$ and where $\gamma:\mathcal{E}_{\mathcal{G}}\to\mathcal{C}_{d}$ is a proper $(d+1)$-edge colouring of $\mathcal{G}$.
\end{Definition}

\begin{Remarks}
    \label{Remarks:Balanced}
    \begin{itemize}
        \item[]\item[(a)]In the following, we usually omit writing the colouring map $\gamma$ explicitly and we simply call $\mathcal{G}$ a closed $(d+1)$-coloured graph.
        \item[(b)]A closed $(d+1)$-coloured graph $\mathcal{G}$ is always balanced, i.e.~$\vert V_{\mathcal{G}}\vert=\vert \overline{V}_{\mathcal{G}}\vert$. To see this, observe that the graph obtained by deleting all the edges of colours $i\neq 0$ results into a disconnected graph containing pairs of vertices, which are connected by an edge of colour $0$. In other words, vertices always come in pairs.
    \end{itemize}
\end{Remarks}

The following figure shows four examples of closed $(3+1)$-coloured graphs representing $3$-manifolds \cite{GagliardiBoundaryGraph,LinsCrystallization}.

\begin{figure}[H]
    \captionsetup[subfigure]{labelformat=empty}
    \centering
    \subfloat[$S^{3}$]{\includegraphics[width=0.1\textwidth]{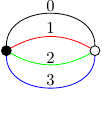}}\hspace{1cm}
    \subfloat[$S^{1}\times S^{2}$]{\includegraphics[width=0.13\textwidth]{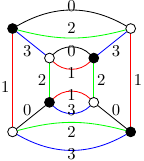}}\hspace{1cm}
    \subfloat[$\mathbb{R}P^{3}$]{\includegraphics[width=0.15\textwidth]{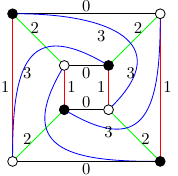}}\hspace{1cm}
    \subfloat[$L(3,1)$]{\includegraphics[width=0.15\textwidth]{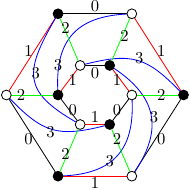}}
    \caption{Four closed $(3+1)$-coloured graphs representing the manifolds $S^{3}$, $S^{1}\times S^{2}$, $\mathbb{R}P^{3}$ and $L(3,1)$. \label{Closed3Manifolds}}
\end{figure}

In order to define transition amplitudes, we also have to discuss open (non-vacuum) Feynman graphs, i.e.~Feynman graphs, which admit external legs.

\begin{Definition}[Open Coloured Graphs]
    \label{OpenColouredGraph}
    An open $(d+1)$-coloured graph is a finite, bipartite and proper $(d+1)$-edge-coloured graph $\mathcal{G}=(\mathcal{V}_{\mathcal{G}},\mathcal{E}_{\mathcal{G}})$ with the following extra property: the vertex set admits a decomposition $\mathcal{V}_{\mathcal{G}}=\mathcal{V}_{\mathcal{G},\mathrm{int}}\cup\mathcal{V}_{\mathcal{G},\partial}$, where $\mathcal{V}_{\mathcal{G},\mathrm{int}}$ consists of $(d+1)$-valent vertices, called ``internal vertices'', and where $\mathcal{V}_{\mathcal{G},\partial}$ consists of $1$-valent vertices, which we call ``boundary vertices''.
\end{Definition}

As a consequence, the edge set of an open $(d+1)$-coloured graph $\mathcal{G}$ can be decomposed as $\mathcal{E}_{\mathcal{G}}=\mathcal{E}_{\mathcal{G},\mathrm{int}}\cup\mathcal{E}_{\mathcal{G},\partial}$, where edges in $\mathcal{E}_{\mathcal{G},\mathrm{int}}$, called ``\textit{internal edges}'', connect two internal vertices and an edge in $\mathcal{E}_{\mathcal{G},\partial}$ --an \textit{external leg}-- connects an internal vertex with a boundary vertex.

\begin{Remarks}
    \begin{itemize}
        \item[]
        \item[(a)]An open coloured graph is in general not balanced. As an example, take the open $(d+1)$-coloured graph consisting of a single $(d+1)$-valent vertex with $(d+1)$ external legs, which represents a single $d$-simplex.
        \item[(b)]There are other conventions for open graphs in the literature. Some authors define open graphs to be ``pregraphs'', in which external legs are defined to be half-edges, i.e.~they do not end at a $1$-valent vertex (e.g.~in \cite{SenseTM}). Furthermore, open graphs in crystallization theory are usually defined without external legs at all, i.e.~they define graphs with two types of vertices: ``Internal'' $(d+1)$-valent vertices and vertices with valency $\leq d$, which they then call ``boundary vertices'' \cite{GagliardiBoundaryGraph,Gagliardi87}.
    \end{itemize}
\end{Remarks}

Having defined the notion of coloured graphs, we can now define the corresponding simplicial complex. For the terminology and notation used for complexes and PL-manifolds, see Appendix \ref{AppendixA:PseudoManifolds}. For completeness, we summarize the construction in the following definition\footnote{Strictly speaking, complexes dual to coloured graphs are \textit{pseudo(simplicial)-complexes} \cite{SeifertTopology}, since two $d$-simplices can share more than one face. As usual in the GFT literature, we won’t make such a distinction and just speak about “simplicial complexes”}:

\begin{Definition}
    Let $\mathcal{G}=(\mathcal{V}_{\mathcal{G}},\mathcal{E}_{\mathcal{G}})$ be some open or closed $(d+1)$-coloured graph. Then we define its dual simplicial complex $\Delta_{\mathcal{G}}$ in the following way:
    \begin{itemize}
        \item[(1)]Assign a $d$-simplex $\sigma_{v}$ to each vertex $v\in\mathcal{V}_{\mathcal{G}}$ and colour the $(d-1)$-faces of $\sigma_{v}$ by $d+1$ colours. This induces a vertex colouring, where each vertex is labelled by the colour of the $(d-1)$-face on the opposite.
        \item[(2)]If two vertices $v,w$ in $\mathcal{G}$ are connected by an edge of colour $i\in\mathcal{C}_{d}$, we glue the two $d$-simplices together along their $(d-1)$-face of colour $i$ in the unique ways such that all the colours of vertices agree.
    \end{itemize}
\end{Definition}

The underlying graph of some open $(d+1)$-coloured graph is nothing else than the internal dual $1$-skeleton of the simplicial complex $\Delta_{\mathcal{G}}$. The boundary dual $1$-skeleton can be read off as follows \cite{GagliardiBoundaryGraph,GurauBoundaryGraph}:

\begin{Definition}[Boundary Graph]
    Let $\mathcal{G}$ be an open $(d+1)$-coloured graph. Then we define the ``boundary graph'' $\partial\mathcal{G}$ as follows: there is a vertex in $\partial\mathcal{G}$ for each external leg in $\mathcal{G}$ and each vertex has a colour coming from the colour of the corresponding external leg. Two vertices of $\partial\mathcal{G}$ are connected by a bicoloured edge of colour $ij$ whenever there is a bicoloured path in $\mathcal{G}$ with colours $i,j$ starting and ending at the corresponding external legs.
\end{Definition}

The following figure shows an example of an open $(3+1)$-coloured graph together with its boundary graph $\partial\mathcal{G}$ and its simplicial complex $\Delta_{\mathcal{G}}$.\footnote{We will omit drawing $1$-valent boundary vertices in open graphs in order to make the concept of external legs more visible.}
\vspace{-0.8cm}

\begin{figure}[H]
    \captionsetup[subfigure]{labelformat=empty}
    \centering
    \subfloat[$\mathcal{G}$]{\includegraphics[width=0.18\textwidth]{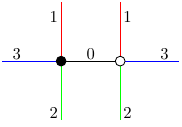}}\hspace{0.5cm}
    \subfloat[$\partial\mathcal{G}$]{\includegraphics[width=0.23\textwidth]{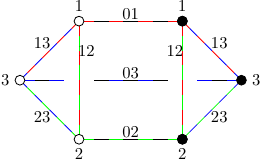}}\hspace{0.5cm}
    \subfloat[$\Delta_{\mathcal{G}}$]{\includegraphics[width=0.23\textwidth]{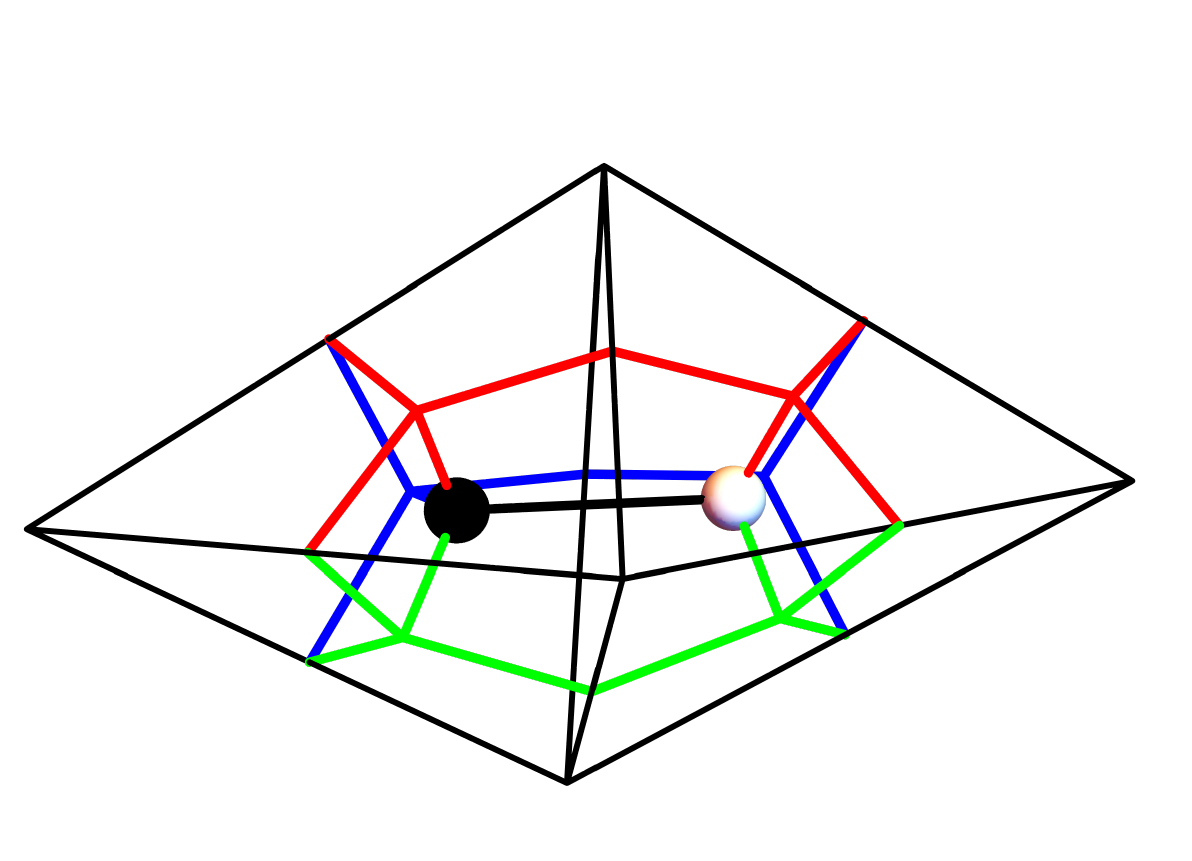}}
    \caption{An open $(3+1)$-coloured graph $\mathcal{G}$ with its boundary graph $\partial\mathcal{G}$ and its corresponding simplicial complex $\Delta_{\mathcal{G}}$ (drawn with its dual $1$-skeleton).\label{OpenGraphFig}}
\end{figure}

\begin{Remark}
    \label{RemarkBoundaryGraph}
    A boundary graph of some open $(d+1)$-coloured graph is always $d$-valent but is in general neither proper edge-coloured nor bipartite (see the example in figure \ref{OpenGraphFig}). However, every boundary vertex has a colour $i\in\mathcal{C}_{d}$ and its $d$ adjacent edges have colours $\{ij\mid j\in\mathcal{C}_{d}\backslash\{i\}\}$. Note that this implies that an edge of colour $ij$ can only connect vertices of colours $\{i,j\}$, $\{i,i\}$ or $\{j,j\}$.
\end{Remark}

As already mentioned, the advantages of working with coloured models is the fact that we only produce pseudomanifolds and no other types of topological singularities. This is summarized in the following theorem:

\begin{Theorem}
    Let $\mathcal{G}$ be an open $(d+1)$-coloured graph. Then $\vert\Delta_{\mathcal{G}}\vert$ is an orientable and normal pseudomanifold with boundary.
\end{Theorem}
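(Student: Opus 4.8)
The plan is to verify, one at a time, the defining properties of an orientable normal $d$-pseudomanifold with boundary for $\Delta_{\mathcal{G}}$, reducing each to a combinatorial statement about the coloured graph; I assume $\mathcal{G}$ connected, the general case following componentwise. By construction $\Delta_{\mathcal{G}}$ is obtained by gluing one $d$-simplex $\sigma_v$ per internal vertex $v\in\mathcal{V}_{\mathcal{G},\mathrm{int}}$ along $(d-1)$-faces, so it is a pure $d$-dimensional (pseudo-)simplicial complex, and its internal dual $1$-skeleton is by definition the internal part of $\mathcal{G}$. In particular strong connectedness of $\Delta_{\mathcal{G}}$ (any two $d$-simplices joined by a chain through $(d-1)$-faces) is equivalent to connectedness of $\mathcal{G}$.

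Next I would establish the non-branching property: every $(d-1)$-simplex of $\Delta_{\mathcal{G}}$ lies in exactly one or two $d$-simplices. A $(d-1)$-face of $\sigma_v$ is labelled by the colour $i$ of the opposite vertex, and since $\mathcal{G}$ is properly $(d+1)$-edge-coloured and each internal vertex is $(d+1)$-valent, $v$ is incident to exactly one edge of colour $i$. If that edge is internal, joining $v$ to some $w$, the colour-$i$ faces of $\sigma_v$ and $\sigma_w$ are identified (by the colour-matching gluing rule) and form a face of two $d$-simplices; if it is an external leg, the colour-$i$ face of $\sigma_v$ belongs to no other $d$-simplex. Hence $\Delta_{\mathcal{G}}$ is non-branching, the $(d-1)$-faces lying in a single $d$-simplex are exactly those dual to external legs, and $\partial|\Delta_{\mathcal{G}}|$ is the subcomplex dual to $\partial\mathcal{G}$; together with strong connectedness this already shows $|\Delta_{\mathcal{G}}|$ is a $d$-pseudomanifold with boundary.

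For normality I would use the bubble/residue dictionary of crystallization theory: for a colour subset $S\subseteq\mathcal{C}_d$ with $|S|=d-h+1$, the $(d-h)$-simplices of $\Delta_{\mathcal{G}}$ with vertex-colour set $S$ are in bijection with the connected components (``$h$-bubbles'') of the subgraph $\mathcal{G}_{\widehat S}$ obtained by keeping only the edges whose colour lies in the complement $\widehat S$ (of cardinality $h$), and the link in $\Delta_{\mathcal{G}}$ of such a simplex is PL-homeomorphic to $|\Delta_{\mathcal{B}}|$, where $\mathcal{B}$ is the corresponding $h$-coloured sub-bubble — itself an $(h-1)$-dimensional pseudomanifold, possibly with boundary when $\mathcal{B}$ carries $1$-valent vertices. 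Since a bubble is connected by definition, $|\Delta_{\mathcal{B}}|$ is connected; applying this for all $h\ge 2$, i.e. to every simplex of dimension $\le d-2$, yields normality. (For $h=1$ one recovers the non-branching statement, for $h=d+1$ strong connectedness.) For orientability I would invoke bipartiteness: write $\mathcal{V}_{\mathcal{G}}=V_{\mathcal{G}}\sqcup\overline{V}_{\mathcal{G}}$ and orient $\sigma_v$ by the vertex ordering induced by the colours $0<1<\dots<d$ when $v\in V_{\mathcal{G}}$, and by the opposite orientation when $v\in\overline{V}_{\mathcal{G}}$; if $v\in V_{\mathcal{G}}$, $w\in\overline{V}_{\mathcal{G}}$ are joined by a colour-$i$ edge, the shared colour-$i$ face inherits the sign $(-1)^i$ from $\sigma_v$ and $-(-1)^i$ from $\sigma_w$, so the orientations are globally coherent.

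I expect the main obstacle to be making the normality step fully rigorous: one must carefully set up the simplex/bubble correspondence — noting that $\Delta_{\mathcal{G}}$ is a priori only a pseudocomplex, since a simplex need not be determined by its vertex set — and prove the link identification $\operatorname{lk}(\sigma)\cong|\Delta_{\mathcal{B}}|$ also in the presence of a boundary, i.e. when $\mathcal{B}$ contains boundary vertices and $|\Delta_{\mathcal{B}}|$ is itself only a pseudomanifold with boundary. The remaining steps (purity, non-branching, strong connectedness, orientability) are short bookkeeping, being the open-graph analogues of the classical statements for closed coloured graphs.
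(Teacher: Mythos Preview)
Your proposal is correct and follows the standard route from the crystallization/tensor-model literature: purity and non-branching from the proper colouring, strong connectedness from connectedness of $\mathcal{G}$, normality via the bubble--link correspondence $\Delta_{\mathcal{B}}\cong\mathrm{Lk}_{\Delta_{\mathcal{G}}}(\sigma)$ together with the fact that bubbles are connected by definition, and orientability from bipartiteness by assigning the orientation $\pm[v_0,\dots,v_d]$ according to the vertex type. Your identification of the delicate point --- making the link identification rigorous for the open case, where bubbles may themselves have boundary and the complex is only a pseudocomplex --- is accurate.

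The paper, by contrast, gives no self-contained argument: its proof consists entirely of pointers to \cite{GurauColouredGFTPseudo} for the normal-pseudomanifold claim in the closed case (stating that the open case is a straightforward extension) and to \cite{GagliardiCorbodant,GagliardiBoundaryGraph,Caravelli} for orientability. What you have written is precisely the content one would extract from those references, so your proof is not a different approach so much as an explicit unpacking of the cited one. The only thing worth adding is that the paper later states (and sketches) the bubble--link correspondence you rely on as a separate Proposition in Section~\ref{SecBubbles}, so within the paper's own logic your normality step is fully supported.
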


\begin{proof}
    The proof that a graph represents a normal pseudomanifold  for the closed case can be found in \cite{GurauColouredGFTPseudo}. A generalization for the open case is straightforward. For orientability, see for example, \cite{GagliardiCorbodant, GagliardiBoundaryGraph} and \cite{Caravelli}.
\end{proof}

\begin{Remark}
    In the case of \textit{real} coloured GFTs, we are also producing non-orientable manifolds since for coloured graphs orientability is equivalent to bipartiteness \cite{GagliardiCorbodant, GagliardiBoundaryGraph}. In that sense, working with complex models seems to be more natural from a physical point of view.
\end{Remark}

In the following, it will be more convenient to restrict to those open coloured graphs for which the boundary graph becomes again a closed coloured graph as defined in Definition \ref{ClosedColouredGraph}. This condition can be imposed using the following proposition:

\begin{Proposition}
    Let $\mathcal{G}$ be an open $(d+1)$-coloured graph with the property that all external legs have the same colour. Then the boundary graph $\partial\mathcal{G}$ is a closed $d$-coloured graph as defined in Definition \ref{ClosedColouredGraph} and $\mathcal{G}$ is bipartite and balanced.
\end{Proposition}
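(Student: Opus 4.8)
The plan is to prove both claims -- that $\partial\mathcal{G}$ is a closed $d$-coloured graph, and that $\mathcal{G}$ itself is bipartite and balanced -- by exploiting the single hypothesis that all external legs carry a common colour, say $i_0\in\mathcal{C}_d$.

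\textbf{Step 1: $\partial\mathcal{G}$ is $d$-valent with a proper $d$-edge-colouring.} By Remark \ref{RemarkBoundaryGraph}, the boundary graph is always $d$-valent, every boundary vertex carries a colour, and a vertex of colour $i$ has its $d$ adjacent edges coloured by the set $\{ij\mid j\in\mathcal{C}_d\setminus\{i\}\}$. Since every external leg of $\mathcal{G}$ has colour $i_0$, every vertex of $\partial\mathcal{G}$ has colour $i_0$. Hence all edges of $\partial\mathcal{G}$ carry colours from the $d$-element set $\{i_0 j\mid j\in\mathcal{C}_d\setminus\{i_0\}\}$, which we may relabel by $\mathcal{C}_{d-1}=\{0,\dots,d-1\}$ via $j\mapsto$ (its index). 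The adjacency description in Remark \ref{RemarkBoundaryGraph} then says exactly that the $d$ edges at each vertex realize all $d$ colours once each, i.e.\ the edge-colouring is proper. So $\partial\mathcal{G}$ is a $d$-valent graph with a proper $d$-edge-colouring; by Definition \ref{ClosedColouredGraph} it remains only to check bipartiteness.

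\textbf{Step 2: bipartiteness of $\mathcal{G}$, hence of $\partial\mathcal{G}$.} The underlying graph of $\mathcal{G}$ is bipartite by hypothesis (open coloured graphs are bipartite), with partition $\mathcal{V}_{\mathcal{G}}=V_{\mathcal{G}}\cup\overline{V}_{\mathcal{G}}$. I would argue as in Remark \ref{Remarks:Balanced}(b): the key point is that a bicoloured path with colours $i_0 j$ in a bipartite graph has even length iff its endpoints lie on the same side of the bipartition. Because every external leg has colour $i_0$, any bicoloured boundary path starts and ends with an edge of colour $i_0$, and since the colouring alternates $i_0,j,i_0,j,\dots$ along the path, such a path has \emph{even} length (an equal number of $i_0$- and $j$-edges, because it begins and ends on an $i_0$-edge and colours strictly alternate). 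Hence the two external legs it joins attach to internal vertices on \emph{opposite} sides of the bipartition of $\mathcal{G}$. Declaring a boundary vertex to belong to $V_{\partial\mathcal{G}}$ or $\overline{V}_{\partial\mathcal{G}}$ according to the side of its unique neighbour's internal vertex then gives a bipartition of $\partial\mathcal{G}$: every edge of $\partial\mathcal{G}$ joins the two classes. Thus $\partial\mathcal{G}$ is bipartite, completing the proof that it is a closed $d$-coloured graph. (The same parity observation, now applied internally, also shows $\mathcal{G}$ has no monochromatic obstruction; bipartiteness of $\mathcal{G}$ is in fact already part of the definition of an open coloured graph, but I would restate it here for emphasis.)

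\textbf{Step 3: $\mathcal{G}$ is balanced.} It remains to show $\vert V_{\mathcal{G}}\vert=\vert\overline{V}_{\mathcal{G}}\vert$ (counting internal vertices). Run the argument of Remark \ref{Remarks:Balanced}(b) with a colour $j\neq i_0$: delete from $\mathcal{G}$ all internal edges and external legs of colour $\neq j$. Each internal vertex has exactly one $j$-coloured edge; an internal vertex originally carrying an external leg of colour $j$ would be a problem, but by hypothesis every external leg has colour $i_0\neq j$, so in fact \emph{every} internal vertex retains its $j$-edge and it is internal (connects two internal vertices). Hence the internal vertices are partitioned into pairs joined by a $j$-edge, and since $\mathcal{G}$ is bipartite each such pair has one endpoint in $V_{\mathcal{G}}$ and one in $\overline{V}_{\mathcal{G}}$. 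Therefore $\vert V_{\mathcal{G}}\vert=\vert\overline{V}_{\mathcal{G}}\vert$, i.e.\ $\mathcal{G}$ is balanced.

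I expect the only subtle point to be the parity/bipartition bookkeeping in Step 2 -- making precise that a bicoloured $i_0 j$-path between two colour-$i_0$ endpoints has even length and therefore respects the bipartition -- while Steps 1 and 3 are essentially direct unwindings of Remark \ref{RemarkBoundaryGraph} and Remark \ref{Remarks:Balanced}(b) once one observes that the common external colour removes the exceptional cases in those remarks. One should also note that the ``$\{i,i\}$ or $\{j,j\}$'' possibilities flagged in Remark \ref{RemarkBoundaryGraph} cannot occur here, since all boundary vertices share the colour $i_0$; this is what upgrades ``$\partial\mathcal{G}$ is generally neither proper nor bipartite'' to the clean conclusion of the proposition.
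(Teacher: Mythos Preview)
Your approach is essentially the paper's: reduce to a proper $d$-edge-colouring because all boundary vertices share colour $i_0$, argue bipartiteness of $\partial\mathcal{G}$ via the parity of bicoloured paths, and obtain balance by a colour-deletion pairing argument. There is, however, a parity error in Step~2. A bicoloured $i_0j$-path that begins and ends with an $i_0$-edge and strictly alternates has \emph{odd} length, not even: the pattern is $i_0$, or $i_0,j,i_0$, or $i_0,j,i_0,j,i_0$, etc., always with one more $i_0$-edge than $j$-edges. Odd length is precisely what is needed: it places the two endpoint boundary vertices on opposite sides of the bipartition of $\mathcal{G}$, which is exactly how the paper argues. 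Your final conclusion survives only because the wrong parity and the wrong deduction from it cancel; as written, ``even length $\Rightarrow$ internal vertices on opposite sides'' contradicts your own opening sentence of Step~2 (even length $\Leftrightarrow$ same side).

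A smaller point on Step~3: pairing internal vertices along colour-$j$ edges for $j\neq i_0$ only shows the \emph{internal} vertices are balanced. To conclude that $\mathcal{G}$ is balanced you still need the $1$-valent boundary vertices to be balanced; but this follows at once from Steps~1--2, since $\partial\mathcal{G}$ is now a closed $d$-coloured graph and hence balanced by Remark~\ref{Remarks:Balanced}(b), and its bipartition is inherited from that of the boundary vertices of $\mathcal{G}$.
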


\begin{proof}
    If all external legs of $\mathcal{G}$ have the same colour, say $0$, then there is no information encoded in the vertex colouring of $\partial\mathcal{G}$ and we can ignore it. Furthermore, all the edges of $\partial\mathcal{G}$ are coloured by $0i$ for some $i\in\mathcal{C}_{d}\backslash\{0\}$ and hence, we can just colour them by $i$. This shows that $\partial\mathcal{G}$ admits an obvious proper $d$-edge colouring $\gamma_{\partial}:\mathcal{E}_{\partial\mathcal{G}}\to\mathcal{C}_{d-1}^{\ast}$ induced by the colouring $\gamma$ of $\mathcal{G}$, where $\mathcal{C}_{d-1}^{\ast}:=\{1,\dots,d\}$. To see that $\partial\mathcal{G}$ is bipartite, observe that every edge in $\partial\mathcal{G}$ comes from a bicoloured path of $\mathcal{G}$, which starts and ends at an external leg of the same colour. The number of edges contained in this path is odd, which means that the number of vertices contained in this path is even. Therefore, the source and target vertex of an edge of $\partial\mathcal{G}$ are of different kind. For the second claim, note that the graph $\mathcal{G}^{\prime}$ obtained from $\mathcal{G}$ by deleting all the edges of colour $0$ is in this case a (possibly disconnected) $d$-valent and proper $d$-edge coloured graph and such a graph is always balanced (by similar arguments as in Remark \ref{Remarks:Balanced}(b)).
\end{proof}

\begin{Remark}
    Note that in crystallization theory, open graphs are usually defined directly with the property that all their external legs have the same colour \cite{GagliardiBoundaryGraph,Gagliardi87}. Furthermore, also in tensor models using a single, uncoloured, tensor with bubble interactions, Feynman graphs are (open) coloured graphs of this type \cite{CritTM3,GurauBook}.
\end{Remark}

From now on we will mainly work with this restricted class of graphs and so we introduce the following notation:

\begin{Definition}
    \label{def:notation_graph}
    We will denote by $\mathfrak{G}_{d}$ the set of all open $(d+1)$-coloured graphs in which all external legs have colour $0$. The subset of closed $(d+1)$-coloured graphs is denoted by $\overline{\mathfrak{G}}_{d}\subset\mathfrak{G}_{d}$.
\end{Definition}

An immediate consequence of the definition is

\begin{Lemma}
    If $\mathcal{G}\in\mathfrak{G}_{d}$, then $\partial\mathcal{G}\in\overline{\mathfrak{G}}_{d-1}$. Furthermore, $\partial\mathcal{G}$ is the empty graph if and only if $\mathcal{G}\in\overline{\mathfrak{G}}_{d}$. In particular, this means that $\partial(\partial\mathcal{G})$ is the empty graph for every $\mathcal{G}\in\mathfrak{G}_{d}$.
\end{Lemma}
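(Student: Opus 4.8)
The plan is to verify the three assertions of the Lemma directly from the definitions of the boundary graph and of the class $\mathfrak{G}_{d}$, with no extra machinery needed beyond what the preceding Proposition already gives us.

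First I would establish that $\partial\mathcal{G}\in\overline{\mathfrak{G}}_{d-1}$ for $\mathcal{G}\in\mathfrak{G}_{d}$. By definition of $\mathfrak{G}_{d}$, all external legs of $\mathcal{G}$ carry colour $0$, so the hypothesis of the Proposition is met. The Proposition then tells us that $\partial\mathcal{G}$ is a closed $d$-coloured graph, where the induced proper edge-colouring takes values in $\mathcal{C}_{d-1}^{\ast}=\{1,\dots,d\}$. After relabelling this colour set by $\{0,\dots,d-1\}$ (a harmless bijection, as the colour set in Definition \ref{ClosedColouredGraph} is only fixed ``for definiteness''), this is exactly the statement that $\partial\mathcal{G}$ is a closed $d$-coloured graph in the sense of Definition \ref{ClosedColouredGraph}, i.e.\ $\partial\mathcal{G}\in\overline{\mathfrak{G}}_{d-1}$. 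Strictly, one should also note $\partial\mathcal{G}$ has no external legs of its own: every vertex of $\partial\mathcal{G}$ is $d$-valent by Remark \ref{RemarkBoundaryGraph}, so there are no $1$-valent (boundary) vertices, hence $\partial\mathcal{G}$ is genuinely closed.

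Next I would prove the equivalence: $\partial\mathcal{G}$ is empty $\iff$ $\mathcal{G}\in\overline{\mathfrak{G}}_{d}$. For the direction ($\Leftarrow$): if $\mathcal{G}$ is closed, then by Definition \ref{OpenColouredGraph} it has no $1$-valent vertices, hence no external legs, and since $\partial\mathcal{G}$ has one vertex per external leg of $\mathcal{G}$, it is the empty graph. For ($\Rightarrow$): if $\partial\mathcal{G}$ is empty then $\mathcal{G}$ has no external legs, so $\mathcal{V}_{\mathcal{G},\partial}=\varnothing$ and every vertex of $\mathcal{G}$ is $(d+1)$-valent; combined with the already-assumed bipartiteness and proper $(d+1)$-edge-colouring, this is precisely the statement that $\mathcal{G}$ is a closed $(d+1)$-coloured graph, i.e.\ $\mathcal{G}\in\overline{\mathfrak{G}}_{d}$.

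Finally, the claim that $\partial(\partial\mathcal{G})$ is always empty follows by chaining the two previous facts: given any $\mathcal{G}\in\mathfrak{G}_{d}$, the first part gives $\partial\mathcal{G}\in\overline{\mathfrak{G}}_{d-1}\subset\mathfrak{G}_{d-1}$, and then the ``$\Leftarrow$'' direction of the equivalence (applied in dimension $d-1$) gives that $\partial(\partial\mathcal{G})$ is the empty graph. I do not anticipate any real obstacle here — this is essentially a bookkeeping lemma — the only mild subtlety worth a sentence is the relabelling of colour sets between $\mathcal{C}_{d-1}^{\ast}=\{1,\dots,d\}$ and the ``standard'' set $\{0,\dots,d-1\}$, and making explicit that ``closed'' in the sense of $\overline{\mathfrak{G}}$ requires both $(d+1)$-valence \emph{and} the absence of boundary vertices, both of which are automatic from the constructions above.
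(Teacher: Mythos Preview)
Your proposal is correct and matches the paper's treatment: the paper does not give an explicit proof at all, merely stating the Lemma as ``an immediate consequence of the definition'' following the Proposition, and your argument spells out precisely that immediate consequence. The only extra care you take --- the colour-set relabelling and the explicit check that $\partial\mathcal{G}$ has no $1$-valent vertices --- is sound and goes slightly beyond what the paper bothers to write.
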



\subsection{Feynman Amplitudes of Closed Graphs and Ponzano-Regge Model}
The generating functional of the coloured Boulatov model is given by the path integral \cite{BoulatovModel,OritiGFTApproach,OritiMicroscopicDynamics}
\begin{align}
    \mathcal{Z}_{\mathrm{cBM}}=\int\,\bigg(\prod_{l=0}^{3}\,\mathcal{D}\varphi_{l}\mathcal{D}\overline{\varphi}_{l}\bigg)\,e^{-\mathcal{S}_{\lambda}[\varphi_{l},\overline{\varphi}_{l}]}=\sum_{\mathcal{G}\in\overline{\mathfrak{G}}_{3}}\frac{1}{\mathrm{sym}(\mathcal{G})}\mathcal{A}^{\lambda}_{\mathcal{G}} \; ,
\end{align}
where $\mathrm{sym}(\mathcal{G})$ denotes the symmetry factor of the graph $\mathcal{G}$. The Feynman amplitude $\mathcal{A}_{\mathcal{G}}^{\lambda}$ corresponding to some closed $(3+1)$-coloured graph $\mathcal{G}\in\overline{\mathfrak{G}}_{3}$ can be derived by convoluting the propagators and interaction kernels, which can be read off the action \eqref{BoulatovAction} and are given by
\begin{align}
    \mathcal{P}(\{g_{i}\},\{\widetilde{g}_{i}\})=\int_{\mathrm{SU}(2)}\,\mathrm{d}h\,\prod_{i=1}^{3}\delta(g^{-1}_{i}h\widetilde{g}_{i})\hspace{0.5cm}\text{and}\hspace{0.5cm}\mathcal{V}(\{g_{ij}\})=\int_{\mathrm{SU}(2)^{4}}\,\bigg(\prod_{i=1}^{4}\mathrm{d}h_{i}\bigg )\,\prod_{i\neq j}\delta(g_{ij}^{-1}h_{i}^{-1}h_{j}g_{ji}) \; ,
\end{align}
where $g_{ij}$ is the group element assigned to the dual edge living on the triangle $i$ of colour $ij$. The amplitude $\mathcal{A}_{\mathcal{G}}^{\lambda}$ is then precisely the partition function of the Ponzano-Regge spin foam model \cite{PonzanoReggeModel,BarrettPonzanoRegge,FreidelPonzanoRegge1,FreidelPonzanoRegge2,FreidelPonzanoRegge3}  multiplied by a prefactor depending on $N$ and $\lambda$ coming from the interaction term:
\begin{align}
    \mathcal{A}^{\lambda}_{\mathcal{G}}=\bigg (\frac{\lambda\overline{\lambda}}{\delta^{N}(\mathds{1})}\bigg )^{\frac{\vert\mathcal{V}_{\mathcal{G}}\vert}{2}}\int_{\mathrm{SU}(2)^{\vert\mathcal{E}_{\mathcal{G}}\vert}}\,\bigg(\prod_{e\in\mathcal{E}_{\mathcal{G}}}\mathrm{d}h_{e}\bigg )\,\prod_{f\in\mathcal{F}_{\mathcal{G}}}\delta^{N}\bigg (\overrightarrow{\prod_{e\in f}}h_{e}^{\varepsilon(e,f)}\bigg ) \; ,
\end{align}
where $\mathcal{F}_{\mathcal{G}}$ denotes the ``\textit{set of faces}'' of the graph $\mathcal{G}$, i.e.~the bicoloured paths within $\mathcal{G}$, where we write $e\in f$ for an edge belonging to the face $f$ and where $\varepsilon(e,f)$ is equal to $1$ if the orientation of $e$ and $f$ agrees and $-1$ otherwise\footnote{Note that a coloured graph can always be assigned a canonical orientation of edges, i.e.~by orienting each edge from a black vertex to a white vertex. Furthermore, we have implicitly chosen a starting point in the product for each face. The amplitude does not depend on these choices, by the properties of the Haar measure and delta function \cite{FreidelPonzanoRegge2}.}. The amplitudes above take the standard spin foam expression in terms of irreducible representations of the rotation group, once expanded using the Peter-Weyl decomposition of functions on the group \cite{PonzanoReggeModel,BarrettPonzanoRegge,FreidelPonzanoRegge1,FreidelPonzanoRegge2,FreidelPonzanoRegge3}. The ``\textit{free energy}'' of the model is given by
\begin{align}
    \mathcal{F}_{\mathrm{cBM}}[\lambda]=\mathrm{ln}(\mathcal{Z}_{\mathrm{cBM}})=\sum_{\mathcal{G}\in\overline{\mathfrak{G}}_{3}\text{ connected }}\frac{1}{\mathrm{sym}(\mathcal{G})}\mathcal{A}^{\lambda}_{\mathcal{G}}.
\end{align}
As shown in \cite{GurauLargeN1}, the leading order graphs of this expansion in the large $N$ limit are so-called ``\textit{melonic diagrams}'', which are certain coloured graphs dual to the $3$-sphere $S^{3}$. This result generalizes the well-known fact that planar graphs form the leading order in matrix models for pure two-dimensional quantum gravity \cite{Planar}. A similar result has been obtained for higher-dimensional Ooguri-Boulatov type models \cite{GurauLargeN2,GurauLargeN3}. See also \cite{CritTM,GurauColouredTensorModelsReview,GurauColouredTensorModelsReview2} for an extended discussion in the setting of simplicial coloured tensor models and \cite{CritTM2,GurauBook} for a discussion in the setting of coloured tensor models with bubble interactions.
\section{Topology of Coloured Graphs with Non-Empty Boundaries}
\label{SectionII:CryTheo}

As seen above, the Feynman diagrams of coloured tensor models and GFTs are certain types of edge-coloured graphs. The topology of these graphs is not only studied in quantum gravity, but also in crystallization theory --a branch of geometric topology. In this section, we discuss some general concepts and important results from the topology of coloured graphs, combining notions which are used both in quantum gravity and crystallization theory. We will mainly focus on the general notion of coloured graphs representing pseudomanifolds with non-empty boundaries. For a general review of the topology of coloured graphs in the context of coloured tensor models and GFTs see for example \cite{GurauBook,GurauColouredTensorModelsReview,GurauColouredTensorModelsReview2}. For surveys on crystallization theory see \cite{GagliardiBoundaryGraph,CTReview,Review2018} and references therein. Further details on the topology of coloured graphs with non-empty boundary can be found in Appendix \ref{AppendixB:CryTheo}.

\subsection{Bubbles and their Multiplicities}
\label{SecBubbles}

The underlying graph of some closed (resp. open) $(d+1)$-coloured graph $\mathcal{G}$ is the dual $1$-skeleton (resp. \textit{internal} dual $1$-skeleton) of the corresponding simplicial complex $\Delta_{\mathcal{G}}$. However, as discussed previously, the simplicial complex assigned to $\mathcal{G}$ is unique and hence we expect that also the higher-dimensional dual cells and their nested structure are encoded in the graph $\mathcal{G}$. This leads to the notion of ``bubbles'' \cite{GurauColouredTensorModelsReview}, or equivalently, ``residues'' \cite{GagliardiBoundaryGraph} in the mathematical literature on crystallization theory:

\begin{Definition}[Bubbles]
    Let $\mathcal{G}\in\mathfrak{G}_{d}$ (see Definition \ref{def:notation_graph}) be an open $(d+1)$-coloured graph and $i_{1},\dots,i_{k}\in\mathcal{C}_{d}$ with $i_{1}<\dots<i_{k}$, $k\in\{0,\dots,d\}$. We call a connected component of the graph obtained by deleting all the edges of colours $\mathcal{C}_{d}\backslash\{i_{1},\dots,i_{k}\}$ a ``$k$-bubble of colours $i_{1},\dots,i_{k}$''. We denote such a bubble by $\mathcal{B}^{i_{1}\dots i_{k}}_{(\rho)}$, where $\rho$ labels the various bubbles of the same colours. The total number of $k$-bubbles of arbitrary colours is denoted by $\mathcal{B}^{[k]}$.
\end{Definition}

Figure \ref{BubblesFig} below shows an open $(3+1)$-coloured graph $\mathcal{G}\in\mathfrak{G}_{3}$, called the ``\textit{elementary melonic $3$-ball}'' \cite{GurauColouredTensorModelsReview2}, together with all its $3$-bubbles:

\begin{figure}[H]
    \centering
    \includegraphics[scale=1]{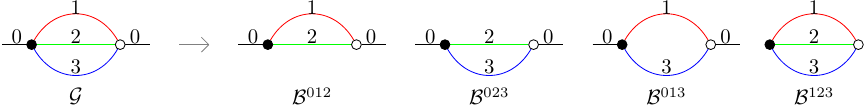}
    \caption{The elementary melonic 3-ball $\mathcal{G}$ (l.h.s.) and its four $3$-bubbles.\label{BubblesFig}}
\end{figure}

Note that the set of $0$-bubbles is precisely the vertex set $\mathcal{V}_{\mathcal{G}}$ of $\mathcal{G}$. In principle, this also includes the $1$-valent boundary vertices. However, we consider in the following the convention where only the $(d+1)$-valent internal vertices are considered $0$-bubbles so that $0$-bubbles correspond to the $d$-simplices of the simplicial complex. It is immediate to see that $1$-bubbles are edges and so correspond to the $(d-1)$-simplices of the complex. Similarly, $2$-bubbles are called the ``\textit{faces of the graph}'' and they correspond to the $(d-2)$-simplices of the complex. This correspondence can be extended to all dimensions:

\begin{Proposition}
    There is a one-to-one correspondence between the $k$-bubbles of some open $(d+1)$-coloured graph $\mathcal{G}\in\mathfrak{G}_{d}$ and the $(d-k)$-simplices of the corresponding simplicial complex $\Delta_{\mathcal{G}}$.
\end{Proposition}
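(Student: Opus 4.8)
The plan is to establish the correspondence by induction on the codimension $d-k$, or equivalently by directly exhibiting the dual cell structure of $\Delta_\mathcal{G}$ and identifying each dual cell's boundary combinatorics with a bubble of $\mathcal{G}$. The base cases are already recorded in the text: $0$-bubbles (internal vertices) correspond to the $d$-simplices, $1$-bubbles (edges of colour $i$) to the $(d-1)$-simplices (the $(d-1)$-faces of colour $i$ along which two $d$-simplices are glued), and $2$-bubbles (bicoloured paths, the ``faces of the graph'') to the $(d-2)$-simplices. First I would set up the general dual picture: in $\Delta_\mathcal{G}$, a $(d-k)$-simplex $\tau$ is obtained by identifying $(d-k)$-faces of several $d$-simplices, and by the colour rule each $(d-k)$-face of a $d$-simplex $\sigma_v$ is determined by the $k$-element set of colours of the vertices \emph{not} lying on it — call this set $\{i_1,\dots,i_k\}\subset\mathcal{C}_d$. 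The dual cell of $\tau$ is then built from the $d$-simplices containing $\tau$, glued along the $(d-k+1)$-faces containing $\tau$; I would argue that this is exactly the data recorded by the connected component of $\mathcal{G}$ spanned by the edges of colours $i_1,\dots,i_k$, i.e.\ the $k$-bubble $\mathcal{B}^{i_1\dots i_k}_{(\rho)}$.

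The key steps, in order: (1) Fix colours $\{i_1,\dots,i_k\}$ and a $d$-simplex $\sigma_v$; show that the $(d-k)$-faces of $\sigma_v$ are in bijection with the $\binom{d+1}{k}$ subsets of size $k$ of $\mathcal{C}_d$, namely the colour-sets of the omitted vertices, and that a face of $\sigma_v$ with colour-set $\{i_1,\dots,i_k\}$ is glued to the corresponding face of $\sigma_w$ precisely when $v$ and $w$ are joined in $\mathcal{G}$ by a path using only colours in $\{i_1,\dots,i_k\}$ (because gluing along an $i$-coloured $(d-1)$-face preserves all lower faces whose colour-set contains $i$, and conversely every identification of the $\{i_1,\dots,i_k\}$-face factors through such $(d-1)$-face gluings). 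This is the combinatorial heart and follows from the definition of $\Delta_\mathcal{G}$ together with Step (2) of the dual-complex construction: gluings are done ``in the unique way such that all colours of vertices agree,'' so the identifications of a codimension-$k$ face are entirely controlled by the $k$ colours appearing on it. (2) Conclude that the set of $d$-simplices meeting a given $(d-k)$-simplex $\tau$, together with their gluing pattern along $(d-k+1)$-faces, is a connected graph coloured by $\{i_1,\dots,i_k\}$, which is by definition one $k$-bubble; conversely every $k$-bubble, being a connected $k$-coloured subgraph, arises this way from exactly one $(d-k)$-simplex. (3) Check injectivity in both directions: two distinct $(d-k)$-simplices of $\Delta_\mathcal{G}$ cannot give the same bubble (they have disjoint stars in the relevant sense, since a $k$-bubble determines both the colour-set and the identification class of the face), and two distinct bubbles of the same colours are by definition distinct connected components, hence correspond to distinct simplices. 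One must also handle boundary faces correctly — with the stated convention that $1$-valent boundary vertices are \emph{not} counted as $0$-bubbles, a $k$-bubble is computed using only internal edges, and the $(d-k)$-simplices it captures include those lying on $\partial\Delta_\mathcal{G}$; this is consistent because a boundary $(d-k)$-simplex is still an identification class of faces of internal $d$-simplices.

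The main obstacle I anticipate is Step (1): making rigorous the claim that the identifications defining a codimension-$k$ simplex are exactly the bicoloured-path connectivity (more precisely, $\{i_1,\dots,i_k\}$-coloured-path connectivity) in $\mathcal{G}$, rather than something finer or coarser. The subtlety is that $\Delta_\mathcal{G}$ is built by a sequence of $(d-1)$-face gluings, and one must verify that two $(d-k)$-faces belonging to non-adjacent $d$-simplices get identified iff there is a chain of intermediate $d$-simplices connecting them along faces \emph{all of which contain the common codimension-$k$ face} — equivalently, along $(d-1)$-faces whose colours all lie in $\{i_1,\dots,i_k\}$. The forward direction is immediate; the reverse (no ``accidental'' identifications) uses the rigidity of the colour rule: a $(d-1)$-face of colour $j\notin\{i_1,\dots,i_k\}$ cannot contain the codimension-$k$ face in question (since that face's vertex-colour-set is $\{i_1,\dots,i_k\}$ and a colour-$j$ face omits only the colour-$j$ vertex), so gluings along such faces never touch it. Once this local-to-global statement is nailed down, the bijection and its inverse are essentially bookkeeping. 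I would also note that the statement is already implicitly used in crystallization theory (the ``residues'' language of \cite{GagliardiBoundaryGraph}), so one could alternatively cite it, but giving the short inductive argument above keeps the exposition self-contained.
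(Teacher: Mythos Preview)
Your proposal is correct and is essentially the paper's argument, spelled out in more detail. The paper gives a two-line sketch: it observes that a $k$-bubble $\mathcal{B}$ is exactly the coloured graph dual to the link of the corresponding $(d-k)$-simplex $\sigma$, i.e.\ $\Delta_{\mathcal{B}}=\mathrm{Lk}_{\Delta_{\mathcal{G}}}(\sigma)$, with the bubble's colour set being the complement of the vertex-colours of $\sigma$. Your connectivity argument (tracing which $d$-simplices share a given $(d-k)$-face via $\{i_1,\dots,i_k\}$-coloured paths in $\mathcal{G}$) is precisely the unpacking of this statement through the star rather than the link; the ``no accidental identifications'' step you single out is exactly what the paper leaves implicit under ``it is not too hard to see.'' The link phrasing buys a little extra, since it identifies the bubble as the dual graph of an actual simplicial complex, which the paper uses immediately afterwards to characterize manifolds via spherical/ball $d$-bubbles. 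One small slip in your write-up: near the end you call $\{i_1,\dots,i_k\}$ ``that face's vertex-colour-set,'' but by your own convention it is the set of \emph{omitted} vertex colours; the inference you draw from it is unaffected.
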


\begin{proof}
    It is not too hard to see that a $k$-bubble $\mathcal{B}$ is exactly the graph, which is dual to the (disjoint) link (see Appendix \ref{AppendixA:PseudoManifolds}) of a $(d-k)$-simplex $\sigma$ of $\Delta_{\mathcal{G}}$, i.e.
    \begin{align}
        \Delta_{\mathcal{B}}=\mathrm{Lk}_{\Delta_{\mathcal{G}}}(\sigma).
    \end{align}
    More precisely, recall that the colouring of the $d+1$ faces of each $d$-simplex in the complex induces a colouring of vertices. Now, a $k$-simplex $\sigma$ has $(k+1)$ vertices, which have some colours, lets say $\{i_{1},\dots,i_{k+1}\}\subset\mathcal{C}_{d}$. The link of $\sigma$ is by definition a $(d-1-k)$-dimensional complex, which is dual to a $(d-k)$-coloured graph. This $(d-k)$-coloured graph is exactly a $(d-k)$-bubble in $\mathcal{G}$ with colours $\mathcal{C}_{d}\backslash\{i_{1},\dots,i_{k+1}\}$.
\end{proof}

In particular, this means that there is the following correspondence in the case of dimension $d=3$:

\begin{align*}
    \text{$0$-bubbles (internal vertices) of } \mathcal{G}\hspace{0.5cm} &\Leftrightarrow\hspace{0.5cm} \text{$3$-simplices (tetrahedra) of  } \Delta_{\mathcal{G}}\\
    \text{$1$-bubbles (edges) of } \mathcal{G}\hspace{0.5cm} &\Leftrightarrow\hspace{0.5cm} \text{$2$-simplices (triangles) of } \Delta_{\mathcal{G}}\\
    \text{$2$-bubbles (faces) of } \mathcal{G}\hspace{0.5cm} &\Leftrightarrow\hspace{0.5cm} \text{$1$-simplices (edges) of } \Delta_{\mathcal{G}}\\
    \text{$3$-bubbles of } \mathcal{G}\hspace{0.5cm} &\Leftrightarrow\hspace{0.5cm} \text{$0$-simplices (vertices) of } \Delta_{\mathcal{G}}
\end{align*}

\begin{Remarks}
    \begin{itemize}
        \item[]
        \item[(a)]A $k$-bubble is by itself a $k$-coloured graph that can either be open or closed. If a $k$-bubble $\mathcal{B}$ is open, then the corresponding $(d-k)$-simplex lives purely on the boundary of the simplicial complex $\Delta_{\mathcal{G}}$. Instead, if $\mathcal{B}$ is closed, then the corresponding simplex lives in the interior of $\Delta_{\mathcal{G}}$ (possibly touching the boundary). As an example, the complex dual to the graph in figure \ref{BubblesFig} has three boundary vertices and only one internal vertex (the vertex dual to $\mathcal{B}^{123}$).
        \item[(b)]The proposition above tells us that there is a family of bijective maps of the form $\varphi_{k}:\Delta_{\mathcal{G},k}\to\mathcal{B}^{[d-k]}$, where $\Delta_{\mathcal{G},k}$ denotes the set of $k$-simplices of the complex $\Delta_{\mathcal{G}}$. Note also that these maps are inclusion reversing: Consider a $k$-simplex $\sigma$ and let $\tau$ be some $l$-face of $\sigma$. Then $\varphi_{k}(\sigma)$ is a $(d-k)$-bubble within the $(d-l)$-bubble $\varphi_{l}(\tau)$. Hence, the colouring does not only include information about higher-dimensional dual cells but also about their nested structure.
    \end{itemize}
\end{Remarks}

The topology of bubbles can be used to determine whether a coloured graph describes a manifold or a pseudomanifold:

\begin{Proposition}
    \label{ManifoldsGraphs}
    Let $\mathcal{G}\in\mathfrak{G}_{d}$ be an open $(d+1)$-coloured graph. Then $\vert\Delta_{\mathcal{G}}\vert$ is a manifold if and only if all the $d$-bubbles of $\mathcal{G}$ represent either $(d-1)$-spheres or $(d-1)$-balls.
\end{Proposition}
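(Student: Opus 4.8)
The plan is to prove both directions using the standard characterization of PL-manifolds via links, together with the dictionary between $k$-bubbles of $\mathcal{G}$ and links of $(d-k)$-simplices of $\Delta_{\mathcal{G}}$ established in the preceding proposition. Recall that a simplicial complex $\Delta$ underlies a PL-manifold (with boundary) precisely when the link of every vertex is either a PL $(d-1)$-sphere (interior vertex) or a PL $(d-1)$-ball (boundary vertex); by a standard argument it suffices to check this condition on vertices, since links of higher-dimensional simplices are then automatically spheres or balls by iterated link identities ($\mathrm{Lk}(\tau\ast\sigma)=\mathrm{Lk}_{\mathrm{Lk}(\tau)}(\sigma)$). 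So the whole statement should reduce to: $|\Delta_{\mathcal{G}}|$ is a manifold iff the link of every vertex of $\Delta_{\mathcal{G}}$ is a $(d-1)$-sphere or $(d-1)$-ball, and by the bubble-simplex correspondence the links of vertices are exactly the $d$-bubbles of $\mathcal{G}$ (the $d$-bubbles correspond to the $0$-simplices). Since $\Delta_{\mathcal{G}}$ is already known to be an orientable normal pseudomanifold with boundary by the earlier theorem, the remaining content is purely this local condition.

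First I would spell out the ``$\Leftarrow$'' direction: assume every $d$-bubble of $\mathcal{G}$ is a $(d-1)$-sphere or a $(d-1)$-ball. By the correspondence $\Delta_{\mathcal{B}}=\mathrm{Lk}_{\Delta_{\mathcal{G}}}(\sigma)$ for the vertex $\sigma$ dual to $\mathcal{B}$, every vertex link is a PL $(d-1)$-sphere or $(d-1)$-ball. One must then observe that the links of the lower-dimensional simplices are automatically well-behaved: for a $k$-simplex $\tau$ with a vertex $v$, $\mathrm{Lk}_{\Delta_{\mathcal{G}}}(\tau)=\mathrm{Lk}_{\mathrm{Lk}_{\Delta_{\mathcal{G}}}(v)}(\tau\setminus v)$, and since $\mathrm{Lk}_{\Delta_{\mathcal{G}}}(v)$ is a sphere or ball (hence a manifold), its links are again spheres or balls by induction on dimension. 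This gives that $|\Delta_{\mathcal{G}}|$ is a PL-manifold with boundary; the boundary consists exactly of those simplices whose link is a ball, equivalently (at the level of $\mathcal{G}$) those bubbles that are open.

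For the ``$\Rightarrow$'' direction: assume $|\Delta_{\mathcal{G}}|$ is a manifold with boundary. Then in particular the link of every vertex is a $(d-1)$-sphere (for an interior vertex) or a $(d-1)$-ball (for a boundary vertex) — this is precisely the local characterization of topological/PL-manifolds. Translating back through the bubble-simplex correspondence, each $d$-bubble of $\mathcal{G}$, being dual to a vertex link, is therefore a $(d-1)$-sphere or a $(d-1)$-ball. That closes the equivalence.

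The main subtlety — and the step I would be most careful with — is the reduction from ``all $d$-bubbles are spheres/balls'' to ``$\Delta_{\mathcal{G}}$ is a manifold'', because a priori one would need to control the links of \emph{all} simplices, not just vertices, and because in dimensions $\geq 4$ there is the usual PL-versus-TOP subtlety (a link that is a homotopy sphere need not be a genuine sphere). Here, though, we are explicitly in the PL category, so ``sphere/ball'' means ``PL-sphere/PL-ball'', and the iterated-link argument genuinely reduces everything to vertex links; I would cite the standard fact (e.g.\ from the references on PL-manifolds in Appendix~\ref{AppendixA:PseudoManifolds}) that a simplicial complex is a PL-manifold with boundary iff all vertex links are PL-spheres or PL-balls. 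With that in hand, the proof is essentially a one-line translation via the preceding proposition, plus the remark that open bubbles correspond to boundary simplices and closed bubbles to interior ones, which also pins down $\partial|\Delta_{\mathcal{G}}|$.
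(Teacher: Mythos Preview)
Your approach is essentially the same as the paper's: both directions go through the bubble--link correspondence established in the preceding proposition and the standard PL characterization of manifolds via vertex links, with the paper simply citing \cite{Hudson} for the ``$\Leftarrow$'' direction (combinatorial triangulation $\Rightarrow$ PL-manifold) and \cite{GagliardiBoundaryGraph} for the ``$\Rightarrow$'' direction, whereas you spell out the iterated-link reduction explicitly. Your added care about the PL-versus-TOP subtlety in high dimensions is appropriate; just note that for the ``$\Rightarrow$'' direction one really needs the PL hypothesis (or $d\leq 3$), since a topological manifold can admit non-combinatorial triangulations in high dimensions --- the paper sidesteps this by deferring to the crystallization-theory references.
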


\begin{proof}
    Every triangulation with the property that all the links of its vertices (=the $d$-bubbles of the graph) represent spheres or balls (a so-called ``combinatorial triangulation'', see Appendix \ref{AppendixA:PseudoManifolds}) is a manifold (in fact, a PL-manifold), see \cite{Hudson}. For the reverse, see \cite{GagliardiBoundaryGraph} and references therein.
\end{proof}

Previously, we have defined the boundary graph $\partial\mathcal{G}$ of some open $(d+1)$-coloured graph $\mathcal{G}\in\mathfrak{G}_{d}$ and said that the underlying graph is exactly the boundary dual $1$-skeleton of the complex $\Delta_{\mathcal{G}}$. Since $\partial\mathcal{G}$ is a closed $d$-coloured graph, we can construct the corresponding simplicial complex $\Delta_{\partial\mathcal{G}}$. Naively, we would guess that this simplicial complex is exactly the boundary of the simplicial complex dual to $\mathcal{G}$, i.e.~$\Delta_{\partial\mathcal{G}}=\partial\Delta_{\mathcal{G}}$. However, it turns out that $\partial\Delta_{\mathcal{G}}$ is in general just a quotient of the simplicial complex $\Delta_{\partial\mathcal{G}}$ obtained by identifying some of its simplices. This is actually well known in crystallization theory and goes under the name ``\textit{multiple residues}'' \cite{GagliardiExistence,Gagliardi87,GagliardiMultRes}. Let us discuss this point in more details using an explicit example. Consider the following closed $(2+1)$-coloured graph $\gamma\in\overline{\mathfrak{G}}_{2}$, called the ``\textit{pillow graph}'', as boundary graph:

\begin{figure}[H]
    \centering
    \includegraphics[scale=1.2]{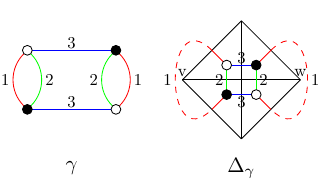}
    \caption{A closed $(2+1)$-coloured graph $\gamma$ (l.h.s.) together with its simplicial complex $\Delta_{\gamma}$ (r.h.s.).\label{MultRes}}
\end{figure}

The graph represents a $2$-sphere, as can be seen by looking at the simplicial complex $\Delta_{\gamma}$ dual to $\gamma$. Now, consider the two open $(3+1)$-coloured graphs $\mathcal{G}_{1},\mathcal{G}_{2}\in\mathfrak{G}_{3}$ of figure \ref{G1G2MultRes}:

\begin{figure}[H]
    \centering
    \includegraphics[scale=1.2]{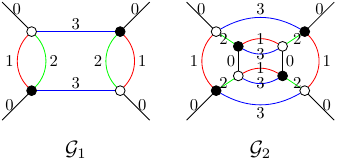}
    \caption{Two open $(3+1)$-coloured graphs $\mathcal{G}_{1,2}\in\mathfrak{G}_{3}$ with boundary graph given by the graph $\gamma$.\label{G1G2MultRes}}
\end{figure}

Both of these graphs satisfy $\partial\mathcal{G}_{1}=\partial\mathcal{G}_{2}=\gamma$. One can easily see that the boundary of the simplicial complex $\Delta_{\mathcal{G}_{1}}$, which describes a $3$-ball, is given by the complex $\Delta_{\gamma}$, i.e.
\begin{align}
    \partial\Delta_{\mathcal{G}_{1}}=\Delta_{\partial\mathcal{G}}=\Delta_{\gamma}.
\end{align}
However this is not the case for the simplicial complex dual to $\mathcal{G}_{2}$. Indeed, note that the graph $\mathcal{G}_{2}$ has in total four $3$-bubbles, from which three are open graphs. One of them, the $3$-bubble of colour $012$, has two disconnected boundary components, see figure \ref{fig:BubbleMultRes}.

\begin{figure}[H]
    \centering
    \includegraphics[scale=1.2]{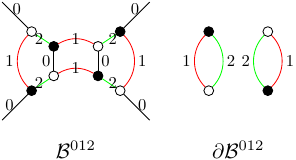}
    \caption{The unique $3$-bubble $\mathcal{B}^{012}$ of colour $012$ of the graph $\mathcal{G}_{2}$ together with its boundary graph $\partial\mathcal{B}^{012}$.}
    \label{fig:BubbleMultRes}
\end{figure}

As explained above, the $3$-bubbles of some open $(3+1)$-coloured graph $\mathcal{G}$ correspond to the vertices of the simplicial complex $\Delta_{\mathcal{G}}$, whereas the $2$-bubbles of its closed $(2+1)$-coloured boundary graph $\partial\mathcal{G}$ correspond to the vertices of the complex $\Delta_{\partial\mathcal{G}}$. Hence, we see that in the above example, the two vertices dual to the two $2$-bubbles of colour $12$ of $\gamma$ are identified in the simplicial complex $\Delta_{\mathcal{G}_{2}}$, since they both correspond to the same $3$-bubble in $\mathcal{G}_{2}$. In other words, the boundary of the simplicial complex $\Delta_{\mathcal{G}_{2}}$ is the complex obtained by identifying the two vertices $v$ and $w$ of the complex $\Delta_{\gamma}$ drawn on the right-hand side in figure \ref{MultRes}, i.e.~we can write
\begin{align}
    \partial\Delta_{\mathcal{G}_{2}}=\Delta_{\gamma}/_{v\sim w}.
\end{align}
The geometric realization of this complex is the ``\textit{pinched torus}'', i.e.~the pseudomanifold obtained by identifying two distinct points on a $2$-sphere. This discussion leads to the following definition:

\begin{Definition}[Multiplicity of Bubbles \cite{GagliardiMultRes}]
    \label{DefSimpleBubbles}
    Let $\mathcal{G}$ be an open $(d+1)$-coloured graph. We call the number of boundary components of some bubble $\mathcal{B}$ the ``multiplicity of $\mathcal{B}$'' and denote it by $\mathrm{mult}(\mathcal{B})$. If $\mathrm{mult}(\mathcal{B})\in\{0,1\}$, then we call the bubble ``simple''.
\end{Definition}

If $\mathcal{G}\in\mathfrak{G}_{d}$ only has simple bubbles, then we clearly have that $\Delta_{\partial\mathcal{G}}=\partial\Delta_{\mathcal{G}}$. This is in particular the case if $\mathcal{G}$ represents a manifold, since all its $d$-bubbles are spheres and balls. Furthermore, this is also clearly true for pseudomanifolds without boundary singularities, i.e.~pseudomanifolds for which all the open $3$-bubbles represents $(d-1)$-balls. Using the discussion of the example above, one can easily see that there is the following general relationship between the complex of the boundary graph and the boundary of the simplicial complex of the corresponding open graph:

\begin{Proposition}[Boundary Complex of a General Open Graph]
    Let $\mathcal{G}$ be an open $(d+1)$-coloured graph with boundary graph $\mathcal{G}$. Then
    \begin{align*}
        \partial\Delta_{\mathcal{G}}=\Delta_{\partial\mathcal{G}}/\sim,
    \end{align*}
    where $\sim$ identifies for each non-simple $k$-bubble $\mathcal{B}$ of $\mathcal{G}$ with $k\in\{3,\dots d\}$ the corresponding $(d-k)$-simplices belonging to the various boundary components of $\mathcal{B}$.
\end{Proposition}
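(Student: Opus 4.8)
The plan is to generalize the worked example in the preceding paragraphs (the pillow graph $\gamma$ with $\mathcal{G}_1$ and $\mathcal{G}_2$) to an arbitrary open $(d+1)$-coloured graph $\mathcal{G}$. First I would recall the two families of bijections already established: the bijection $\varphi_k\colon \Delta_{\mathcal{G},k}\to\mathcal{B}^{[d-k]}$ between $k$-simplices of $\Delta_{\mathcal{G}}$ and $(d-k)$-bubbles of $\mathcal{G}$, and the analogous bijection for the closed $d$-coloured boundary graph $\partial\mathcal{G}$, namely between $k$-simplices of $\Delta_{\partial\mathcal{G}}$ and $(d-1-k)$-bubbles of $\partial\mathcal{G}$. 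The key structural fact to isolate is how bubbles of $\mathcal{G}$ restrict to bubbles of $\partial\mathcal{G}$: a $k$-bubble $\mathcal{B}^{i_1\dots i_k}$ of $\mathcal{G}$ (with $0\notin\{i_1,\dots,i_k\}$, so that it can meet the boundary) is itself an open $k$-coloured graph, and its boundary graph $\partial\mathcal{B}^{i_1\dots i_k}$ is a disjoint union of $\mathrm{mult}(\mathcal{B})$ closed $(k-1)$-coloured graphs, each of which sits inside $\partial\mathcal{G}$ as a $(k-1)$-bubble. Dually, each boundary component of $\mathcal{B}$ corresponds to a $(d-k)$-simplex of $\Delta_{\partial\mathcal{G}}$, and all $\mathrm{mult}(\mathcal{B})$ of these simplices are glued, inside $\Delta_{\mathcal{G}}$, onto the single $(d-k)$-simplex $\varphi_{d-k}^{-1}(\mathcal{B})$ lying on $\partial\Delta_{\mathcal{G}}$.

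Concretely, the steps I would carry out are: (1) fix the identification map. The boundary complex $\partial\Delta_{\mathcal{G}}$ is, by definition of the dual complex, a subcomplex of the quotient of $\Delta_{\mathcal{G}}$; its simplices are exactly the simplices of $\Delta_{\mathcal{G}}$ that lie on the boundary, i.e.\ those dual to \emph{open} bubbles of $\mathcal{G}$. Meanwhile $\Delta_{\partial\mathcal{G}}$ has, for each open $(d-j)$-bubble $\mathcal{B}$ of $\mathcal{G}$, one $j$-simplex for \emph{each} boundary component of $\mathcal{B}$ (because those components are precisely the $(d-j-1)$-bubbles of $\partial\mathcal{G}$ contained in the restriction of $\mathcal{B}$). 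So there is a natural surjective simplicial map $\pi\colon\Delta_{\partial\mathcal{G}}\to\partial\Delta_{\mathcal{G}}$ sending the $j$-simplices coming from the various components of $\mathcal{B}$ all to the one $j$-simplex $\varphi_j^{-1}(\mathcal{B})$. (2) Check $\pi$ is a simplicial isomorphism onto $\partial\Delta_{\mathcal{G}}$ after passing to the quotient by $\sim$: this is a matter of verifying that $\pi$ is injective on the quotient, which follows because two $j$-simplices of $\Delta_{\partial\mathcal{G}}$ are identified in $\partial\Delta_{\mathcal{G}}$ if and only if they are dual to $(d-j-1)$-bubbles of $\partial\mathcal{G}$ that lie in a common bubble of $\mathcal{G}$ — exactly the relation $\sim$. (3) Verify compatibility with faces: if $\mathcal{B}'\subset\mathcal{B}$ is an inclusion of bubbles of $\mathcal{G}$ (so dually $\tau\subset\sigma$ of simplices), the identifications induced by $\mathcal{B}$ are inherited by $\mathcal{B}'$, using the inclusion-reversing property of the $\varphi_k$ noted in the Remarks; this shows $\sim$ is generated consistently across all non-simple bubbles of colours $k\in\{3,\dots,d\}$ (colours $0,1,2$ never produce multiplicity, since $1$- and $2$-bubbles are single edges/faces and $0$-bubbles are vertices).

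The main obstacle I expect is step (2)–(3): being careful that the equivalence relation $\sim$, as generated by \emph{all} non-simple bubbles simultaneously, produces exactly $\partial\Delta_{\mathcal{G}}$ and not a further quotient or a coarser one. The subtlety is that a given simplex of $\Delta_{\partial\mathcal{G}}$ may lie in several bubbles of different colour sets at once, and one must confirm that the identifications these impose are mutually consistent and that no identification is forced that is not visible in $\partial\Delta_{\mathcal{G}}$. This is handled by working locally: around each vertex of $\Delta_{\mathcal{G}}$ (equivalently, each $d$-bubble), the link determines all incident identifications, so one reduces to checking the statement for the star of a single boundary vertex, where it follows from the already-established correspondence between bubbles and links (the Proposition identifying $\Delta_{\mathcal{B}}=\mathrm{Lk}_{\Delta_{\mathcal{G}}}(\sigma)$). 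The remaining bookkeeping — that $k$-bubbles for $k\le 2$ contribute nothing, and that the restriction of an open $(d-j)$-bubble's boundary graph to $\partial\mathcal{G}$ is the set of $(d-j-1)$-bubbles of $\partial\mathcal{G}$ indexed by its boundary components — is routine and follows directly from the definitions of bubble, boundary graph, and multiplicity.
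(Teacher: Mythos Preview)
Your proposal is correct and follows exactly the line of reasoning the paper intends: it generalizes the preceding pillow-graph example by exploiting the bubble/simplex bijections $\varphi_k$ and the observation that the boundary components of an open $k$-bubble of $\mathcal{G}$ are precisely the $(k-1)$-bubbles of $\partial\mathcal{G}$ that get collapsed to one $(d-k)$-simplex in $\partial\Delta_{\mathcal{G}}$. In fact the paper offers no formal proof at all --- it states the proposition as an immediate consequence of ``the discussion of the example above'' --- so your outline, with its explicit construction of the surjection $\pi\colon\Delta_{\partial\mathcal{G}}\to\partial\Delta_{\mathcal{G}}$ and the verification that $\sim$ is exactly $\ker\pi$, is considerably more detailed than what the paper provides.
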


The appearance of this additional pinching effect on the boundary could have been expected since the boundary graph only takes the $1$-skeleton of the complex $\partial\Delta_{\mathcal{G}}$ into account. While it does encode a full simplicial complex, it does not contain any information about these possible identifications of $k$-simplices with $k\leq d-3$, which are coming from the bulk graph. In other words, the boundary graph only describes the ``\textit{desingularized}'' boundary of the complex $\Delta_{\mathcal{G}}$.

\subsection{Combinatorial and Topological Equivalence}\label{SecDipoles}
Every manifold admits a coloured graph representing it (see Appendix \ref{SecCryTheo}), however there are in general infinitely many inequivalent graphs representing the same topology. In order to properly describe a manifold of a given topology, we need transformations changing the graph but leaving the topology of the associated manifold invariant. For PL-manifolds Pachner's theorem \cite{Pachner} states that two PL-manifolds are PL-homeomorphic if and only if they are related by a finite sequence of so-called Pachner moves. In three dimensions, there are only two different types of Pachner moves, the $(1-4)$- and the $(2-3)$-move. For our purpose, these moves do not work since they are in general not respecting the underlying structure of the coloured graph. For example, applying a $(1-4)$-move to some tetrahedron results into a complex which is not bipartite anymore. It turns out that a suitable set of moves is given by so-called ``dipole moves'', which were introduced in \cite{GagliardiFerri}:

\begin{Definition}[Dipoles and Dipole Contraction]
    Let $\mathcal{G}\in\mathfrak{G}_{d}$ be an open $(d+1)$-coloured graph, such that $\vert\mathcal{V}_{\mathcal{G},\mathrm{int}}\vert>2$. We call a subgraph $d_{k}$ consisting of two internal vertices $v,w\in\mathcal{V}_{\mathcal{G},\mathrm{int}}$, which are connected by $k$ edges of colours $i_{1},\dots,i_{k}\in\mathcal{C}_{d}$, ``$k$-dipole of colours $i_{1},\dots,i_{k}$'', if the two $(d+1-k)$-bubbles of colour $\mathcal{C}_{d}\backslash\{i_{1},\dots,i_{k}\}$ containing $v$ and $w$, respectively, are distinct.
\end{Definition}

If some coloured graphs admits a dipole, then we define another graph by ``contracting the dipole'' \cite{GagliardiFerri,Gagliardi87}:

\begin{Definition}[Dipole Contraction]
    Let $\mathcal{G}\in\mathfrak{G}_{d}$ be an open $(d+1)$-coloured graph and $d_{k}$ a $k$-dipole within $\mathcal{G}$ with vertices $v,w$. Then we define the graph $\mathcal{G}/d_{k}\in\mathfrak{G}_{d}$ by deleting the two vertices $v$ and $w$ of $\mathcal{G}$ and by connecting the ``hanging pairs'' of edges respecting their colouring. We say that ``$\mathcal{G}/d_{k}$ is obtained by contracting the $k$-dipole $d_{k}$ in $\mathcal{G}$''. The inverse process is called ``creating a dipole''. See figure \ref{FigDipoleMove} for examples in dimension $d=3$.
\end{Definition}

\begin{Remarks}
    \label{DipoleConSum}
    \begin{itemize}
        \item[]
        \item[(a)]If both vertices $v$ and $w$ admit an adjacent external leg, then the procedure would produce a disconnected part containing a single edge of colour $0$ connecting two boundary edges. In this case, we do not include this additional disconnected piece in the definition of $\mathcal{G}/d_{k}$, as a convention (e.g.~see figure \ref{FigDipoleMove}(b) below).
        \item[(b)]Note that performing a $k$-dipole move in some open $(d+1)$-coloured graph $\mathcal{G}$ is geometrically one and the same as performing the graph connected sum (see Appendix \ref{Subsec:ConSum}) of two $(d+1-k)$-bubbles within the graph $\mathcal{G}$.
    \end{itemize}
\end{Remarks}

Figure \ref{FigDipoleMove} shows three examples of $1$-dipole contraction in open $(3+1)$-coloured graphs.

\begin{figure}[H]
    \centering
    \includegraphics[scale=1]{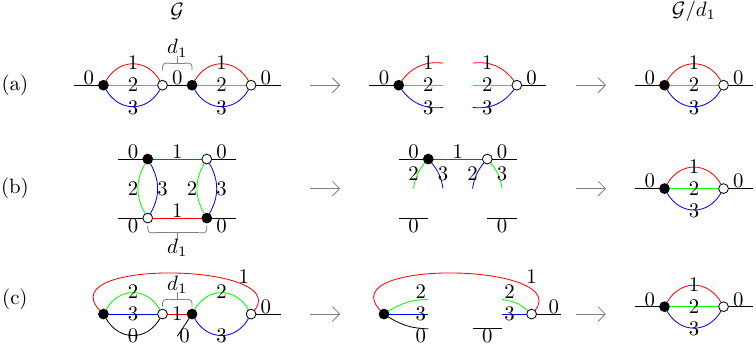}
    \caption{Three examples of $1$-dipole contractions in open $(3+1)$-coloured graphs.\label{FigDipoleMove}}
\end{figure}

Note that the boundary graph only changes in example (b). The reason for this is that the two vertices involved in the dipole admit adjacent external legs and therefore, after contracting the dipole, the number of boundary triangles is reduced by two. On the other hand the boundary graph is left untouched whenever one of the separated $(d+1-k)$-bubbles is closed, as in example (a) and (c) above:

\begin{Proposition}[Boundary Complex and Dipole Moves]
    \label{PropDipoleBound}
    Let $\mathcal{G}\in\mathfrak{G}_{d}$ be an open $(d+1)$-coloured graph and $d_{k}$ a $k$-dipole within $\mathcal{G}$. If at least one of the two $(d+1-k)$-bubbles separated by the dipole is closed, then $\partial\mathcal{G}=\partial(\mathcal{G}/d_{k})$ and also $\partial\Delta_{\mathcal{G}}=\partial\Delta_{\mathcal{G}/d_{k}}$. We call such a dipole ``internal''.
\end{Proposition}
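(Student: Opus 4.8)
Throughout write $I=\{i_1,\dots,i_k\}$ for the colours of the dipole $d_k$, $J=\mathcal{C}_d\setminus I$, and let $v,w$ be its two internal vertices, with $\mathcal{B}_v,\mathcal{B}_w$ the two $(d+1-k)$-bubbles of colours $J$ containing $v$ and $w$; assume without loss of generality that $\mathcal{B}_v$ is closed. The first point to record is that $v$ carries no external leg: an external leg has colour $0$, and if $0\in I$ the colour-$0$ edge at $v$ is a dipole edge, while if $0\in J$ such a leg would lie inside the closed bubble $\mathcal{B}_v$. Hence $v$ and $w$ are never both incident to an external leg, so contracting $d_k$ never invokes the exceptional convention of Remark~\ref{DipoleConSum}(a); thus $\mathcal{G}$ and $\mathcal{G}/d_k$ have the same set of external legs, so $\partial\mathcal{G}$ and $\partial(\mathcal{G}/d_k)$ already have the same (colour-$0$) vertex set. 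The plan is to upgrade this to $\partial\mathcal{G}=\partial(\mathcal{G}/d_k)$ by analysing bicoloured paths, and then to deduce $\partial\Delta_{\mathcal{G}}=\partial\Delta_{\mathcal{G}/d_k}$ from it via the Proposition presenting $\partial\Delta_{\mathcal{G}}$ as a quotient of $\Delta_{\partial\mathcal{G}}$.

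For the first equality it remains to match edges. Fix $i\in\{1,\dots,d\}$ and let $\mathcal{G}_{0i}$ be the subgraph of $\mathcal{G}$ retaining only the edges of colours $0$ and $i$; since internal vertices have degree $2$ and boundary vertices degree $1$ in $\mathcal{G}_{0i}$, its components are internal cycles and paths joining two boundary vertices, the latter being exactly the colour-$i$ edges of $\partial\mathcal{G}$. I would then split on the position of $\{0,i\}$ relative to $I$. If exactly one of $0,i$ lies in $I$, the relevant dipole edge together with the two ``hanging'' edges of the other colour forms a path of length three through $v$ and $w$ that the contraction collapses to a single edge; this preserves every component and its endpoints, and since $v$ has no external leg at least one endpoint of that path is an internal vertex, so no external leg is discarded. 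If $\{0,i\}\subseteq I$, then $v$ and $w$ form a bigon of $\mathcal{G}_{0i}$ that the contraction simply deletes --- an internal component, irrelevant to $\partial\mathcal{G}$. If $\{0,i\}\subseteq J$, then every $\{0,i\}$-path out of $v$ stays inside $\mathcal{B}_v$, so, $\mathcal{B}_v$ being closed, the component of $\mathcal{G}_{0i}$ through $v$ is a cycle; a check of the sub-cases in which $w$ lies on that same cycle, on a different cycle, or on an open path shows that fusing the colour-$0$ edges and the colour-$i$ edges at $v$ and $w$ only creates, merges or splits internal cycles and preserves the open path through $w$ together with its two endpoints. In every case the open-path components of $\mathcal{G}_{0i}$ and their endpoints are unchanged, so $\partial\mathcal{G}=\partial(\mathcal{G}/d_k)$.

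For the second equality, the Proposition on the boundary complex of a general open graph gives $\partial\Delta_{\mathcal{G}}=\Delta_{\partial\mathcal{G}}/\sim_{\mathcal{G}}$ and $\partial\Delta_{\mathcal{G}/d_k}=\Delta_{\partial(\mathcal{G}/d_k)}/\sim_{\mathcal{G}/d_k}$, where $\sim$ identifies, for each non-simple $l$-bubble with $3\le l\le d$, the simplices of the boundary complex belonging to its various boundary components. Such a bubble must involve the colour $0$ (a bubble omitting colour $0$ has no external legs, hence is closed, hence simple), and since $\Delta_{\partial\mathcal{G}}=\Delta_{\partial(\mathcal{G}/d_k)}$ by the first part it suffices to show that the contraction changes neither the number nor the location of the boundary components of any such bubble $\mathcal{B}$, of colour set $L\ni 0$. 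Using Remark~\ref{DipoleConSum}(b) I would check that the operation induced on $\mathcal{B}$ is of one of three kinds: the identity (if $\mathcal{B}$ contains neither $v$ nor $w$); the contraction of a sub-dipole of $\mathcal{B}$ (if both $v$ and $w$ lie in $\mathcal{B}$); or the connected sum of $\mathcal{B}$ with a closed sub-bubble of $\mathcal{B}_v$ (if exactly one of $v,w$ lies in $\mathcal{B}$) --- using in the last two cases that ``$\mathcal{B}_v$ closed'' passes to the $L$-coloured sub-bubble containing $v$, which in particular makes the sub-dipole internal. In both non-trivial cases $\partial\mathcal{B}$ is preserved by an instance of the first part of the proof (applied to $\mathcal{B}$ for the sub-dipole, and in the guise of its $\{0,i\}\subseteq J$ sub-case for the connected sum), and hence so are the number of boundary components of $\mathcal{B}$ and the simplices of $\Delta_{\partial\mathcal{G}}$ they cut out. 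Therefore $\sim_{\mathcal{G}}=\sim_{\mathcal{G}/d_k}$, and $\partial\Delta_{\mathcal{G}}=\partial\Delta_{\mathcal{G}/d_k}$.

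The main obstacle is the case bookkeeping behind the last two paragraphs: following the bicoloured components of $\mathcal{G}$ through the simultaneous fusion of the colour-$c$ edges at $v$ and $w$, especially when $v$ and $w$ lie on a common component, and checking that the operation induced on each bubble is indeed of one of the three types above, so that the first part of the proof may legitimately be fed back in. The hypothesis that one of the two separated bubbles is closed is used in exactly two places, to rule out the only two mechanisms by which the boundary could change: the loss of external legs when $v$ and $w$ both carry one, and the rerouting of an open bicoloured path when both $v$ and $w$ lie on open $\{0,i\}$-components.
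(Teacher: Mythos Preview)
Your argument is correct and, for the first equality $\partial\mathcal{G}=\partial(\mathcal{G}/d_k)$, follows the same idea as the paper: track the non-cyclic bicoloured faces through the contraction, using closedness of $\mathcal{B}_v$ to force any $\{0,j\}$-component through $v$ with $j\in J$ to be a cycle. The paper shortens the casework by first reducing to $0\notin I$ (observing that if $0\in I$ then both separated $(d+1-k)$-bubbles are automatically closed, so the claim is immediate), whereas you handle all three positions of $\{0,i\}$ relative to $I$ uniformly; note in passing that your sub-case ``$w$ on the same cycle as $v$'' is actually empty, since that cycle lies inside $\mathcal{B}_v$ and $w\in\mathcal{B}_w\neq\mathcal{B}_v$.

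For the second equality the paper offers only the sentence ``it is easy to see that the number of boundary components of all the bubbles in $\mathcal{G}$ does not change under a dipole contraction, by a similar argument as above.'' Your route via the quotient description $\partial\Delta_{\mathcal{G}}=\Delta_{\partial\mathcal{G}}/\!\sim$ and the trichotomy on each colour-$0$ bubble (untouched / internal sub-dipole contraction / graph-connected sum with a closed sub-bubble of $\mathcal{B}_v$) is a correct and genuinely more detailed justification of that assertion; it makes explicit why the induced operation on each bubble is again ``internal'' in the sense of the Proposition, so that the first part can be fed back in.
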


\begin{proof}
    Let us assume without loss of generality that the colours involved in the $k$-dipole are $1,\dots,k$, because if $0$ is involved in the dipole, the two $(d+1-k)$-bubbles separated by $d_{k}$ are both closed and the claim is trivially true in this case. The general situation is sketched in the figure below

    \begin{figure}[H]
        \centering
        \includegraphics[scale=1.1]{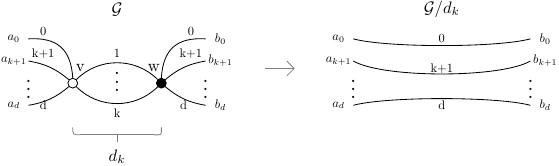}
        \caption{Contraction of a generic $k$-dipole of colours $1,\dots,k$.}
    \end{figure}

    By assumption, one of the $(d+1-k)$-bubbles separated by $d_{k}$ is closed and we choose without loss of generality the bubble $\mathcal{B}_{v}^{d+1-k}$ containing $v$. Note that the vertices $a_{i}$ do not necessarily have to be distinct and similarly for the $b_{i}$'s. Furthermore, $b_{0}$ could in principle be a $1$-valent boundary vertex.  Clearly all the bicoloured paths starting and ending at an external leg of colour $0i$ with $i\in\{1,\dots,k\}$, which are going through the dipole, necessarily contain the vertices $a_{0}$ and $b_{0}$ and still exist after contracting $d_{k}$. Next, consider a bicoloured path containing the vertex $w$ of colour $0j$ with $j\in\{k+1,\dots,d\}$. Such a path connects the vertex $b_{0}$ with $w$ and the vertex $w$ with $b_{j}$. Now, if we contract the dipole $d_{k}$, then this bicoloured path still exists precisely because we have assumed that $\mathcal{B}_{v}^{d+1-k}$ is closed: The path in $\mathcal{G}/d_{k}$ connects the vertex $b_{0}$ with $a_{0}$, the vertex $a_{0}$ with $a_{j}$ by a bicoloured path of colours $0j$ and the vertex $a_{j}$ with $b_{j}$. This shows that all the non-cyclic faces of $\mathcal{G}$ are still contained in $\mathcal{G}/d_{k}$. Furthermore, it is also clear that we do not produce new non-cyclic faces, since the number of external legs is left untouched in this case. Therefore, we conclude that $\partial\mathcal{G}=\partial(\mathcal{G}/d_{k})$. Since there is a natural inclusion of bubbles of $\mathcal{G}/d_{k}$ into bubbles of $\mathcal{G}$, the multiplicities of bubbles do not change, which implies that also the boundary complexes are the same.
\end{proof}

Note that a dipole move does not always preserve the topology. This can easily be seen by the fact that performing a dipole move is the same as performing the connected sum of two submanifolds, as mentioned in Remark \ref{DipoleConSum}(b). For example, whenever both of these two submanifolds are neither spheres nor balls, a dipole move will change the topology of the manifold. Let us introduce the following terminology:

\begin{Definition}[Proper Dipole Moves]
    Let $\mathcal{G}\in\mathfrak{G}_{d}$ be an open $(d+1)$-coloured graph and $d_{k}$ a $k$-dipole within $\mathcal{G}$. We say that $d_{k}$ is ``proper'', if $\vert\Delta_{\mathcal{G}}\vert$ and $\vert\Delta_{\mathcal{G}/d_{k}}\vert$ represent the same manifold (up to PL-homeomorphism).
\end{Definition}

As an example, all the dipole moves drawn in figure \ref{FigDipoleMove} are proper, because all the graphs represent $3$-balls. More generally, as proven in \cite{GagliardiFerri} (for closed graphs) and in \cite{Gagliardi87} (for open graphs), one can define two classes of dipole moves, which preserve the topology:

\begin{Theorem}[Gagliardi-Ferri]
    \label{DipoleProper}
    Let $\mathcal{G}\in\mathfrak{G}_{d}$ be an open $(d+1)$-coloured graph and $d_{k}$ a $k$-dipole involving vertices $v,w\in\mathcal{V}_{\mathcal{G},\mathrm{int}}$.
    \begin{itemize}
        \item[(1)]If at least one of the $(d+1-k)$-bubbles separated by the dipole represents a $(d-k)$-sphere, then $d_{k}$ is proper. We call such dipoles ``internal proper dipoles''.
        \item[(2)]If both $v$ and $w$ admit an adjacent external legs and at least one of the $(d+1-k)$-bubbles separated by the dipole represents a $(d-k)$-ball, then $d_{k}$ is proper. We call such dipoles ``non-internal proper dipoles''.
    \end{itemize}
\end{Theorem}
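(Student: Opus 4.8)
The plan is to prove the two statements of the Gagliardi--Ferri theorem separately, in each case by exhibiting the contracted graph as a combinatorial realisation of a known topological operation on the pseudomanifold $|\Delta_{\mathcal G}|$, and then invoking classical facts about that operation. For part (1), I would first use Remark~\ref{DipoleConSum}(b): contracting a $k$-dipole $d_k$ is exactly the graph connected sum of the two separated $(d{+}1{-}k)$-bubbles $\mathcal B_v$ and $\mathcal B_w$ along the $(d{-}k)$-simplices dual to them, performed \emph{inside} $\mathcal G$. Equivalently, in terms of triangulations, the two $(d{-}k)$-simplices $\sigma_v,\sigma_w$ of $\Delta_{\mathcal G}$ dual to $\mathcal B_v,\mathcal B_w$ (which are joined by the $k$-dipole, meaning their links meet in a common $(d{-}1{-}k)$-sphere) get identified and the star is collapsed. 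The point is that if $\mathcal B_v$ represents a $(d{-}k)$-sphere $S^{d-k}$, then a regular neighbourhood of $\sigma_v$ in $|\Delta_{\mathcal G}|$ is (PL-homeomorphic to) the cone on the join $S^{d-k}\ast \mathrm{Lk}(\sigma_v\cap\sigma_w)$-type configuration, and the connected sum along this $S^{d-k}$ factor is a trivial operation: connected sum with $S^{d-k}\times$(disc) changes nothing up to PL-homeomorphism. So I would argue: the dipole contraction replaces a collar-like piece $\partial(\text{cone on }\mathcal B_v) \cong$ a trivial summand by its boundary, and this is a PL-homeomorphism of the underlying space. Making this rigorous amounts to describing the subcomplex of $\Delta_{\mathcal G}$ affected by the move and showing that before and after we have PL-homeomorphic regular neighbourhoods, then gluing back the unchanged complement — this is precisely the original argument of \cite{GagliardiFerri}, which I would reproduce in the language set up above, carefully tracking that bipartiteness and the colouring survive (they do, since a dipole contraction never touches colours and removes two vertices of opposite parity).

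For part (2), the new feature is that the dipole straddles the boundary: both $v$ and $w$ carry an external leg of colour $0$, so one of the separated bubbles, say $\mathcal B_v$, is an \emph{open} $(d{+}1{-}k)$-coloured graph representing a $(d{-}k)$-ball $B^{d-k}$. Here I would use Proposition~\ref{ManifoldsGraphs}-style reasoning together with the observation made right before the theorem (the discussion around figure~\ref{FigDipoleMove}(b)): the contraction removes two boundary $(d{-}1)$-simplices and glues the boundary across them. Topologically, because $\mathcal B_v$ is a $(d{-}k)$-ball, the relevant piece of $|\Delta_{\mathcal G}|$ near $\sigma_v$ is a regular neighbourhood that is itself a $d$-ball meeting $\partial|\Delta_{\mathcal G}|$ in a $(d{-}1)$-ball, and contracting the dipole amounts to a ``boundary connected sum with a trivial ball summand'', i.e.\ an elementary PL collapse along the boundary — again a PL-homeomorphism. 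So the skeleton of the proof is: identify the affected subcomplex; recognise it, using that $\mathcal B_v$ is a $(d{-}k)$-ball, as $B^{d-k}\ast(\text{something})$ glued trivially; conclude the move is a boundary-compatible PL-homeomorphism; cite \cite{Gagliardi87} for the full combinatorial verification.

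In both cases the remaining chore is purely bookkeeping: checking that $\mathcal G/d_k$ is still a legal element of $\mathfrak G_d$ (bipartite, proper $(d{+}1)$-edge-coloured, all external legs of colour $0$, at most the $1$-valent/$(d{+}1)$-valent dichotomy of vertices) — this follows because the only structural change is deleting the opposite-parity pair $v,w$ and reconnecting the hanging half-edges colour by colour, and because in case (2) the convention of Remark~\ref{DipoleConSum}(a) discards the spurious isolated colour-$0$ edge. I would also note explicitly that $\partial\mathcal G$ behaves as predicted by Proposition~\ref{PropDipoleBound} in case (1) (the dipole is internal there, since a $(d{-}k)$-sphere bubble is in particular closed) and changes only by removing two boundary simplices in case (2), consistent with the ``non-internal'' terminology.

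The hard part will be the geometric core of step~(1)/(2): cleanly identifying the regular neighbourhood of the relevant simplex (or pair of identified simplices) inside $|\Delta_{\mathcal G}|$ and proving it is a trivial (handle-free) summand. This is where one genuinely needs the hypothesis that one bubble is a \emph{sphere} (resp.\ \emph{ball}) rather than a mere pseudomanifold — a dipole whose separated bubbles are both, say, $S^1\times S^1$ would change the topology by a nontrivial connected sum, which is exactly why properness can fail in general. I expect that rather than redo the PL topology from scratch I would lean on \cite{GagliardiFerri,Gagliardi87}, present the connected-sum interpretation as the conceptual reason, and restrict the original contribution to translating their statements into the open-graph / boundary-graph formalism developed in this section (Propositions~\ref{ManifoldsGraphs} and \ref{PropDipoleBound}, Remark~\ref{DipoleConSum}).
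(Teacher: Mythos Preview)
Your proposal is correct and takes essentially the same approach as the paper: cite \cite{GagliardiFerri,Gagliardi87} for the full geometrical verification, and give the connected-sum interpretation (Remark~\ref{DipoleConSum}(b)) as the conceptual reason---case (1) is an internal connected sum with a sphere (the neutral element of $\#$), case (2) a boundary connected sum with a ball (the neutral element of $\#_\partial$). The paper is simply terser, invoking the appendix results (Theorem~\ref{ThmConSum} and Corollary~\ref{CorConSum}) directly rather than sketching the regular-neighbourhood argument; your extra bookkeeping on well-definedness of $\mathcal G/d_k$ and boundary behaviour is fine but not needed for the statement as phrased.
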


\begin{proof}
    For a complete geometrical proof see Proposition 5.3.~in \cite{Gagliardi87}. Using the graph-connected sum discussed in Appendix \ref{Subsec:ConSum}, one can actually give an alternative proof of the statement, as already observed in \cite{CasaliPL}: In case (1), we basically just perform the (internal) connected sum of a spherical $(d+1-k)$-bubble with some other topology (possibly with boundary), which is trivial and hence leaves the topology invariant (see Corollary \ref{CorConSum}). In case (2), we perform the boundary connected sum of some $(d+1-k)$-bubble representing a $(d-k)$-ball with some other topology and hence we do not change the topology either (see Theorem \ref{ThmConSum} (1)).
\end{proof}

The three examples of figure \ref{FigDipoleMove} do have these properties. More precisely, the dipoles in the graphs (a) and (c) are internal proper $1$-dipoles and the dipole in (b) is a non-internal proper $1$-dipole.

\begin{Remarks}
    \begin{itemize}
        \item[]
        \item[(a)]Note that by Proposition \ref{PropDipoleBound}, an internal proper dipole leaves the boundary complex invariant whereas a non-internal proper dipole changes the boundary complex explicitly, since it does reduce the number of boundary $(d-1)$-simplices by two.
        \item[(b)]Every non-internal proper $k$-dipole move induces an internal proper $k$-dipole move on its boundary graph. The reverse is in general not true. However, it turns out that every proper dipole on the boundary graph corresponds to a ``wound move'', another set of moves discussed in \cite{Gagliardi87}, in the open graph.
        \item[(c)]Note that the Theorem of Gagliardi-Ferri is not an ``if-and-only-if'' statement and there are also proper dipoles, which do not fall into the two classes defined above. See Appendix A.4 in \cite{GabrielThesis} for a short discussion and examples. However, as it turns out, the two classes of proper dipole moves discussed in the Theorem of Gagliardi-Ferri are sufficient to characterize topological invariance, as stated in Theorem \ref{Casali} below.
    \end{itemize}
\end{Remarks}

Let us mention the following immediate consequences of the above theorem:

\begin{Corollary}
    \label{Cor:Dipoles}
    Let $\mathcal{G}\in\mathfrak{G}_{d}$ be some $(d+1)$-coloured graph.
    \begin{itemize}
        \item[(1)]Every $d$-dipole is proper. If $\mathcal{G}$ is closed, then also every $(d-1)$-dipole is proper.
        \item[(2)]If $\mathcal{G}$ is closed and represents a manifold, then every dipole is proper.
        \item[(3)]If $\mathcal{G}$ represents a manifold --possibly with boundary-- then every $k$-dipole involving the colour $0$ is proper.
        \item[(4)]If $\mathcal{G}$ is open and represents a manifold, then every $k$-dipole in which both vertices admit adjacent external legs is a non-internal proper one.
    \end{itemize}
\end{Corollary}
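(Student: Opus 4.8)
The plan is to obtain all four items as uniform applications of Theorem~\ref{DipoleProper}. Fix a $k$-dipole $d_k$ of colours $i_1,\dots,i_k$ with internal vertices $v,w$, put $\bar{C}:=\mathcal{C}_d\setminus\{i_1,\dots,i_k\}$ (a set of $d+1-k$ colours), and let $\mathcal{B}_v,\mathcal{B}_w$ be the two $(d+1-k)$-bubbles of colours $\bar{C}$ containing $v$ and $w$, which are distinct by the definition of a dipole. The whole argument reduces to reading off the topology of $\mathcal{B}_v$ and $\mathcal{B}_w$: by Theorem~\ref{DipoleProper}, if one of them represents a $(d-k)$-sphere the dipole is internal proper, and if moreover both $v$ and $w$ carry an external leg and one of the two bubbles represents a $(d-k)$-ball it is non-internal proper. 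To identify these bubbles I will use the bubble--simplex correspondence, under which $\mathcal{B}_v$ equals the link $\mathrm{Lk}_{\Delta_{\mathcal{G}}}(\sigma_v)$ of a $(k-1)$-simplex $\sigma_v$, together with two facts already recorded: a bubble is closed exactly when the interior of its simplex misses $\partial|\Delta_{\mathcal{G}}|$ and open exactly when the simplex lies in $\partial|\Delta_{\mathcal{G}}|$; and, whenever $\mathcal{G}$ represents a manifold, $|\Delta_{\mathcal{G}}|$ is a combinatorial manifold (Proposition~\ref{ManifoldsGraphs}), so the link of an interior $p$-simplex is a PL $(d-p-1)$-sphere and the link of a boundary $p$-simplex a PL $(d-p-1)$-ball. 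Finally, since in $\mathfrak{G}_d$ only colour-$0$ half-edges may be external legs, the bubble $\mathcal{B}_v$ contains an external leg if and only if $0\in\bar{C}$ (equivalently $0\notin\{i_1,\dots,i_k\}$) and $v$ is adjacent to a boundary vertex; this single observation governs the whole case analysis.

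For (1), take a $d$-dipole, so $\bar{C}=\{j\}$ and $\mathcal{B}_v,\mathcal{B}_w$ are $1$-bubbles. Each of them is either a single internal edge — representing an $S^0=S^{d-k}$ — or, when $j=0$ and the colour-$0$ half-edge at the relevant vertex is an external leg, that leg itself — representing a $B^0=B^{d-k}$. If at least one is of the first type, Theorem~\ref{DipoleProper}(1) gives an internal proper dipole. The only alternative is that both are external legs, so both $v$ and $w$ carry an external leg; for a $d$-dipole this forces $v$ and $w$ to have no other internal neighbour, so $\{v,w\}$ is a connected component representing a $d$-ball — a degenerate configuration that does not occur for connected $\mathcal{G}$ and is otherwise excluded by the valency condition on dipoles. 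For the $(d-1)$-dipole with $\mathcal{G}$ closed, the bubbles $\mathcal{B}_v,\mathcal{B}_w$ are $2$-bubbles, hence connected $2$-valent bipartite graphs, hence even cycles, hence each represents $S^1=S^{d-(d-1)}$; Theorem~\ref{DipoleProper}(1) applies. No manifold hypothesis is needed here, since $1$- and $2$-bubbles are forced to be spheres for purely combinatorial reasons.

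For (2), with $\mathcal{G}$ closed and a manifold, every simplex of $\Delta_{\mathcal{G}}$ is interior, so $\mathcal{B}_v=\mathrm{Lk}(\sigma_v)$ and $\mathcal{B}_w=\mathrm{Lk}(\sigma_w)$ are PL $(d-k)$-spheres, and Theorem~\ref{DipoleProper}(1) gives an internal proper dipole. For (3), if $0\in\{i_1,\dots,i_k\}$ then $0\notin\bar{C}$, so neither $\mathcal{B}_v$ nor $\mathcal{B}_w$ meets a colour-$0$ half-edge, hence neither contains an external leg, hence both are closed; their simplices are therefore interior, and the manifold hypothesis makes both bubbles PL $(d-k)$-spheres, so Theorem~\ref{DipoleProper}(1) applies. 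For (4), if both $v$ and $w$ carry an external leg then those legs have colour $0$, so $0\notin\{i_1,\dots,i_k\}$ and hence $0\in\bar{C}$; consequently $\mathcal{B}_v$ contains the colour-$0$ external leg at $v$ and $\mathcal{B}_w$ the one at $w$, so both bubbles are open, their simplices lie in $\partial|\Delta_{\mathcal{G}}|$, and the manifold hypothesis makes both bubbles PL $(d-k)$-balls; since both endpoints carry an external leg, Theorem~\ref{DipoleProper}(2) yields a non-internal proper dipole.

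The main obstacle is not any single estimate but the bookkeeping around the colour $0$ and the closed/open versus interior/boundary dichotomy for bubbles: one must make sure in (1) and (4) that the sub-case ``both endpoints carry an external leg'' is sent to branch (2) of Theorem~\ref{DipoleProper} rather than branch (1), recognise the genuinely degenerate standalone-$d$-ball configuration of a $d$-dipole, and keep track of when the manifold hypothesis is actually used — it is essential for items (2)--(4) with $k\geq 2$ (to upgrade a closed or open bubble to a genuine sphere or ball), but not for item (1), where $1$- and $2$-bubbles are spheres for free. Once the topology of $\mathcal{B}_v$ and $\mathcal{B}_w$ is pinned down, each item is a one-line invocation of Theorem~\ref{DipoleProper}.
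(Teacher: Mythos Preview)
Your proof is correct and follows essentially the same route as the paper --- both deduce everything from Theorem~\ref{DipoleProper} together with the manifold criterion --- but you have unpacked the argument considerably. The paper's proof is a single sentence invoking Proposition~\ref{ManifoldsGraphs}, whereas you make explicit the underlying PL-topology fact that in a combinatorial manifold with boundary the link of \emph{every} simplex (not just vertices) is a sphere or a ball; this is what is needed to handle $(d+1-k)$-bubbles for general $k$, and it is only implicit in Proposition~\ref{ManifoldsGraphs}, which is stated for $d$-bubbles. Your case analysis of the colour $0$ (deciding whether the separated bubbles are open or closed and hence which branch of Theorem~\ref{DipoleProper} to invoke) is more careful than strictly necessary but entirely sound, and your handling of the degenerate $d$-dipole subcase in item~(1) --- observing that if both colour-$0$ half-edges are external then $v,w$ form an isolated component, ruled out by $|\mathcal{V}_{\mathcal{G},\mathrm{int}}|>2$ --- is a detail the paper leaves tacit.
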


\begin{proof}
    This follows from the previous theorem as well as Proposition \ref{ManifoldsGraphs}, i.e.~the fact that for manifolds all $d$-bubbles represent spheres or balls.
\end{proof}

Up to now, we have introduced a set of moves for general coloured graphs leaving the topology invariant. However, it is not yet clear if this set of moves are enough to relate any two coloured graphs describing the same topology to each other. It turns out to be the case:

\begin{Theorem}[Equivalence Criterion of Casali \cite{Casali,CasaliPL}]
    \label{Casali}
    Let $\mathcal{G}_{1},\mathcal{G}_{2}\in\mathfrak{G}_{d}$ be two open $(d+1)$-coloured graphs representing manifolds. Then the manifolds $\mathcal{M}:=\vert\Delta_{\mathcal{G}_{1}}\vert$ and $\mathcal{M}_{2}:=\vert\Delta_{\mathcal{G}_{2}}\vert$ are PL-homeomorphic if and only if $\mathcal{G}_{1}$ and $\mathcal{G}_{2}$ are related by a finite sequence of proper dipole moves of the two types defined in Theorem \ref{DipoleProper}. Moreover, if $\partial\mathcal{G}_{1}\cong\partial\mathcal{G}_{2}$, then $\mathcal{M}_{1}$ and $\mathcal{M}_{2}$ are PL-homeomorphic if and only if $\mathcal{G}_{1}$ and $\mathcal{G}_{2}$ are related by a finite sequence of internal proper dipole moves.
\end{Theorem}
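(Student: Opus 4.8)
The plan is to deduce the final statement, the Equivalence Criterion of Casali, from the corresponding statement for \emph{closed} coloured graphs together with the bubble and dipole machinery already developed in this section. Since this is quoted as a known theorem (references \cite{Casali,CasaliPL}), I would not attempt a from-scratch proof, but rather sketch how the open-graph version follows from PL topology and the closed-graph case. The key point is that the set of proper dipole moves of the two types in Theorem \ref{DipoleProper} is precisely rich enough to realise PL-homeomorphisms combinatorially while staying inside the category of coloured graphs in $\mathfrak{G}_d$.

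First I would establish the ``if'' direction, which is the easy half: if $\mathcal{G}_1$ and $\mathcal{G}_2$ are related by a finite sequence of proper dipole moves of either type, then by definition of ``proper'' each move preserves the PL-homeomorphism type of $|\Delta_{\mathcal{G}}|$, so $\mathcal{M}_1$ and $\mathcal{M}_2$ are PL-homeomorphic; and if the moves are all \emph{internal} proper dipoles, then by Proposition \ref{PropDipoleBound} each one also preserves the boundary graph, so $\partial\mathcal{G}_1\cong\partial\mathcal{G}_2$ is automatic. For the ``only if'' direction I would proceed in three steps. Step one: given a PL-homeomorphism $\mathcal{M}_1\cong\mathcal{M}_2$, invoke Pachner's theorem \cite{Pachner} to get a finite sequence of Pachner moves connecting $\Delta_{\mathcal{G}_1}$ and $\Delta_{\mathcal{G}_2}$ as (uncoloured) triangulations. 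Step two: translate each Pachner move into coloured-graph language. A bare Pachner move does not respect the coloured structure (e.g.\ it breaks bipartiteness, as noted in the text), so the standard device is to first pass to the barycentric-type subdivision that is automatically a coloured triangulation, and then observe that subdivision and its inverse are themselves expressible through proper dipole moves of the two admissible types --- internal ones when working in the interior, non-internal ones when the move touches the boundary. Step three: assemble these into a single finite sequence of proper dipole moves, and check that when $\partial\mathcal{G}_1\cong\partial\mathcal{G}_2$ one can arrange the sequence to avoid non-internal dipoles altogether, because any boundary mismatch that a non-internal dipole would fix is already absent; here one uses Remark (b) after Theorem \ref{DipoleProper}, namely that boundary proper dipoles lift to interior moves (via ``wound moves'') in the bulk graph, so no genuine change of the boundary triangulation is ever forced.

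The main obstacle I expect is Step two: controlling the interplay between Pachner moves (which are natural for PL topology but blind to colouring) and dipole moves (which are natural for coloured graphs but a priori weaker). One must show that restricting to \emph{coloured} triangulations and \emph{proper dipole} moves loses no connectivity in the space of triangulations of a fixed PL-manifold --- that is, that the coloured category is ``Pachner-complete'' in the appropriate sense. This is exactly the technical content of \cite{Casali,CasaliPL}, and it rests on the careful dimension-by-dimension analysis of how a subdivision decomposes into elementary dipole creations and contractions, keeping track of which separated bubble is a sphere (giving an internal proper dipole) versus a ball meeting the boundary (giving a non-internal proper one). A secondary subtlety is the boundary refinement of the statement: one must verify that the stronger hypothesis $\partial\mathcal{G}_1\cong\partial\mathcal{G}_2$ genuinely suffices to eliminate non-internal dipoles, rather than merely making their net effect trivial; this uses Proposition \ref{PropDipoleBound} together with the fact that an internal proper dipole leaves $\partial\Delta_{\mathcal{G}}$ untouched, so the whole sequence can be chosen within the subcategory of graphs with a fixed boundary graph. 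Given that all of this is available in the cited literature, in the paper I would state the reduction to the closed case and Pachner's theorem, and refer to \cite{Casali,CasaliPL} for the detailed verification.
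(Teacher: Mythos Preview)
The paper does not prove this theorem at all: it is stated with attribution to \cite{Casali,CasaliPL} and immediately used, with no argument supplied. Your instinct to treat it as a cited result and only sketch the reduction is therefore exactly in line with how the paper handles it; your sketch is a reasonable summary of the strategy in the cited references (Pachner's theorem plus the translation of bistellar moves into sequences of dipole moves on suitably subdivided coloured triangulations), and the ``if'' direction via Theorem~\ref{DipoleProper} and Proposition~\ref{PropDipoleBound} is correct as written.

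One caution on your Step three: the claim that when $\partial\mathcal{G}_1\cong\partial\mathcal{G}_2$ one can \emph{avoid} non-internal dipoles entirely is the genuinely delicate part of \cite{CasaliPL}, and it does not follow just from Proposition~\ref{PropDipoleBound} and Remark~(b). Those tell you that internal dipoles preserve the boundary graph and that boundary dipoles lift to wound moves, but they do not by themselves show that any sequence realising the PL-homeomorphism can be \emph{rerouted} so as never to touch the boundary. That requires the specific construction in \cite{CasaliPL}, so if you were writing this out you should flag that step as the one where the citation is doing real work, rather than presenting it as a consequence of results already in the text.
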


Therefore, we are free to use proper dipole moves in order to study the different graphs associated to a manifold of a given topology.
\section{Transition Amplitudes}\label{SectionIII:TransAmpl}

Having discussed the graph theoretical and topological properties of the Feynman graphs emerging from the coloured Boulatov model, we now move on to the transition amplitudes. The purpose of this construction, from a canonical quantum gravity point of view, is in fact to define a physical scalar product between two boundary states.\footnote{Therefore, the label \lq transition amplitudes\rq~should be understood loosely. For the difference between spin foam amplitudes defining the canonical scalar product and those encoding quantum gravity \lq transitions\rq, see \cite{Livine:2002rh, Oriti:2004mu, Oriti:2006wq, Bianchi:2021ric}.} In the context of the Boulatov model, the boundary states are spin networks states \cite{PenroseSpinNetwork,BaezGaugeTheory} living on some fixed boundary graph, which are dual to some fixed boundary topology. The transition amplitudes should then provide us with a sum over all topologies with boundary given by our fixed boundary graphs, each weighted by their corresponding spin foam amplitude. In this section, we will start by defining suitable GFT boundary observables, which can then be used to define transition amplitudes. Afterwards, we will apply the techniques from crystallization theory discussed in the previous section, in order to rewrite the amplitudes as topological expansions similar in spirit to the topological expansion of the free energy in the large $N$ limit proposed by R.~Gurau \cite{GurauLargeN1,GurauLargeN2,GurauLargeN3}. The results of this section are based on the Master’s thesis of one of the authors (GS) \cite{GabrielThesis}.

\subsection{Boundary Observables and Transition Amplitudes}
GFT boundary states are described by spin networks \cite{PenroseSpinNetwork,BaezGaugeTheory}. To start with, let us recall that a ``\textit{$\mathrm{SU}(2)$ spin network}'' is defined to be a triple $(\gamma,\rho,i)$, where $\gamma=(\mathcal{V}_{\gamma},\mathcal{E}_{\gamma})$ is a directed and finite graph, $\rho=(\rho_{e},\mathcal{H}_{e})_{e\in\mathcal{E}_{\gamma}}$ is an assignment of irreducible and unitary representations of $\mathrm{SU}(2)$ to edges of the graph $\gamma$ and $i=(i_{v})_{v\in\mathcal{V}_{\gamma}}$ is an assignment of intertwiners of the type
\begin{align}
    i_{v}:\bigotimes_{e\in\mathcal{T}(v)}\mathcal{H}_{e}\to\bigotimes_{e \in\mathcal{S}(v)}\mathcal{H}_{e},
\end{align}
where $\mathcal{T}(v)$ denotes the collection of edges incoming to $v$ and $\mathcal{S}(v)$ the collection of edges outgoing from $v$. To every spin network $\Psi:=(\gamma,\rho,i)$, one can associate a corresponding ``\textit{spin network function}'', which is a map $\psi\in L^{2}(\mathrm{SU}^{\vert\mathcal{E}_{\gamma}\vert},\mathrm{d}g;\mathbb{C})$ satisfying
\begin{align}
    \psi(\{g_{e}\}_{e\in\mathcal{E}_{\gamma}})=\psi(\{k_{t(e)}^{-1}g_{e}k_{s(e)}\}_{e\in\mathcal{E}_{\gamma}})
\end{align}
for all $\{k_{e}\}_{e\in\mathcal{V}_{\gamma}}\in\mathrm{SU}(2)^{\vert\mathcal{V}_{\gamma}\vert}$, defined by
\begin{align}
    \psi(\{g_{e}\}_{e\in\mathcal{E}}):=\bigg (\bigotimes_{v\in\mathcal{V}_{\gamma}}i_{v}\bigg )\bullet_{\gamma}\bigg(\bigotimes_{e\in\mathcal{E}_{\gamma}}\rho_{e}(g_{e})\bigg ),
\end{align}
where $\bullet_{\gamma}$ means contracting at each vertex $v\in\mathcal{V}_{\gamma}$ the upper indices of the matrices corresponding to the incoming edges in $v$, the lower indexes of the matrices assigned to the outgoing edges in $v$ and the corresponding upper and lower indices of the intertwiners $i_{v}$. The Hilbert space $L^{2}(\mathrm{SU}(2)^{\vert\mathcal{E}_{\gamma}\vert}/\mathrm{SU}(2)^{\vert\mathcal{V}_{\gamma}\vert},\mathrm{d}g;\mathbb{C})$ is spanned by spin network states \cite{BaezGaugeTheory}. Furthermore, from the physical point of view, spin network states are kinematic states representing quantum $3$-geometries \cite{Barbieri,BaezSpinFoams}.
\bigskip

In order to describe transition amplitudes between spin network states defined on the boundary, we have to introduce suitable boundary observables, which are endowed with the corresponding quantum geometric data. Since we are working in the language of field theory, these observables should be functionals of the fundamental fields and compatible with the $\mathrm{SU}(2)$ gauge symmetry of the model. Following the general idea of \cite{FreidelGFTOverview}, we define GFT boundary observables in the following way:

\begin{Definition}[Boundary Observables of the Coloured Boulatov Model]
    Consider a closed $(2+1)$-coloured graph $\gamma\in\overline{\mathfrak{G}}_{2}$, which we fix to be our boundary graph and which we equip with source and target maps $s,t:\mathcal{E}_{\gamma}\to\mathcal{V}_{\gamma}$. Furthermore, let us choose a spin network $\Psi=(\gamma,\rho,i)$ on $\gamma$ with corresponding spin network function $\psi\in L^{2}(\mathrm{SU}(2)^{\vert\mathcal{E}_{\gamma}\vert}/\mathrm{SU}(2)^{\vert\mathcal{V}_{\gamma}\vert})$. Then, we define the Boulatov boundary observable to be the functional
    \begin{align*}
        \mathcal{O}_{\Psi}[\varphi_{0},\overline{\varphi}_{0}]:=\int_{\mathrm{SU}(2)^{3\vert\mathcal{V}_{\gamma}\vert}}\bigg (\prod_{v\in\mathcal{V}_{\gamma}}\prod_{i=1}^{3}&\mathrm{d}g_{vi}\bigg )\psi(\{g_{s(e)\widetilde{\gamma}(e)}g_{t(e)\widetilde{\gamma}(e)}^{-1}\}_{e\in\mathcal{E}_{\gamma}})\times\\&\times\bigg (\prod_{v\in V_{\gamma}} \varphi_{0}(g_{v3},g_{v2},g_{v1})\bigg )\bigg (\prod_{v\in \overline{V}_{\gamma}}\overline{\varphi}_{0}(g_{v3},g_{v2},g_{v1})\bigg ),
    \end{align*}
    where $g_{vi}$ are the three group elements of colours $i=1,2,3$ assigned to the three half-edge adjacent to the vertex $v\in\mathcal{V}_{\gamma}$ and where $\widetilde{\gamma}:\mathcal{E}_{\gamma}\to\mathcal{C}_{2}^{\ast}=\{1,2,3\}$ is the proper edge-colouring of the graph $\gamma$.
\end{Definition}

It is important to stress that we restrict only to a certain class of boundary states, namely to boundary states living on \textit{closed $(2+1)$-coloured graphs}. Hence, all the open graph appearing in the expansion of the transition amplitudes will be such that all external legs have the same colour $0$. This is done for purely technical reasons. Note that the boundary observables can straightforwardly be generalized to arbitrary admissible bicoloured boundary graphs. In this case, the corresponding observables are then functionals of the fields of \textit{all} colours.

\begin{Remark}
    Note that we do not only fix a boundary graph, but already a boundary graph with a fixed orientation and colouring and hence with a fixed topology. This is an important difference to the general definition in \cite{FreidelGFTOverview}, since in the uncoloured version, we only fix a graph and the (dual) $1$-skeleton alone is not enough to determine a topology.
\end{Remark}

With the observables defined above, it is straightforward to define the corresponding transition amplitudes of the coloured Boulatov model:

\begin{Definition}[Transition Amplitudes for the Coloured Boulatov Model]
    \label{DefTransAmpl}
    Let $\gamma\in\overline{\mathfrak{G}}_{2}$ be a closed $(2+1)$-coloured graph and $\Psi=(\gamma,\rho,i)$ be a spin network living on $\gamma$. Then the transition amplitude is defined by \begin{align*}
        \langle\mathcal{Z}_{\mathrm{cBM}}\vert \Psi\rangle:=\int\,\bigg(\prod_{l=0}^{3}\,\mathcal{D}\varphi_{l}\mathcal{D}\overline{\varphi}_{l}\bigg)\,e^{-\mathcal{S}_{\lambda}[\varphi_{l},\overline{\varphi}_{l}]}\mathcal{O}_{\Psi}[\varphi_{0},\overline{\varphi}_{0}].
    \end{align*}
\end{Definition}

For the following discussion, let us briefly recall and set up the following terminologies and notations which we use for open $(3+1)$-coloured graph $\mathcal{G}\in\mathfrak{G}_{3}$ with boundary graph $\gamma:=\partial\mathcal{G}$:

\begin{itemize}
    \item[(1)]Recall that the vertex set can be decomposed as $\mathcal{V}_{\mathcal{G}}=\mathcal{V}_{\mathcal{G},\mathrm{int}}\cup \mathcal{V}_{\mathcal{G},\partial}$, where vertices in $\mathcal{V}_{\mathcal{G},\mathrm{int}}$ are $4$-valent internal vertices and vertices in $\mathcal{V}_{\mathcal{G},\partial}$ are $1$-valent boundary vertices and are in one-to-one correspondence with the vertices of the boundary graph $\mathcal{V}_{\gamma}$.
    \item[(2)]Similarly, we decompose the edge set as $\mathcal{E}_{\mathcal{G}}=\mathcal{E}_{\mathcal{G},\mathrm{int}}\cup \mathcal{E}_{\mathcal{G},\partial}$, where edges in $\mathcal{E}_{\mathcal{G},\mathrm{int}}$ connect two internal vertices and edges in $\mathcal{E}_{\mathcal{G},\partial}$ are external legs, i.e.~edges connecting a vertex in $\mathcal{V}_{\mathcal{G},\mathrm{int}}$ with a vertex in $\mathcal{V}_{\mathcal{G},\partial}$. Note that the set $\mathcal{E}_{\mathcal{G},\partial}$ is also in one-to-one correspondence with $\mathcal{V}_{\gamma}$.
    \item[(3)]The set of faces (=2-bubbles) of $\mathcal{G}$ is denoted by $\mathcal{F}_{\mathcal{G}}$. This set can also be decomposed as $\mathcal{F}_{\mathcal{G}}=\mathcal{F}_{\mathcal{G},\mathrm{int}}\cup\mathcal{F}_{\mathcal{G},\partial}$, where $\mathcal{F}_{\mathcal{G},\mathrm{int}}$ is the set of ``\textit{internal faces}'', i.e.~the set of closed $2$-bubbles of $\mathcal{G}$ (they correspond to the internal edges of the simplicial complex $\Delta_{\mathcal{G}}$) and where $\mathcal{F}_{\mathcal{G},\partial}$ is the set of open $2$-bubbles of $\mathcal{G}$, i.e.~faces starting and ending at an external leg (they correspond to the edges on the boundary of $\Delta_{\mathcal{G}}$). There is a one-to-one correspondence between the sets of edges of the boundary graph $\mathcal{E}_{\gamma}$ and the set $\mathcal{F}_{\mathcal{G},\partial}$. We denote this bijection by
    \begin{equation}
        \begin{aligned}
            e:\partial\mathcal{F}_{\mathcal{G}}&\to\mathcal{E}_{\gamma}\\f&\mapsto e(f).
        \end{aligned}
    \end{equation}
    From a geometrical point of view, an open $2$-bubble $f$ is the interior part of a face of the dual complex touching the boundary and the edge $e(f)$ is the corresponding edge on the boundary dual complex, ``closing'' the face.
\end{itemize}

Expanding the interaction term of the action in the coupling, we can write the formal path integral of Definition \ref{DefTransAmpl} as a sum over Gaussian integrals, which will lead to a sum over all pair-wise contractions of fields in the product of interaction Lagrangians and the given fields within the boundary observable. Renaming $g_{e}:=g_{s_{e}i}g_{t_{e}i}^{-1}$ for each edge $e\in\mathcal{E}_{\gamma}$ of colour $i\in\{1,2,3\}$, we are left with an integration over all boundary edges, where the integrand is given by the spin network $\psi$ weighted by the corresponding spin foam amplitude for each Feynman diagram. More precisely, we can write
\begin{align}
    \label{transampl}
    \langle\mathcal{Z}_{\mathrm{cBM}}\vert \Psi\rangle=\sum_{\mathcal{G}\in\mathfrak{G}_{3}\text{ with }\partial\mathcal{G}=\gamma}\frac{1}{\mathrm{sym}(\mathcal{G})}\mathcal{A}_{\mathcal{G}}^{\lambda}[\Psi],
\end{align}
where the sum is over all open $(3+1)$-coloured graphs in $\mathfrak{G}_{3}$ with $\partial\mathcal{G}=\gamma$ and where the amplitude $\mathcal{A}_{\mathcal{G}}^{\lambda}[\Psi]$ for a given open $(3+1)$-coloured graph $\mathcal{G}$ is given by the Ponzano-Regge transition function together with a prefactor depending on $N$ and $\lambda$. More precisely, the amplitude is the $L^{2}(\mathrm{SU}(2)^{\vert\mathcal{E}_{\gamma}\vert},\mathrm{d}g)$-inner product
\begin{align}
    \mathcal{A}_{\mathcal{G}}^{\lambda}[\Psi]:=\langle\mathcal{A}_{\mathcal{G}}^{\lambda}\vert\psi\rangle_{L^{2}}=\int_{\mathrm{SU}(2)^{\vert\mathcal{E}_{\gamma}\vert}}\,\bigg (\prod_{e\in\mathcal{E}_{\gamma}}\mathrm{d}g_{e}\bigg )\,\mathcal{A}_{\mathcal{G}}^{\lambda}[\{g_{e}\}_{e\in\mathcal{E}_{\gamma}}]\cdot\psi(\{g_{e}\}_{e\in\mathcal{E}_{\gamma}}),
\end{align}
where $\psi\in L^{2}(\mathrm{SU}(2)^{\vert\mathcal{E}_{\gamma}\vert}/\mathrm{SU}(2)^{\vert\mathcal{V}_{\gamma}\vert})$ is the corresponding spin network function of $\Psi$. The functionals $\mathcal{A}_{\mathcal{G}}^{\lambda}[\{g_{e}\}_{e\in\mathcal{E}_{\gamma}}]$ are defined by
\begin{align}
    \mathcal{A}_{\mathcal{G}}^{\lambda}[\{g_{e}\}_{e\in\mathcal{E}_{\gamma}}]:=\bigg (\frac{\lambda\overline{\lambda}}{\delta^{N}(\mathds{1})}\bigg )^{\frac{\vert\mathcal{V}_{\mathcal{G},\mathrm{int}}\vert}{2}}\mathcal{Z}_{\mathrm{PR}}^{\mathcal{G}}[\{g_{e}\}_{e\in\mathcal{E}_{\gamma}}],
\end{align}
where the ``\textit{Ponzano-Regge functional}'' $\mathcal{Z}_{\mathrm{PR}}^{\mathcal{G}}[\{g_{e}\}_{e\in\mathcal{E}_{\gamma}}]$ is the well-known spin foam amplitude given by
\begin{equation}
    \begin{aligned}
        \mathcal{Z}_{\mathrm{PR}}^{\mathcal{G}}[\{g_{e}\}_{e\in\mathcal{E}_{\gamma}}]=\int_{\mathrm{SU}(2)^{\vert\mathcal{E}_{\mathcal{G}}\vert}}\bigg (\prod_{e\in\mathcal{E}_{\mathcal{G}}}\mathrm{d}h_{e}\bigg )\,\bigg \{\prod_{f\in\mathcal{F}_{\mathcal{G},\mathrm{int}}}&\delta^{N}\bigg (\overrightarrow{\prod_{e\in f}}h_{e}^{\varepsilon(e,f)}\bigg )\bigg \}\times\\&\times\bigg \{\prod_{f\in\mathcal{F}_{\mathcal{G},\partial}}\delta^{N}\bigg (g_{e(f)}^{\varepsilon(e(f),f)}\cdot\overrightarrow{\prod_{e\in f}}h_{e}^{\varepsilon(e,f)}\bigg )\bigg \}.
    \end{aligned}
\end{equation}
The starting points of the products within the delta functions corresponding to the non-cyclic faces (second line) are fixed to be one of the corresponding boundary vertices.
\bigskip

The interpretation of the quantity $\langle\mathcal{Z}_{\mathrm{cBM}}\vert \Psi\rangle$ is the following. If $\gamma$ has two boundary components, then it computes the probability amplitude (overlap) between these two states, where we sum over all topologies matching the given boundary topologies, each weighted by the Ponzano-Regge partition function. If $\gamma$ has a single boundary component, then $\langle\mathcal{Z}_{\mathrm{cBM}}\vert \Psi\rangle$ can be interpreted as the probability for the transition of the state from the vacuum.

\begin{Remark}
    More precisely, we should take the logarithm in the definition of $\langle\mathcal{Z}_{\mathrm{cBM}}\vert \Psi\rangle$, since then we only produce \textit{connected} Feynman graphs. However, we will mainly work with connected boundary graphs in the following and hence, all the disconnected parts produced in the amplitude are closed graphs and these additional vacuum diagrams are anyway cancelled by the normalization one usually puts in front of the path integral.
\end{Remark}

\subsection{Bubble Rooting and Core Graphs}
\label{Subse:Rooting}
The guiding idea of the following section is to collect different coloured graphs with the same amplitude, the same boundary and the same topology together. This essentially generalizes the bubble rooting procedure for closed graphs introduced in \cite{GurauLargeN1,GurauLargeN2,GurauLargeN3} to open graphs. We will restrict our attention to the three-dimensional case, although everything can easily be generalized to higher dimensions.
\bigskip

A suitable way to relate graphs in a topology- and boundary-preserving way is given by performing internal proper dipole moves, as discussed in Section \ref{SecDipoles}. Hence, we should have a look how amplitudes change when performing such a transformation. Before stating the result, let us prove the following preliminary lemma:

\begin{Lemma}
    \label{LemmaSphere}
    Consider a closed $(2+1)$-coloured graph $\gamma$ representing the $2$-sphere equipped with group elements on its edges. Furthermore, let $\mathcal{P}$ be a closed $3$-coloured path within the graph $\gamma$. Then
    \begin{align*}
        \delta\bigg(\overrightarrow{\prod_{e\in \mathcal{P}}}h_{e}^{\varepsilon(e,f)}\bigg )\prod_{f\in\mathcal{F}_{\gamma}}\delta^{N}\bigg (\overrightarrow{\prod_{e\in f}}h_{e}^{\varepsilon(e,f)}\bigg )=\delta^{N}(\mathds{1})\prod_{f\in\mathcal{F}_{\gamma}}\delta^{N}\bigg (\overrightarrow{\prod_{e\in f}}h_{e}^{\varepsilon(e,f)}\bigg ).
    \end{align*}
    The same is true if $\gamma$ is an open $(2+1)$-coloured graph representing the $2$-ball (=disk) and $\mathcal{P}$ is a closed $3$-coloured path in the interior, i.e.~not including external legs and edges of the boundary graph, if we replace $\mathcal{F}_{\gamma}$ by $\mathcal{F}_{\gamma,\mathrm{int}}$.
\end{Lemma}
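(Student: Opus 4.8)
The plan is to interpret the left-hand side geometrically: the product $\prod_{f\in\mathcal F_\gamma}\delta^N(\cdots)$ is the integrand (before integration over bulk group elements) of the Ponzano--Regge amplitude attached to a triangulation of the $2$-sphere, and each $\delta^N$ enforces flatness of the connection $\{h_e\}$ around the corresponding face. A closed $3$-coloured path $\mathcal P$ in $\gamma$ is dual to a closed loop in the triangulated $S^2$; since $S^2$ is simply connected, this loop bounds a disk, i.e. it is homotopic to the product of the (small) face-loops it encircles. Hence, on the support of the distribution $\prod_{f}\delta^N(\cdots)$, the holonomy $\overrightarrow{\prod_{e\in\mathcal P}}h_e^{\varepsilon(e,f)}$ is forced to equal the identity, so inserting $\delta(\cdot)$ evaluated at this holonomy just produces the constant $\delta(\mathds 1)=\delta^N(\mathds 1)$ (the latter equality because $\delta$ and $\delta^N$ agree at the identity up to the cutoff; more carefully, one should either use $\delta^N$ throughout or note that only the value at $\mathds 1$ matters). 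This is the conceptual content; the work is in making ``on the support'' precise as a distributional identity.

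Concretely, first I would set up notation: label the bulk edges of $\gamma$ (the $2$-bubbles correspond to edges of $\Delta_\gamma$, the vertices of $\Delta_\gamma$ to $3$-bubbles), pick a spanning structure, and use the standard gauge-fixing/tree argument for $BF$-type state sums: because $\gamma$ represents $S^2$, one can choose a maximal tree in the dual graph along which the $h_e$ can be gauge-fixed to $\mathds 1$, after which the remaining face-deltas express every holonomy as a word that, topologically, reduces using $\pi_1(S^2)=0$. Then I would argue by induction on the length of $\mathcal P$, using the elementary identity that $\delta(ab)\,\delta(b) = \delta(a)\,\delta(b)$ for group deltas (and its cutoff analogue on characters), to ``peel off'' one face at a time: whenever $\mathcal P$ shares an edge with a face $f$, rewrite the $\delta^N$ of that face to replace a segment of $\mathcal P$'s holonomy by the complementary segment, strictly shrinking $\mathcal P$ relative to $f$; iterate until $\mathcal P$ has been contracted to a point, at which stage the inserted delta is evaluated at $\mathds 1$. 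The simple connectivity of $S^2$ is exactly what guarantees this contraction terminates (no nontrivial homology class obstructs it); combinatorially this is the statement that any cycle in a graph drawn on $S^2$ is a sum of face boundaries.

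For the open case, $\gamma$ represents the $2$-ball, whose fundamental group is also trivial, and the hypothesis that $\mathcal P$ lies in the interior (uses no external legs, no boundary-graph edges) means $\mathcal P$ is a cycle that can be contracted using only the \emph{internal} faces $\mathcal F_{\gamma,\mathrm{int}}$ --- i.e. the same peeling argument applies verbatim with $\mathcal F_\gamma$ replaced by $\mathcal F_{\gamma,\mathrm{int}}$, since a cycle in the interior of a disk bounds a sub-disk triangulated entirely by interior faces. One must just check that the contracting homotopy never needs to cross the boundary, which follows because an interior loop in a disk is null-homotopic within the interior.

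\textbf{Main obstacle.} The delicate point is the passage from the heuristic ``the holonomy is trivial on the support'' to a rigorous distributional equality, compounded by the presence of the cutoff $N$: $\delta^N$ is not a true Dirac delta, so $\delta^N(g)\,\delta^N(g)\neq \delta^N(\mathds 1)\,\delta^N(g)$ in general, and the peeling identity $\delta(ab)\delta(b)=\delta(a)\delta(b)$ must be handled with care for $\delta^N$. The cleanest route is probably to expand every $\delta^N$ in characters $\sum_{j\le N}(2j+1)\chi^j$, perform the bulk integrals using orthogonality of matrix elements (which is exactly how the amplitude collapses to $6j$-symbols), and observe that the insertion of $\delta(\overrightarrow{\prod_{e\in\mathcal P}}h_e^{\varepsilon})$ --- a \emph{full} delta, note, without cutoff on the left-hand side as written, though the same works with $\delta^N$ --- imposes a constraint that is automatically satisfied by every term surviving the orthogonality relations, by the tree-gauge-fixing argument, so it contributes only its value at the identity, namely $\delta^N(\mathds 1)$. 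I would expect the bulk of the writing to be in bookkeeping this character expansion and the tree gauge-fixing cleanly; the topology input ($\pi_1 = 0$ for sphere and ball) is used exactly once, to certify that the gauge-fixing tree covers all the holonomy freedom so that $\mathcal P$'s holonomy is a dependent variable equal to $\mathds 1$.
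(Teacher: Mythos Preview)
Your proposal is correct and follows essentially the same approach as the paper: both arguments use the planarity of $\gamma$ (equivalently, simple-connectedness of $S^{2}$) to say that any closed path $\mathcal{P}$ bounds a collection of bicoloured faces, and then ``peel off'' those face delta functions one at a time to shrink $\mathcal{P}$ to the identity. The paper's proof is in fact rather informal --- it gives a worked example, invokes planarity, and defers a rigorous treatment of the $\delta^{N}$ manipulations to \cite{GurauLargeN1,GurauLargeN3} --- so your explicit identification of the cutoff issue and the peeling identity $\delta(ab)\,\delta(b)=\delta(a)\,\delta(b)$ as the key technical steps is more careful than what the paper writes, not less; the character-expansion/tree-gauge-fixing route you sketch as an alternative is not in the paper but would also work.
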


\begin{proof}
    Since $\gamma$ represents the $2$-sphere, it is in particular a ``\textit{planar}'' graph, which means that it can be drawn in such a way that all the faces of the underlying graph (=regions bounded by a closed set of vertices and edges) are actually also faces in the coloured sense, i.e.~they are bicoloured. In other words, if we represent $\gamma$ in the stranded diagram picture, which in the $2$-dimensional case is just a ribbon graph, it can be drawn in such way that there are no crossing of lines. As a consequence, every closed path within $\gamma$ enclosed a set of faces of the graph and using all the corresponding delta functions allows to contract the path to a point. As an example, consider the graph drawn in figure \ref{fig:LemmaSphere}.
    \begin{figure}[H]
        \centering
        \includegraphics[scale=1.2]{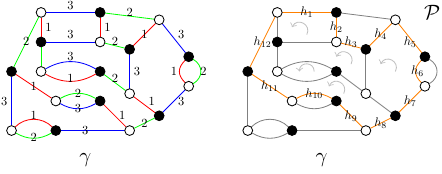}
        \caption{A planar $(3+1)$-coloured graph representing a $2$-sphere and a closed $3$-coloured path $\mathcal{P}$ (orange) equipped with group elements on its edges. Using all the faces enclosed in the path $\mathcal{P}$, the delta function associated to $\mathcal{P}$ can be replaced by $\delta^{N}(\mathds{1})$.\label{fig:LemmaSphere}}
    \end{figure}
    The example shows a closed $(2+1)$-coloured graph $\gamma$ representing a $2$-sphere, drawn in a planar representation, and the right-hand side shows a closed path, denoted by $\mathcal{P}$, within $\gamma$. Due to planarity, we can use all the delta functions corresponding to the faces enclosed by $\mathcal{P}$ in order to shrink $\mathcal{P}$ until it becomes a face of the graph itself and can hence be replaced by $\mathds{1}$:
    \begin{equation}
        \begin{aligned}
            &\delta^{N}(h_{1}^{-1}h_{2}h_{3}^{-1}h_{4}h_{5}^{-1}h_{6}h_{7}^{-1}h_{8}h_{9}^{-1}h_{10}h_{11}^{-1}h_{12})\xrightarrow{\text{enclosed faces}}\delta^{N}(\mathds{1})
        \end{aligned}
    \end{equation}
    In other words, due to planarity, the path can always be shrunk to identity by using all the delta functions, which are enclosed. The same is true if $\gamma$ is an open graph representing a disk as long as the closed path lies in the interior and is not touching the boundary. Note that from the topological point of view, the result is a consequence of the simply-connectedness of the $2$-sphere and $2$-disk, since every closed path can be contracted to a point. A more technical and rigorous proof of a similar statement can be found in \cite{GurauLargeN1,GurauLargeN3}.
\end{proof}

Using the above lemma, we can now show how amplitudes change under performing internal proper dipole moves, which essentially generalizes Lemma 6 in \cite{GurauLargeN3} to the case of graphs with boundary:

\begin{Lemma}
    \label{AmplChangeInt}
    Let $\mathcal{G}\in\mathfrak{G}_{3}$ with $\gamma:=\partial\mathcal{G}$ and $d_{k}$ with $k\in\{1,2,3\}$ be an internal proper $k$-dipole in $\mathcal{G}$. Then the amplitudes of $\mathcal{G}$ and $\mathcal{G}/d_{k}$ satisfy
    \begin{align*}
        &k=1:\hspace{0.5cm}\mathcal{A}_{\mathcal{G}}^{\lambda}[\{g_{e}\}_{e\in\mathcal{E}_{\gamma}}]=(\lambda\overline{\lambda})\mathcal{A}_{\mathcal{G}/d_{1}}^{\lambda}[\{g_{e}\}_{e\in\mathcal{E}_{\gamma}}]\\
        &k=2:\hspace{0.5cm}\mathcal{A}_{\mathcal{G}}^{\lambda}[\{g_{e}\}_{e\in\mathcal{E}_{\gamma}}]=(\lambda\overline{\lambda})\delta^{N}(\mathds{1})^{-1}\mathcal{A}_{\mathcal{G}/d_{2}}^{\lambda}[\{g_{e}\}_{e\in\mathcal{E}_{\gamma}}]\\
        &k=3:\hspace{0.5cm}\mathcal{A}_{\mathcal{G}}^{\lambda}[\{g_{e}\}_{e\in\mathcal{E}_{\gamma}}]=(\lambda\overline{\lambda})\mathcal{A}_{\mathcal{G}/d_{3}}^{\lambda}[\{g_{e}\}_{e\in\mathcal{E}_{\gamma}}].
    \end{align*}
\end{Lemma}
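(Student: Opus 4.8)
The plan is to analyze directly how the Ponzano-Regge functional $\mathcal{Z}_{\mathrm{PR}}^{\mathcal{G}}[\{g_e\}]$ changes when the $k$-dipole $d_k$ is contracted, and then track the prefactor $(\lambda\overline{\lambda}/\delta^N(\mathds{1}))^{|\mathcal{V}_{\mathcal{G},\mathrm{int}}|/2}$ separately. Contracting $d_k$ removes two internal vertices $v,w$, so $|\mathcal{V}_{\mathcal{G},\mathrm{int}}|$ drops by $2$ and the prefactor acquires a factor $\delta^N(\mathds{1})/(\lambda\overline{\lambda})$. It therefore remains to show that $\mathcal{Z}_{\mathrm{PR}}^{\mathcal{G}}$ equals $\mathcal{Z}_{\mathrm{PR}}^{\mathcal{G}/d_k}$ up to a factor $(\lambda\overline{\lambda})^2\,\delta^N(\mathds{1})^{-1}$ for $k=1$, $(\lambda\overline{\lambda})^2\,\delta^N(\mathds{1})^{-2}$ for $k=2$, and $(\lambda\overline{\lambda})^2\,\delta^N(\mathds{1})^{-1}$ for $k=3$; combining with the prefactor change gives the stated results.

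First I would set up notation for the local picture of a $k$-dipole: two internal vertices $v,w$ joined by $k$ edges of colours $i_1,\dots,i_k$, with $d+1-k$ further edges at each of $v$ and $w$ carrying the remaining colours, and — by the internal proper hypothesis and Theorem \ref{DipoleProper}(1) — at least one of the two $(d+1-k)$-bubbles separated by $d_k$ representing a $(d-k)$-sphere; here $d=3$. The group integrations over the $k$ edges of the dipole and over the remaining edges incident to $v,w$ are local: performing the $\mathrm{SU}(2)$ integrals $\mathrm{d}h_e$ over the $k$ dipole edges, one uses the delta functions of the faces running through the dipole. For $k=1$ (a $1$-dipole), the single dipole edge carries a group element that can be absorbed by translation invariance of the Haar measure into the reconnected edges, so after integrating out $h$ over the dipole edge and the trivial integrations at $v,w$ one simply identifies the reconnected strands; this produces two leftover factors of $\int\mathrm{d}h=1$ at the two collapsed vertices but, crucially, the face count changes so that one obtains $\mathcal{Z}_{\mathrm{PR}}^{\mathcal{G}}=\delta^N(\mathds{1})^{-1}\,\mathcal{Z}_{\mathrm{PR}}^{\mathcal{G}/d_1}\cdot(\text{trivial})$ — I would check the exact power of $\delta^N(\mathds{1})$ by carefully counting internal faces lost, which is where Lemma \ref{LemmaSphere} is used: the $(d+1-k)$-bubble that is a sphere is planar, so the closed $3$-coloured path arising from the strands of $v,w$ inside that bubble can be collapsed, replacing its delta function by $\delta^N(\mathds{1})$. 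For $k=3$ (a $3$-dipole), the separated bubbles are $0$-spheres (two points), the "path" is trivial, and the combinatorics give the same overall factor as $k=1$. For $k=2$, the separated bubble is a $1$-sphere (a cycle of faces), and collapsing it via the same planarity argument costs an extra $\delta^N(\mathds{1})^{-1}$, accounting for the additional $\delta^N(\mathds{1})^{-1}$ in the $k=2$ line.

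The key steps in order are: (i) localize the computation to the subgraph around the dipole and write $\mathcal{Z}_{\mathrm{PR}}^{\mathcal{G}}$ as $\mathcal{Z}_{\mathrm{PR}}^{\mathcal{G}/d_k}$ with the dipole-region integrals factored out; (ii) perform those integrals, using Haar invariance to absorb dipole-edge group elements and using Lemma \ref{LemmaSphere} to collapse the spherical bubble's closed path to identity, yielding explicit powers of $\delta^N(\mathds{1})$; (iii) count precisely how many internal faces are created or destroyed by the contraction, distinguishing the three cases $k=1,2,3$; (iv) reinstate the $(\lambda\overline{\lambda}/\delta^N(\mathds{1}))^{|\mathcal{V}_{\mathcal{G},\mathrm{int}}|/2}$ prefactor and observe that $|\mathcal{V}_{\mathcal{G},\mathrm{int}}|\to|\mathcal{V}_{\mathcal{G},\mathrm{int}}|-2$ contributes $(\lambda\overline{\lambda})^{-1}\delta^N(\mathds{1})$, and combine. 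One should also note that boundary faces are untouched: by Proposition \ref{PropDipoleBound} an internal dipole leaves $\partial\mathcal{G}$ and the boundary faces $\mathcal{F}_{\mathcal{G},\partial}$ invariant, so the $g_{e(f)}$-dependence of both amplitudes matches term by term, which is why the identity holds as functionals of $\{g_e\}_{e\in\mathcal{E}_\gamma}$ and not merely after integration against $\psi$.

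The main obstacle I expect is the bookkeeping in step (iii): getting the exact powers of $\delta^N(\mathds{1})$ right requires a careful analysis of exactly which internal faces of $\mathcal{G}$ survive in $\mathcal{G}/d_k$, which ones are merged, and which redundant delta functions get replaced by $\delta^N(\mathds{1})$ when the spherical bubble's path is collapsed via Lemma \ref{LemmaSphere}. This is the direct analogue of Lemma 6 in \cite{GurauLargeN3}, and the cleanest route is to mimic that argument: reduce to the case where the spherical $(d+1-k)$-bubble is as small as possible (a single vertex pair, reachable by internal proper $1$-dipole moves inside the bubble, which do not change the amplitude count by the same recursion), so that the local computation becomes completely explicit. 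The case $k=2$ is the delicate one because the cycle of faces being collapsed contributes the asymmetric extra factor $\delta^N(\mathds{1})^{-1}$, and one must make sure this is not double-counted against the prefactor change.
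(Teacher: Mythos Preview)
Your overall strategy is the same as the paper's: isolate the prefactor change, then compute how the Ponzano--Regge functional transforms locally around the dipole, invoking Lemma~\ref{LemmaSphere} for the spherical bubble and Proposition~\ref{PropDipoleBound} to ensure the boundary data are untouched. That is exactly the route the paper takes (it writes out the local integrals explicitly for $k=1$, performs a change of variables $h_{w;i}\mapsto h_{v;i}h_3^{-1}h_{w;i}$, and then uses Lemma~\ref{LemmaSphere} on the leftover closed $3$-coloured path inside the spherical $\hat i$-bubble).

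However, your first paragraph contains an arithmetic slip that propagates: the Ponzano--Regge functional $\mathcal{Z}_{\mathrm{PR}}^{\mathcal{G}}$ carries no $\lambda$-dependence, so the relation you need between $\mathcal{Z}_{\mathrm{PR}}^{\mathcal{G}}$ and $\mathcal{Z}_{\mathrm{PR}}^{\mathcal{G}/d_k}$ cannot involve $(\lambda\overline{\lambda})^2$. The correct targets are
\[
\mathcal{Z}_{\mathrm{PR}}^{\mathcal{G}}=\delta^N(\mathds{1})\,\mathcal{Z}_{\mathrm{PR}}^{\mathcal{G}/d_1},\qquad
\mathcal{Z}_{\mathrm{PR}}^{\mathcal{G}}=\mathcal{Z}_{\mathrm{PR}}^{\mathcal{G}/d_2},\qquad
\mathcal{Z}_{\mathrm{PR}}^{\mathcal{G}}=\delta^N(\mathds{1})\,\mathcal{Z}_{\mathrm{PR}}^{\mathcal{G}/d_3},
\]
which combine with the prefactor ratio $(\lambda\overline{\lambda})/\delta^N(\mathds{1})$ to give the Lemma. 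Your description of the three cases also blurs two distinct mechanisms: for $k=1$ the extra $\delta^N(\mathds{1})$ comes from the redundant delta $\delta^N(H_v^{01}H_v^{02}H_v^{12})$ that Lemma~\ref{LemmaSphere} collapses; for $k=3$ there is no such path---the extra $\delta^N(\mathds{1})$ is simply one of the $\binom{3}{2}=3$ bicoloured faces living entirely inside the dipole, one of which is redundant given the other two (the paper notes this explicitly in its Remark). For $k=2$ the single face inside the dipole and the single merged face outside balance so that no net $\delta^N(\mathds{1})$ appears. Finally, your proposed ``cleanest route'' of first shrinking the spherical bubble by internal proper $1$-dipole moves is circular for establishing the $k=1$ case unless set up as an induction on bubble size; the paper avoids this by doing the local computation directly for an arbitrary spherical bubble.
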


\begin{proof}
    We only prove the case of $k=1$ since the proofs for the other two cases are analogues.
    \bigskip

    We need to distinguish between the cases where the dipole edge has colour $i\neq 0$ or colour $i=0$. In the first case, the general situation is sketched in figure \ref{fig:LemmaAmplitudesDipole} below.
    \begin{figure}[H]
        \centering
        \includegraphics[scale=1.2]{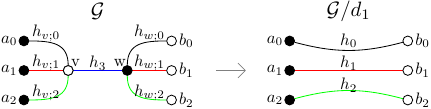}
        \caption{A $1$-dipole contraction involving an edge of colour $3$ and group elements assigned to all the edges.\label{fig:LemmaAmplitudesDipole}}
    \end{figure}
    We consider a $1$-dipole consisting of an edge, which without loss of generality is taken of colour $3$, connecting two internal vertices $v,w\in\mathcal{V}_{\mathcal{G},\mathrm{int}}$. Furthermore, we assume that the $3$-bubble $\mathcal{B}_{va_{0}a_{1}a_{2}}^{012}$ of colour $012$ containing the vertex $v$ represents a $2$-sphere, whereas the $3$-bubble $\mathcal{B}_{wa_{0}a_{1}a_{2}}^{012}$ of colour $012$ containing $w$ is allowed to be open and to have arbitrary topology. Note that the vertices $a_{i}$ do not have to be distinct and similar for the $b_{i}$'s. Furthermore, the vertex $b_{0}$ could in principle also be a $1$-valent boundary vertex. Now, let us denote the group elements living on the edges $\overline{va_{i}}$ by $h_{v;i}$, the group elements living on $\overline{b_{i}w}$ by $h_{w;i}$ and the group element assigned to the dipole edge $\overline{vw}$ by $h_{3}$. The contribution of all these edges to the Ponzano-Regge transition function is given by the following integrals:
    \begin{equation}
        \begin{aligned}
            \int_{\mathrm{SU}(2)^{7}}&\,\mathrm{d}h_{3}\bigg (\prod_{i=0}^{2}\mathrm{d}h_{v;i}\,\mathrm{d}h_{w;i}\bigg )\\
            &\delta^{N}(h_{v;0}h_{3}^{-1}h_{w;0}H^{03}[g])\delta^{N}(h_{v;1}h_{3}^{-1}h_{w;1}H^{13})\delta^{N}(h_{v;2}h_{3}^{-1}h_{w;2}H^{23})\\
            & \delta^{N}(h_{v;0}h_{v;1}^{-1}H_{v}^{01})\delta^{N}(h_{v;2}h_{v;0}^{-1}H_{v}^{02})\delta^{N}(h_{v;1}h_{v;2}^{-1}H_{v}^{12})\\
            &\delta^{N}(h_{w;1}^{-1}h_{w;0}H_{w}^{01}[g])\delta^{N}(h_{w;0}^{-1}h_{w;2}H_{w}^{02}[g])\delta^{N}(h_{w;2}^{-1}h_{w;1}H_{w}^{12}).
        \end{aligned}
    \end{equation}
    The group elements $H^{i3}$ for $i\in\{0,1,2\}$ denote the products of group elements assigned to the bicoloured path of colour $i3$ starting at $b_{i}$ and ending at $a_{i}$. The product $H^{03}$ could in principle contain a boundary group element, which is indicated by the notation $[g]$, since the corresponding face could be non-cyclic. The group elements $H^{ij}_{v}$ with $i,j\in\{0,1,2\}$ and $i<j$ are the product of the remaining group elements of the edges belonging to the faces of colour $ij$ containing the vertex $v$. Since the $3$-bubble $\mathcal{B}_{v}^{012}$ is closed, all these faces are cyclic and hence these products do not contain boundary group elements. Lastly, $H^{ij}_{w}$ with $i,j\in\{0,1,2\}$ and $i<j$ are the product of the remaining group elements of the edges belonging to the faces of colour $ij$ containing the vertex $w$. The faces of colour $01$ and $02$ could in principle be non-cyclic and hence $H^{01}_{w}$ and $H^{02}_{w}$ could again contain one of the boundary group elements, which we again indicate by $[g]$. To start with, let us change the variables $h_{w;i}$ to $h_{i}:=h_{v;i}h_{3}^{-1}h_{w;i}$ for all $i\in\{0,1,2\}$. Under this transformation, we see that the integrand is no longer dependent on $h_{3}$ and hence, we can integrate trivially over it. Using $\mathrm{d}h^{\prime}_{w;i}=\mathrm{d}h_{w;i}$, the contribution from the dipole becomes
    \begin{equation}
        \begin{aligned}
            \int_{\mathrm{SU}(2)^{6}}&\,\bigg (\prod_{i=0}^{2}\mathrm{d}h_{v;i}\,\mathrm{d}h_{i}\bigg )\\
            &\delta^{N}(h_{0}H^{03}[g])\delta^{N}(h_{1}H^{13})\delta^{N}(h_{2}H^{23})\\
            & \delta^{N}(h_{v;0}h_{v;1}^{-1}H_{v}^{01})\delta^{N}(h_{v;2}h_{v;0}^{-1}H_{v}^{02})\delta^{N}(h_{v;1}h_{v;2}^{-1}H_{v}^{12})\\
            &\delta^{N}(h_{1}^{-1}h_{v;1}h^{-1}_{v;0}h_{0}H_{w}^{01}[g])\delta^{N}(h_{0}^{-1}h_{v;0}h^{-1}_{v;2}h_{2}H_{w}^{02}[g])\delta^{N}(h_{2}^{-1}h_{v;2}h^{-1}_{v;1}h_{1}H_{w}^{12}).
        \end{aligned}
    \end{equation}
    Next, we can use the two delta functions $\delta^{N}(h_{v;0}h_{v;1}^{-1}H_{v}^{01})$ and $\delta^{N}(h_{v;2}h_{v;0}^{-1}H_{v}^{02})$ to integrate over the group elements $h_{v;1}$ and $h_{v;2}$, which results into the replacements $h_{v;1}:=H^{01}_{v}h_{v;0}$ and $h^{-1}_{v;2}:=h_{v;0}^{-1}H^{02}_{v}$. We are left with
    \begin{equation}
        \begin{aligned}
            \int_{\mathrm{SU}(2)^{4}}&\,\mathrm{d}h_{0;v}\mathrm{d}h_{0}\mathrm{d}h_{1}\mathrm{d}h_{2}\\
            &\delta^{N}(h_{0}H^{03}[g])\delta^{N}(h_{1}H^{13})\delta^{N}(h_{2}H^{23})\\
            &\delta^{N}(H^{01}_{v}H^{02}_{v}H^{12}_{v})\\
            &\delta^{N}(h_{1}^{-1}H^{01}_{v}h_{0}H^{01}_{w}[g])\delta^{N}(h_{0}^{-1}H^{02}_{v}h_{2}H^{02}_{w}[g])\delta^{N}(h_{2}^{-1}H^{12}_{v}h_{1}H^{12}_{w}).
        \end{aligned}
    \end{equation}
    We see that the integration over $h_{v;0}$ is now trivial and so it can be taken out thanks to the Haar measure normalisation.

    \bigskip
    The interpretation of this result is as follows. We integrate over three group elements $h_{0}$, $h_{1}$, $h_{2}$ which are the group elements living on the three edges $\overline{a_{i}b_{i}}$ in the graph $\mathcal{G}/d_{1}$. The first row of delta functions corresponds to the bicoloured paths $i3$ for $i\in\{0,1,2\}$ containing one of the three edges $\overline{a_{i}b_{i}}$. For the third line, before contracting the dipole, we had for each pair $ij$ with $i,j\in\{0,1,2\}$ and $i<j$ precisely two bicoloured faces in our integration, one containing $v$ and one containing $w$. After contracting the dipole, we get rid of the colour $3$ edge and connect all the lines with colours $i\in\{0,1,2\}$ to each other. As a consequence, we combine for each $i,j$ the two bicoloured paths, which before contracting the dipole were disconnected by the colour $3$ edge. To sum up, the third line of delta functions corresponds to all the faces with colour $i,j\in\{0,1,2\}$ of the graph containing two of the edges $\overline{a_{i}b_{i}}$. At the end of the day, we see that the first and third line of our result above precisely corresponds to the contribution of the three edges $\overline{a_{i}b_{i}}$ of the contracted graph $\mathcal{G}/d_{1}$. Hence, we have related the amplitude of $\mathcal{G}$ with the amplitude of $\mathcal{G}/d_{1}$ up to the additional factor of $\delta^{N}(H^{01}_{v}H^{02}_{v}H^{12}_{v})$. To get rid of this term, we make use of the assumption that the bubble $\mathcal{B}^{012}_{va_{0}a_{1}a_{2}}$ is spherical. The product $H^{01}_{v}H^{02}_{v}H^{12}_{v}$ corresponds to a closed $3$-coloured path, which completely lies within the graph obtained by cutting the vertex $v$ from the spherical $3$-bubble $\mathcal{B}_{va_{0}a_{1}a_{2}}^{012}$. The latter is a graph representing the $2$-disk and since all the delta function corresponding to closed faces of this planar graph are also contained in the amplitude of $\mathcal{G}/d_{1}$, we can replace $\delta^{N}(H^{01}_{v}H^{02}_{v}H^{12}_{v})$ by $\delta^{N}(\mathds{1})$, according to Lemma \ref{LemmaSphere}. As a consequence, taking into account that we reduce the number of internal vertices by two, we have that
    \begin{align}
        \mathcal{A}_{\mathcal{G}}^{\lambda}[\{g_{e}\}_{e\in\mathcal{E}_{\gamma}}]=\frac{\lambda\overline{\lambda}}{\delta^{N}(\mathds{1})}\delta^{N}(\mathds{1})\mathcal{A}_{\mathcal{G}/d_{1}}^{\lambda}[\{g_{e}\}_{e\in\mathcal{E}_{\gamma}}]=(\lambda\overline{\lambda})\cdot\mathcal{A}_{\mathcal{G}/d_{1}}^{\lambda}[\{g_{e}\}_{e\in\mathcal{E}_{\gamma}}],
    \end{align}
    which concludes the proof. In the second case, i.e.~the case where the dipole edge is of colour $0$, the proof is exactly the same with the difference that now all the faces containing the dipole edge could contain a boundary group element and none of the faces containing the vertex $w$.
\end{proof}

\begin{Remark}
    The reason for the additional factor of $\delta^{N}(\mathds{1})$ which cancels the factor $1/\delta^{N}(\mathds{1})$ is not the same for $1$ and $3$-dipoles move. Instead of obtaining a redundant delta as in the $1$-dipole move, for $3$-dipoles, the amplitudes of $\mathcal{G}$ and $\mathcal{G}/d_{3}$ can be directly related, but there is by definition one redundant face within in the $3$-dipoles giving the factor of $\delta^{N}(\mathds{1})$.
\end{Remark}

Next, let us generalize the bubble rooting procedure introduced in \cite{GurauLargeN1,GurauLargeN2,GurauLargeN3} to the case of open graphs. Let $\gamma\in\overline{\mathfrak{G}}_{2}$ be a closed $(2+1)$-coloured graph --our boundary graph-- and let $\mathcal{G}\in\mathfrak{G}_{3}$ be a connected and open $(d+1)$-coloured graph with $\partial\mathcal{G}=\gamma$. For every colour $i\in\mathcal{C}_{d}$, we have two possibilities:
\begin{itemize}
    \item[(1)]All $d$-bubbles without colour $i$ are closed and represent $d$-spheres.
    \item[(2)]There exists at least one $d$-bubble without colour $i$, which is not spherical. Note that this includes both the case of open and closed but not spherical $d$-bubble.
\end{itemize}

\begin{Remark}
    For any graph $\mathcal{G}$ in $\mathfrak{G}_{3}$ with $\partial\mathcal{G}$ non-empty, property (1) can only be satisfied in the case $i=0$, since for any $2$-bubble $\mathcal{B}$ in $\partial\mathcal{G}$ of colour $ij$ for $i,j\in\{1,2,3\}$, there exists a $3$-bubble in $\mathcal{G}$ of colour $0ij$, whose boundary contains $\mathcal{B}$ as a connected component. If $\mathcal{G}$ represents a manifold, then property (1) is necessarily satisfied in the case $i=0$, since all its internal $3$-bubbles represent $3$-spheres.
\end{Remark}

In case (1), we choose one of the spherical $3$-bubbles without colour $i$ as ``\textit{principal root}'' and denote it by $\mathcal{R}^{i}_{(1)}$. In case (2), we choose one of the non-spherical $3$-bubbles without colour $i$ as ``\textit{principal root}'' $\mathcal{R}^{i}_{(1)}$ and all other non-spherical $3$-bubbles without colour $i$ as ``\textit{branching roots}'', which we denote by $\mathcal{R}^{i}_{(\mu)}$ with some labelling parameter $\mu$. Next, we need the ``connectivity graph of colour $i$'', which is defined as follows:

\begin{Definition}[Connectivity Graphs]
    Let $\mathcal{G}\in\mathfrak{G}_{3}$ be some open $(3+1)$-coloured graph. Then the ``connectivity graph of colour $i\in\mathcal{C}_{3}$'' is the graph $\mathcal{C}^{i}[\mathcal{G}]$ defined as follows:
    \begin{itemize}
        \item[(1)]There is a vertex in $\mathcal{C}^{i}[\mathcal{G}]$ for each $3$-bubbles without colour $i$ in $\mathcal{G}$.
        \item[(2)]Two vertices in $\mathcal{C}^{i}[\mathcal{G}]$ are connected by an edge if and only if there is an edge of colour $i$ in $\mathcal{G}$ connecting the two $3$-bubbles corresponding to the two vertices.
    \end{itemize}
\end{Definition}

\begin{Remark}
    The connectivity graphs corresponding to some coloured graph $\mathcal{G}$ are in general pseudographs, i.e.~multigraphs in which also tadpole lines (=edges starting and ending at the same vertex) are allowed.
\end{Remark}

The bubble rooting procedure is now defined via the following algorithm:

\begin{itemize}
    \item[(1)]Take the graph $\mathcal{C}^{0}[\mathcal{G}]$ and choose a maximal tree $\mathcal{T}^{0}$ in it. There are two different types of vertices in this graph, namely the roots and all the other vertices representing spherical $3$-bubbles. Now, every vertex is connected to the principal root by a unique path contained in the maximal tree. For each branch root, let us draw the incident edge belonging to the tree, which is contained in this path, as a dashed line. All the other edges we draw as solid lines.
    \item[(2)]The solid lines in the tree $\mathcal{T}^{0}$ are internal proper $1$-dipoles and we contract them. Repeating this procedure for all solid lines, we are left with either a unique $\hat{0}$-bubble, which is spherical, or with a bunch of non-spherical $\hat{0}$-bubbles.
    \item[(3)]Next, we choose a maximal tree $\mathcal{T}^{1}$ in the $1$-connectivity graph in the graph obtained \textit{after} step (2). Note that this tree in general depends on the tree $\mathcal{T}^{0}$. Then, we repeat step (2), i.e.~we contract the internal proper $1$-dipole corresponding to the solid lines.
    \item[(4)]We repeat this procedure for all other colours by choosing trees $\mathcal{T}^{j}$ for all $j\in\{0,\dots,3\}$, which depend on the chosen trees $\mathcal{T}^{j-1},\dots,\mathcal{T}^{0}$.
\end{itemize}

The procedure described in the algorithm above for some colour $j\in\mathcal{C}_{3}$ is sketched in the figure \ref{fig:Rooting} below.

\begin{figure}[H]
    \centering
    \includegraphics[scale=1.1]{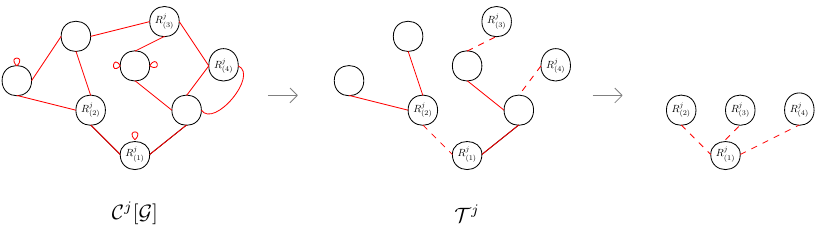}
    \caption{The rooting procedure per iteration $j$.\label{fig:Rooting}}
\end{figure}

The graph obtained after some iteration $j$ clearly depends on the choice of tree $\mathcal{T}^{j}$. However, the obtained graph is independent of the order of proper $1$-dipoles contracted within some fixed tree. Furthermore, note that the procedure is well defined, since when we contract the tree of colour $j$, we do not change the number of internal proper $1$-dipoles of colours $\neq j$. This is a consequence of the following lemma:

\begin{Lemma}
    \label{LemmaCoreGraphs}
    Let $\mathcal{G}$ be an open $(d+1)$-coloured graph. An internal proper $1$-dipole move of colour $i\in\mathcal{C}_{d}$ does not change the number as well as the topologies of all $d$-bubbles of $\mathcal{G}$ involving colour $i$.
\end{Lemma}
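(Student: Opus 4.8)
The plan is to localise the effect of an internal proper $1$-dipole contraction of colour $i$ to one $d$-bubble at a time. Recall that a $d$-bubble of $\mathcal{G}$ is a connected component of the subgraph obtained by deleting all edges of one colour, so the $d$-bubbles \emph{involving} colour $i$ are exactly those omitting some colour $j\in\mathcal{C}_d\setminus\{i\}$, and for different such $j$ these families are disjoint. I would therefore fix $j\neq i$ and analyse only the $d$-bubbles that omit colour $j$. First I would observe that the two internal vertices $v,w$ of the $1$-dipole $d_1$ lie in the \emph{same} such bubble $\mathcal{B}$, because they are joined by an edge of colour $i\neq j$ which is not deleted. Then I would inspect which edges the contraction $\mathcal{G}\mapsto\mathcal{G}/d_1$ actually modifies: the colour-$i$ edge $\overline{vw}$ and, for each $k\in\mathcal{C}_d\setminus\{i,j\}$, the colour-$k$ edges at $v$ and at $w$ (replaced by a single colour-$k$ edge joining their far endpoints) all sit inside $\mathcal{B}$, while the two colour-$j$ edges at $v$ and $w$ never appear in any $d$-bubble omitting $j$. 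Hence every such bubble other than $\mathcal{B}$ is left untouched, and $\mathcal{B}$ is replaced by the graph $\mathcal{B}'$ obtained from $\mathcal{B}$ by contracting the subgraph $\{v,w,\overline{vw}\}$, now regarded as a $1$-dipole of colour $i$ \emph{inside} the $d$-coloured graph $\mathcal{B}$ -- a genuine dipole, since the two $(d-1)$-bubbles of colours $\mathcal{C}_d\setminus\{i,j\}$ it separates are subbubbles of the two distinct $d$-bubbles of $\mathcal{G}$ separated by $d_1$, hence distinct.

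It would then remain to prove two things: that $\mathcal{B}'$ is connected (so the number of $d$-bubbles omitting $j$ is preserved), and that $|\Delta_{\mathcal{B}'}|$ and $|\Delta_{\mathcal{B}}|$ are PL-homeomorphic. I would deduce both from the claim that the restricted $1$-dipole is an \emph{internal proper} dipole of $\mathcal{B}$. By the internal-properness hypothesis on $d_1$, one of the two $d$-bubbles of $\mathcal{G}$ separated by $d_1$ -- say the one $\mathcal{S}_v$ through $v$ -- represents a $(d-1)$-sphere; the $(d-1)$-bubble of $\mathcal{B}$ separated by the restricted dipole and containing $v$ is a subbubble of $\mathcal{S}_v$, and under the bubble/simplex correspondence (Subsection \ref{SecBubbles}) it is precisely the link, in the triangulation $\Delta_{\mathcal{S}_v}$, of the vertex of colour $j$ of the simplex of $\Delta_{\mathcal{S}_v}$ dual to $v$. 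Since $\Delta_{\mathcal{S}_v}$ is a (combinatorial) $(d-1)$-sphere, that link is a $(d-2)$-sphere, so Theorem \ref{DipoleProper}(1) applies: the restricted dipole is internal proper, whence $|\Delta_{\mathcal{B}'}|\cong|\Delta_{\mathcal{B}}|$; and by Remark \ref{DipoleConSum}(b) contracting it amounts to a graph connected sum performed inside $\mathcal{B}$, which cannot disconnect the graph, so $\mathcal{B}'$ is connected. Summing over all $j\neq i$ then yields the statement about the number of $d$-bubbles involving colour $i$, and the topology statement follows one bubble at a time.

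The main obstacle I anticipate is the step ``a spherical $d$-bubble has spherical $(d-1)$-subbubbles'', i.e.\ that the triangulation $\Delta_{\mathcal{S}_v}$ is \emph{combinatorial}, so that its vertex links are genuine spheres. This is where one must use that ``represents a sphere'' is understood in the PL sense throughout (consistently with the PL setting of the Gagliardi--Ferri and Casali theorems, Theorems \ref{DipoleProper} and \ref{Casali}), which is in any case automatic in the dimensions $d\le 4$ relevant to Boulatov-type models. The rest is a careful but routine bookkeeping of the edges moved by a $1$-dipole contraction, as illustrated in figure \ref{FigDipoleMove}.
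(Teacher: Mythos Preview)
Your proposal is correct and follows essentially the same route as the paper's own proof: the key step in both is that the $1$-dipole of colour $i$ restricts to an internal proper $1$-dipole inside each $d$-bubble containing the dipole edge, because the relevant $(d-1)$-subbubble of the spherical $d$-bubble separated by $d_1$ is itself a $(d-2)$-sphere. Your write-up is considerably more detailed than the paper's two-sentence proof, and in particular you make explicit the ``combinatorial sphere'' subtlety that the paper passes over in silence; this is a genuine improvement in rigour, not a different argument.
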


\begin{proof}
    It is clear that we do not touch the number of all these bubbles. For the second claim, observe that every internal proper $1$-dipole move in $\mathcal{G}$ consisting of an edge $e\in\mathcal{E}_{\mathcal{G}}$ of colour $i\in\mathcal{C}_{d}$ also corresponds to an internal proper $1$-dipole move within all the $d$-bubbles of $\mathcal{G}$ involving the edge $e$, because every $(d-1)$-bubble contained in the spherical bubble separated by the dipole is itself a closed bubble representing a $(d-2)$-sphere.
\end{proof}

In other words, contracting the connectivity graph of some colour $j$ might change the connectivity graphs of colour $i\neq j$, but there is still either a unique spherical $\hat{i}$-bubble for $i<j$, or all the $\hat{i}$-bubbles for $i<j$ are non-spherical and hence, we do not produce any new internal proper $1$-dipole of colour $i<j$. Furthermore, the number of dipoles we can contract in the connectivity graphs with $i>j$ stays the same, since the number of spherical bubbles and roots stays the same. After this procedure, we are left with a graph in which we cannot perform any more internal proper $1$-dipole contractions. In accordance to \cite{GurauLargeN1,GurauLargeN2,GurauLargeN3}, we call these objects ``core graphs'':

\begin{Definition}[Core Graphs with Boundary]
    A ``core graph with boundary $\gamma$ of order $p$'' is an open $(d+1)$-coloured graph $\mathcal{G}_{p}\in\mathfrak{G}_{3}$ with $2p$ internal vertices, such that $\partial\mathcal{G}_{p}=\gamma$ and such that for every colour $i\in\mathcal{C}_{3}$ one of the following applies:
    \begin{itemize}
        \item[(1)]There is unique closed and spherical $3$-bubble without colour $i$.
        \item[(2)]All $3$-bubbles without colour $i$ are non-spherical. Note that this includes both non-spherical closed $3$-bubbles as well as open $3$-bubbles.
    \end{itemize}
\end{Definition}

\begin{Remark}
    A closed core graph representing a manifold is nothing else than a crystallization as defined in Section \ref{SecCryTheo}. For open graphs, this is in general not true. While every crystallization of a manifold with boundary is clearly a core graph, the reverse is only true if we choose a boundary graph, which is by itself a core graph.
\end{Remark}

The core graph obtained from a coloured graph by the rooting procedure introduced above does in general depend on the chosen trees. However, their amplitudes are independent of these choices:

\begin{Proposition}
    \label{PropCoreGraph}
    Let $\mathcal{G}\in\mathfrak{G}_{3}$ be an open $(3+1)$-coloured graph with boundary graph $\gamma:=\partial\mathcal{G}$ and $\mathcal{G}_{c}$ some core graph obtained by rooting $\mathcal{G}$. Then the order of $\mathcal{G}_{c}$ is given by
    \begin{align*}
        p_{c}=\frac{\vert\mathcal{V}_{\mathcal{G}_{c},\mathrm{int}}\vert}{2}=\frac{\vert\mathcal{V}_{\mathcal{G},\mathrm{int}}\vert}{2}-(\mathcal{B}^{[d]}-\mathcal{R}^{[d]}),
    \end{align*}
    where $\mathcal{R}^{[d]}$ denotes the total number of roots in $\mathcal{G}$. Their associated Boulatov amplitudes are related by
    \begin{align*}
        \mathcal{A}^{\lambda}_{\mathcal{G}}[\{g_{e}\}_{e\in\mathcal{E}_{\gamma}}]=(\lambda\overline{\lambda})^{\mathcal{B}^{[d]}-\mathcal{R}^{[d]}}\mathcal{A}^{\lambda}_{\mathcal{G}_{c}}[\{g_{e}\}_{e\in\mathcal{E}_{\gamma}}].
    \end{align*}
\end{Proposition}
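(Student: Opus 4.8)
The plan is to run the rooting algorithm one elementary move at a time, tracking how the number of internal vertices and the Boulatov amplitude change, and then to count exactly how many moves the algorithm performs in all. Every move is the contraction of an internal proper $1$-dipole (the small point that this is really so is addressed at the end). Such a contraction removes precisely two internal vertices, one in $V_{\mathcal{G}}$ and one in $\overline{V}_{\mathcal{G}}$ since the dipole edge joins vertices of opposite kind, so $|\mathcal{V}_{\mathcal{G},\mathrm{int}}|$ drops by $2$; and by the $k=1$ case of Lemma \ref{AmplChangeInt} the functional $\mathcal{A}_{\mathcal{G}}^{\lambda}[\{g_e\}_{e\in\mathcal{E}_{\gamma}}]$ is multiplied by $\lambda\overline{\lambda}$. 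Hence, if rooting $\mathcal{G}$ to $\mathcal{G}_c$ uses $M$ contractions in total, then $|\mathcal{V}_{\mathcal{G}_c,\mathrm{int}}|/2 = |\mathcal{V}_{\mathcal{G},\mathrm{int}}|/2 - M$ and $\mathcal{A}_{\mathcal{G}}^{\lambda}[\{g_e\}_{e\in\mathcal{E}_{\gamma}}] = (\lambda\overline{\lambda})^{M}\,\mathcal{A}_{\mathcal{G}_c}^{\lambda}[\{g_e\}_{e\in\mathcal{E}_{\gamma}}]$, so the whole statement reduces to the identity $M = \mathcal{B}^{[d]}-\mathcal{R}^{[d]}$.

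To establish this identity I would count $M$ colour by colour. Fix a colour $j\in\mathcal{C}_d$. By Remark \ref{DipoleConSum}(b), contracting a colour-$j$ $1$-dipole is the graph-connected sum of the two $3$-bubbles without colour $j$ that it joins; it merges them and therefore lowers the number of such bubbles by exactly one. On the other hand, by Lemma \ref{LemmaCoreGraphs}, contracting a $1$-dipole of any colour $i\neq j$ leaves the number of $3$-bubbles without colour $j$ unchanged, as these bubbles all involve the colour $i$. So, writing $n_j$ for the number of $3$-bubbles without colour $j$ in the original graph $\mathcal{G}$, the graph on which iteration $j$ acts still has $n_j$ of them; its connectivity graph $\mathcal{C}^j$ is connected (dipole contractions preserve connectedness), so the chosen maximal tree $\mathcal{T}^j$ has $n_j-1$ edges; and exactly one of these is marked dashed per branching root, these dashed edges being pairwise distinct, so with $r_j$ the number of roots of colour $j$ (one principal and $r_j-1$ branching) the iteration contracts $(n_j-1)-(r_j-1)=n_j-r_j$ solid edges. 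Summing over the colours, $M = \sum_j (n_j-r_j) = \mathcal{B}^{[d]}-\mathcal{R}^{[d]}$, because $\sum_j n_j$ is the total number $\mathcal{B}^{[d]}$ of $d$-bubbles and $\sum_j r_j = \mathcal{R}^{[d]}$.

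The delicate step — and I expect it to be the main obstacle — is verifying that each solid edge really is an internal proper $1$-dipole at the moment it is contracted, which is what lets Lemma \ref{AmplChangeInt} apply. Rooting $\mathcal{T}^j$ at the principal root, the solid edges are precisely the edges from a vertex $u$ to its parent where $u$ is a spherical $3$-bubble without colour $j$ (in both cases $(1)$ and $(2)$ every non-root such bubble is spherical, while the parent-edge of a branching root is its dashed edge). Contracting the solid edges from the leaves inwards, the bubble at $u$ is still a $2$-sphere when its turn comes: the only contractions that can have touched it are connected sums with spherical children of $u$, and a connected sum of $2$-spheres is a $2$-sphere (Corollary \ref{CorConSum}); and it is still distinct from the bubble at $u$'s parent, since only tree edges have been contracted and the unique tree-path from $u$ to its parent is the not-yet-contracted edge itself. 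Hence the dipole separates a spherical $3$-bubble and is internal proper, as needed. The remaining facts I would invoke without reproof, as they are already established before the proposition: the outcome of each iteration is independent of the order in which the solid edges of a fixed tree are contracted, and, by Lemma \ref{LemmaCoreGraphs}, processing one colour does not disturb the bubble counts — hence the root structure — of the colours not yet processed.
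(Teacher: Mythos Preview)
Your proposal is correct and follows essentially the same approach as the paper's proof, which is considerably terser: it simply cites Lemma \ref{LemmaCoreGraphs} for the independence of the number of contracted $1$-dipoles from the chosen trees, asserts without elaboration that this number equals $\mathcal{B}^{[d]}-\mathcal{R}^{[d]}$, and invokes Lemma \ref{AmplChangeInt} for the amplitude relation. Your colour-by-colour counting of the solid tree edges and your leaves-inward verification that each solid edge is indeed an internal proper $1$-dipole at the moment of contraction are precisely the details the paper leaves to the reader.
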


\begin{proof}
    The number of contracted $1$-dipoles does not depend on any choices by Lemma \ref{LemmaCoreGraphs} and one can easily convince oneself that it is given by $\mathcal{B}^{[d]}-\mathcal{R}^{[d]}$. The second claim follows from Lemma \ref{AmplChangeInt}.
\end{proof}

Therefore, it makes sense to introduce the following notion of equivalence:

\begin{Definition}[Core Equivalence Classes]
    We shall call two core graphs with the same boundary, the same amplitude, the same topology and the same order ``core equivalent''. This defines an equivalence relation $\sim_{\mathrm{c}}$ and we denote the set of all equivalences classes for some given boundary graph $\gamma\in\overline{\mathfrak{G}}_{2}$ by
    \begin{align*}
        \mathfrak{G}^{\mathrm{core}}_{\gamma}:=\{\mathcal{G}\in\mathfrak{G}_{3}\mid\mathcal{G}\text{ is core graph and }\partial\mathcal{G}=\gamma\}/\sim_{\mathrm{c}}.
    \end{align*}
    Furthermore, let us decompose this set as $\mathfrak{G}^{\mathrm{core}}_{\gamma}=\bigcup_{p=\vert\mathcal{V}_{\gamma}\vert/2}^{\infty}\mathfrak{G}^{\mathrm{core}}_{p,\gamma}$, where $\mathfrak{G}^{\mathrm{core}}_{p,\gamma}$ denotes the subsets containing all the core equivalence classes with boundary $\gamma$ for some fixed order $p$.
\end{Definition}

\begin{Remark}
    \label{SMG}
    The smallest order core equivalence class for some given boundary graph $\gamma\in\overline{\mathfrak{G}}_{2}$ has order $p=\vert\mathcal{V}_{\gamma}\vert/2$ and contains only one representative, namely the open graph obtained by adding an external leg of colour $0$ to each vertex in $\gamma$. We call these graphs the ``smallest matching graphs''. See Section \ref{SectionV:LeadingOrder} for more details.
\end{Remark}

\subsection{Topological Expansion of the Transition Amplitude}
Let $\gamma\in\overline{\mathfrak{G}}_{2}$ be some closed $(2+1)$-coloured graph of arbitrary topology and $\Psi=(\gamma,\rho,i)$ be some spin network state living on it. As explained in the last section, every open $(3+1)$-coloured graph $\mathcal{G}\in\mathfrak{G}_{3}$ can be rooted to some core graph $\mathcal{G}_{c}$. As previously stated, the resulting core graph depends on the choice of trees in the rooting procedure, however, following Proposition \ref{PropCoreGraph}, every other open $(3+1)$-open coloured core graph $\widetilde{\mathcal{G}}_{c}$ obtained from $\mathcal{G}$ is core equivalent to $\mathcal{G}_{c}$: $\mathcal{G}_{c}\sim_{c}\widetilde{\mathcal{G}}_{c}$. In other words, every open $(3+1)$-coloured graph $\mathcal{G}$ can be rooted into a \textit{unique} core equivalence class $[\mathcal{G}_{c}]$. This motivates the expansion of the transition amplitude \eqref{transampl} in terms of the core equivalence classes
\begin{align}
    \langle\mathcal{Z}_{\mathrm{cBM}}\vert\Psi\rangle=\sum_{p=\vert\mathcal{V}_{\gamma}\vert/2}^{\infty}\,\sum_{[\mathcal{G}]\in \mathfrak{G}^{\mathrm{core}}_{p,\gamma}}C^{[\mathcal{G}]}(\lambda,\overline{\lambda})\mathcal{A}^{\lambda}_{[\mathcal{G}]}[\Psi],
\end{align}
where $C^{[\mathcal{G}]}(\lambda,\overline{\lambda})$ is a combinatorial factor counting all the factors of $\lambda\overline{\lambda}$ coming from graphs, which root back to some graph in the equivalence class $[\mathcal{G}]$ as well as their symmetry factors. More precisely, the factor of some core equivalence class $[\mathcal{G}]$ of order $p$ can be written as

\begin{align}
    C^{[\mathcal{G}]}(\lambda,\overline{\lambda}):=\sum_{\mathcal{G}\in\mathfrak{G}_{3}\text{ with }\partial\mathcal{G}=\gamma\text{ and }\mathcal{G}\to [\mathcal{G}]}\frac{(\lambda\overline{\lambda})^{\frac{\vert\mathcal{V}_{\mathcal{G},\mathrm{int}}\vert}{2}-p}}{\mathrm{sym}(\mathcal{G})},
\end{align}
where the sum is over all open $(3+1)$-coloured graphs with boundary $\gamma$ and which can be rooted to one of the members of the core equivalence class $[\mathcal{G}]$. Note that this combinatorial factors do not contain the cutoff parameter $N$, since all the divergences are contained in the amplitude of the corresponding core equivalence class. This is also the main reason for our choice of scaling. Indeed, with this choice, internal proper dipole $1$-moves do not change the degree of divergence and all the graphs rooting back to some given core equivalence class have the same power of $\delta^{N}(\mathds{1})$.
\bigskip

The expansion written above is a \textit{topological expansion}, in the sense that each term in the sum corresponds to some fixed bulk topology. Note that
\begin{itemize}
    \item[(1)]Two core graphs at the same order $p$ might have the same amplitude, but might not be topological equivalent.
    \item[(2)]Conversely, two core graphs at the same order $p$ might be topological equivalent but still have different amplitudes.
\end{itemize}

To sum up, every core equivalence class represents a fixed topology but there are in general an infinite number of distinct equivalence classes representing the same (pseudo)manifold. In fact, for every topology there is a smallest order $p\in\mathbb{N}$ for which there is a core graph representing it and it exists core graphs for \textit{all} higher orders. Examples can be obtained by performing internal proper $2$-dipole moves.
\section{Spherical Boundary and Factorization}\label{SectionIV:Sphere}
In this section, we apply the above formalism to the simplest possible boundary topology, the $2$-sphere. We start by considering the simplest possible boundary graph representing the $2$-sphere, the elementary melonic $2$-sphere, and show that the transition amplitude, restricted to topologies without singularities touching the boundary, is proportional to the spin network evaluation. Afterwards, we extend this result to the next-to-simplest boundary graph representing the $2$-sphere, the pillow graph, and to arbitrary spherical boundary graphs. In other words, we show that the transition amplitude with respect to some spherical boundary graph factorizes and only depends on boundary data. Afterwards, we argue by briefly discussing the next-to-simplest boundary topology, namely the $2$-torus, that the transition amplitudes contains non-trivial information about the admissible bulk topology, a fact careful hidden in the ball case due to the simple result obtained.

\subsection{Simplest Boundary Graph Representing the 2-Sphere}
As an example of the formalism developed so far and to fix ideas, let us discuss the simplest possible example: a spherical boundary topology with the ``\textit{elementary melonic $2$-sphere}\footnote{Elementary melonic spheres are also known as ``\textit{dipoles}'' in the literature. However, we have already used this name for the concept of dipole moves.}'' as boundary graph $\gamma$. It is represented, together with its triangulation, in figure \ref{fig:MelonicSphere}.

\begin{figure}[H]
    \centering
    \includegraphics[scale=1.2]{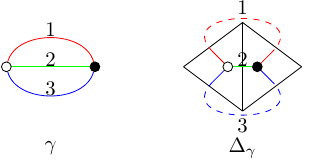}
    \caption{The elementary melonic 2-sphere $\gamma$ (l.h.s.) and its corresponding simplicial complex (r.h.s.), where the gluing of edges is as indicated by the dashed lines.\label{fig:MelonicSphere}}
\end{figure}

Figure \ref{SimpleExampleFig} below shows five open $(3+1)$-coloured graphs with boundary given by $\gamma$. Each of them is a core graph and defines a distinct core equivalence class. They are in fact all inequivalent core graphs up to order $p=3$:\footnote{Note that these graphs exactly correspond to the radiative corrections of the propagator from the group field theoretic point of view.}

\begin{figure}[H]
    \centering
    \includegraphics[scale=0.9]{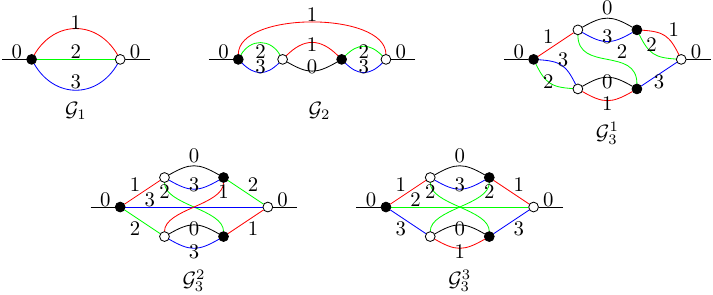}
    \caption{Representative of all inequivalent core equivalence classes with boundary $\gamma$ up to order $p=3$.\label{SimpleExampleFig}}
\end{figure}

The core graphs drawn in the first line --$\mathcal{G}_{1}$, $\mathcal{G}_{2}$ and $\mathcal{G}_{3}^{1}$-- represent $3$-balls: The graph $\mathcal{G}_{1}$ is usually called the ``\textit{elementary melonic $3$-ball}'' \cite{GurauColouredTensorModelsReview2} and the graphs $\mathcal{G}_{2}$ and $\mathcal{G}_{3}$ can be reduced to $\mathcal{G}_{1}$ by performing internal proper $2$-dipole moves. The two core graphs $\mathcal{G}_{3}^{2}$ and $\mathcal{G}^{3}_{3}$ both represent pseudomanifolds, which can be seen from the fact that they both contain a $3$-bubble of toroidal topology. They are however not homeomorphic (and not even homotopy equivalent): the Euler characteristic of $\Delta_{\mathcal{G}_{3}^{2}}$ is $\chi(\Delta_{\mathcal{G}_{3}^{2}})=3$ whereas the Euler characteristic of $\Delta_{\mathcal{G}_{3}^{3}}$ is $\chi(\Delta_{\mathcal{G}_{3}^{3}})=2$.\footnote{The Euler characteristic is a homotopy invariant of general CW-complexes and therefore in particulat also of pseudomanifolds \cite{Hatcher}. }
\bigskip

A straightforward calculation gives the Ponzano-Regge amplitudes together with their prefactor coming from the interaction term corresponding to the five core equivalence classes represented above:
\begin{subequations}
    \begin{align}
        \mathcal{A}_{[\mathcal{G}_{1}]}^{\lambda}[\{g_{1},g_{2},g_{3}\}]&=(\lambda\overline{\lambda})\delta^{N}(g_{1}g_{3}^{-1})\delta^{N}(g_{2}g_{3}^{-1})\\
        \mathcal{A}_{[\mathcal{G}_{2}]}^{\lambda}[\{g_{1},g_{2},g_{3}\}]&=(\lambda\overline{\lambda})^{2}[\delta^{N}(\mathds{1})]^{-1}\delta^{N}(g_{1}g_{3}^{-1})\delta^{N}(g_{2}g_{3}^{-1})\\
        \mathcal{A}_{[\mathcal{G}_{3}^{1}]}^{\lambda}[\{g_{1},g_{2},g_{3}\}]&=(\lambda\overline{\lambda})^{3}[\delta^{N}(\mathds{1})]^{-2}\delta^{N}(g_{1}g_{3}^{-1})\delta^{N}(g_{2}g_{3}^{-1})\\
        \mathcal{A}_{[\mathcal{G}_{3}^{2}]}^{\lambda}[\{g_{1},g_{2},g_{3}\}]&=(\lambda\overline{\lambda})^{3}[\delta^{N}(\mathds{1})]^{-3}\delta^{N}(g_{1}g_{3}^{-1})\delta^{N}(g_{2}g_{3}^{-1})\int_{\mathrm{SU}(2)^{3}}\mathrm{d}x\mathrm{d}y\mathrm{d}z\,\delta^{N}(xyz(zyx)^{-1})\\
        \mathcal{A}_{[\mathcal{G}_{3}^{3}]}^{\lambda}[\{g_{1},g_{2},g_{3}\}]&=(\lambda\overline{\lambda})^{3}[\delta^{N}(\mathds{1})]^{-2}\delta^{N}(g_{1}g_{3}^{-1})\delta^{N}(g_{2}g_{3}^{-1})
    \end{align}
\end{subequations}
The group element $g_{i}$ is assigned to the boundary edge of colour $i\in\{1,2,3\}$. In all three cases, the amplitudes encode the flatness of the boundary connections, as expected from the Bianchi identity. The remaining integral in the amplitude of graph $\mathcal{G}_{3}^{2}$ comes from the non-trivial bulk topology. Note also that the amplitudes of $\mathcal{G}_{2}$ and $\mathcal{G}_{3}^{1}$ can be obtained from the amplitude of $\mathcal{G}_{1}$, by applying Lemma \ref{AmplChangeInt}.
\bigskip

By the Theorem (\ref{ThmCavGag}), we know that, at the very least, all manifolds with spherical boundary appear in the transition amplitude. Let us discuss some explicit examples of other manifolds appearing in the transition amplitudes. Note that every compact, orientable and connected $3$-manifold $\mathcal{M}$ with boundary $\partial\mathcal{M}\cong S^{2}$ can be obtained by cutting out the interior of a (sufficiently nicely) embedded ball inside some closed, orientable and connected $3$-manifold $\mathcal{N}$. Hence, every open $(3+1)$-coloured graph representing a manifold, whose boundary graph is given by the elementary melonic $2$-sphere $\gamma$, can be obtained by cutting an edge of colour $0$ in some closed $(3+1)$-coloured graph representing a closed $3$-manifold. As an example, consider the three non-trivial graphs of figure \ref{Closed3Manifolds} representing $S^{2}\times S^{1}$, $\mathbb{R}P^{3}\cong L(2,1)$ and $L(3,1)$. The graphs obtained by cutting an edge of colour $0$ are drawn figure \ref{fig:ManifoldsSphereBoundary}:\footnote{Note that the manifold obtained by cutting out the interior of a ball from some closed manifold does not depend on the chosen ball. This follows essentially from the annulus theorem \cite{KirbyAT,Quinn}.}

\begin{figure}[H]
    \captionsetup[subfigure]{labelformat=empty}
    \centering
    \subfloat[$(S^{2}\times S^{1})\backslash\mathring{B}^{3}$]{\includegraphics[width=0.15\textwidth]{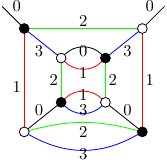}}\hspace{1cm}
    \subfloat[$\mathbb{R}P^{3}\backslash\mathring{B}^{3}$]{\includegraphics[width=0.15\textwidth]{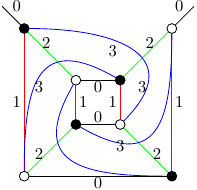}}\hspace{1cm}
    \subfloat[$L(3,1)\backslash\mathring{B}^{3}$]{\includegraphics[width=0.15\textwidth]{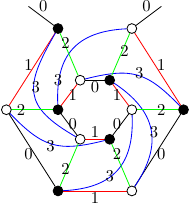}}
    \caption{Three open $(3+1)$-coloured graphs representing manifolds with spherical boundary.\label{fig:ManifoldsSphereBoundary}}
\end{figure}

A straightforward calculation gives the following amplitudes for each manifold:
\begin{subequations}
    \begin{align}
        \mathcal{A}_{S^{2}\times S^{1}\backslash\mathring{B}^{3}}^{\lambda}[\{g_{1},g_{2},g_{3}\}]&=(\lambda\overline{\lambda})^{4}[\delta^{N}(\mathds{1})]^{-2}\delta^{N}(g_{1}g_{3}^{-1})\delta^{N}(g_{2}g_{3}^{-1})\\\mathcal{A}_{\mathbb{R}P^{3}\backslash\mathring{B}^{3}}^{\lambda}[\{g_{1},g_{2},g_{3}\}]&=(\lambda\overline{\lambda})^{4}[\delta^{N}(\mathds{1})]^{-3}\delta^{N}(g_{1}g_{3}^{-1})\delta^{N}(g_{2}g_{3}^{-1})\,\int_{\mathrm{SU}(2)}\,\mathrm{d}k\,\delta^{N}(k^{2})\\\mathcal{A}_{L(3,1)\backslash\mathring{B}^{3}}^{\lambda}[\{g_{1},g_{2},g_{3}\}]&=(\lambda\overline{\lambda})^{6}[\delta^{N}(\mathds{1})]^{-5}\delta^{N}(g_{1}g_{3}^{-1})\delta^{N}(g_{2}g_{3}^{-1})\,\int_{\mathrm{SU}(2)}\,\mathrm{d}k\,\delta^{N}(k^{3})
    \end{align}
\end{subequations}

We see that all of them are proportional to the spin network evaluation when applied to some boundary spin network state. Let us now prove that this is true more generally.

\begin{Proposition}
    If $\mathcal{G}$ is an open $(3+1)$-coloured graph with $\partial\mathcal{G}=\gamma$ representing a manifold, then its amplitude satisfies
    \begin{align*}
        \mathcal{A}^{\lambda}_{\mathcal{G}}[\{g_{e}\}_{e\in\mathcal{E}_{\gamma}}]\propto\delta^{N}(g_{1}g_{2}^{-1})\delta^{N}(g_{1}g_{3}^{-1}).
    \end{align*}
    The same holds true for pseudomanifolds, for which all the singularities are in the bulk. As a consequence, we get that
    \begin{align*}
        \langle\mathcal{Z}_{\mathrm{cBM}}\vert\Psi\rangle\vert_{\substack{\mathrm{manifolds+pseudomanifolds}\\\mathrm{without}\hspace*{0.1cm}\mathrm{boundary}\hspace*{0.1cm}\mathrm{singularities}}}=\mathcal{C}[N,\lambda,\overline{\lambda}]\cdot\psi(\{g_{i}=\mathds{1}\}_{i=1,2,3}).
    \end{align*}
\end{Proposition}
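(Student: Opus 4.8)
The plan is to show that for any open $(3+1)$-coloured graph $\mathcal{G}$ with $\partial\mathcal{G}=\gamma$ the elementary melonic $2$-sphere, representing a manifold (or a pseudomanifold whose singularities all lie in the bulk), one can reduce $\mathcal{G}$ to the elementary melonic $3$-ball $\mathcal{G}_{1}$ of Figure \ref{SimpleExampleFig} by a sequence of internal proper dipole moves, and then simply track the amplitude through this reduction using Lemma \ref{AmplChangeInt}. First I would recall that $\mathcal{G}$ represents a compact orientable $3$-manifold $\mathcal{M}$ with $\partial\mathcal{M}\cong S^{2}$ (or a pseudomanifold with only bulk singularities, in which case $\partial\Delta_{\mathcal{G}}=\Delta_{\partial\mathcal{G}}$ still holds since every open $3$-bubble is then a ball). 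Such an $\mathcal{M}$ is obtained by removing an open ball from a closed $3$-manifold $\mathcal{N}$. The key topological input is the Equivalence Criterion of Casali (Theorem \ref{Casali}): since $\partial\mathcal{G}\cong\partial\mathcal{G}_{1}=\gamma$, and $\mathcal{G}_1$ represents $B^3$, the graph $\mathcal{G}$ is related to *any other* open coloured graph with the same boundary representing $B^{3}$ by internal proper dipole moves — but $\mathcal{G}$ need not represent $B^3$. So instead I would use the rooting procedure of Subsection \ref{Subse:Rooting}: by Proposition \ref{PropCoreGraph}, $\mathcal{G}$ reduces (up to a power of $\lambda\overline{\lambda}$) to a core graph $\mathcal{G}_{c}$ with the same boundary $\gamma$ and the same topology.

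The decisive step is then to analyse what a core graph with boundary $\gamma$ can look like. Since $\gamma$ has exactly two vertices, a core graph $\mathcal{G}_c$ with $\partial\mathcal{G}_c=\gamma$ has exactly two external legs of colour $0$. For such a graph representing a manifold, Proposition \ref{ManifoldsGraphs} forces all $3$-bubbles to be spheres or balls, and by the core property, for each colour $i$ either there is a unique spherical $\hat{i}$-bubble or all $\hat{i}$-bubbles are non-spherical; for a manifold the latter case is impossible for $i\neq 0$ (as noted in the Remark following the rooting algorithm, the $\widehat{0}$-bubbles are always spheres for a manifold, and likewise the $\hat i$-bubbles for $i\neq 0$ are balls/spheres by Proposition \ref{ManifoldsGraphs}, hence spherical in the relevant sense). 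I would argue that these constraints — unique spherical $3$-bubble for every colour, only two colour-$0$ external legs, no further internal proper $1$-dipoles — are rigid enough to pin down $\mathcal{G}_c$ to essentially the elementary melonic $3$-ball $\mathcal{G}_1$, possibly after further internal proper $2$-dipole contractions (recall $\mathcal{G}_2$, $\mathcal{G}_3^1$ of Figure \ref{SimpleExampleFig} reduce to $\mathcal{G}_1$ this way). This is the step I expect to be the main obstacle: one must either give a clean combinatorial classification of core graphs with this particular two-vertex boundary, or invoke the known classification of crystallizations of $B^3$ (together with the desingularisation remarks of Subsection \ref{SecBubbles} for the pseudomanifold case, where non-simple bubbles in the bulk don't affect the boundary graph).

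Once $\mathcal{G}$ is reduced to $\mathcal{G}_1$, the amplitude computation is immediate: $\mathcal{A}^{\lambda}_{\mathcal{G}_1}[\{g_1,g_2,g_3\}]=(\lambda\overline{\lambda})\,\delta^{N}(g_1 g_3^{-1})\delta^{N}(g_2 g_3^{-1})$, computed directly from the Ponzano-Regge functional as in the displayed list, and Lemma \ref{AmplChangeInt} (together with Proposition \ref{PropCoreGraph}) tells us that each internal proper dipole move only multiplies the amplitude by a factor of the form $(\lambda\overline{\lambda})^{a}[\delta^{N}(\mathds{1})]^{-b}$ — in particular nothing but the overall constant changes, so $\mathcal{A}^{\lambda}_{\mathcal{G}}[\{g_e\}]\propto\delta^{N}(g_1 g_3^{-1})\delta^{N}(g_2 g_3^{-1})$ with an $N$-dependent proportionality constant. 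Pairing this against the spin network function $\psi$ and performing the $L^2$-integration over $\mathrm{SU}(2)^{3}$ with the measure $\prod_e \mathrm{d}g_e$, the two delta functions collapse all three group elements to a common value, which by gauge invariance of $\psi$ may be taken to be $\mathds{1}$, yielding $\mathcal{A}^{\lambda}_{\mathcal{G}}[\Psi]\propto\psi(\{g_i=\mathds{1}\})$. Summing over all manifold (and bulk-singular pseudomanifold) graphs with boundary $\gamma$, weighted by symmetry factors, the $\psi(\{g_i=\mathds{1}\})$ factors out and the remaining sum of constants defines $\mathcal{C}[N,\lambda,\overline{\lambda}]$, giving the claimed factorisation.
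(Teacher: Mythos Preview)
Your approach has a genuine gap at the ``decisive step''. Internal proper dipole moves preserve the PL-topology of $\vert\Delta_{\mathcal{G}}\vert$ (that is precisely what ``proper'' means, cf.~Theorem \ref{DipoleProper} and Theorem \ref{Casali}). Hence the rooting procedure produces a core graph $\mathcal{G}_{c}$ with the \emph{same} topology as $\mathcal{G}$, and if $\mathcal{G}$ represents, say, $\mathbb{R}P^{3}\setminus\mathring{B}^{3}$ or $(S^{2}\times S^{1})\setminus\mathring{B}^{3}$, then $\mathcal{G}_{c}$ still represents that manifold and \emph{cannot} be reduced to the elementary melonic $3$-ball $\mathcal{G}_{1}$ by any sequence of proper dipole moves. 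The paper in fact exhibits such core graphs explicitly (Figure \ref{fig:ManifoldsSphereBoundary}): they all have boundary $\gamma$, they are manifolds, and their amplitudes carry nontrivial bulk factors such as $\int\mathrm{d}k\,\delta^{N}(k^{2})$ that no dipole bookkeeping via Lemma \ref{AmplChangeInt} will produce from $\mathcal{A}^{\lambda}_{\mathcal{G}_{1}}$. So the claim that the core-graph constraints ``pin down $\mathcal{G}_{c}$ to essentially $\mathcal{G}_{1}$'' is false, and invoking a classification of crystallizations of $B^{3}$ is beside the point since the bulk is not assumed to be a ball.

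The paper's proof avoids this obstruction entirely by working directly with the amplitude rather than trying to simplify the graph. It writes the contribution of the three non-cyclic faces as $\delta^{N}(g_{i}h_{v}^{-1}H_{i}h_{w}^{-1})$ for $i=1,2,3$, integrates out $h_{v}$, and observes that the remaining combinations $H_{i}^{-1}H_{j}$ are closed $3$-coloured paths lying inside the open $3$-bubbles of $\mathcal{G}$. The hypothesis ``manifold or pseudomanifold without boundary singularities'' ensures these open $3$-bubbles are $2$-disks, so Lemma \ref{LemmaSphere} collapses each $H_{i}^{-1}H_{j}$ to $\mathds{1}$ using only the closed-face delta functions. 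This argument is uniform in the bulk topology: whatever happens deep inside $\mathcal{G}$ is absorbed into the proportionality constant, and only the disk-structure of the bubbles touching the boundary is used. Your reduction strategy, by contrast, tries to control the whole graph and therefore runs into the infinitude of non-homeomorphic $3$-manifolds with spherical boundary.
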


\begin{proof}
    Figure \ref{SimpleExampleProof} below shows the boundary graph $\gamma$ equipped with group elements $g_{1,2,3}\in\mathrm{SU}(2)$, as well as the general structure of an open $(3+1)$-coloured graphs $\mathcal{G}\in\mathfrak{G}_{3}$ with $\partial\mathcal{G}=\gamma$.
    \begin{figure}[H]
        \centering
        \includegraphics[scale=1.1]{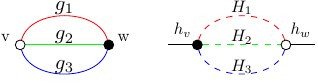}
        \caption{The graph $\gamma$ equipped with group elements $g_{1,2,3}$ as well as a sketch of an open $(3+1)$-coloured graph $\mathcal{G}$ with $\partial\mathcal{G}$. The dotted lines represent the non-cyclic faces, which lead to the corresponding boundary edges.\label{SimpleExampleProof}}
    \end{figure}
    The dotted lines in the graph $\mathcal{G}$ represent the non-cyclic faces of $\mathcal{G}$, which lead to the corresponding boundary edges. We equip these paths by the product of group elements $H_{1,2,3}$. Note that these paths do not have to be independent: there could be an internal edge of colour $0$ in $\mathcal{G}$, which is then contained in several paths. The external legs of $\mathcal{G}$ are labelled by the corresponding vertices of the boundary graph and we equip them with group elements $h_{v,w}$. If $\mathcal{G}$ represents a pseudomanifold without boundary singularities, we know that all its open $3$-bubbles represent $2$-balls. Hence, by Lemma \ref{LemmaSphere}, we know that the group elements $H_{1,2,3}$ satisfy
    \begin{subequations}
        \begin{align}
            &H_{1}H_{2}^{-1}\stackrel{\mathrm{CF}}{=}\mathds{1}\\
            &H_{1}H_{3}^{-1}\stackrel{\mathrm{CF}}{=}\mathds{1}\\
            &H_{2}H_{3}^{-1}\stackrel{\mathrm{CF}}{=}\mathds{1}
        \end{align}
    \end{subequations}
    where ``$\stackrel{\mathrm{CF}}{=}$'' means using all the delta functions associated to closed faces of $\mathcal{G}$, since these three products describe closed $3$-coloured paths living in some open $3$-bubble representing the disk. Using this notation, let us write down the contribution to the amplitude coming from all the faces involving boundary group elements:
    \begin{equation}
        \begin{aligned}
            \int_{\mathrm{SU}(2)^{2}}\,&\mathrm{d}h_{v}\,\mathrm{d}h_{w}\,            \delta^{N}(g_{1}h_{v}^{-1}H_{1}h_{w}^{-1})\delta^{N}(g_{2}h_{v}^{-1}H_{2}h_{w}^{-1})\delta^{N}(g_{3}h_{v}^{-1}H_{3}h_{w}^{-1})
        \end{aligned}
    \end{equation}
    Integrating over $h_{v}$ using the first delta function, this becomes
    \begin{equation}
        \begin{aligned}
            \int_{\mathrm{SU}(2)^{2}}\,&\mathrm{d}h_{v}\,\mathrm{d}h_{w}\,  \delta^{N}(g_{2}g_{1}^{-1}h_{w}\underbrace{H_{1}^{-1}H_{2}}_{\xrightarrow{\text{closed faces}}\mathds{1}}h_{w}^{-1})\delta^{N}(g_{3}g_{1}^{-1}h_{w}\underbrace{H_{1}^{-1}H_{3}}_{\xrightarrow{\text{closed faces}}\mathds{1}}h_{w}^{-1}),
        \end{aligned}
    \end{equation}
    which is equivalent to $\delta^{N}(g_{1}g_{2}^{-1})\delta^{N}(g_{1}g_{3}^{-1})$ when using the closed faces of $\mathcal{G}$ and the relations explained above. Therefore, the contribution $\mathcal{A}_{\mathcal{G}}^{\lambda}[\Psi]$ for some spin network $\Psi$ living on $\gamma$ with spin network function $\psi$ is proportional to
    \begin{equation}
        \begin{aligned}
            \int_{\mathrm{SU}(2)^{3}}\,&\mathrm{d}g_{1}\mathrm{d}g_{2}\mathrm{d}g_{3}\,\delta^{N}(g_{1}g_{2}^{-1})\delta^{N}(g_{1}g_{3}^{-1})\psi(g_{1},g_{2},g_{3})=\int_{\mathrm{SU}(2)}\,\mathrm{d}g\,\psi(g,g,g)=\psi(\mathds{1},\mathds{1},\mathds{1}),
        \end{aligned}
    \end{equation}
    where we have used the $\mathrm{SU}(2)$-invariance of $\psi$ in the last step.
\end{proof}

\begin{Remark}
    For the case of manifolds, there is also an alternative proof using dipole moves, i.e.~see \cite[p.92f.]{GabrielThesis}.
\end{Remark}

To summarize: we can write the transition amplitude (restricted to manifolds and pseudomanifolds without boundary singularities) for some arbitrary spin network $\Psi$ on $\gamma$ with corresponding spin network function $\gamma\in L^{2}(\mathrm{SU}(2)^{3}/\mathrm{SU}(2)^{2})$ in the following form:
\begin{align}
     \langle\mathcal{Z}_{\mathrm{cBM}}\vert\Psi\rangle\vert_{\substack{\mathrm{manifolds+pseudomanifolds}\\\mathrm{without}\hspace*{0.1cm}\mathrm{boundary}\hspace*{0.1cm}\mathrm{singularities}}}=\underbrace{\mathcal{C}[N,\lambda,\overline{\lambda}]}_{\text{remaining bulk integrations \& factors of $\lambda\overline{\lambda}$ and $\delta^{N}(\mathds{1})$}}\cdot\underbrace{\psi(\{g_{i}=\mathds{1}\}_{i=1,2,3})}_{\text{spin network evaluation}}
\end{align}

In other words, the transition amplitude for any boundary state living on the spherical boundary graph $\gamma$ factorizes into a sum entirely given by the combinatorics of the boundary spin network state \textit{regardless} of the bulk topology. The prefactor is in general infinite, since we sum over an infinite number of graphs. However, note that the prefactor is a priori independent of the boundary state and can always be reabsorbed in the normalization chosen for the path integral.

\subsection{The General Case of a Spherical Boundary}
After having discussed the simplest possible spherical boundary graph, let us now show that a similar factorization theorem can be obtained for a generic boundary graph representing the $2$-sphere. In order to illustrate the main proof strategy, let us first look into the next-to-simplest example, the so-called ``\textit{pillow graph}'' (see figure \ref{fig:PillowGraph}) --in the following denoted by $\gamma\in\overline{\mathfrak{G}}_{2}$.

\begin{figure}[H]
    \centering
    \includegraphics[scale=1.2]{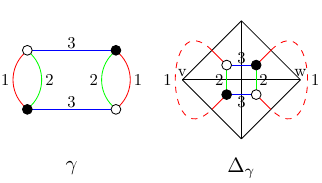}
    \caption{The pillow graph $\gamma$ and its simplicial complex $\Delta_{\gamma}$, where the gluing of edges is indicated by the dotted edges.\label{fig:PillowGraph}}
\end{figure}

As usual, we label the boundary edges by group elements, as shown on the l.h.s.~in figure \ref{NotationSphere}. Let us now consider a generic open $(3+1)$-coloured graph $\mathcal{G}\in\mathfrak{G}_{3}$ with boundary given by $\gamma = \partial\mathcal{G}$ and label some of its edges and paths as shown on the r.h.s.~of figure \ref{NotationSphere}.

\begin{figure}[H]
    \centering
    \includegraphics[scale=1.2]{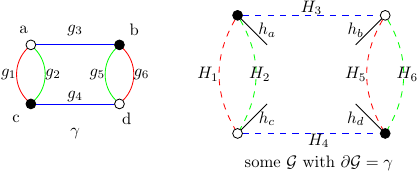}
    \caption{The pillow graph $\gamma$ equipped with group elements, as well as the general structure of an open $(3+1)$-coloured graph $\mathcal{G}$ with $\partial\mathcal{G}=\gamma$, equipped with group elements on some of its edges and paths. The dotted lines represent the non-cyclic faces, which lead to the corresponding boundary edges.\label{NotationSphere}}
\end{figure}

The notation is essentially the same as in the simple example above. We label the vertices of the boundary graph by Latin letters. Since to every vertex on the boundary graph there is a corresponding external leg in the open graph, we label these external legs by the same letters and denote the group element assigned to these edges by $h_{i}$ for $i=a,b,c,d$. Furthermore, recall that for every edge of colour $i=1,2,3$ on the boundary graph, there is a corresponding non-cyclic face of colour $0i$ in the open graph. The corresponding bicoloured paths enclosed by the two corresponding external legs are drawn as dotted lines in the figure above and are equipped with the product of all the group elements assigned to its edges, which we denote by $H_{i}$. Last but not least, we equip the edges of the boundary graph by group elements $g_{i}$ as usual. Let us stress again that the paths equipped with $H_{i}$ are not necessarily independent, as there could be an edge of colour $0$ appearing in more than one of these parts. In other words, there could be a group element appearing in more than one of the products $H_{i}$. Now, let us observe the following:

\begin{Lemma}
    \label{Lemma1}
    If $\mathcal{G}$ is an open $(3+1)$-coloured graph with $\partial\mathcal{G}=\gamma$ representing a manifold or pseudomanifold without singularities touching the boundary, then
    \begin{align*}
        &\text{(a): }H_{1}H_{2}^{-1}\stackrel{\mathrm{CF}}{=}\mathds{1}\\
        &\text{(b): }H_{5}H_{6}^{-1}\stackrel{\mathrm{CF}}{=}\mathds{1}\\
        &\text{(c): }H_{1}H_{4}^{-1}H_{5}H_{3}^{-1}\stackrel{\mathrm{CF}}{=}\mathds{1}\\
        &\text{(d): }H_{2}H_{4}^{-1}H_{6}H_{3}^{-1}\stackrel{\mathrm{CF}}{=}\mathds{1},
    \end{align*}
    where ``$\stackrel{\mathrm{CF}}{=}$'' means using all the delta functions corresponding to the closed faces of $\mathcal{G}$.
\end{Lemma}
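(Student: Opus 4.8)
The plan is to read each of the four identities in Lemma \ref{Lemma1} as the statement that a particular closed $3$-coloured path (i.e.\ a path alternating between the three colours $1,2,3$ in the boundary graph $\gamma$, lifted to the bulk graph $\mathcal{G}$) can be contracted to the identity once we are allowed to use all delta functions associated with the closed (cyclic) faces of $\mathcal{G}$. The key observation is that for each edge $e\in\mathcal{E}_\gamma$ of colour $i\in\{1,2,3\}$ there is a corresponding open $2$-bubble $f\in\mathcal{F}_{\mathcal{G},\partial}$ of colour $0i$ in $\mathcal{G}$, and $H_i$ is by construction the product of group elements along the internal portion of that open face, i.e.\ the path obtained by deleting the two external legs and the "virtual" boundary edge $e(f)$. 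So each relation in the Lemma asserts that some concatenation of such internal open-face paths, arranged around a vertex or around a face of the \emph{boundary} complex $\Delta_\gamma$, is trivial modulo closed faces.

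First I would make precise which closed $3$-coloured path each line corresponds to. Relations (a) and (b) correspond to the two "pillow" vertices of $\gamma$: around a boundary vertex $v$ of colour, say, $0$ with adjacent boundary edges of colours $1,2,3$, the three open faces of colours $01,02,03$ glue up, and tracing $H_1 H_2^{-1}$ (resp.\ $H_5 H_6^{-1}$) walks around the link of the corresponding boundary edge-simplex, staying inside a single open $3$-bubble. Relations (c) and (d) correspond to the two $2$-bubbles of colour $12$, $13$ etc.\ of $\gamma$ (the "faces" of the pillow), where four open faces concatenate around a $1$-simplex of $\partial\Delta_{\mathcal{G}}$. In each case, by hypothesis $\mathcal{G}$ represents a manifold or a pseudomanifold with only bulk singularities, so Proposition \ref{ManifoldsGraphs} (together with the discussion following Definition \ref{DefSimpleBubbles}) tells us that every open $3$-bubble of $\mathcal{G}$ represents a $2$-ball; in particular the $3$-bubble of colour $0ij$ containing the relevant open faces is a planar, simply-connected open $(2+1)$-coloured graph. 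Then I invoke the second half of Lemma \ref{LemmaSphere}: a closed $3$-coloured path lying in the interior of an open $(2+1)$-coloured graph representing the $2$-disk can be replaced by $\mathds{1}$ using only the delta functions of the \emph{internal} (closed) faces of that bubble — which are a subset of the closed faces of $\mathcal{G}$. This is exactly the "$\stackrel{\mathrm{CF}}{=}\mathds{1}$" conclusion.

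The one subtlety — and the step I expect to be the main obstacle — is verifying that the specific concatenations $H_1 H_2^{-1}$, $H_5 H_6^{-1}$, $H_1 H_4^{-1} H_5 H_3^{-1}$, $H_2 H_4^{-1} H_6 H_3^{-1}$ really do form \emph{closed} paths that lie \emph{entirely within a single open $3$-bubble} (so that Lemma \ref{LemmaSphere} applies) rather than wandering between several bubbles. Closedness is a combinatorial statement about the pillow graph $\gamma$: each of the four words is the boundary of a face of $\Delta_\gamma$ or the link of a vertex of $\Delta_\gamma$, so it closes up; one checks this by following the dotted-line conventions of figure \ref{NotationSphere} and matching endpoints of the $H_i$ at the external legs $h_a,h_b,h_c,h_d$ (the $h$'s cancel pairwise, which is why the relations are stated purely in terms of the $H_i$). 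Containment in one open $3$-bubble follows because each such path uses only colours from a fixed pair $\{i,j\}\subset\{1,2,3\}$ together with the implicit colour $0$ threading through the internal $0$-edges, hence lives in the $0ij$-bubble. Once these two points are nailed down, each line of the Lemma is an immediate application of Lemma \ref{LemmaSphere}, and the proof is complete. I would write it up by drawing (or referencing figure \ref{NotationSphere}) the four paths explicitly, noting in one sentence per relation which boundary simplex of $\Delta_\gamma$ it encircles and which open $3$-bubble of $\mathcal{G}$ it sits inside, and then concluding with a single joint appeal to Lemma \ref{LemmaSphere}.
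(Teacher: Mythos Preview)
Your argument is correct and follows exactly the paper's approach: each of the four words is a closed $3$-coloured path contained in a single open $3$-bubble of colour $0ij$, which under the hypothesis represents a disk, so Lemma~\ref{LemmaSphere} applies. Your geometric bookkeeping is slightly off --- all four relations correspond to the four \emph{faces} ($2$-bubbles) of $\gamma$ (two of colour $12$, one each of colours $13$ and $23$), not to a split between ``vertices'' and ``faces'' --- but this does not affect the validity of the proof, since you correctly identify the essential point that each concatenation uses only colours from a fixed triple $\{0,i,j\}$ and hence sits inside one open $0ij$-bubble.
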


\begin{proof}
    This is a consequence of Lemma \ref{LemmaSphere}. If $\mathcal{G}$ represents a manifold or a pseudomanifold without boundary singularities, then all the open $3$-bubbles of $\mathcal{G}$ represent disks (=$2$-balls). For example, all the dotted green and blue lines form a $3$-bubble of colour $023$ representing the $2$-ball with exactly one boundary component, namely the face of the boundary graph with colour $23$. Now, the closed path $H_{2}H_{4}^{-1}H_{6}H_{3}^{-1}H_{6}$ is totally contained in the $3$-bubble of colour $023$, and hence can be replaced by $\mathds{1}$, using the relations encoded in all the closed faces of $\mathcal{G}$.
\end{proof}

Using this lemma, we are now able to prove the following general result.

\begin{Proposition}
    If $\mathcal{G}$ is an open $(3+1)$-coloured graph with $\partial\mathcal{G}=\gamma$ representing a manifold, then its amplitude satisfies
    \begin{align*}
        \mathcal{A}^{\lambda}_{\mathcal{G}}[\{g_{e}\}_{e\in\mathcal{E}_{\gamma}}]\propto\delta^{N}(g_{1}g_{2}^{-1})\delta^{N}(g_{5}g_{6}^{-1})\delta^{N}(g_{1}g_{3}^{-1}g_{5}g_{4}^{-1}).
    \end{align*}
    The same holds true for pseudomanifolds, for which all the singularities are in the bulk. As a consequence, we get that
    \begin{align*}
        \langle\mathcal{Z}_{\mathrm{cBM}}\vert\Psi\rangle\vert_{\substack{\mathrm{manifolds+pseudomanifolds}\\\mathrm{without}\hspace*{0.1cm}\mathrm{boundary}\hspace*{0.1cm}\mathrm{singularities}}}=\mathcal{C}[N,\lambda,\overline{\lambda}]\cdot\psi(\{g_{i}=\mathds{1}\}_{i=1,2,3,4}).
    \end{align*}
\end{Proposition}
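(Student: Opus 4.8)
The plan is to follow the exact blueprint of the proof for the elementary melonic $2$-sphere, but with the pillow graph's four boundary vertices and six boundary edges. First I would set up notation exactly as in figure \ref{NotationSphere}: group elements $g_1,\dots,g_6$ on the boundary edges of $\gamma$, group elements $h_a,h_b,h_c,h_d$ on the four external legs of $\mathcal{G}$, and products $H_1,\dots,H_6$ on the six non-cyclic faces (dotted bicoloured paths) of $\mathcal{G}$ leading to the six boundary edges. Since $\mathcal{G}$ represents a manifold (or pseudomanifold with only bulk singularities), Proposition \ref{ManifoldsGraphs} guarantees every open $3$-bubble of $\mathcal{G}$ is a $2$-ball, so Lemma \ref{Lemma1} applies and gives the four closed-face relations (a)--(d) among the $H_i$.

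Next I would isolate the contribution to $\mathcal{Z}_{\mathrm{PR}}^{\mathcal{G}}$ coming from the six non-cyclic faces: an integral over $h_a,h_b,h_c,h_d$ of a product of six delta functions, each of the schematic form $\delta^N(g_i \cdot (\text{external leg elements}) \cdot H_i \cdot (\text{external leg elements}))$, where the external-leg elements are read off from which two boundary vertices the face of colour $0i$ connects in the pillow graph. I would then integrate out the four $h$'s one at a time using four of the six delta functions (one can always do this because the pillow boundary graph is connected, so the incidence structure of faces and boundary vertices is spanning). The remaining two delta functions, after these substitutions, will contain products of the form $H_i H_j^{-1}$ or $H_i H_j^{-1} H_k H_l^{-1}$ conjugated by some leftover $h$; invoking the relations (a)--(d) of Lemma \ref{Lemma1} via the ``$\stackrel{\mathrm{CF}}{=}$'' replacements collapses these to the claimed $\delta^N(g_1 g_2^{-1})\delta^N(g_5 g_6^{-1})\delta^N(g_1 g_3^{-1} g_5 g_4^{-1})$, with the leftover $h$-integrations becoming trivial by Haar normalization. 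The combinatorial prefactor of $\lambda\overline\lambda$ and $\delta^N(\mathds 1)$ is absorbed into the constant.

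For the final (transition amplitude) statement, I would contract the resulting $\mathcal{A}^\lambda_{\mathcal{G}}[\{g_e\}]$ against an arbitrary boundary spin network function $\psi \in L^2(\mathrm{SU}(2)^4/\mathrm{SU}(2)^{\vert\mathcal{V}_\gamma\vert})$. The three delta functions force $g_2 = g_1$, $g_6 = g_5$, and $g_4 = g_1 g_3^{-1} g_5$, leaving a two-dimensional integral $\int \mathrm{d}g_1 \mathrm{d}g_3 \mathrm{d}g_5 \,\psi(g_1, g_1, g_3, g_1 g_3^{-1} g_5, g_5, g_5)$; using the $\mathrm{SU}(2)^{\vert\mathcal{V}_\gamma\vert}$-gauge invariance of $\psi$ at the boundary vertices (left/right translations absorbing the remaining group elements, exactly as in the melonic case where $\psi(g,g,g) = \psi(\mathds 1,\mathds 1,\mathds 1)$) reduces this to $\psi(\mathds 1,\mathds 1,\mathds 1,\mathds 1)$ times the volume factor. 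Summing over all such $\mathcal{G}$ with $\partial\mathcal{G}=\gamma$ collects the (generally divergent) constant $\mathcal{C}[N,\lambda,\overline\lambda]$ and establishes the factorization.

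The main obstacle I anticipate is bookkeeping rather than conceptual: correctly tracking which external-leg group elements $h_a,\dots,h_d$ appear in each of the six face delta functions according to the pillow graph's incidence pattern, and then checking that the four chosen delta functions really do allow solving for all four $h$'s (equivalently, that the ``face--boundary-vertex'' incidence matrix has full rank, which holds because $\partial\mathcal G$ is connected). A secondary subtlety is that the $H_i$ need not be independent words in the bulk group elements (an internal colour-$0$ edge may appear in several of them), so one must be careful that the relations (a)--(d) are applied only through genuine closed-face delta functions of $\mathcal{G}$ and not assumed as free identities; but this is precisely what Lemma \ref{Lemma1} (and behind it Lemma \ref{LemmaSphere}) already delivers, so no new argument is needed. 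Finally, I would note that the same computation goes through verbatim for pseudomanifolds whose singularities lie strictly in the bulk, since those still have all open $3$-bubbles equal to $2$-balls, which is the only property used.
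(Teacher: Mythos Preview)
Your strategy is essentially the paper's, but there is a counting error in the key integration step. You propose to integrate out all four external-leg elements $h_a,h_b,h_c,h_d$ using four of the six delta functions, leaving two; in fact only three of them can be eliminated this way. Each non-cyclic-face delta relates the $h$'s at the two endpoints of a boundary edge, so the elimination structure is precisely that of the boundary graph itself: a spanning tree of the pillow graph has \emph{three} edges, and once those three deltas are used, every remaining delta contains the last variable (say $h_d$) only through a conjugation $h_d^{-1}\mathcal H\,h_d$, which cannot be solved for $h_d$. Connectedness of $\partial\mathcal G$ guarantees $|\mathcal V_\gamma|-1=3$ eliminations, not $|\mathcal V_\gamma|=4$; the diagonal gauge redundancy is the obstruction, so ``full rank of the face--vertex incidence matrix'' is the wrong criterion here.

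This is exactly why the paper frames the step as ``choose a maximal tree $\mathcal T$ in $\gamma$ and integrate the deltas on the tree edges'': one is then left with $|\mathcal E_\gamma|-|\mathcal V_\gamma|+1=|\mathcal F_\gamma|-1=3$ delta functions (matching the three claimed constraints) and a single remaining $h_d$-integral. After Lemma~\ref{Lemma1} collapses each bulk word $\mathcal H$ to $\mathds 1$, the conjugation disappears, the three deltas become $\delta^N(g_1g_2^{-1})\,\delta^N(g_5g_6^{-1})\,\delta^N(g_1g_3^{-1}g_5g_4^{-1})$, and the residual $h_d$-integral is trivial by Haar normalisation. With this correction your argument coincides with the paper's; the remainder of your outline (use of Lemma~\ref{Lemma1}, the gauge-invariance reduction to $\psi(\mathds 1,\dots,\mathds 1)$, and the observation that only ``open $3$-bubbles are balls'' is needed, hence pseudomanifolds with purely bulk singularities are covered) is correct. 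A minor slip: your ``two-dimensional integral'' over $g_1,g_3,g_5$ is three-dimensional.
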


\begin{proof}
    Let us write down the contribution to the amplitude from all the faces involving boundary group elements, using the general notation introduced in figure \ref{NotationSphere}:
    \begin{equation}
        \begin{aligned}
            \int_{\mathrm{SU}(2)^{4}}\,&\mathrm{d}h_{a}\mathrm{d}h_{b}\mathrm{d}h_{c}\mathrm{d}h_{d}\,\\&\delta^{N}(g_{1}h_{a}^{-1}H_{1}h_{c}^{-1})\delta^{N}(g_{2}h_{a}^{-1}H_{2}h_{c}^{-1})\delta^{N}(g_{3}h_{a}^{-1}H_{3}h_{b}^{-1})\\&\delta^{N}(g_{4}h_{d}^{-1}H_{4}h_{c}^{-1})\delta^{N}(g_{5}h_{d}^{-1}H_{5}h_{b}^{-1})\delta^{N}(g_{6}h_{d}^{-1}H_{6}h_{b}^{-1})
        \end{aligned}
    \end{equation}
    As a next step, let us choose a maximal tree $\mathcal{T}$ in the boundary graph $\gamma$, i.e.~a subgraph containing all the vertices of $\gamma$, which does not contain cycles. A possible choice is drawn in the figure \ref{fig:Tree} below.
    \begin{figure}[H]
        \centering
        \includegraphics[scale=1.2]{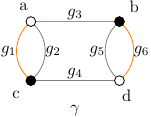}
        \caption{A maximal tree (in orange) within the boundary graph $\gamma$.\label{fig:Tree}}
    \end{figure}
    The idea is now to integrate over all the delta functions associated to the boundary edges contained in the tree using one of the group elements $h_{i}$ assigned to the external legs. Since $\mathcal{T}$ is a tree, we will end up with only one remaining integration variable and three delta functions, which correspond to the three edges, which are not contained on the tree. In this specific example, we could for example integrate over $h_{c}$ using $\delta^{N}(g_{1}h_{a}^{-1}H_{1}h_{c}^{-1})$, over $h_{a}$ using $\delta^{N}(g_{3}h_{a}^{-1}H_{3}h_{b}^{-1})$ and over $h_{b}$ using $\delta^{N}(g_{6}h_{d}^{-1}H_{6}h_{b}^{-1})$. Hence, we are left with only one remaining integration variable, namely $h_{d}$, and the contribution to the amplitude becomes
    \begin{equation}
        \begin{aligned}
            \int_{\mathrm{SU}(2)}\,&\mathrm{d}h_{d}\,\\&
            \delta^{N}(g_{6}^{-1}g_{3}g_{1}^{-1}g_{2}g_{3}^{-1}g_{6}h_{d}^{-1}H_{6}H_{3}^{-1}H_{2}H_{1}^{-1}H_{3}H_{6}^{-1}h_{d})\\&
            \delta^{N}(g_{6}^{-1}g_{3}g_{1}^{-1}g_{4}h_{d}^{-1}H_{4}H_{1}^{-1}H_{3}H_{6}^{-1}h_{d})\\&
            \delta^{N}(g_{6}^{-1}g_{5}h_{d}^{-1}H_{5}H_{6}^{-1}h_{d}).
        \end{aligned}
    \end{equation}
    Now, the point is that all the products of elements $H_{i}$ enclosed in the expression $h_{d}^{-1}\dots h_{d}$ describe closed paths in the graph drawn on the r.h.s.~figure \ref{NotationSphere}. The latter is a planar graph, as it has the same structure as the boundary graph and hence, using Lemma \ref{LemmaSphere} and Lemma \ref{Lemma1}, we can replace all of these products by the identity. Hence, we end up with
    \begin{equation}
        \begin{aligned}
            \delta^{N}(g_{1}^{-1}g_{2})\delta^{N}(g_{6}^{-1}g_{3}g_{1}^{-1}g_{4})\delta^{N}(g_{6}^{-1}g_{5})
        \end{aligned}
    \end{equation}
    as claimed. The same is of course true if $\mathcal{G}$ is a pseudomanifold with the property that all its singularities are in the interior, as in this case Lemma \ref{Lemma1} is still valid. The fact that we recover the spin network evaluation can again be shown by using the $\mathrm{SU}(2)$-invariance of the spin network function.
\end{proof}

Following the idea of the previous proof, let us now generalize the result to arbitrary boundary graphs representing the $2$-sphere.

\begin{Theorem}
    \label{GenSphereFact}
    Consider an arbitrary closed $2$-coloured graph $\gamma\in\overline{\mathfrak{G}}_{2}$ representing a $2$-sphere. If $\mathcal{G}$ is an open $(3+1)$-coloured graph with $\partial\mathcal{G}=\gamma$ representing a manifold, then its amplitude satisfies
    \begin{align*}
        \mathcal{A}^{\lambda}_{\mathcal{G}}[\{g_{e}\}_{e\in\mathcal{E}_{\gamma}}]\propto\prod_{f\in\mathcal{F}_{\gamma}}\delta^{N}\bigg (\overrightarrow{\prod_{e\in f}}g_{e}^{\varepsilon(e,f)}\bigg ),
    \end{align*}
    i.e.~we get a theory of flat boundary connections and no other constraints or mixed terms connecting bulk and boundary elements. The same holds true for pseudomanifolds, for which all the singularities are in the bulk. As a consequence, we get that
    \begin{align*}
        \langle\mathcal{Z}_{\mathrm{cBM}}\vert\Psi\rangle\vert_{\substack{\mathrm{manifolds+pseudomanifolds}\\\mathrm{without}\hspace*{0.1cm}\mathrm{boundary}\hspace*{0.1cm}\mathrm{singularities}}}=\mathcal{C}[N,\lambda,\overline{\lambda}]\cdot\psi(\{g_{e}=\mathds{1}\}_{e\in\mathcal{E}_{\gamma}}).
    \end{align*}
\end{Theorem}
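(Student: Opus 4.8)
The plan is to follow the same strategy as in the pillow-graph computation, the single new ingredient being a spanning-tree argument valid for an arbitrary planar boundary graph. First I would isolate the dependence of $\mathcal{A}_{\mathcal{G}}^{\lambda}$ on the boundary data: by the explicit form of $\mathcal{Z}_{\mathrm{PR}}^{\mathcal{G}}$, the group elements $g_{e}$ enter only through the non-cyclic faces, so the relevant factor of the amplitude is
\[
  \int_{\mathrm{SU}(2)^{|\mathcal{V}_{\gamma}|}}\Big(\prod_{v\in\mathcal{V}_{\gamma}}\mathrm{d}h_{v}\Big)\,\prod_{e\in\mathcal{E}_{\gamma}}\delta^{N}\big(g_{e}\,h_{s(e)}^{-1}\,H_{e}\,h_{t(e)}^{-1}\big),
\]
where $h_{v}$ is the group element on the external leg of $\mathcal{G}$ dual to $v\in\mathcal{V}_{\gamma}$, where $s(e),t(e)$ are the two endpoints of $e$ in $\gamma$, and where $H_{e}$ is the ordered product of the bulk group elements along the interior of the non-cyclic face $f$ with $e(f)=e$ (the dotted lines in figure \ref{NotationSphere}). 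Everything else in $\mathcal{A}_{\mathcal{G}}^{\lambda}$ is a $g$-independent product of factors $\lambda\overline{\lambda}$, $\delta^{N}(\mathds{1})$ and closed-face delta functions, hence contributes only to the prefactor.

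Second, I would record the constraints on the $H_{e}$. Since $\mathcal{G}$ represents a manifold --- so that all its $3$-bubbles are spheres or balls by Proposition \ref{ManifoldsGraphs} --- or a pseudomanifold all of whose singularities lie in the bulk, every open $3$-bubble of $\mathcal{G}$ represents a $2$-disk with a single boundary component, and that component is exactly a $2$-bubble $\phi\in\mathcal{F}_{\gamma}$. Exactly as in Lemma \ref{Lemma1}, the closed $3$-coloured path obtained by composing the $H_{e}$ for $e\in\phi$ lies in the interior of this disk, so Lemma \ref{LemmaSphere} gives
\[
  \overrightarrow{\prod_{e\in\phi}}H_{e}^{\varepsilon(e,\phi)}\stackrel{\mathrm{CF}}{=}\mathds{1}\qquad\text{for every }\phi\in\mathcal{F}_{\gamma},
\]
where ``$\stackrel{\mathrm{CF}}{=}$'' means ``after using all delta functions associated with the closed faces of $\mathcal{G}$''. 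In other words, modulo the closed faces, $\{H_{e}\}_{e\in\mathcal{E}_{\gamma}}$ is a flat connection on $\gamma$.

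Third comes the main manipulation. I would pick a spanning tree $\mathcal{T}\subseteq\gamma$ and use the $|\mathcal{V}_{\gamma}|-1$ delta functions indexed by the edges of $\mathcal{T}$ to integrate out all $h_{v}$ but one, say $h_{v_{0}}$. What survives is a single integration over $h_{v_{0}}$ together with one delta function per non-tree edge $e$, of the shape $\delta^{N}\big(W_{e}(\{g\})\,h_{v_{0}}^{-1}\,W_{e}(\{H\})\,h_{v_{0}}\big)$, where $W_{e}(\{g\})$ and $W_{e}(\{H\})$ are the ordered holonomies around the fundamental cycle of $e$ relative to $\mathcal{T}$, built from the $g_{e}$'s resp.~the $H_{e}$'s. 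Here planarity of $\gamma$ (it is planar, being a coloured graph representing $S^{2}$) is decisive: every fundamental cycle bounds, up to orientation, a union of $2$-bubbles of $\gamma$, so by the previous step $W_{e}(\{H\})\stackrel{\mathrm{CF}}{=}\mathds{1}$; the $h_{v_{0}}$-dependence drops out and its integration is trivial. The boundary-face contribution thus collapses to $\prod_{e\notin\mathcal{T}}\delta^{N}\big(W_{e}(\{g\})\big)$, i.e.~to the flatness constraints for $\{g_{e}\}$ alone, with no leftover terms mixing bulk and boundary. These are equivalent to $\prod_{f\in\mathcal{F}_{\gamma}}\delta^{N}\big(\overrightarrow{\prod_{e\in f}}g_{e}^{\varepsilon(e,f)}\big)$ up to one redundant factor $\delta^{N}(\mathds{1})$ (reabsorbed in the prefactor), since $|\mathcal{E}_{\gamma}|-|\mathcal{V}_{\gamma}|+1=|\mathcal{F}_{\gamma}|-1$ on $S^{2}$ and the product of all face holonomies of a flat connection on the sphere is the identity. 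For pseudomanifolds without boundary singularities the argument is unchanged, since only the disk property of the open $3$-bubbles was used.

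Finally, to get the stated consequence I would substitute this into $\mathcal{A}_{\mathcal{G}}^{\lambda}[\Psi]=\langle\mathcal{A}_{\mathcal{G}}^{\lambda}\vert\psi\rangle_{L^{2}}$ and integrate the product of flatness delta functions against the spin-network function. Because $S^{2}$ is simply connected, every flat $\mathrm{SU}(2)$ connection on $\gamma$ is gauge-equivalent to the trivial one; using this together with the gauge-invariance $\psi(\{g_{e}\})=\psi(\{k_{t(e)}^{-1}g_{e}k_{s(e)}\})$ of the spin-network function collapses the integral to the spin-network evaluation $\psi(\{g_{e}=\mathds{1}\})$ times a $g$-independent (and in general divergent) constant. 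Summing over all admissible $\mathcal{G}$ with their combinatorial and symmetry weights gathers these constants into $\mathcal{C}[N,\lambda,\overline{\lambda}]$, which is the claim. I expect the main obstacle to be the bookkeeping of the third step --- choosing which delta function eliminates which $h_{v}$ along $\mathcal{T}$, and verifying that after the substitutions every remaining delta function really sandwiches an $H$-holonomy equal to a product of face cycles of $\gamma$ so that the flatness relations of the second step apply. This is precisely where simple-connectedness of the boundary is used, and it is also why a higher-genus boundary, such as the $2$-torus, behaves differently: there the non-contractible cycles carry $H$-holonomies that need not be trivial, and genuine bulk information survives.
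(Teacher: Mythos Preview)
Your proposal is correct and follows essentially the same approach as the paper's own proof: isolate the boundary-face delta functions, pick a spanning tree in $\gamma$ to integrate out all but one $h_{v}$, use the disk property of the open $3$-bubbles together with Lemma~\ref{LemmaSphere} to kill the remaining $H$-holonomies, count the $|\mathcal{F}_{\gamma}|-1$ surviving constraints via $\chi(S^{2})=2$, and finally invoke gauge invariance of $\psi$ to reduce the pairing to the spin-network evaluation. The only organisational difference is that you first establish the face relations $\prod_{e\in\phi}H_{e}^{\varepsilon(e,\phi)}\stackrel{\mathrm{CF}}{=}\mathds{1}$ and then combine them via planarity of $\gamma$, whereas the paper applies Lemma~\ref{LemmaSphere} directly to the ``dotted graph'' in one step and afterwards adds a separate paragraph rederiving boundary flatness face by face; the two presentations are equivalent.
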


\begin{proof}
    As before, we label the edges of the boundary graph $\gamma$ by group elements $\{g_{e}\}_{e\in\mathcal{E}_{\gamma}}$. Furthermore, we label the external legs of $\mathcal{G}$ by the vertices of the boundary graph and the corresponding group elements living on these edges by $\{h_{v}\}_{v\in\mathcal{V}_{\gamma}}$. We also label the non-cyclic faces of $\mathcal{G}$ by the edges of the boundary graph and the product of group elements living on the part of these faces connecting the two external legs by $\{H_{e}\}_{e\in\mathcal{E}_{\gamma}}$. Then, the contribution of all the boundary group elements to the amplitude can be written as
    \begin{equation}
        \begin{aligned}
            \int_{\mathrm{SU}(2)^{\vert\mathcal{V}_{\gamma}}\vert}\,&\bigg(\prod_{v\in\mathcal{V}_{\gamma}}\mathrm{d}h_{v}\bigg )\,\prod_{e\in\mathcal{E}_{\gamma}}\delta^{N}(g_{e}h_{t(e)}^{-1}H_{e}h_{s(e)}^{-1})
        \end{aligned}
    \end{equation}
    Let us choose a maximal tree $\mathcal{T}$ in the boundary graph $\gamma$. Integrating over all the delta functions involving a $g_{e}$, $e\in\mathcal{T}$, we are left with only one integration, which corresponds to some remaining vertex $v_{0}$. In total, there are exactly
    \begin{align}
        \vert\mathcal{E}_{\gamma}\vert-\vert\mathcal{E}_{\mathcal{T}}\vert=\vert\mathcal{E}_{\gamma}\vert-\vert\mathcal{V}_{\mathcal{\gamma}}\vert+1=\vert\mathcal{F}_{\gamma}\vert-1
    \end{align}
    delta functions left, where we have used the fact that $\gamma$ represents a $2$-sphere, i.e.~$\vert\mathcal{V}_{\gamma}\vert-\vert\mathcal{E}_{\gamma}\vert+\vert\mathcal{F}_{\gamma}\vert=2$. All of these delta functions have the following structure:
    \begin{align}
        \delta^{N}(G h_{v_{0}}^{-1}\mathcal{H} h_{v_{0}}),
    \end{align}
    where $G$ is some product of boundary group elements $\{g_{e}\}_{e\in\mathcal{E}_{\gamma}}$ and $\mathcal{H}$ is a product of elements contained in $\{H_{e}\}_{e\in\mathcal{E}_{\gamma}}$. Now, the product $\mathcal{H}$ describes some closed path on the graph $\mathcal{G}$ and using similar arguments as previously, it can be replaced by $\mathds{1}$ using all the closed faces of the graph since the dotted graph is planar and since all the closed paths in this graph can be replaces by $\mathds{1}$ as a consequence of Lemma \ref{LemmaSphere}. Hence, we are left with a product of delta functions only containing closed paths consisting of boundary group elements.
    \bigskip

    Next, let us recall the well-known fact that the Ponzano-Regge amplitude always encodes flatness of the boundary connection \cite{TorusPR4}. To see this, let us look at a generic face of the boundary graph and the general structure of some open $3$-bubble of $\mathcal{G}$ leading to this face, as sketched in figure \ref{fig:FlatBoundary} below.

    \begin{figure}[H]
        \centering
        \includegraphics[scale=1.2]{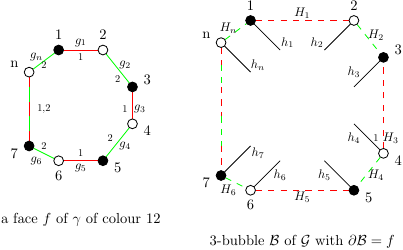}
        \caption{A face $f$ of the boundary graph $\gamma$ containing $n$ edges and the general structure of an open $3$-bubble $\mathcal{B}$ in $\mathcal{G}$ with $\partial\mathcal{B}=\gamma$.\label{fig:FlatBoundary}}
    \end{figure}

    The figure above shows some face $f$ of the boundary graph $\gamma$ of colour $12$, for definiteness, consisting of $n\in 2\mathbb{N}$ edges, which are equipped with the boundary group elements $\{g_{i}\}_{i\in\{1,\dots,n\}}$. The right-hand side shows the general structure of an open $3$-bubble $\mathcal{B}$ leading to this boundary face, i.e. $\partial\mathcal{B}=f$. The bicoloured paths are equipped with the product of group elements $\{H_{i}\}_{i\in\{1,\dots,n\}}$ and the external legs with group elements $\{h_{i}\}_{i\in\{1,\dots,n\}}$, as usual. To an edge $i\in\{1,\dots,n\}$ of $f$, there is a corresponding delta function $\delta^{N}(g_{i}h_{i}^{-1}H_{i}h_{i+1}^{-1})$ for even $i$ and $\delta^{N}(g_{i}h_{i+1}^{-1}H_{i}h_{i}^{-1})$ for odd $i$ in the amplitude of $\mathcal{G}$, where we use cyclic indices, i.e.~$i+n:=i$. Using the delta functions successively, we see that they encode the constraint:
    \begin{align}
        \mathds{1}=g_{1}h_{2}^{-1}H_{1}h_{1}^{-1}=g_{1}g_{2}^{-1}h_{3}H_{2}^{-1}H_{1}h_{1}^{-1}=\dots=g_{1}g_{2}^{-1}g_{3}\dots g_{n}^{-1}h_{1}H_{n}^{-1}H_{n-1}\dots H_{1}h_{1}^{-1}
    \end{align}
    Now, the relation $H_{n}^{-1}H_{n-1}\dots H_{1}$ can again be replaced by $\mathds{1}$, using the fact that $\mathcal{B}$ represents a disk as well as Lemma \ref{LemmaSphere}. Hence, we are left with
    \begin{align}
        \mathds{1}=g_{1}g_{2}^{-1}g_{3}\dots g_{n}^{-1},
    \end{align}
    which exactly tells us that the connection of the boundary face $f$ is flat. Applying the same logic for all faces of the boundary graph, we get the claim, i.e.~that the Ponzano-Regge amplitude always encodes flatness of the boundary connection. Note again that $\mathcal{G}$ does not need the represent a manifold for this argument to work, since we only need to require that all the open $3$-bubbles of $\mathcal{G}$ represent $2$-balls, or in other words, that $\mathcal{G}$ represents a pseudomanifold without singularities on the boundary. Furthermore, this fact applies of course also to arbitrary boundary topologies.
    \bigskip

    Since the Ponzano-Regge amplitude of some manifold with boundary always recovers the flatness of the boundary, the product of delta functions only containing boundary group elements can be rewritten in such a way that they contain the flatness condition for the boundary connection. However, there cannot be something more. Any additional constraint corresponds to some closed $3$-coloured path on the boundary graph and, by Lemma \ref{LemmaSphere}, it can be replaced by $\mathds{1}$, using all the other delta functions corresponding to the boundary faces. This also matches the fact that there are $\vert\mathcal{F}_{\gamma}\vert-1$ delta functions left, since one of the delta functions in the product over boundary faces is redundant. To sum up, the amplitude of $\mathcal{G}$ is proportional to a product of delta functions encoding flatness of the boundary connections and it is a well-known fact that the Ponzano-Regge transition amplitude in the case of a flat, spherical boundary is proportional to the spin network evaluation, i.e.~see\cite{TorusPR4}.
\end{proof}

Before moving on, it is important to point out that the result is not as trivial as it might appear, in light of the fact that we are describing here a topological field theory. For any manifold with boundary, the implication of the topological nature of the model is that the amplitude encodes the flatness of the boundary. Now, in principle, the Ponzano-Regge partition function depends on both the boundary data \textit{and} the topology of the bulk. What we have shown here in the context of the Boulatov model is that there are virtually no contributions of the bulk topology to the Boulatov partition function. They all collapse to a normalization factor.
This is {\it not} a consequence of the topological nature of the theory, but of the simple topology of the chosen boundary itself. To illustrate this fact, in the next section we quickly discuss the next-to-trivial boundary topology, i.e. the $2$-torus. In that case, we show that we obviously recover the flatness of the boundary, but the Boulatov amplitude will also have terms explicitly depending on the associated bulk topology.

\subsection{Toroidal Boundary}\label{SubSec:Torus}

The smallest closed $(2+1)$-coloured graph representing the next-to-trivial topology, the $2$-torus $T^{2}=S^{1}\times S^{1}$, which we denote in the following by $\gamma$, has six vertices and can be seen in figure \ref{TorusBoundaryGraph} below, together with its corresponding simplicial complex.

\begin{figure}[H]
    \centering
    \includegraphics[scale=1]{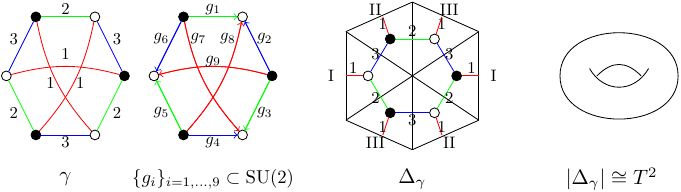}
    \caption{The smallest closed $(2+1)$-coloured graph representing the $2$-torus (l.h.s.) together with a labelling of its edges with group elements and its corresponding simplicial complex --where the gluing of edges is as indicated by the Roman numbers-- as well as its geometric realization, i.e.~the $2$-torus (r.h.s.).\label{TorusBoundaryGraph}}
\end{figure}

The smallest possible open $(3+1)$-coloured graph with boundary given by $\gamma$ --called the smallest matching graph (see Subsection \ref{subsec:SMG})-- is the graph obtained by adding an external leg to all the vertices of $\gamma$, see figure \ref{SolidTorusExamples}. However, while this graph -denoted by $\mathcal{G}_{0}$-- is a core graph, it clearly represents a pseudomanifold: its $123$-bubble is exactly given by $\gamma$ and so is non-spherical. More precisely, the pseudomanifold dual to $\mathcal{G}_{0}$ is homeomorphic to the topological cone of $T^{2}$. In figure \ref{SolidTorusExamples} below, we also represent two more complicated graphs $\mathcal{G}_{1},\mathcal{G}_{1}^{\prime}\in\mathfrak{G}_{3}$ with boundary $\gamma$, which are also both core graphs but represent a manifold (in fact the solid torus $\overline{T}^{2}$, see Appendix \ref{AppendixC:SolidTorusGraphs}).
\begin{figure}[H]
    \centering
    \includegraphics[scale=1]{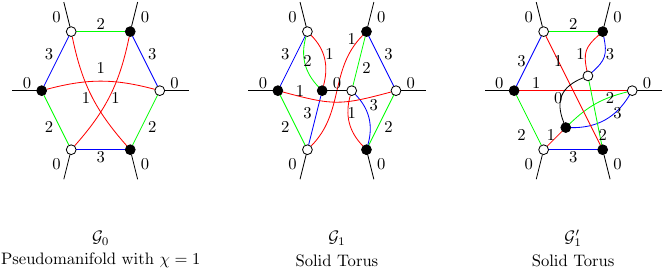}
    \caption{Three core graphs $\mathcal{G}_{0},\mathcal{G}_{1},\mathcal{G}_{1}^{\prime}\in\mathfrak{G}_{3}$ with $\partial\mathcal{G}_{0,1,2}=\gamma$. The graph $\mathcal{G}_{0}$ represents a pseudomanifold with one point-like singularity in the bulk and $\chi=1$. The graphs $\mathcal{G}_{1}$ and $\mathcal{G}_{1}^{\prime}$ represent the solid torus.\label{SolidTorusExamples}}
\end{figure}

A straightforward calculation shows that the graphs $\mathcal{G}_{1}$ and $\mathcal{G}_{1}^{\prime}$ have the same amplitude\footnote{This can also be seen by Lemma \ref{AmplChangeInt} and the fact that they are related by two internal proper $1$-dipole moves (see figure \ref{SolidTorus3}).} given by
\begin{align}
    \mathcal{A}_{\mathcal{G}_{1}}^{\lambda}[\{g_{1},\dots,g_{9}\}]=\bigg(\frac{\lambda\overline{\lambda}}{\delta^{N}(\mathds{1})}\bigg)^{4}\delta^{N}(g_{1}g_{2}^{-1}g_{3}g_{7}^{-1})\times\Delta_{\mathrm{FB}}(g_{1},\dots,g_{9}),
\end{align}
where $\Delta_{\mathrm{FB}}(g_{1},\dots,g_{9})$ is an abbreviation for the expression
\begin{align}
    \Delta_{\mathrm{FB}}(g_{1},\dots,g_{9})=\delta^{N}(g_{1}g_{2}^{-1}g_{3}g_{4}^{-1}g_{5}g_{6}^{-1})\delta^{N}(g_{1}g_{8}^{-1}g_{5}g_{9}^{-1}g_{3}g_{7}^{-1})\delta^{N}(g_{2}g_{8}^{-1}g_{4}g_{7}^{-1}g_{6}g_{9}^{-1})
\end{align}
encoding the flatness of the boundary. Since they clearly also have the same boundaries, topologies and orders, they are are contained in the same core equivalence class. Note that one of the three delta functions in $\Delta_{\mathrm{FB}}$ is actually redundant since we can always replace one of the faces by $\delta^{N}(\mathds{1})$ using the other two faces and the constraint $\delta^{N}(g_{1}g_{2}^{-1}g_{3}g_{7}^{-1})$. Hence, the degree of divergence of these graphs is actually $\delta^{N}(\mathds{1})^{-3}$. The geometric interpretation of this result is the following: The term $\Delta_{\mathrm{FB}}(g_{1},\dots,g_{9})$ encodes flatness of the boundary, which we always recover for the Ponzano-Regge model of some manifold (see the proof of Theorem \ref{GenSphereFact}), and the additional constraint $\delta^{N}(g_{1}g_{2}^{-1}g_{3}g_{7}^{-1})$ tells us which cycles of the boundary graph becomes contractible through the bulk, as sketched in figure \ref{TorusFact4}.

\begin{figure}[H]
    \centering
    \hspace*{2cm}\includegraphics[scale=1.2]{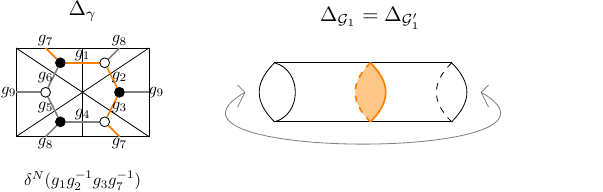}
    \caption{The boundary complex $\Delta_{\gamma}$ and the solid torus dual to $\mathcal{G}_{1},\mathcal{G}_{1}^{\prime}$. When gluing a solid torus to the torus boundary, there are two choices of which of the two non-contractible cycles of $\gamma$ becomes contractible through the bulk. This information is specified by the constraint $g_{1}g_{2}^{-1}g_{3}g_{7}^{-1}=\mathds{1}$ contained in the amplitude of $\mathcal{G}_{1,2}$.\label{TorusFact4}}
\end{figure}

Now, this is not the end of the story. At the continuum level, there are two a priori boundary cycles of $T^2$ that can become contractible through the bulk when gluing it to the solid torus ($\pi_{1}(T^{2})\cong\mathbb{Z}^{2}$). As sketched in figure \ref{TorusFact4}, our initial choice of bulk makes the ``vertical'' direction contractible. It is expected that it should exist a choice of bulk such that the ``horizontal'' direction is contractible instead. Following the notation of figure \ref{TorusFact4}, it should take the form of the constraint $\delta^N(g_1 g_2^{-1} g_9 g_6^{-1})$. As a matter of fact, it is indeed possible to find such an admissible bulk (see $\mathcal{G}_2$ below). In both cases, the constraint takes the form of a closed $3$-coloured paths on the boundary and one might wonder if more choices are possible. A list of all possible $3$-coloured cycles on the boundary graph $\gamma$ --up to flatness of the boundary and which do not go twice to the same edge-- are drawn in the figure \ref{TorusCases} below.

\begin{figure}[H]
    \centering
    \includegraphics[scale=1]{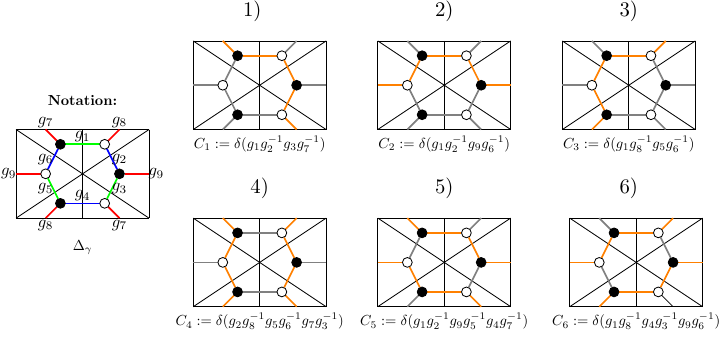}
    \caption{A list of all independent (by flatness) possible closed $4$-colour paths on the boundary, which to not go twice to some given edge and their associated boundary constraint.\label{TorusCases}}
\end{figure}

The solid torus graphs $\mathcal{G}_{1}$ and $\mathcal{G}_{1}^{\prime}$ are examples of graph encoding the constraint 1). It is not too hard to construct examples of graphs representing manifolds having the other constraints encoded in their amplitudes. Examples can be seen in figure \ref{SolidTorusCases} below.

\begin{figure}[H]
    \centering
    \includegraphics[scale=1]{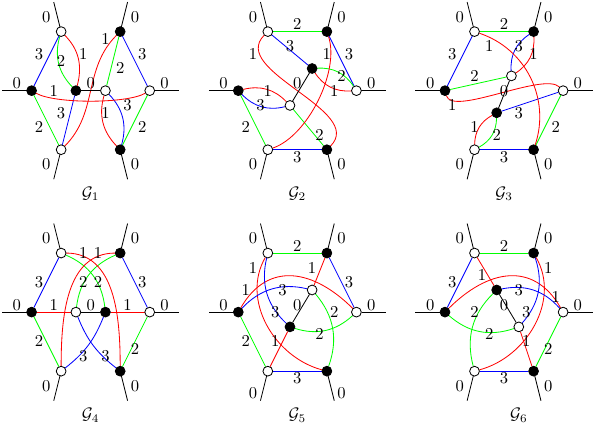}
    \caption{Six core graphs $\mathcal{G}_{i}$ with $\partial\mathcal{G}_{i}=\gamma$ representing manifolds, whose amplitudes encode the constraints $C_{i}$ sketched in figure \ref{TorusCases} above. \label{SolidTorusCases}}
\end{figure}

The graphs $\mathcal{G}_{2,3}$ clearly represent the solid tori, as they are essentially the graphs obtained by rotating $\mathcal{G}_{1}$ and by interchanging some of its colours. In other words, $\mathcal{G}_{2,3}$ are isomorphic to $\mathcal{G}_{1}$ as coloured graphs and hence represent the solid torus too.\footnote{Two coloured graphs $\mathcal{G}_{1,2}\in\mathfrak{G}_{d}$ with colouring maps $\varphi_{1,2}:\mathcal{E}_{\mathcal{G}_{1,2}}\to\mathcal{C}_{d}$ are called ``\textit{isomorphic}'', if they are isomorphic as graphs and if there colours are related by a bijective recolouring of their edges, i.e.~there is a graph isomorphism $\Phi:\mathcal{V}_{\mathcal{G}_{1}}\to\mathcal{V}_{\mathcal{G}_{2}}$ as well as a bijection $\Psi:\mathcal{C}_{d}\to\mathcal{C}_{d}$ such that $\varphi_{1}\circ\Phi=\Psi\circ\varphi_{2}$. By definition, two isomorphic coloured graphs are isomorphic if and only if they represent simplicial isomorphic complexes. \cite{GagliardiBoundaryGraph}} The three graphs $\mathcal{G}_{4,5,6}$ have been constructed by trial and error and it is a priori not clear which topology they represent. However, they clearly represent manifolds with boundary, as all of their $3$-bubbles represent spheres or balls, as one can easily check.\footnote{A closer analysis reveals that the graphs $\mathcal{G}_{4,5,6}$ indeed represent solid tori too, e.g.~by using Theorem 14 of \cite{HandleBodiesCT}, which tells us that every core graph representing a manifold with boundary $\gamma$ and with strictly less than 14 internal vertices represents the solid torus.} Furthermore, the three graphs also clearly represent the same topology, as they are again isomorphic as coloured graphs.

The amplitudes of the graph $\mathcal{G}_{i}$ is given by
\begin{align}
    &\mathcal{A}_{\mathcal{G}_{i}}^{\lambda}[\{g_{1},\dots,g_{9}\}]=\bigg(\frac{\lambda\overline{\lambda}}{\delta^{N}(\mathds{1})}\bigg)^{4}C_{i}\times\Delta_{\mathrm{FB}}(g_{1},\dots,g_{9})
\end{align}
for all $i\in\{1,\dots,6\}$, where $C_{i}$ are the constraints defined in figure \ref{TorusCases}. Since $C_{i}$ cannot be related using the flatness of the boundary, the six amplitudes are in principle different.
\bigskip

To sum up, we see that in the case of a toroidal boundary, there are different contributions to the full transition amplitude of the coloured Boulatov model. These contributions differ by the choice of which cycle becomes contractible through the bulk. That is, they differ by how the bulk is glued to the boundary and by the topology of the bulk. We have only discussed six explicit examples, but there might be many more cases, e.g. by combining the six cases, which geometrically correspond to different winding numbers, cycles and combinations thereof. A complete analysis of the torus boundary topology is left for future work \cite{ToAppearTorus}.
\section{Leading Order Contribution to a Spherical Boundary}
\label{SectionV:LeadingOrder}

In this last section, we show that the leading order contribution of the transition amplitude of some spherical boundary graph, when restricted to manifolds, is given by the equivalence class representing the closed $3$-ball defined by the smallest possible open graph matching the given boundary graph. Furthermore, we show that these graphs essentially generalize the melonic diagrams, which are the leading order diagrams in the large $N$ limit of the free energy of the Boulatov model \cite{GurauLargeN1,GurauLargeN2,GurauLargeN3}, in the sense that they have the smallest possible \textit{Gurau degree} \cite{GurauColouredTensorModelsReview}.

\subsection{Smallest Matching Graphs and Gurau Degree}\label{subsec:SMG}
Let us first introduce a certain class of graphs which are the smallest possible open $(3+1)$-coloured graphs matching some given boundary graph. As we will discuss in this section, this type of graphs can be viewed as a generalization of melonic graphs in the sense that they minimize a suitable generalization of the Gurau degree to open graphs. Let us introduce the following terminology.

\begin{Definition}[Smallest Matching Graph]
    Consider a closed $(2+1)$-coloured graph $\gamma\in\overline{\mathfrak{G}}_{2}$. We define an open $(3+1)$-coloured graph $\mathcal{G}_{\mathrm{SMG}}\in\mathfrak{G}_{3}$ with $\partial\mathcal{G}=\gamma$, called the ``smallest matching graph'', by adding an external leg of colour $0$ to all the vertices of $\gamma$, after interchanging the type of vertices (black $\leftrightarrow$ white) within $\gamma$.
\end{Definition}

Figure \ref{fig:SMG} below shows three examples of boundary graphs and their corresponding smallest matching graphs.

\vspace*{-0.1cm}
\begin{figure}[H]
    \centering
    \includegraphics[scale=0.9]{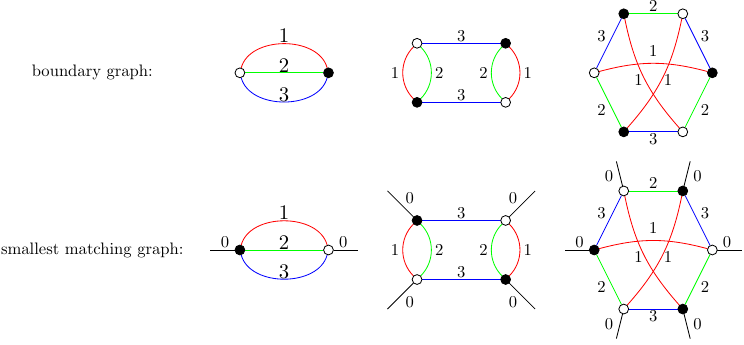}
    \caption{Three examples of closed $(2+1)$-coloured graphs and their smallest matching graphs.\label{fig:SMG}}
\end{figure}
\vspace*{-0.1cm}

The first two examples represent $2$-spheres and the corresponding smallest matching graphs clearly represent closed $3$-balls. The third example represents the $2$-torus and the corresponding smallest matching graph represents a pseudomanifold with one internal point-like singularity. The smallest matching graph corresponding to some boundary graph $\gamma$ is clearly a core graph since it only has one internal $3$-bubble, which by construction is equivalent to $\gamma$. Furthermore, it is also clear that the smallest matching graph is the unique open $(3+1)$-coloured graph with boundary $\gamma$ with minimal possible number of internal vertices, $\vert\mathcal{V}_{\gamma}\vert$. Last but not least, note that the simplicial complex $\Delta_{\mathcal{G}_{\mathrm{SMG}}}$ is precisely what is usually called the ``\textit{cone}'' of the simplicial complex $\Delta_{\gamma}$, i.e.~the simplicial complex obtained by adding to $\Delta_{\gamma}$ a vertex $v_{0}$ as well as a $(k+1)$-simplex $\{v_{0}\}\cup\sigma$ for each $k$-simplex $\sigma$ of $\Delta_{\gamma}$. Therefore, the pseudomanifold $\vert\Delta_{\mathcal{G}_{\mathrm{SMG}}}\vert$ is the topological cone of the surface $\mathcal{S}:=\vert\Delta_{\gamma}\vert$, usually denoted by $C\mathcal{S}$, e.g.~see \cite{Hatcher}. \bigskip

If $\gamma$ does not represent a $2$-sphere, then the smallest matching graph cannot be a manifold since its unique $3$-bubble of colour $123$ is equivalent to $\gamma$ and is by assumption non-spherical. In fact, the Euler characteristic of the simplicial complex dual to the smallest matching graph is generically given by one, independently of the boundary topology, as the following short calculation shows:
\begin{align}
    \chi(\Delta_{\mathcal{G}_{\mathrm{SMG}}})=\underbrace{\mathcal{B}^{[3]}}_{=1+\vert\mathcal{F}_{\gamma}\vert}-\underbrace{\vert\mathcal{F}_{\mathcal{G}_{\mathrm{SMG}}}\vert}_{=\vert\mathcal{F}_{\gamma}\vert+\vert\mathcal{E}_{\gamma}\vert}+\underbrace{\vert\mathcal{E}_{\mathcal{G}_{\mathrm{SMG}}}\vert}_{=\vert\mathcal{E}_{\gamma}\vert+\vert\mathcal{V}_{\gamma}\vert}-\underbrace{\vert\mathcal{V}_{\mathcal{G}_{\mathrm{SMG}},\mathrm{int}}\vert}_{=\vert\mathcal{V}_{\gamma}\vert}=1,
\end{align}
where $\mathcal{B}^{[3]}$ denotes the number of $3$-bubbles of $\mathcal{G}_{\mathrm{SMG}}$. Since the Euler characteristic of any odd-dimensional compact and orientable manifold $\mathcal{M}$ has to fulfil $\chi(\mathcal{M})=\frac{1}{2}\chi(\partial\mathcal{M})$, we see that
\begin{align}
    1-g_{\gamma}=\frac{1}{2}\chi(\Delta_{\gamma})\stackrel{!}{=}\chi(\Delta_{\mathcal{G}_{\mathrm{SMG}}})=1,
\end{align}
where $g_{\gamma}$ denotes the genus of the surface dual to $\gamma$,
can only be fulfilled for boundary graphs representing the $2$-sphere, i.e.~the genus $0$-surface. Indeed, if $\gamma$ is a $2$-sphere, then one can easily check that the smallest matching graph always represents a manifold, since all its $3$-bubbles are either $2$-spheres or disks (=$2$-balls). Furthermore, a closer look reveals that they generically represent the $3$-ball in this case:

\begin{Proposition}
    The smallest matching graph $\mathcal{G}_{\mathrm{SMG}}$ of an arbitrary closed $(2+1)$-coloured graph $\gamma\in\overline{\mathfrak{G}}_{2}$ representing the $2$-sphere represents the $3$-ball.
\end{Proposition}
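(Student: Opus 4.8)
The plan is to read off the topology of $|\Delta_{\mathcal{G}_{\mathrm{SMG}}}|$ directly from the cone description established just above the statement, namely $\Delta_{\mathcal{G}_{\mathrm{SMG}}}=v_{0}\ast\Delta_{\gamma}$, with boundary $\partial(v_{0}\ast\Delta_{\gamma})=\Delta_{\gamma}$ (consistently with $\partial\mathcal{G}_{\mathrm{SMG}}=\gamma$). First I would record that, since $\gamma$ represents $S^{2}$, the complex $\Delta_{\gamma}$ is a \emph{combinatorial} triangulation of $S^{2}$: the link in $\Delta_{\gamma}$ of a vertex is a triangulated circle (a combinatorial $1$-sphere) and the link of an edge is a pair of points (a combinatorial $0$-sphere). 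In dimension two a combinatorial triangulation of $S^{2}$ is a combinatorial $2$-sphere, i.e.\ $|\Delta_{\gamma}|$ is PL-homeomorphic to $\partial\Delta^{3}$ (uniqueness of the PL structure on a surface). That $\mathcal{G}_{\mathrm{SMG}}$ represents a manifold at all is already guaranteed by Proposition \ref{ManifoldsGraphs}, since all its $3$-bubbles are $2$-spheres or $2$-balls; what remains is to pin down \emph{which} manifold.

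The key input is the classical PL lemma that the cone over a combinatorial $n$-sphere is a combinatorial $(n+1)$-ball. To apply it I would recall that $|\Delta_{\mathcal{G}_{\mathrm{SMG}}}|=|v_{0}\ast\Delta_{\gamma}|$ is the topological cone $C|\Delta_{\gamma}|$, fix a PL homeomorphism $h\colon|\Delta_{\gamma}|\to\partial\Delta^{3}$, cone it to a PL homeomorphism $C|\Delta_{\gamma}|\cong C(\partial\Delta^{3})\cong|v_{0}\ast\partial\Delta^{3}|$, and finally note that $v_{0}\ast\partial\Delta^{3}$ --- the cone on the boundary of a tetrahedron from an interior point, equivalently the stellar subdivision of $\Delta^{3}$ at its barycenter --- is a subdivision of $\Delta^{3}$ and hence a combinatorial $3$-ball. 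Composing these identifications yields $|\Delta_{\mathcal{G}_{\mathrm{SMG}}}|\cong_{\mathrm{PL}}|\Delta^{3}|=B^{3}$, which is exactly the assertion.

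I expect the only genuine subtlety to be bookkeeping in the PL category: the set-level identity $CS^{2}\cong B^{3}$ is immediate from $[p,t]\mapsto tp$ for $S^{2}\subset\mathbb{R}^{3}$, and all the content is propagating this through PL homeomorphisms, which rests on the (classical) uniqueness of PL structures on surfaces. If one wishes to sidestep even that input, an alternative argument runs as follows: $|\Delta_{\mathcal{G}_{\mathrm{SMG}}}|=CS^{2}$ is a compact contractible PL $3$-manifold with boundary $S^{2}$; gluing a PL $3$-ball along the boundary produces a closed, simply connected PL $3$-manifold, which is PL-homeomorphic to $S^{3}$ by the Poincar\'e conjecture in dimension three, and removing the ball again --- via the PL Schoenflies theorem in dimension three --- gives back $B^{3}$. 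Either way, none of the rooting or core-graph machinery of the preceding sections enters: the statement is a direct corollary of the cone presentation of $\Delta_{\mathcal{G}_{\mathrm{SMG}}}$.
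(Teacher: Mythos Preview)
Your proof is correct, and it takes a genuinely different route from the paper's. You exploit the cone description $\Delta_{\mathcal{G}_{\mathrm{SMG}}}=v_{0}\ast\Delta_{\gamma}$ (which the paper records immediately before the statement) together with the classical PL fact that the cone over a combinatorial $n$-sphere is a combinatorial $(n+1)$-ball, so the argument is essentially one line of PL topology. The paper instead stays entirely within the coloured-graph formalism: it shows by an Euler-characteristic count that any spherical $\gamma$ reduces under $1$-dipole contractions to the elementary melonic $2$-sphere, observes that each such move in $\gamma$ corresponds to a \emph{non-internal proper} $1$-dipole in $\mathcal{G}_{\mathrm{SMG}}$, and thus reduces $\mathcal{G}_{\mathrm{SMG}}$ to the elementary melonic $3$-ball via Theorem~\ref{DipoleProper}. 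Your approach is shorter and arguably more transparent topologically (and generalizes immediately to $S^{d}\Rightarrow B^{d+1}$), while the paper's argument is constructive at the level of graphs and showcases the dipole calculus that the rest of the paper relies on; in particular, the paper's proof also yields the useful side fact that every spherical $(2+1)$-coloured $\gamma$ roots to the elementary melonic $2$-sphere.
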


\begin{proof}
    As explained above, the smallest matching graph represents the cone over the boundary surface. Therefore, if $\gamma$ represents a $2$-sphere, than $\mathcal{G}_{\mathrm{SMG}}$ clearly represents a $3$-ball.
\end{proof}

We now show that the smallest matching graph of some spherical boundary graph generalizes melonic graphs in the sense that they have the smallest possible \textit{Gurau degree}. In the case of closed graphs, this combinatorial quantity labels the large $N$ expansion of the free energy of coloured tensor models. Furthermore, it allows to give bounds on the amplitudes of coloured GFTs \cite{GurauLargeN3,GurauColouredTensorModelsReview}. For coloured graphs with non-empty boundary, the Gurau degree can be defined following \cite{DegreeOpenGraphs,SenseTM}.

\begin{Definition}[Gurau Degree]
    Let $\mathcal{G}\in\mathfrak{G}_{d}$ be a connected open $(d+1)$-coloured graph with boundary graph $\gamma:=\partial\mathcal{G}$. Then, the ``(Gurau) degree of $\mathcal{G}$'' is defined to be the rational number
    \begin{align*}
        \omega(\mathcal{G}):=\frac{(d-1)!}{2}\bigg (\frac{d(d-1)}{4}\vert\mathcal{V}_{\mathcal{G},\mathrm{int}}\vert+d-\vert\mathcal{F}_{\mathcal{G},\mathrm{int}}\vert\bigg )-\frac{(d-1)!}{2}\bigg (\frac{d-1}{2}\vert\mathcal{V}_{\gamma}\vert+C(\gamma)\bigg )\in\frac{(d-1)!}{2}\cdot\mathbb{Z},
    \end{align*}
    where $C(\gamma)$ denotes the number of connected components of the boundary graph $\gamma$.
\end{Definition}

\begin{Remarks}
    \begin{itemize}
        \item[]
        \item[(a)]In the two-dimensional case, the degree is equivalent to the genus of the surface dual to the graph. To see this, let us write the degree for some $(2+1)$-coloured graph $\mathcal{G}$ with boundary $\partial\mathcal{G}=\gamma$ as
        \begin{equation}
            \begin{aligned}
                \omega(\mathcal{G})=& 1-\frac{1}{2}\underbrace{(\vert\mathcal{F}_{\mathcal{G},\mathrm{int}}\vert+\vert\mathcal{E}_{\gamma}\vert-\vert\mathcal{E}_{\mathcal{G}}\vert+\vert\mathcal{V}_{\mathcal{G},\mathrm{int}}\vert)}_{=\chi(\Delta_{\mathcal{G}})}-\frac{C(\partial\mathcal{G})}{2}+\frac{1}{2}\bigg (\vert\mathcal{E}_{\gamma}\vert-\vert\mathcal{E}_{\mathcal{G}}\vert-\frac{1}{2}\vert\mathcal{V}_{\gamma}\vert+\frac{3}{2}\vert\mathcal{V}_{\mathcal{G},\mathrm{int}}\vert\bigg ).
            \end{aligned}
        \end{equation}
        Let us analyse the last term. First of all, since $\gamma$ is a closed $(1+1)$-coloured graph, we clearly have that $\vert\mathcal{E}_{\gamma}\vert=\vert\mathcal{V}_{\gamma}\vert$. Furthermore, we have that $3\vert\mathcal{V}_{\mathcal{G},\mathrm{int}}\vert+\vert\mathcal{V}_{\gamma}\vert=2\vert\mathcal{E}_{\mathcal{G}}\vert$ for every open $(2+1)$-coloured graph. To sum up, the last term vanishes and we get
        \begin{equation}
            \begin{aligned}
                \omega(\mathcal{G})=& 1-\frac{\chi(\Delta_{\mathcal{G}})+C(\partial\mathcal{G})}{2}=g_{\mathcal{G}},
            \end{aligned}
        \end{equation}
        where we used that the genus of a surface with boundary is defined by $\chi(\Delta_{\mathcal{G}})=:2-2g_{\mathcal{G}}-C(\partial\mathcal{G})$.
        \item[(b)]Clearly, the definition reduces to the definition of the Gurau degree for closed graphs \cite{GurauLargeN3,GurauColouredTensorModelsReview} in the case of empty boundary $\partial\mathcal{G}=\emptyset$.
        \item[(c)]Similarly as for closed graph, the degree can be rewritten in terms of the genera of the ``\textit{jackets}'' of the graph. More precisely, it can be written as the sum over the genera of the ``\textit{pinched jackets}'' of the open graph minus the sum over the genera of jackets of the boundary graph, i.e.~the degree of its boundary graph. See \cite{DegreeOpenGraphs,SenseTM} for more details. Note that the definition of the degree in these papers differs by a factor of $(d-1)!/2$ compared to our definition, since they define what is usually called the ``\textit{reduced degree}'' \cite{ReducedDegree}.
    \end{itemize}
\end{Remarks}

The Gurau degree of some open coloured graph is in general \textit{not} a topological invariant, i.e.~graphs representing the same topology might have a different degree. However, it is invariant under internal proper $1$-dipole moves.

\begin{Lemma}[Dipole Contractions and Degree]
    \label{DipoleDegree}
    Let $\mathcal{G}\in\mathfrak{G}_{d}$ be a connected $(d+1)$-coloured graph and $d_{k}$ be an internal $k$-dipole, i.e.~at least one of the two $(d+1-k)$-bubbles separated by $d_{k}$ is closed. Then
    \begin{align*}
        \omega(\mathcal{G})=\frac{(d-1)!}{2}(k(d+1-k)-d)+\omega(\mathcal{G}/d_{k}).
    \end{align*}
    In particular, it follows that $\omega(\mathcal{G})=\omega(\mathcal{G}/d_{1})$, i.e.~the degree is invariant under internal $1$-dipole moves.
\end{Lemma}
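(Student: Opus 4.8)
The plan is to compute directly how each ingredient in the definition of $\omega$ changes under a $k$-dipole contraction, using the bookkeeping already set up for bubbles and faces. Write $\mathcal{G}' := \mathcal{G}/d_k$ and let the dipole $d_k$ have colours $i_1,\dots,i_k$. First I would record the obvious changes: contracting $d_k$ removes the two internal vertices $v,w$ of the dipole, so $\vert\mathcal{V}_{\mathcal{G}',\mathrm{int}}\vert = \vert\mathcal{V}_{\mathcal{G},\mathrm{int}}\vert - 2$; since $d_k$ is internal, none of the external legs are touched, so $\vert\mathcal{V}_{\gamma}\vert$ and $C(\gamma)$ are unchanged and the entire second bracket $\tfrac{d-1}{2}\vert\mathcal{V}_\gamma\vert + C(\gamma)$ drops out of the difference. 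Hence the whole computation reduces to tracking $\vert\mathcal{F}_{\mathcal{G},\mathrm{int}}\vert$, the number of internal (closed) $2$-bubbles.

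The heart of the argument is the count of how many internal faces are lost. A $2$-bubble of $\mathcal{G}$ involving two of the colours $i_1,\dots,i_k$ — there are $\binom{k}{2}$ such colour pairs — splits, as in the proof of Lemma \ref{AmplChangeInt}, into a piece through $v$ and a piece through $w$, and after contraction these recombine into a single face: so each such pair contributes a net loss of one internal face. A $2$-bubble involving one colour from $\{i_1,\dots,i_k\}$ and one from the complement $\mathcal{C}_d\setminus\{i_1,\dots,i_k\}$ (there are $k(d+1-k)$ such pairs) simply has the dipole edge of that colour spliced out, so it survives with no change in number. A $2$-bubble involving no colour from the dipole is untouched. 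On the other hand, one must check that no faces of the closed separated bubble $\mathcal{B}_v^{d+1-k}$ that were previously closed become non-closed, and vice versa: because at least one separated bubble is closed (this is exactly the hypothesis that $d_k$ is \emph{internal}), every $(d-k)$-bubble inside it stays closed, and the only merging that happens is the $\binom{k}{2}$-fold one described above. Thus $\vert\mathcal{F}_{\mathcal{G}',\mathrm{int}}\vert = \vert\mathcal{F}_{\mathcal{G},\mathrm{int}}\vert - \binom{k}{2}$. Plugging $\vert\mathcal{V}_{\mathcal{G},\mathrm{int}}\vert \to \vert\mathcal{V}_{\mathcal{G},\mathrm{int}}\vert - 2$ and $\vert\mathcal{F}_{\mathcal{G},\mathrm{int}}\vert \to \vert\mathcal{F}_{\mathcal{G},\mathrm{int}}\vert - \binom{k}{2}$ into
\[
\omega(\mathcal{G}) - \omega(\mathcal{G}') = \frac{(d-1)!}{2}\left( \frac{d(d-1)}{4}\cdot 2 \;-\; \binom{k}{2} \right) = \frac{(d-1)!}{2}\left( \frac{d(d-1)}{2} - \frac{k(k-1)}{2} \right),
\]
and a one-line algebraic identity, $\tfrac{d(d-1)}{2} - \tfrac{k(k-1)}{2} = \tfrac{(d-k)(d+k-1)}{2}$, should be massaged to match the claimed $k(d+1-k)-d$; in fact $k(d+1-k)-d = -(k-1)(d-k) + \ldots$ — I would verify the two polynomials agree, e.g.\ by noting both vanish appropriately and comparing at $k=1$ (giving $0$) and $k=d$ (giving $d-1 = k(d+1-k)-d$), so by degree considerations in $k$ they coincide. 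The case $k=1$ then gives $\omega(\mathcal{G}) = \omega(\mathcal{G}/d_1)$ immediately.

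The main obstacle I anticipate is the precise face-counting step: being careful that the $\binom{k}{2}$ merged internal faces are counted correctly and that \emph{no other} faces change status — in particular ruling out that a face which used to close up only by going through the dipole bridge could become an open ($\partial$) face, and confirming that the faces of the \emph{open} separated bubble (if one of them is open) are accounted for by the "mixed colour pair" case and genuinely survive unchanged. The hypothesis that $d_k$ is internal (one separated bubble closed) is what makes this clean, exactly paralleling the role it played in Proposition \ref{PropDipoleBound} and Lemma \ref{AmplChangeInt}; I would invoke those proofs' combinatorics rather than redo them. Once the face count is pinned down, the rest is the elementary polynomial identity above.
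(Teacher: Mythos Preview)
There is a genuine error in your face count. You claim that a $2$-bubble whose two colours both lie in the complement $\mathcal{C}_d\setminus\{i_1,\dots,i_k\}$ is ``untouched'', but this is false: for each such colour pair $\{j_a,j_b\}$ there is one face through $v$ lying entirely in the separated bubble $\mathcal{B}_v^{d+1-k}$ and a \emph{distinct} face through $w$ lying in $\mathcal{B}_w^{d+1-k}$ (distinct precisely because $d_k$ is a dipole, so these two bubbles are different). After contraction the hanging edges of colours $j_a,j_b$ are spliced together and these two faces merge into a single one, so each complementary colour pair contributes a further loss of one internal face. (Incidentally, your verbal description of the dipole-colour case---``splits into a piece through $v$ and a piece through $w$ which recombine''---is actually the mechanism for the complementary case; a face with two dipole colours through $v,w$ is a single length-$2$ cycle that is simply deleted.) The correct count is therefore
\[
\vert\mathcal{F}_{\mathcal{G}',\mathrm{int}}\vert=\vert\mathcal{F}_{\mathcal{G},\mathrm{int}}\vert-\binom{k}{2}-\binom{d+1-k}{2},
\]
which is exactly what the paper records.

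Your own sanity check should have flagged this: with only the $\binom{k}{2}$ term, your displayed difference at $k=1$ equals $\tfrac{(d-1)!}{2}\cdot\tfrac{d(d-1)}{2}\neq 0$, so the polynomial identity you propose to verify ``by comparing at $k=1$ (giving $0$)'' in fact fails there. With the missing $\tfrac{(d+1-k)(d-k)}{2}$ restored, the identity
\[
\frac{d(d-1)}{2}-\frac{k(k-1)}{2}-\frac{(d+1-k)(d-k)}{2}=k(d+1-k)-d
\]
is immediate and gives $0$ at $k=1$ as required. The remainder of your outline---the vertex count, the unchanged boundary data via Proposition~\ref{PropDipoleBound}, and the role of the internal hypothesis---is correct and matches the paper's argument.
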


\begin{proof}
    The proof is a straightforward generalization of the proof for closed graphs given in \cite{GurauLargeN3,GurauColouredTensorModelsReview}: The number of internal vertices of $\mathcal{G}$ and $\mathcal{G}/d_{k}$ is decreased by two, $\vert\mathcal{V}_{\mathcal{G}/d_{k},\mathrm{int}}\vert=\vert\mathcal{V}_{\mathcal{G},\mathrm{int}}\vert-2$, and the number of faces by
    \begin{align}
        \vert\mathcal{F}_{\mathcal{G}/d_{k}}\vert=\vert\mathcal{F}_{\mathcal{G}}\vert-\frac{k(k-1)}{2}-\frac{(d+1-k)(d-k)}{2}.
    \end{align}
    By assumption, the dipole is internal. Hence, the number of non-cyclic faces is left untouched, since $\partial\mathcal{G}=\partial(\mathcal{G}/d_{k})$ (see Proposition \ref{PropDipoleBound}). As a consequence, we have that
    \begin{align}
        \vert\mathcal{F}_{\mathcal{G}/d_{k},\mathrm{int}}\vert=\vert\mathcal{F}_{\mathcal{G},\mathrm{int}}\vert-\frac{k(k-1)}{2}-\frac{(d+1-k)(d-k)}{2}.
    \end{align}
    Using these relations and the definition of the degree, we obtain the claimed relation.
\end{proof}

This lemma is a generalization of the result stated in \cite{GurauLargeN3,GurauColouredTensorModelsReview} for closed graphs. Furthermore, as shown in \cite{GurauLargeN3,GurauColouredTensorModelsReview}, the degree of some closed coloured graph can be written in terms of the degrees of its bubbles. Let us generalize this to the case of open coloured graphs.

\begin{Proposition}
    \label{PropositionDegreeBubbles}
    Let $\mathcal{G}\in\mathfrak{G}_{3}$ be a connected open $(d+1)$-coloured graph with boundary graph $\gamma:=\partial\mathcal{G}$. Then
    \begin{align*}
        \omega(\mathcal{G})=\frac{(d-1)!}{2}\bigg (\frac{(d-1)\vert\mathcal{V}_{\mathcal{G},\mathrm{int}}\vert-\vert\mathcal{V}_{\gamma}\vert}{2(d-1)}+d-\mathcal{B}^{[d]}+\frac{\partial\mathcal{B}^{[d-1]}}{d-1}-C(\gamma)\bigg )+\sum_{i=0}^{d}\sum_{\rho}\omega(\mathcal{B}^{\hat{i}}_{(\rho)}),
    \end{align*}
    where $\mathcal{B}^{[d]}$ denotes the number of $d$-bubbles of $\mathcal{G}$ and where $\mathcal{B}^{\hat{i}}_{(\rho)}$ are the internal $d$-bubbles without colour $i$ of $\mathcal{G}$, labelled by some parameter $\rho$.
\end{Proposition}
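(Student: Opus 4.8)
The plan is to adapt the proof of the analogous result for closed graphs (stated in \cite{GurauLargeN3,GurauColouredTensorModelsReview}), where the degree of a closed coloured graph is expressed via the degrees of its bubbles, to the present setting of open graphs with boundary $\gamma$. The key observation is that $\omega$ is invariant under internal proper $1$-dipole moves (Lemma \ref{DipoleDegree}), so one may first reduce $\mathcal{G}$ to a core graph $\mathcal{G}_c$ via the rooting procedure of Subsection \ref{Subse:Rooting} without changing either side of the claimed identity---provided one also checks that the combinatorial quantities $\mathcal{B}^{[d]}$, $\partial\mathcal{B}^{[d-1]}$, $C(\gamma)$, $|\mathcal{V}_\gamma|$ and the sum of bubble degrees $\sum_{i,\rho}\omega(\mathcal{B}^{\hat i}_{(\rho)})$ all transform consistently. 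By Lemma \ref{LemmaCoreGraphs}, an internal proper $1$-dipole of colour $i$ leaves the number and topology of all $d$-bubbles involving colour $i$ untouched, and it corresponds to an internal proper $1$-dipole inside each $d$-bubble not involving the contracted colour; combining this with Lemma \ref{DipoleDegree} applied within those sub-bubbles shows the sum of bubble degrees shifts in exactly the way needed to match the shift of the ``bulk'' prefactor term. I expect this bookkeeping to be the main obstacle, since one has to track simultaneously how $|\mathcal{V}_{\mathcal{G},\mathrm{int}}|$, $\mathcal{B}^{[d]}$ and $\partial\mathcal{B}^{[d-1]}$ change under a single $1$-dipole contraction and verify the net effect on the right-hand side is zero.

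With the reduction to a core graph in hand (or, alternatively, working directly with a general $\mathcal{G}$), the substantive computation is to expand the degrees of the $d$-bubbles using the two-dimensional identification of Remark (a) after the definition of the Gurau degree: each internal $d$-bubble $\mathcal{B}^{\hat i}_{(\rho)}$ is a $d$-coloured graph (open or closed), so $\omega(\mathcal{B}^{\hat i}_{(\rho)})$ equals $\tfrac{(d-2)!}{2}$ times a combination of its internal vertices, internal faces, boundary vertices and boundary-component count. Summing over all $i\in\mathcal{C}_d$ and all $\rho$, each internal vertex of $\mathcal{G}$ is counted once in each of the $d$-bubbles not containing its incident colours---hence with a fixed multiplicity depending only on $d$---and similarly each internal face of $\mathcal{G}$ and each boundary vertex of $\gamma$ appears with controlled multiplicity. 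The boundary components of the $d$-bubbles assemble into the quantity $\partial\mathcal{B}^{[d-1]}$ (the total number of boundary components of $(d-1)$-bubbles, or equivalently of the open $d$-bubbles), and one invokes the correspondence between bubbles of $\mathcal{G}$ and simplices of $\Delta_{\mathcal{G}}$, together with the relation between $\partial\mathcal{G}$ and the boundary $(d-1)$-bubbles, to rewrite things in terms of $C(\gamma)$ and $|\mathcal{V}_\gamma|$.

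Concretely I would proceed as follows. First, write out $\sum_{i=0}^{d}\sum_\rho \omega(\mathcal{B}^{\hat i}_{(\rho)})$ using the definition of the degree applied to each $d$-bubble, being careful that the formula for a $d$-coloured graph uses $(d-2)!/2$ rather than $(d-1)!/2$. Second, perform the standard double-counting: $\sum_{i,\rho}|\mathcal{V}_{\mathcal{B}^{\hat i}_{(\rho)},\mathrm{int}}| = d\,|\mathcal{V}_{\mathcal{G},\mathrm{int}}|$ (each $(d+1)$-valent internal vertex lies in the $d$ bubbles obtained by deleting one of its $d+1$ colours, but only $d$ of those colours are ``$\hat i$'' choices---here one must be precise about the convention, adjusting to $d+1$ if all colours are allowed), $\sum_{i,\rho}|\mathcal{F}_{\mathcal{B}^{\hat i}_{(\rho)},\mathrm{int}}| = (d-1)\,|\mathcal{F}_{\mathcal{G},\mathrm{int}}|$ since each internal $2$-bubble of $\mathcal{G}$ survives in those $d$-bubbles whose omitted colour is distinct from its two colours, and the analogous count for boundary contributions giving $|\mathcal{V}_\gamma|$ and $C(\gamma)$. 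Third, collect the boundary-component counts of the open $d$-bubbles into $\partial\mathcal{B}^{[d-1]}$. Fourth, substitute everything back and compare with the definition of $\omega(\mathcal{G})$; the $|\mathcal{F}_{\mathcal{G},\mathrm{int}}|$-dependence should cancel between the two sides after using $(d-1)\cdot\frac{(d-2)!}{2} = \frac{(d-1)!}{2}$, leaving precisely the stated prefactor in terms of $|\mathcal{V}_{\mathcal{G},\mathrm{int}}|$, $|\mathcal{V}_\gamma|$, $\mathcal{B}^{[d]}$, $\partial\mathcal{B}^{[d-1]}$ and $C(\gamma)$. The delicate points are the exact multiplicities in the double-counting (sensitive to whether colour $0$, which carries the external legs, is treated on the same footing as the others) and the identification of $\sum$ of boundary components with $\partial\mathcal{B}^{[d-1]}$; once those are pinned down the identity follows by pure algebra.
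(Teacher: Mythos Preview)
Your approach is essentially the same as the paper's: express the sum $\sum_{i,\rho}\omega(\mathcal{B}^{\hat i}_{(\rho)})$ by applying the degree formula to each $d$-bubble, then double-count internal vertices, internal faces, boundary vertices, and boundary components of the open bubbles, and solve for $\omega(\mathcal{G})$. The paper does exactly this, with no reduction to a core graph; your first paragraph on rooting is an unnecessary detour (as you yourself suspect), since the identity is purely combinatorial and holds for arbitrary $\mathcal{G}$.

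Two points where your tentative multiplicities need correcting. First, the sum over $i$ runs over all $d+1$ colours, so each internal vertex of $\mathcal{G}$ lies in exactly $\binom{d+1}{d}=d+1$ of the $d$-bubbles, giving $\sum_{i,\rho}|\mathcal{V}_{\mathcal{B}^{\hat i}_{(\rho)},\mathrm{int}}|=(d+1)\,|\mathcal{V}_{\mathcal{G},\mathrm{int}}|$, not $d$. Second, each $1$-valent boundary vertex (attached to an external leg of colour $0$) lies in exactly $\binom{d}{d-1}=d$ open $d$-bubbles, so $\sum_{i,\rho}|\mathcal{V}_{\partial\mathcal{B}^{\hat i}_{(\rho)}}|=d\,|\mathcal{V}_\gamma|$. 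With these and your correct counts $\sum_{i,\rho}|\mathcal{F}_{\mathcal{B}^{\hat i}_{(\rho)},\mathrm{int}}|=(d-1)\,|\mathcal{F}_{\mathcal{G},\mathrm{int}}|$ and $\sum_{i,\rho}C(\partial\mathcal{B}^{\hat i}_{(\rho)})=\partial\mathcal{B}^{[d-1]}$, the algebra closes exactly as you describe.
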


\begin{proof}
    Let us split the set of $d$-bubbles for some colour $i$ into internal $d$-bubbles, labelled by $\rho_{\mathrm{int}}$, and open $d$-bubbles (those including external legs), labelled by $\rho_{\partial}$. Hence, we can write
    \begin{align}
        \label{eq1proof}&\sum_{i=0}^{d}\sum_{\rho}\omega(\mathcal{B}^{\hat{i}}_{(\rho)})=\sum_{i=0}^{d}\sum_{\rho_{\mathrm{int}}}\omega(\mathcal{B}^{\hat{i}}_{(\rho_{\mathrm{int}})})+\sum_{i=1}^{d}\sum_{\rho_{\partial}}\omega(\mathcal{B}^{\hat{i}}_{(\rho_{\partial})}).
    \end{align}
    Note that $\hat{0}$-bubbles cannot be open, since all the external legs have colour $0$, which is the reason why the second sum starts at $i=1$. As the $d$-bubbles are by themselves $d$-coloured graphs, we can define their degrees. The internal $d$-bubbles are closed $d$-coloured graphs and their degrees are given by
    \begin{align}
        \omega(\mathcal{B}^{\hat{i}}_{(\rho_{\mathrm{int}})})=\frac{(d-2)!}{2}\bigg (\frac{(d-1)(d-2)}{4}\vert\mathcal{V}_{\mathcal{B}^{\hat{i}}_{(\rho_{\mathrm{int}})}}\vert+(d-1)-\vert\mathcal{F}_{\mathcal{B}^{\hat{i}}_{(\rho_{\mathrm{int}})}}\vert\bigg )
    \end{align}
    and open $d$-bubbles are open $d$-coloured graphs and hence, their degrees are given by
    \begin{equation}
        \begin{aligned}
            \omega(\mathcal{B}^{\hat{i}}_{(\rho_{\partial})})=\frac{(d-2)!}{2}\bigg (\frac{(d-1)(d-2)}{4}\vert\mathcal{V}_{\mathcal{B}^{\hat{i}}_{(\rho_{\partial})},\mathrm{int}}\vert&+(d-1)-\vert\mathcal{F}_{\mathcal{B}^{\hat{i}}_{(\rho_{\partial})},\mathrm{int}}\vert\bigg )\\&-\frac{(d-2)!}{2}\bigg (\frac{d-2}{2}\vert\mathcal{V}_{\partial\mathcal{B}^{\hat{i}}_{(\rho_{\partial})}}\vert+C(\partial\mathcal{B}^{\hat{i}}_{(\rho_{\partial})})\bigg ).
        \end{aligned}
    \end{equation}
    Now, an internal vertex of $\mathcal{G}$ is by definition $(d+1)$-valent such that all adjacent edges have different colours. As a consequence, every internal vertex of $\mathcal{G}$ appears precisely in $\binom{d+1}{d}=(d+1)$ of its open or closed $d$-bubbles. This means that
    \begin{align}
        \sum_{i=1}^{d}\sum_{\rho_{\partial}}\vert\mathcal{V}_{\mathcal{B}^{\hat{i}}_{(\rho)},\mathrm{int}}\vert+\sum_{i=0}^{d}\sum_{\rho_{\mathrm{int}}}\vert\mathcal{V}_{\mathring{\mathcal{B}}^{\hat{i}}_{(\rho_{\mathrm{int}})}}\vert=(d+1)\vert\mathcal{V}_{\mathcal{G},\mathrm{int}}\vert.
    \end{align}
    A $1$-valent boundary vertex of $\mathcal{G}$ appears exactly in $\binom{d}{d-1}=d$ of its $d$-bubbles $\mathcal{B}^{\hat{i}}_{(\rho_{\partial})}$, i.e.
    \begin{align}
        \sum_{i=1}^{d}\sum_{\rho_{\partial}}\vert\mathcal{V}_{\partial\mathcal{B}^{\hat{i}}_{(\rho_{\partial})}}\vert=d\vert\mathcal{V}_{\gamma}\vert.
    \end{align}
    Furthermore, an internal face of $\mathcal{G}$ is a bicoloured path with colours $i,j$. Such a path is part of exactly $\binom{d-1}{d-1}=(d-1)$ open or closed $d$-bubbles of $\mathcal{G}$. This means that
    \begin{align}
        \sum_{i=1}^{d}\sum_{\rho_{\partial}}\vert\mathcal{F}_{\mathcal{B}^{\hat{i}}_{(\rho_{\partial})},\mathrm{int}}\vert+\sum_{i=0}^{d}\sum_{\rho_{\mathrm{int}}}\vert\mathcal{F}_{\mathcal{B}^{\hat{i}}_{(\rho_{\mathrm{int}})}}\vert=(d-1)\vert\mathcal{F}_{\mathcal{G},\mathrm{int}}\vert.
    \end{align}
    Lastly, we have to discuss the number of boundary components of the $\hat{i}$-bubbles. Every boundary component of a $d$-bubble corresponds to a $(d-1)$-bubble of the boundary graph. In other words, we have that
    \begin{align}
        \sum_{i=1}^{d}\sum_{\rho_{\partial}}C(\partial\mathcal{B}^{\hat{i}}_{(\rho_{\partial})})=\partial\mathcal{B}^{[d-1]},
    \end{align}
    where $\partial\mathcal{B}^{[d-1]}$ denotes the number of $(d-1)$-bubbles of the boundary graph $\partial\mathcal{G}$. Plugging all these relations into Equation \eqref{eq1proof} yields the required result
    \begin{align}
        \omega(\mathcal{G})=\frac{(d-1)!}{2}\bigg (\frac{(d-1)\vert\mathcal{V}_{\mathcal{G},\mathrm{int}}\vert-\vert\mathcal{V}_{\gamma}\vert}{2(d-1)}+d-\mathcal{B}^{[d]}+\frac{\partial\mathcal{B}^{[d-1]}}{d-1}-C(\gamma)\bigg )+\sum_{i=0}^{d}\sum_{\rho}\omega(\mathcal{B}^{\hat{i}}_{(\rho)}).
    \end{align}
\end{proof}

Let us now consider the case we are interested in. Let $\mathcal{G}\in\mathfrak{G}_{3}$ be some connected open $(3+1)$-coloured graph with connected boundary graph $\gamma:=\partial\mathcal{G}$. In this case, the Gurau degree is given by
\begin{align}
    \label{Degree3D}
    \omega(\mathcal{G})=\frac{3}{2}\vert\mathcal{V}_{\mathcal{G},\mathrm{int}}\vert+2-\vert\mathcal{F}_{\mathcal{G},\mathrm{int}}\vert-\vert\mathcal{V}_{\gamma}\vert,
\end{align}
which, according to Proposition \ref{PropositionDegreeBubbles}, is equivalent to
\begin{align}
    \label{eqdegree3d}
    \omega(\mathcal{G})=\frac{1}{2}\vert\mathcal{V}_{\mathcal{G},\mathrm{int}}\vert-\frac{1}{4}\vert\mathcal{V}_{\gamma}\vert+2-\mathcal{B}^{[3]}+\frac{1}{2}\vert\mathcal{F}_{\gamma}\vert+\sum_{\text{$3$-bubbles } \mathcal{B}}g_{\mathcal{B}},
\end{align}
where $g_{\mathcal{B}}$ denotes the genus of the surface (possibly with boundary) represented by the $3$-bubble $\mathcal{B}$. Note that the sum in the expression above is always greater or equal to $0$ and it equals $0$ if and only if $\mathcal{G}$ represents a manifold.
\bigskip

In the case of closed graphs, it is a well-known fact that the Gurau degree is a non-negative quantity in arbitrary dimensions and can be bounded from below by a function depending on the degrees of its $d$-bubbles of some fixed colours \cite{GurauColouredTensorModelsReview}. For the case of open graphs, let us prove the following lower bound for the degree.

\begin{Theorem}[Lower Bound for Gurau Degree]
    \label{LowerBoundDegre}
    Let $\mathcal{G}\in\mathfrak{G}_{3}$ be some connected open $(3+1)$-coloured graph with $\partial\mathcal{G}=\gamma$, such that all its $3$-bubbles are simple (see Definition \ref{DefSimpleBubbles}) so that there are no pinching effects on the boundary complex. Then
    \begin{align*}
        \omega(\mathcal{G})\geq 2g_{\gamma}.
    \end{align*}
\end{Theorem}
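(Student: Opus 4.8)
The plan is to exploit the bubble expansion of the Gurau degree established in Proposition \ref{PropositionDegreeBubbles}, specialized to $d=3$ in the form \eqref{eqdegree3d}, and to bound each term from below. Concretely, I would start from
\begin{align*}
    \omega(\mathcal{G})=\frac{1}{2}\vert\mathcal{V}_{\mathcal{G},\mathrm{int}}\vert-\frac{1}{4}\vert\mathcal{V}_{\gamma}\vert+2-\mathcal{B}^{[3]}+\frac{1}{2}\vert\mathcal{F}_{\gamma}\vert+\sum_{\text{$3$-bubbles }\mathcal{B}}g_{\mathcal{B}},
\end{align*}
where the genus sum is over all $3$-bubbles (open and closed). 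Since each $g_{\mathcal{B}}\geq 0$, the first step is to throw away the genera of the \emph{internal} $3$-bubbles and of all $\hat{0}$-bubbles (which are automatically closed). The genera of the \emph{open} $3$-bubbles of colours $\hat{1},\hat{2},\hat{3}$, however, should be kept and related to $g_\gamma$: each such open $3$-bubble $\mathcal{B}^{\hat i}_{(\rho_\partial)}$ is an open surface whose (desingularized, by the simplicity hypothesis) boundary consists of the $2$-bubbles of colour $\hat i$ inside $\gamma$, i.e. of certain connected components of $\partial\Delta_{\mathcal{B}}$ sitting inside the surface $\mathcal{S}=\vert\Delta_\gamma\vert$.

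The key geometric input is then the following: gluing the $d$-simplices (here triangles) of $\Delta_\gamma$ along a colour $i$ reconstructs $\mathcal{S}$ from the open $\hat i$-bubbles, so $\mathcal{S}$ is obtained by pasting the surfaces $\vert\Delta_{\mathcal{B}^{\hat i}_{(\rho_\partial)}}\vert$ along circles (the colour-$i$ edges of $\gamma$). By additivity of the genus under such gluings of surfaces along boundary circles — more precisely, by the Euler-characteristic bookkeeping $\chi(\mathcal{S})=\sum_\rho \chi(\vert\Delta_{\mathcal{B}^{\hat i}_{(\rho_\partial)}}\vert) - (\text{number of identified circles})$ together with the standard genus/boundary-count relation — one gets, for each fixed $i\in\{1,2,3\}$, an inequality of the shape
\begin{align*}
    \sum_{\rho_\partial} g\big(\mathcal{B}^{\hat i}_{(\rho_\partial)}\big)\;\geq\; g_\gamma - (\text{a combinatorial term in }\vert\mathcal{V}_\gamma\vert,\vert\mathcal{E}_\gamma\vert,\vert\mathcal{F}_\gamma\vert,\text{ etc.}).
\end{align*}
Plugging the corresponding lower bound(s) back into \eqref{eqdegree3d} and simplifying the purely combinatorial remainder using the identities valid on $\mathfrak{G}_3$ (namely $4\vert\mathcal{V}_{\mathcal{G},\mathrm{int}}\vert+\vert\mathcal{V}_\gamma\vert=2\vert\mathcal{E}_\mathcal{G}\vert$, $3\vert\mathcal{V}_\gamma\vert=2\vert\mathcal{E}_\gamma\vert$, and Euler's relation $\vert\mathcal{V}_\gamma\vert-\vert\mathcal{E}_\gamma\vert+\vert\mathcal{F}_\gamma\vert=2-2g_\gamma$ for the surface $\mathcal{S}$), the goal is to see the combinatorial remainder collapse so that exactly $2g_\gamma$ survives. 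A cleaner alternative, which I would actually pursue, is the reduction trick: by Lemma \ref{DipoleDegree} the degree is invariant under internal proper $1$-dipole moves, so I may first root $\mathcal{G}$ down to a core graph $\mathcal{G}_c$ (Proposition \ref{PropCoreGraph}) without changing $\omega(\mathcal{G})$; simultaneously $\gamma$ can be rooted to its own core $\gamma_c$, and for a surface the core graph is essentially the minimal one (as in the proof of the Proposition that $\mathcal{G}_{\mathrm{SMG}}$ of a spherical boundary is a ball), which pins down $\vert\mathcal{V}_{\gamma_c}\vert$, $\vert\mathcal{F}_{\gamma_c}\vert$ in terms of $g_\gamma$ alone. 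On such a reduced graph \eqref{eqdegree3d} becomes almost term-free and the inequality $\omega\geq 2g_\gamma$ should drop out, with equality traced back to the smallest matching graph when $g_\gamma=0$.

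The main obstacle, I expect, is the surface-gluing estimate: one must argue carefully that when the open $\hat i$-bubbles are glued along the colour-$i$ circles to rebuild $\mathcal{S}$, the genus cannot decrease \emph{too much}, i.e. control how many of the gluing circles are separating versus non-separating and how the handle count of $\mathcal{S}$ is distributed among the pieces. This is exactly the point where the hypothesis that all $3$-bubbles are simple is essential — it guarantees $\partial\Delta_{\mathcal{B}}=\partial\Delta_{\mathcal{G}}$ component-wise, so the boundary circles of the bubble surfaces are genuine circles in $\mathcal{S}$ and no pinching obscures the Euler-characteristic accounting. I would isolate this as a standalone lemma on genera of surfaces assembled from subsurfaces along boundary circles, prove it by Euler characteristic plus the observation that identifying two boundary circles of a (possibly disconnected) surface changes the genus by a controlled nonnegative amount, and then feed it into the degree formula.
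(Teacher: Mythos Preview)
Your strategy has a genuine gap at its first move. Throwing away the genera of the \emph{internal} $3$-bubbles cannot work: take $\mathcal{G}$ to be any open graph representing a \emph{manifold} with $g_\gamma>0$ (e.g.\ one of the solid torus graphs of Appendix~\ref{AppendixC:SolidTorusGraphs}). By Proposition~\ref{ManifoldsGraphs} every $3$-bubble is a sphere or a disk, so \emph{all} $g_{\mathcal{B}}=0$, including the open ones you planned to keep. Your surface-gluing lemma would then have to produce $\sum_{\rho_\partial}g(\mathcal{B}^{\hat i}_{(\rho_\partial)})=0\geq g_\gamma-(\text{combinatorics})$, i.e.\ the entire content of the bound would have to come from the ``combinatorial remainder'' you were hoping would collapse --- the open-bubble genera carry no information here. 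Relatedly, the geometric picture is off: the open $\hat i$-bubbles of $\mathcal{G}$ are (dual to) the links of boundary vertices of $\Delta_{\mathcal{G}}$; they are surfaces sitting transversally in the bulk, not pieces that tile $\mathcal{S}=\vert\Delta_\gamma\vert$, so there is no gluing of these surfaces along circles that ``reconstructs $\mathcal{S}$''.

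Your alternative --- rooting $\gamma$ to its core $\gamma_c$ --- also fails: Lemma~\ref{DipoleDegree} only controls the degree under \emph{internal} dipole moves, whereas reducing the boundary requires \emph{non-internal} ones, and those do change $\omega$. What the paper actually does is keep \emph{all} the genera, use simplicity only to write $\mathcal{B}^{[3]}=\vert\mathcal{F}_\gamma\vert+\mathcal{B}^{[3]}_{\mathrm{int}}$, and then perform a purely combinatorial manipulation (computing the Euler characteristic of the internal $123$-bubbles to express $\sum_{i,j\neq 0}\vert\mathcal{F}_{\mathcal{G},\mathrm{int},ij}\vert$) that yields the exact identity
\begin{align*}
\omega(\mathcal{G})=2g_\gamma+\sum_{i=1}^{3}\vert\mathcal{F}_{\mathcal{G},\mathrm{int},i0}\vert+2\Big(\sum_{\substack{\text{internal $3$-bubbles}\\\text{involving colour }0}}g_{\mathcal{B}}-\mathcal{B}^{[3]}_{\mathrm{int},0}\Big)+2\sum_{\text{open $3$-bubbles}}g_{\mathcal{B}}.
\end{align*}
Only at this point does one pass to a core graph (via Lemma~\ref{DipoleDegree}, internal moves only): in a core graph every internal $3$-bubble involving colour $0$ is non-spherical, so each contributes $g_{\mathcal{B}}\geq 1$ and the bracketed term is $\geq 0$. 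The internal-bubble genera are thus precisely the ingredient that makes the inequality close, which is why discarding them at the outset dooms the argument.
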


\begin{proof}
    Let us start with the following general expression of the degree of $\mathcal{G}$ (see Equation \eqref{eqdegree3d}):
    \begin{align}
        \omega(\mathcal{G})=\frac{1}{2}\vert\mathcal{V}_{\mathcal{G},\mathrm{int}}\vert-\frac{1}{4}\vert\mathcal{V}_{\gamma}\vert+2-\mathcal{B}^{[3]}+\frac{1}{2}\vert\mathcal{F}_{\gamma}\vert+\sum_{\text{$3$-bubbles } \mathcal{B}}g_{\mathcal{B}}.
    \end{align}
    Now, since we have assumed that all $3$-bubbles are simple, we can write $\mathcal{B}^{[3]}=\vert\mathcal{F}_{\gamma}\vert+\mathcal{B}_{\mathrm{int}}^{[3]}$, where $\mathcal{B}_{\mathrm{int}}^{[3]}$ denotes the set of closed $3$-bubbles, since to every $2$-bubble $f$ on the boundary there is a unique corresponding open $3$-bubble $\mathcal{B}$ in $\mathcal{G}$ with $\partial\mathcal{B}=f$. Furthermore, let us use the fact that $\gamma$ represents the genus $g_{\gamma}$-surface, which yields the relation
    \begin{align}
        \label{EulerRelationGenus}
        2-2g_{\gamma}=\chi(\Delta_{\gamma})=\vert\mathcal{F}_{\gamma}\vert-\vert\mathcal{E}_{\gamma}\vert+\vert\mathcal{V}_{\gamma}\vert=\vert\mathcal{F}_{\gamma}\vert-\frac{1}{2}\vert\mathcal{V}_{\gamma}\vert\hspace*{1cm}\Rightarrow\hspace*{1cm} -\frac{1}{2}\vert\mathcal{F}_{\gamma}\vert=g_{\gamma}-\frac{1}{4}\vert\mathcal{V}_{\gamma}\vert-1.
    \end{align}
    Using the splitting $\mathcal{B}^{[3]}=\vert\mathcal{F}_{\gamma}\vert+\mathcal{B}_{\mathrm{int}}^{[3]}$ and plugging in the formula for $\vert\mathcal{F}_{\gamma}\vert$ from above, we find that the degree is given by
    \begin{align}
        \omega(\mathcal{G})=\frac{1}{2}(\vert\mathcal{V}_{\mathcal{G},\mathrm{int}}\vert-\vert\mathcal{V}_{\gamma}\vert)+1-\mathcal{B}_{\mathrm{int}}^{[3]}+g_{\gamma}+\sum_{\text{$3$-bubbles } \mathcal{B}}g_{\mathcal{B}}.
    \end{align}
    Now, note that $\vert\mathcal{E}_{\mathcal{G},\mathrm{int},0}\vert=\frac{1}{2}(\vert\mathcal{V}_{\mathcal{G},\mathrm{int}}\vert-\vert\mathcal{V}_{\gamma}\vert)$, where $\mathcal{E}_{\mathcal{G},\mathrm{int},0}$ denotes the set of internal edges of colour $0$, where internal means edges connecting two $4$-valent vertices of $\mathcal{G}$, as usual. Hence, we have that
    \begin{align}
        \label{DegreeInternalEdges}
        \omega(\mathcal{G})=\vert\mathcal{E}_{\mathcal{G},\mathrm{int},0}\vert+1-\mathcal{B}_{\mathrm{int}}^{[3]}+g_{\gamma}+\sum_{\text{$3$-bubbles } \mathcal{B}}g_{\mathcal{B}}.
    \end{align}
    In order to find an appropriate expression for $\vert\mathcal{E}_{\mathcal{G},\mathrm{int},0}\vert$, let us use the equality
    \begin{align}
        \omega(\mathcal{G})=\vert\mathcal{E}_{\mathcal{G},\mathrm{int},0}\vert+1-\mathcal{B}_{\mathrm{int}}^{[3]}+g_{\gamma}+\sum_{\text{$3$-bubbles } \mathcal{B}}g_{\mathcal{B}}=\frac{3}{2}\vert\mathcal{V}_{\mathcal{G},\mathrm{int}}\vert+3-\vert\mathcal{F}_{\mathcal{G},\mathrm{int}}\vert-(\vert\mathcal{V}_{\gamma}\vert+1),
    \end{align}
    where the expression on the right-hand side is just the definition of the degree (see \eqref{Degree3D}) for the $3$-dimensional case, which yields
    \begin{align}
        \label{eqEdges}
        \vert\mathcal{E}_{\mathcal{G},\mathrm{int},0}\vert=\frac{3}{2}\vert\mathcal{V}_{\mathcal{G},\mathrm{int}}\vert+1-\vert\mathcal{F}_{\mathcal{G},\mathrm{int}}\vert-\vert\mathcal{V}_{\gamma}\vert+\mathcal{B}_{\mathrm{int}}^{[3]}-g_{\gamma}-\sum_{\text{$3$-bubbles } \mathcal{B}}g_{\mathcal{B}}.
    \end{align}
    As a next step, note that we have the following relation between all the internal faces of colour $ij$ with $i\neq 0\neq j$ and internal edges of colour $\neq 0$:
    \begin{subequations}
        \begin{align}
            \sum_{\substack{i,j=1,\\i\neq j}}^{3}\vert\mathcal{F}_{\mathcal{G},\mathrm{int},ij}\vert-\sum_{i=1}^{3}\vert\mathcal{E}_{\mathcal{G},\mathrm{int},i}\vert+&\vert\mathcal{V}_{\mathcal{G},\mathrm{int}}\vert=\sum_{\substack{\text{internal $3$-bubbles}\\\text{of colour $123$}\,\mathcal{B}}}(2-2g_{\mathcal{B}})=\\&=\sum_{\substack{\text{internal}\\ \text{$3$-bubbles}\,\mathcal{B}}}(2-2g_{\mathcal{B}})-\sum_{\substack{\text{internal $3$-bubbles}\\\text{involving colour $0$}\,\mathcal{B}}}(2-2g_{\mathcal{B}})=\\&=2(\mathcal{B}_{\mathrm{int}}^{[3]}-\mathcal{B}_{\mathrm{int},0}^{[3]})-2\bigg(\sum_{\substack{\text{internal}\\ \text{$3$-bubbles}\,\mathcal{B}}}g_{\mathcal{B}}-\sum_{\substack{\text{internal $3$-bubbles}\\\text{involving colour $0$}\,\mathcal{B}}}g_{\mathcal{B}}\bigg),
        \end{align}
    \end{subequations}
    where $\mathcal{E}_{\mathcal{G},\mathrm{int},i}$ denotes the set of internal edges of colour $i$, where $\mathcal{F}_{\mathcal{G},\mathrm{int},ij}$ denotes the set of internal (cyclic) faces of $\mathcal{G}$ of colour $ij$ and where $\mathcal{B}_{\mathrm{int},0}^{[3]}$ denotes the set of internal $3$-bubbles involving colour $0$, i.e.~the number of internal $012$, $013$ and $023$-bubbles. Using the fact that $3\vert\mathcal{V}_{\mathcal{G},\mathrm{int}}\vert=2\sum_{i=1}^{3}\vert\mathcal{E}_{\mathcal{G},\mathrm{int},i}\vert$ and using the formula above for the number of internal faces not containing colour $0$, we arrive at
    \begin{subequations}
        \begin{align}
            \vert\mathcal{F}_{\mathcal{G},\mathrm{int}}\vert=&\sum_{i=1}^{3}\vert\mathcal{F}_{\mathcal{G},\mathrm{int},i0}\vert+\sum_{\substack{i,j=1,\\i\neq j}}^{3}\vert\mathcal{F}_{\mathcal{G},\mathrm{int},ij}\vert=\\=&\sum_{i=1}^{3}\vert\mathcal{F}_{\mathcal{G},\mathrm{int},i0}\vert+\frac{1}{2}\vert\mathcal{V}_{\mathcal{G},\mathrm{int}}\vert+2(\mathcal{B}_{\mathrm{int}}^{[3]}-\mathcal{B}_{\mathrm{int},0}^{[3]})-2\bigg(\sum_{\substack{\text{internal}\\ \text{$3$-bubbles}\,\mathcal{B}}}g_{\mathcal{B}}-\sum_{\substack{\text{internal $3$-bubbles}\\\text{involving colour $0$}\,\mathcal{B}}}g_{\mathcal{B}}\bigg).
        \end{align}
    \end{subequations}
    Plugging this expression for the number of internal faces back into our formula for $\vert\mathcal{E}_{\mathcal{G},\mathrm{int},0}\vert$, i.e.~Equation \eqref{eqEdges}, we find
    \begin{equation}
        \begin{aligned}
            \vert\mathcal{E}_{\mathcal{G},\mathrm{int},0}\vert=-\sum_{i=1}^{3}\vert\mathcal{F}_{\mathcal{G},\mathrm{int},i0}\vert+&\underbrace{\vert\mathcal{V}_{\mathcal{G},\mathrm{int}}\vert-\vert\mathcal{V}_{\gamma}\vert}_{2\vert\mathcal{E}_{\mathcal{G},\mathrm{int},0}\vert}+1-\mathcal{B}_{\mathrm{int}}^{[3]}-g_{\gamma}+\sum_{\text{$3$-bubbles } \mathcal{B}}g_{\mathcal{B}}+\\+&2\bigg(\mathcal{B}_{\mathrm{int},0}^{[3]}-\sum_{\substack{\text{internal $3$-bubbles}\\\text{involving colour $0$}\,\mathcal{B}}}g_{\mathcal{B}}\bigg)-2\sum_{\text{open $3$-bubbles}\,\mathcal{B}}g_{\mathcal{B}}
        \end{aligned}
    \end{equation}
    and hence we finally arrive at the equality
    \begin{equation}
        \begin{aligned}
            \label{InternalEdgesandFaces}
            \vert\mathcal{E}_{\mathcal{G},\mathrm{int},0}\vert=\sum_{i=1}^{3}\vert\mathcal{F}_{\mathcal{G},\mathrm{int},i0}\vert+\mathcal{B}_{\mathrm{int}}^{[3]}+(g_{\gamma}-1)&-\sum_{\text{$3$-bubbles } \mathcal{B}}g_{\mathcal{B}}\\&-2\bigg(\mathcal{B}_{\mathrm{int},0}^{[3]}-\sum_{\substack{\text{internal $3$-bubbles}\\\text{involving colour $0$}\,\mathcal{B}}}g_{\mathcal{B}}\bigg)+2\sum_{\text{open $3$-bubbles}\,\mathcal{B}}g_{\mathcal{B}}.
        \end{aligned}
    \end{equation}
    Plugging this relation for the number of internal edges of colour $0$ back into our expression for the degree \eqref{DegreeInternalEdges}, we arrive at the following formula for the degree of $\mathcal{G}$:
    \begin{align}
        \omega(\mathcal{G})=2g_{\gamma}+\underbrace{\sum_{i=1}^{3}\vert\mathcal{F}_{\mathcal{G},\mathrm{int},i0}\vert}_{\geq 0}+2\bigg(\sum_{\substack{\text{internal $3$-bubbles}\\\text{involving colour $0$}\,\mathcal{B}}}g_{\mathcal{B}}-\mathcal{B}_{\mathrm{int},0}^{[3]}\bigg)+2\underbrace{\sum_{\text{open $3$-bubbles}\,\mathcal{B}}g_{\mathcal{B}}}_{\geq 0}.
    \end{align}
    Let us assume without loss of generality that $\mathcal{G}$ is a core graph, since if it is not, then we can apply our rooting procedure and by Lemma \ref{DipoleDegree}, we know that the degree does not change. If $\mathcal{G}$ is a core graph, then all the internal $3$-bubbles involving colour $0$ are non-spherical and hence, we see that the bracket is non-negative in this case. Hence, we conclude that $\omega(\mathcal{G})\geq 2g_{\gamma}$, as claimed.
\end{proof}

In particular, this proof shows that the Gurau degree is always non-negative. As a straightforward application of the lower bound theorem proven above, we can show that the number of internal vertices of graphs representing manifolds can be bounded from below.

\begin{Corollary}
    \label{Cor:ManifoldVertices}
    Let $\gamma\in\overline{\mathfrak{G}}_{2}$ be a closed $(2+1)$-coloured graph representing a genus $g_{\gamma}$-surface. If $\mathcal{G}\in\mathfrak{G}_{3}$ is an open $(3+1)$-coloured graph with $\partial\mathcal{G}=\gamma$ representing a manifold, then $\vert\mathcal{V}_{\mathcal{G},\mathrm{int}}\vert\geq 2g_{\gamma}+\vert\mathcal{V}_{\gamma}\vert$.
\end{Corollary}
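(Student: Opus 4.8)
The plan is to read this off from the lower bound on the Gurau degree, Theorem~\ref{LowerBoundDegre}, together with the combinatorial expression \eqref{eqdegree3d} for the degree. First I would check that the hypotheses of Theorem~\ref{LowerBoundDegre} are met: since $\mathcal{G}$ represents a manifold, Proposition~\ref{ManifoldsGraphs} says that every $3$-bubble of $\mathcal{G}$ represents a $2$-sphere or a $2$-ball, hence has at most one boundary component and is in particular simple (Definition~\ref{DefSimpleBubbles}), so there are no pinching effects on the boundary complex. Therefore Theorem~\ref{LowerBoundDegre} applies and gives $\omega(\mathcal{G})\geq 2g_{\gamma}$. (If $\mathcal{G}$ happens to be disconnected, I would first pass to the connected component containing $\gamma$; this only decreases $\vert\mathcal{V}_{\mathcal{G},\mathrm{int}}\vert$, so it suffices to treat the connected case.)

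Next I would use the intermediate identity established inside the proof of Theorem~\ref{LowerBoundDegre}: for a graph all of whose $3$-bubbles are simple,
\[
\omega(\mathcal{G})=\tfrac{1}{2}\big(\vert\mathcal{V}_{\mathcal{G},\mathrm{int}}\vert-\vert\mathcal{V}_{\gamma}\vert\big)+1-\mathcal{B}_{\mathrm{int}}^{[3]}+g_{\gamma}+\sum_{\text{$3$-bubbles }\mathcal{B}}g_{\mathcal{B}},
\]
where $\mathcal{B}_{\mathrm{int}}^{[3]}$ is the number of closed $3$-bubbles. Since $\mathcal{G}$ represents a manifold, each $3$-bubble is a $2$-sphere or $2$-ball and so has genus $g_{\mathcal{B}}=0$, killing the last sum. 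Then I would observe the elementary bound $\mathcal{B}_{\mathrm{int}}^{[3]}\geq 1$: the $3$-bubble of colours $\{1,2,3\}$, i.e.\ the one not involving colour $0$, is always closed because all external legs of $\mathcal{G}$ have colour $0$. Combining, $2g_{\gamma}\leq\omega(\mathcal{G})=\tfrac12(\vert\mathcal{V}_{\mathcal{G},\mathrm{int}}\vert-\vert\mathcal{V}_{\gamma}\vert)+1-\mathcal{B}_{\mathrm{int}}^{[3]}+g_{\gamma}\leq\tfrac12(\vert\mathcal{V}_{\mathcal{G},\mathrm{int}}\vert-\vert\mathcal{V}_{\gamma}\vert)+g_{\gamma}$, which rearranges to $\vert\mathcal{V}_{\mathcal{G},\mathrm{int}}\vert\geq 2g_{\gamma}+\vert\mathcal{V}_{\gamma}\vert$.

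There is no serious obstacle here, since this is a direct corollary; the only points needing a moment's care are the two observations above — that ``manifold'' forces all $3$-bubbles to be simple and of genus $0$ (so that Theorem~\ref{LowerBoundDegre} is applicable and the genus sum drops out), and the bound $\mathcal{B}_{\mathrm{int}}^{[3]}\geq 1$. As an alternative to quoting the intermediate identity, one could instead start directly from \eqref{eqdegree3d}, substitute $g_{\mathcal{B}}=0$ and $\mathcal{B}^{[3]}=\vert\mathcal{F}_{\gamma}\vert+\mathcal{B}_{\mathrm{int}}^{[3]}$ (valid by simplicity), and eliminate $\vert\mathcal{F}_{\gamma}\vert$ via the Euler relation $\vert\mathcal{F}_{\gamma}\vert=2-2g_{\gamma}+\tfrac12\vert\mathcal{V}_{\gamma}\vert$ for the trivalent graph $\gamma$ (using $\vert\mathcal{E}_{\gamma}\vert=\tfrac32\vert\mathcal{V}_{\gamma}\vert$), arriving at the same expression and hence the same conclusion.
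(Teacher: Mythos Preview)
Your proof is correct and notably more direct than the paper's. Both arguments ultimately rest on the lower bound $\omega(\mathcal{G})\geq 2g_{\gamma}$ from Theorem~\ref{LowerBoundDegre}, but they combine it with different ingredients. The paper first reduces to a core graph (so that $\mathcal{B}^{[3]}=1+\vert\mathcal{F}_{\gamma}\vert$), then invokes the Euler identity $\chi(\mathcal{M})=\tfrac12\chi(\partial\mathcal{M})$ for compact orientable $3$-manifolds to obtain a relation between $\vert\mathcal{V}_{\mathcal{G},\mathrm{int}}\vert$, $\vert\mathcal{F}_{\mathcal{G},\mathrm{int}}\vert$ and $\vert\mathcal{E}_{\mathcal{G},\mathrm{int}}\vert$, feeds in the definition \eqref{Degree3D} of the degree and the lower bound, and closes with the edge--vertex count $2\vert\mathcal{E}_{\mathcal{G},\mathrm{int}}\vert=4\vert\mathcal{V}_{\mathcal{G},\mathrm{int}}\vert-\vert\mathcal{V}_{\gamma}\vert$. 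You instead stay inside the degree formalism: quoting the intermediate identity from the proof of Theorem~\ref{LowerBoundDegre}, you set the bubble-genus sum to zero (all $3$-bubbles are spheres or balls for a manifold) and use only the elementary observation $\mathcal{B}_{\mathrm{int}}^{[3]}\geq 1$, which holds because the $\hat{0}$-bubbles are always closed. This avoids the core-graph reduction, the topological Euler identity, and the extra edge count, at the modest cost of relying on a formula buried inside another proof rather than a stated result. Your alternative via \eqref{eqdegree3d} directly is equivalent and equally clean.
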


\begin{proof}
    Let us assume without loss of generality that $\mathcal{G}$ is a core graph, since if it is not a core graph, we can apply our rooting procedure in order to obtain a core graph, which by construction represents the same manifold as $\mathcal{G}$, but with a smaller number of internal vertices. If $\mathcal{G}$ is a core graph representing a manifold, then it has in total $1+\vert\mathcal{F}_{\gamma}\vert$ $3$-bubbles, one internal one of colour $123$ representing a $2$-sphere, and for each face $f$ of $\gamma$ a corresponding open $3$-bubbles $\mathcal{B}$ representing a disk with $\partial\mathcal{B}=f$. Furthermore, recall that the Euler characteristic of any odd-dimensional compact and orientable manifold $\mathcal{M}$ satisfies $\chi(\mathcal{M})=\frac{1}{2}\chi(\partial\mathcal{M})$. Hence, we can write
    \begin{subequations}
        \begin{align}
            1-g_{\gamma}&=\frac{1}{2}\chi(\Delta_{\gamma})=\chi(\Delta_{\mathcal{G}})=1+\vert\mathcal{F}_{\gamma}\vert-\underbrace{\vert\mathcal{F}_{\mathcal{G}}\vert}_{\vert\mathcal{F}_{\mathcal{G},\mathrm{int}}\vert+\vert\mathcal{E}_{\gamma}\vert}+\underbrace{\vert\mathcal{E}_{\mathcal{G}}\vert}_{\vert\mathcal{E}_{\mathcal{G},\mathrm{int}}\vert+\vert\mathcal{V}_{\gamma}\vert}-\vert\mathcal{V}_{\mathcal{G},\mathrm{int}}\vert=\\&=1+\underbrace{(\vert\mathcal{F}_{\gamma}\vert-\vert\mathcal{E}_{\gamma}\vert+\vert\mathcal{V}_{\gamma}\vert)}_{=\chi(\Delta_{\gamma})=2-2g_{\gamma}}-\vert\mathcal{F}_{\mathcal{G},\mathrm{int}}\vert+\vert\mathcal{E}_{\mathcal{G},\mathrm{int}}\vert-\vert\mathcal{V}_{\mathcal{G},\mathrm{int}}\vert,
        \end{align}
    \end{subequations}
    which yields the following relation:
    \begin{align}
        \label{eq:IntVert}
        \vert\mathcal{V}_{\mathcal{G},\mathrm{int}}\vert=2-g_{\gamma}-\vert\mathcal{F}_{\mathcal{G},\mathrm{int}}\vert+\vert\mathcal{E}_{\mathcal{G},\mathrm{int}}\vert
    \end{align}
    Next, we can use the definition of the degree (see Equation \eqref{Degree3D}), i.e.
    \begin{align}
        \omega(\mathcal{G})=\frac{3}{2}\vert\mathcal{V}_{\mathcal{G},\mathrm{int}}\vert+2-\vert\mathcal{F}_{\mathcal{G},\mathrm{int}}\vert-\vert\mathcal{V}_{\gamma}\vert.
    \end{align}
    Using the lower bound theorem for the degree, Theorem \ref{LowerBoundDegre}, we can deduce the following estimate
    \begin{align}
        -\vert\mathcal{F}_{\mathcal{G},\mathrm{int}}\vert=\omega(\mathcal{G})-\frac{3}{2}\vert\mathcal{V}_{\mathcal{G},\mathrm{int}}\vert-2+\vert\mathcal{V}_{\gamma}\vert\geq 2g_{\gamma}-\frac{3}{2}\vert\mathcal{V}_{\mathcal{G},\mathrm{int}}\vert-2+\vert\mathcal{V}_{\gamma}\vert.
    \end{align}
    Applying this inequality to Formula \eqref{eq:IntVert}, we get
    \begin{align}
        \vert\mathcal{V}_{\mathcal{G},\mathrm{int}}\vert\geq g_{\gamma}+\vert\mathcal{E}_{\mathcal{G},\mathrm{int}}\vert-\frac{3}{2}\vert\mathcal{V}_{\mathcal{G},\mathrm{int}}\vert+\vert\mathcal{V}_{\gamma}\vert.
    \end{align}
    Last but not least, we use the fact that $2\vert\mathcal{E}_{\mathcal{G},\mathrm{int}}\vert=4\vert\mathcal{V}_{\mathcal{G},\mathrm{int}}\vert-\vert\mathcal{V}_{\gamma}\vert$, which results into the inequality
    \begin{align}
        \vert\mathcal{V}_{\mathcal{G},\mathrm{int}}\vert\geq g_{\gamma}+\frac{1}{2}(\vert\mathcal{V}_{\mathcal{G},\mathrm{int}}\vert+\vert\mathcal{V}_{\gamma}\vert)\hspace*{1cm}\Leftrightarrow\hspace*{1cm}\vert\mathcal{V}_{\mathcal{G},\mathrm{int}}\vert\geq 2g_{\gamma}+\vert\mathcal{V}_{\gamma}\vert,
    \end{align}
    as claimed.
\end{proof}

\begin{Remark}
    \label{Rem:HandleBodies}
    A similar result has recently been proven in \cite{HandleBodiesCT} using a completely different approach by applying techniques from crystallization theory. More precisely, it was shown that $\frac{\vert\mathcal{V}_{\mathcal{G},\mathrm{int}}\vert}{2}-1\geq 3g_{\gamma}$ for every ``\textit{crystallization}'' $\mathcal{G}$ (see Appendix \ref{SecCryTheo}) of a three-dimensional manifold with connected boundary given by a genus $g_{\gamma}$-surface, which in our language is an open $(3+1)$-coloured core graph representing a manifold, for which also the boundary graph $\gamma$ is a (closed) core graph. Using the fact that every core graph $\gamma$ representing a genus $g_{\gamma}$-surface has precisely $4g_{\gamma}+2$ vertices, we see that the above statement is equivalent to $\vert\mathcal{V}_{\mathcal{G},\mathrm{int}}\vert\geq 2g_{\gamma}+\vert\mathcal{V}_{\gamma}\vert$.
\end{Remark}

In the case of a spherical boundary, the lower bound theorem tells us that the Gurau degree is always non-negative. Using this, we are finally in the position to prove the following result.

\begin{Proposition}[Smallest Matching Graphs and Degree]
    Let $\gamma\in\overline{\mathfrak{G}}_{2}$ be some closed $(2+1)$-coloured graph representing a $2$-sphere. If $\mathcal{G}\in\mathfrak{G}_{3}$ represents a manifold or pseudomanifold without boundary singularities and with $\partial\mathcal{G}=\gamma$, then $\mathcal{G}$ roots back to the core equivalence class defined by the smallest matching graph if and only if $\omega(\mathcal{G})=0$. In other words, the family of graphs rooting back to the smallest matching graph are exactly the graphs matching the given boundary with minimal degree.
\end{Proposition}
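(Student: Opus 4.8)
The strategy is to compare $\omega(\mathcal{G})$ with $\omega(\mathcal{G}_{\mathrm{SMG}})$, using that the Gurau degree is constant along the rooting procedure (Lemma~\ref{DipoleDegree}, since rooting only contracts internal proper $1$-dipoles) and that $\omega(\mathcal{G}_{\mathrm{SMG}})=0$ for every spherical $\gamma$. The latter is a direct count: $\mathcal{G}_{\mathrm{SMG}}$ has $|\mathcal{V}_{\mathcal{G}_{\mathrm{SMG}},\mathrm{int}}|=|\mathcal{V}_{\gamma}|$ internal vertices and no internal edge of colour $0$, so all its internal faces have colours in $\{1,2,3\}$ and constitute precisely the $\hat{0}$-bubble $\gamma$, whence $|\mathcal{F}_{\mathcal{G}_{\mathrm{SMG}},\mathrm{int}}|=|\mathcal{F}_{\gamma}|$; inserting this into the three-dimensional degree formula \eqref{Degree3D} and using $|\mathcal{E}_{\gamma}|=\tfrac{3}{2}|\mathcal{V}_{\gamma}|$ together with $\chi(\Delta_{\gamma})=2$, i.e.\ $|\mathcal{F}_{\gamma}|=2+\tfrac{1}{2}|\mathcal{V}_{\gamma}|$, gives $\omega(\mathcal{G}_{\mathrm{SMG}})=\tfrac{3}{2}|\mathcal{V}_{\gamma}|+2-|\mathcal{F}_{\gamma}|-|\mathcal{V}_{\gamma}|=0$.

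For the direction ``roots back to $[\mathcal{G}_{\mathrm{SMG}}]\Rightarrow\omega(\mathcal{G})=0$'': by Remark~\ref{SMG} the class $[\mathcal{G}_{\mathrm{SMG}}]$ has the single element $\mathcal{G}_{\mathrm{SMG}}$, so the rooting of $\mathcal{G}$ produces $\mathcal{G}_{\mathrm{SMG}}$ after a finite sequence of internal proper $1$-dipole contractions, and Lemma~\ref{DipoleDegree} then yields $\omega(\mathcal{G})=\omega(\mathcal{G}_{\mathrm{SMG}})=0$.

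For the converse, assume $\omega(\mathcal{G})=0$. Since rooting preserves the degree, the boundary, the order and the topology, I may replace $\mathcal{G}$ by the core graph it roots to and thus assume $\mathcal{G}$ is itself a core graph; the claim then becomes $\mathcal{G}=\mathcal{G}_{\mathrm{SMG}}$, equivalently $|\mathcal{V}_{\mathcal{G},\mathrm{int}}|=|\mathcal{V}_{\gamma}|$, i.e.\ $\mathcal{G}$ has no internal colour-$0$ edge. As $\mathcal{G}$ represents a manifold or a pseudomanifold without boundary singularities, all its $3$-bubbles are simple and all its open $3$-bubbles are disks, so the intermediate identity from the proof of Theorem~\ref{LowerBoundDegre},
\[
\omega(\mathcal{G})=2g_{\gamma}+\sum_{i=1}^{3}|\mathcal{F}_{\mathcal{G},\mathrm{int},i0}|+2\bigg(\sum_{\substack{\text{internal }3\text{-bubbles}\,\mathcal{B}\\\text{involving colour }0}}g_{\mathcal{B}}-\mathcal{B}^{[3]}_{\mathrm{int},0}\bigg)+2\sum_{\text{open }3\text{-bubbles}\,\mathcal{B}}g_{\mathcal{B}},
\]
applies. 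With $g_{\gamma}=0$ and the bracket non-negative for core graphs (all internal $3$-bubbles through colour $0$ being non-spherical, since the boundary is non-empty), every term on the right-hand side vanishes; in particular $|\mathcal{F}_{\mathcal{G},\mathrm{int},i0}|=0$ for $i=1,2,3$.

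Now I combine this with the bubble expansion of the degree. A closed $3$-bubble of colours $\{0,i,j\}$ would contain a cyclic bicoloured face of colours $0i$, which is an internal face of $\mathcal{G}$ of colour $0i$ --- excluded --- so $\mathcal{B}^{[3]}_{\mathrm{int},0}=0$ and every closed $3$-bubble of $\mathcal{G}$ is a $\hat{0}$-bubble. Feeding $\omega(\mathcal{G})=0$, $g_{\gamma}=0$, $\mathcal{B}^{[3]}_{\mathrm{int},0}=0$ and genus $0$ for the open $3$-bubbles into \eqref{DegreeInternalEdges} gives $|\mathcal{E}_{\mathcal{G},\mathrm{int},0}|=\mathcal{B}^{[3]}_{\mathrm{int}}-1-\sum_{\hat{0}\text{-bubbles}}g_{\mathcal{B}}$. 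By the core property at colour $0$ there are two cases: either there is exactly one $\hat{0}$-bubble and it represents $S^{2}$, in which case the right-hand side equals $1-1-0=0$; or all $m\geq1$ of the $\hat{0}$-bubbles are non-spherical, each of genus $\geq1$, in which case the right-hand side is $\leq m-1-m=-1<0$, contradicting $|\mathcal{E}_{\mathcal{G},\mathrm{int},0}|\geq0$. Hence $|\mathcal{E}_{\mathcal{G},\mathrm{int},0}|=0$, so $\mathcal{G}$ has order $|\mathcal{V}_{\gamma}|/2$ and, by Remark~\ref{SMG}, $\mathcal{G}=\mathcal{G}_{\mathrm{SMG}}$, which trivially roots back to its own class. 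The main obstacle is this converse direction and, within it, the exclusion of closed $3$-bubbles through colour $0$: the degree identity \eqref{DegreeInternalEdges} by itself leaves room for spherical such bubbles, and it is precisely the vanishing of the $|\mathcal{F}_{\mathcal{G},\mathrm{int},i0}|$ obtained from the finer decomposition that forces the order down to its minimum.
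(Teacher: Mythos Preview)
Your proof is correct, but it takes a different and somewhat more circuitous route than the paper's. A minor slip: you write ``rooting preserves the degree, the boundary, the order and the topology'' --- rooting does \emph{not} preserve the order (it strictly decreases the number of internal vertices). This is harmless, since your argument only uses the invariance of degree, boundary and topology to reduce to a core graph.

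The paper's argument is shorter because it separates the two cases cleanly. For pseudomanifolds it observes that the quantity $\tfrac{1}{2}|\mathcal{V}_{\mathcal{G},\mathrm{int}}|-\mathcal{B}^{[3]}$ is invariant under \emph{all} internal $1$-dipole contractions (proper or not); contracting maximally leaves exactly $1+|\mathcal{F}_{\gamma}|$ three-bubbles, and substituting into \eqref{eqdegree3d} gives $\omega(\mathcal{G})\geq\sum_{\mathcal{B}}g_{\mathcal{B}}>0$, so no pseudomanifold ever has degree zero. For manifolds it then uses that a core manifold graph has precisely $\mathcal{B}^{[3]}=1+|\mathcal{F}_{\gamma}|$, which together with $|\mathcal{F}_{\gamma}|=\tfrac{1}{2}|\mathcal{V}_{\gamma}|+2$ collapses \eqref{eqdegree3d} to the one-line identity $\omega(\mathcal{G})=\tfrac{1}{2}(|\mathcal{V}_{\mathcal{G},\mathrm{int}}|-|\mathcal{V}_{\gamma}|)=|\mathcal{E}_{\mathcal{G},\mathrm{int},0}|$, from which the equivalence is immediate. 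Your approach instead feeds $\omega(\mathcal{G})=0$ into the finer decomposition from the proof of Theorem~\ref{LowerBoundDegre}, extracts the vanishing of all $|\mathcal{F}_{\mathcal{G},\mathrm{int},i0}|$, infers that there are no closed $3$-bubbles through colour $0$, and only then reaches the same conclusion via \eqref{DegreeInternalEdges} and a case split on the $\hat{0}$-bubbles. What your route buys is a uniform treatment of manifolds and pseudomanifolds without boundary singularities in a single pass; what the paper's route buys is that the manifold case reduces to an explicit closed formula for $\omega$ with no case analysis at all.
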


    \begin{proof}First of all, let us observe the following: A $1$-dipole move, with the property that at least one of the two separated $3$-bubbles is closed, reduces the number of internal vertices by two and the number of $3$-bubbles by one. Hence, the quantity
    \begin{align}
        \frac{\vert\mathcal{V}_{\mathcal{G},\mathrm{int}}\vert}{2}-\mathcal{B}^{[3]}
    \end{align}
    is conserved under arbitrary internal $1$-dipole moves. Now, let us apply as many internal $1$-dipole moves as possible. In the end, we will end up with a graph $\mathcal{G}_{c}$ having precisely $1+\vert\mathcal{F}_{\gamma}\vert$ $3$-bubbles, i.e.~one internal one of colour $123$ and for each face on the boundary a corresponding $3$-bubble representing a disk whose boundary is given by that face. The number of internal vertices has to satisfy $\vert\mathcal{V}_{\mathcal{G}_{c},\mathrm{int}}\vert\geq \vert\mathcal{V}_{\gamma}\vert$. Of course, the topology of $\mathcal{G}_{c}$ is in general different from $\mathcal{G}$, however, this is not so important at this point. What is important is that we have that
    \begin{align}
        \frac{\vert\mathcal{V}_{\mathcal{G},\mathrm{int}}\vert}{2}-\mathcal{B}^{[3]}=\frac{\vert\mathcal{V}_{\mathcal{G}_{\mathrm{c}},\mathrm{int}}\vert}{2}-(1+\vert\mathcal{F}_{\gamma}\vert).
    \end{align}
    and hence, we have that (c.f.~Corollary \ref{Cor:ManifoldVertices} for the trivial case $g_{\gamma}=0$)
    \begin{align}
        \frac{\vert\mathcal{V}_{\mathcal{G},\mathrm{int}}\vert}{2}-\mathcal{B}^{[3]}=\frac{\vert\mathcal{V}_{\mathcal{G}_{\mathrm{c}},\mathrm{int}}\vert}{2}-(1+\vert\mathcal{F}_{\gamma}\vert)\geq \frac{\vert\mathcal{V}_{\gamma}\vert}{2}-(1+\vert\mathcal{F}_{\gamma}\vert).
    \end{align}
    Applying this to the definition of the degree (Equation \ref{eqdegree3d}), we get the following general inequality
    \begin{align}
        \omega(\mathcal{G})\geq\underbrace{\frac{1}{4}\vert\mathcal{V}_{\gamma}\vert+1-\frac{1}{2}\vert\mathcal{F}_{\gamma}\vert}_{=0}+\sum_{\text{$3$-bubbles}\,\mathcal{B}}g_{\mathcal{B}},
    \end{align}
    where we have used the fact that $\frac{1}{2}\vert\mathcal{F}_{\gamma}\vert=\frac{1}{4}\vert\mathcal{V}_{\gamma}\vert+1$ as derived previously (see Equation \eqref{EulerRelationGenus}). Using this, we see that the degree of pseudomanifolds is strictly positive, since $\sum_{\text{$3$-bubbles}\,\mathcal{B}}g_{\mathcal{B}}>0$, and hence, they never saturate the bound $\omega(\mathcal{G})\geq 2g_{\gamma}=0$. Let us now turn our attention to manifolds. By Lemma \ref{DipoleDegree} it is enough to look at core graphs. Now, since $\mathcal{G}$ is a core graph representing a manifold, we have that $\mathcal{B}^{[3]}=1+\vert\mathcal{F}_{\gamma}\vert$. Therefore, according to Formula \eqref{eqdegree3d}, its degree is given by
    \begin{align}
        \omega(\mathcal{G})=\frac{1}{2}\vert\mathcal{V}_{\mathcal{G},\mathrm{int}}\vert-\frac{1}{4}\vert\mathcal{V}_{\gamma}\vert+1-\frac{1}{2}\vert\mathcal{F}_{\gamma}\vert.
    \end{align}
    Using again the fact that $\frac{1}{2}\vert\mathcal{F}_{\gamma}\vert=\frac{1}{4}\vert\mathcal{V}_{\gamma}\vert+1$, this can be written as
    \begin{align}
        \omega(\mathcal{G})=\frac{1}{2}(\vert\mathcal{V}_{\mathcal{G},\mathrm{int}}\vert-\vert\mathcal{V}_{\gamma}\vert)=\vert\mathcal{E}_{\mathcal{G},\mathrm{int},0}\vert.
    \end{align}
    With this equality, it is clear that $\omega(\mathcal{G})=0$ if and only if $\vert\mathcal{V}_{\mathcal{G},\mathrm{int}}\vert=\vert\mathcal{V}_{\gamma}\vert$ and the only possible graph $\mathcal{G}$ with $\partial\mathcal{G}=\gamma$ satisfying this condition is the smallest matching graph. According to Lemma \ref{LowerBoundDegre}, the degree of these graphs is minimal.
\end{proof}

\begin{Remark}
    If $\gamma$ is some boundary graph representing a general genus $g_{\gamma}$-surface, then the Gurau degree of the smallest matching graph is also minimal and hence given by $\omega(\mathcal{G}_{\mathrm{SMG}})=2g_{\gamma}$. To see this, note that the smallest matching graph has by definition one internal $3$-bubble $\mathcal{B}$, which is equivalent to the graph $\gamma$ and hence represents a genus $g_{\gamma}$-surface. Therefore, by Equation \eqref{eqdegree3d}, its degree is given by
    \begin{align}
        \omega(\mathcal{G}_{\mathrm{SMG}})=\frac{1}{2}\vert\mathcal{V}_{\mathcal{G}_{\mathrm{SMG}},\mathrm{int}}\vert-\frac{1}{4}\vert\mathcal{V}_{\gamma}\vert+1-\frac{1}{2}\vert\mathcal{F}_{\gamma}\vert+\underbrace{\omega(\mathcal{B})}_{=g_{\gamma}}.
    \end{align}
    Using again Equation \eqref{EulerRelationGenus}, we hence get
    \begin{align}
        \omega(\mathcal{G}_{\mathrm{SMG}})=\frac{1}{2}(\underbrace{\vert\mathcal{V}_{\mathcal{G}_{\mathrm{SMG}},\mathrm{int}}\vert-\vert\mathcal{V}_{\gamma}\vert}_{=\vert\mathcal{E}_{\mathcal{G}_{\mathrm{SMG}},\mathrm{int},0}\vert=0})+2g_{\gamma}=2g_{\gamma}.
    \end{align}
    for the degree of the smallest matching graph. However, in this case, it turns out that the smallest matching graph is not the only graph with the minimal degree. As an example, consider the simplest possible boundary graph representing a $2$-torus, as drawn in figure \ref{TorusBoundaryGraph}. In this case, a straightforward calculation shows that not only the smallest matching graph ($\mathcal{G}_{0}$ in figure \ref{SolidTorusExamples}) has degree $2$, but also the core graphs $\mathcal{G}_{1}$ and $\mathcal{G}_{1}^{\prime}$ (figure \ref{SolidTorusExamples}), which represent the solid torus, have degree $2$.
\end{Remark}

To sum up, we see that the family of graphs with the minimal possible Gurau degree for some given spherical boundary graphs, includes exactly those graphs rooting back to the core equivalence class induced by the smallest matching graph. In that sense, they can be viewed as generalizations of melonic diagrams used in the discussion of the large $N$ limit of the free energy.

\subsection{Leading Order Contribution}
Let us now show that the core equivalence class defined by the smallest matching forms the leading order contribution to the Boulatov transition amplitude with respect to some spherical boundary graph when we consider only manifolds. Before stating the main result, we need the following preliminary technical lemma.

\begin{Lemma}
    \label{LemmaInt}
    Consider an arbitrary closed $(2+1)$-coloured graph $\gamma\in\overline{\mathfrak{G}}_{2}$ representing the genus $g_{\gamma}$-surface. Let $\mathcal{G}$ be a connected open $(3+1)$-coloured core graph with boundary $\partial\mathcal{G}=\gamma$, which is dual to a manifold. Then
    \begin{align*}
        g_{\gamma}=\vert\mathcal{E}_{\mathcal{G},\mathrm{int},0}\vert-\sum_{i=1}^{3}\vert\mathcal{F}_{\mathcal{G},\mathrm{int},i0}\vert
    \end{align*}
    where $\mathcal{E}_{\mathcal{G},\mathrm{int},0}$ denotes the set of internal edges of colour $0$ and where $\mathcal{F}_{\mathcal{G},\mathrm{int},i0}$ denotes the set of internal (cyclic) faces of $\mathcal{G}$ of colour $0i$ for $i\in\{1,2,3\}$.
\end{Lemma}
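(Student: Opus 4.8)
The plan is to read off the identity as a specialization of the internal-edge counting relation \eqref{InternalEdgesandFaces}, which was derived in the course of proving Theorem \ref{LowerBoundDegre} for any connected open $(3+1)$-coloured graph all of whose $3$-bubbles are simple. Since $\mathcal{G}$ is dual to a manifold, Proposition \ref{ManifoldsGraphs} tells us that each of its $3$-bubbles represents a $2$-sphere or a $2$-disk; in particular every $3$-bubble has genus $0$ and is simple, so \eqref{InternalEdgesandFaces} applies and all the genus sums occurring in it vanish. Hence \eqref{InternalEdgesandFaces} reduces to $\vert\mathcal{E}_{\mathcal{G},\mathrm{int},0}\vert=\sum_{i=1}^{3}\vert\mathcal{F}_{\mathcal{G},\mathrm{int},i0}\vert+\mathcal{B}_{\mathrm{int}}^{[3]}+(g_{\gamma}-1)-2\mathcal{B}_{\mathrm{int},0}^{[3]}$, and it remains only to determine the two integers $\mathcal{B}_{\mathrm{int}}^{[3]}$ (number of closed $3$-bubbles) and $\mathcal{B}_{\mathrm{int},0}^{[3]}$ (number of closed $3$-bubbles involving colour $0$).

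This is where the hypothesis that $\mathcal{G}$ is a \emph{core} graph enters. The $3$-bubbles without colour $0$ are always closed, because all external legs carry colour $0$, and for a manifold they represent $2$-spheres and hence are spherical; therefore the core-graph alternative (2) is impossible for colour $0$, so alternative (1) holds and there is exactly one such $123$-bubble. For each colour $i\in\{1,2,3\}$, set $\{j,k\}=\{1,2,3\}\setminus\{i\}$; since $\gamma=\partial\mathcal{G}$ is a nonempty closed $(2+1)$-coloured graph it contains a $2$-bubble of colour $jk$, so the Remark preceding the rooting algorithm produces an open $3$-bubble of colour $0jk$ in $\mathcal{G}$, which for a manifold is a $2$-disk and hence non-spherical. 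Thus the core-graph alternative (1) fails for colour $i$, so alternative (2) holds and all $0jk$-bubbles are non-spherical; since a manifold's closed $3$-bubbles are $2$-spheres, none of the $0jk$-bubbles can be closed. Consequently the only closed $3$-bubble of $\mathcal{G}$ is the $123$-bubble, i.e.\ $\mathcal{B}_{\mathrm{int}}^{[3]}=1$ and $\mathcal{B}_{\mathrm{int},0}^{[3]}=0$.

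Plugging these values into the reduced form of \eqref{InternalEdgesandFaces} gives $\vert\mathcal{E}_{\mathcal{G},\mathrm{int},0}\vert=\sum_{i=1}^{3}\vert\mathcal{F}_{\mathcal{G},\mathrm{int},i0}\vert+g_{\gamma}$, which is exactly the claimed identity after rearranging. (An equivalent route is to equate the two expressions $\omega(\mathcal{G})=g_{\gamma}+\vert\mathcal{E}_{\mathcal{G},\mathrm{int},0}\vert$ and $\omega(\mathcal{G})=2g_{\gamma}+\sum_{i=1}^{3}\vert\mathcal{F}_{\mathcal{G},\mathrm{int},i0}\vert$, both of which follow for a core graph dual to a manifold from \eqref{eqdegree3d}, the Euler relation \eqref{EulerRelationGenus}, and the bubble count just established.) The only genuinely delicate step is the bubble bookkeeping in the second paragraph---matching the core-graph dichotomy against the manifold condition colour by colour---while everything else is direct substitution into relations already available in the text.
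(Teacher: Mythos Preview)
Your proof is correct and follows exactly the same approach as the paper: specialize Equation \eqref{InternalEdgesandFaces} using that a core graph dual to a manifold has a single closed $3$-bubble (the spherical $123$-bubble) and no closed $3$-bubbles involving colour $0$. You have in fact spelled out the bubble bookkeeping more carefully than the paper, which simply asserts the counts.
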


\begin{proof}
    This is a special case of Equation \eqref{InternalEdgesandFaces}: Since $\mathcal{G}$ is a core graph representing a manifold, there is only one internal $3$-bubble, which is spherical and has colour $123$, and all the other $3$-bubbles are open and represent $2$-balls.
\end{proof}

In particular, if $\gamma$ is a spherical graph ($g_{\gamma}=0$), we obtain that the number of internal edges of colour $0$ is the same as the number of internal faces involving colour $0$. Using this observation, let us prove that the core equivalence class defined by the smallest matching graph is the dominant contribution to the transition amplitude when restricted to manifolds:

\begin{Theorem}[Leading Order and Bound for Core Graphs]
    Consider an arbitrary closed $(2+1)$-coloured graph $\gamma\in\overline{\mathfrak{G}}_{2}$ representing a $2$-sphere and let $\mathcal{G}$ be a connected open $(3+1)$-coloured graph with boundary $\partial\mathcal{G}=\gamma$, which is dual to a manifold and which is itself a core graph. Then
    \begin{align*}
        \vert\mathcal{A}_{\mathcal{G}}^{\lambda}[\{g_{e}\}_{e\in\mathcal{E}_{\gamma}}]\vert\leq (\lambda\overline{\lambda})^{\frac{\vert\mathcal{V}_{\mathcal{G},\mathrm{int}}\vert}{2}}\delta^{N}(\mathds{1})^{1-\frac{\vert\mathcal{V}_{\gamma}\vert}{2}}\bigg\{\frac{1}{\delta^{N}(\mathds{1})}\prod_{f\in\mathcal{F}_{\gamma}}\delta^{N}\bigg (\overrightarrow{\prod_{e\in f}}g_{e}^{\varepsilon(e,f)}\bigg )\bigg\}
    \end{align*}
    i.e.~its degree of divergence is smaller or equal to $1-\vert\mathcal{V}_{\gamma}\vert/2$ (recall that there is one redundant delta function encoded in the product over boundary faces). Furthermore, the only core graph saturating this bound is the smallest matching graph, i.e.~its amplitude is exactly given by
    \begin{align*}
        \mathcal{A}_{\mathcal{G}_{\mathrm{SMG}}}^{\lambda}[\{g_{e}\}_{e\in\mathcal{E}_{\gamma}}]= (\lambda\overline{\lambda})^{\frac{\vert\mathcal{V}_{\gamma}\vert}{2}}\delta^{N}(\mathds{1})^{1-\frac{\vert\mathcal{V}_{\gamma}\vert}{2}}\bigg\{\frac{1}{\delta^{N}(\mathds{1})}\prod_{f\in\mathcal{F}_{\gamma}}\delta^{N}\bigg (\overrightarrow{\prod_{e\in f}}g_{e}^{\varepsilon(e,f)}\bigg )\bigg\}.
    \end{align*}
\end{Theorem}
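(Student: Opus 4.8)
The plan is to combine the degree computation of Lemma~\ref{LemmaInt} with the dipole-contraction behaviour of amplitudes (Lemma~\ref{AmplChangeInt}) and the structure of the Ponzano--Regge functional. First, recall from the proof of Theorem~\ref{GenSphereFact} that for any $\mathcal{G}\in\mathfrak{G}_{3}$ with $\partial\mathcal{G}=\gamma$ representing a manifold, the boundary-dependent part of the amplitude collapses to the flatness constraint $\prod_{f\in\mathcal{F}_{\gamma}}\delta^{N}(\overrightarrow{\prod}_{e\in f}g_{e}^{\varepsilon(e,f)})$, one of whose factors is redundant on the $2$-sphere. So the only thing left to control is the overall power of $\delta^{N}(\mathds{1})$ coming from the bulk integrations together with the prefactor $(\lambda\overline\lambda/\delta^{N}(\mathds{1}))^{\vert\mathcal{V}_{\mathcal{G},\mathrm{int}}\vert/2}$. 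By Lemma~\ref{AmplChangeInt} and Proposition~\ref{PropCoreGraph} the degree of divergence is invariant under internal proper $1$-dipole moves (the $(\lambda\overline\lambda)$-factor aside), so it suffices to prove the bound for core graphs $\mathcal{G}$, which we assume henceforth.

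Next I would count the powers of $\delta^{N}(\mathds{1})$ in $\mathcal{Z}_{\mathrm{PR}}^{\mathcal{G}}$ by the usual spin-foam power counting: each internal edge integration contributes a factor $\delta^{N}(\mathds{1})^{-1}$ (one does $\vert\mathcal{E}_{\mathcal{G},\mathrm{int}}\vert$ group integrations), each closed face delta contributes $\delta^{N}(\mathds{1})$, and each open face delta is fixed by the external data and the external-leg integrations. More precisely, after using the maximal-tree argument of Theorem~\ref{GenSphereFact} to carry out all but one of the external-leg integrations, the degree of divergence equals $\vert\mathcal{F}_{\mathcal{G},\mathrm{int}}\vert-\vert\mathcal{E}_{\mathcal{G},\mathrm{int}}\vert+\big(\text{correction from boundary/tree}\big)$; since $\mathcal{G}$ is a manifold core graph one can replace $\vert\mathcal{E}_{\mathcal{G},\mathrm{int}}\vert$ using $2\vert\mathcal{E}_{\mathcal{G},\mathrm{int}}\vert=4\vert\mathcal{V}_{\mathcal{G},\mathrm{int}}\vert-\vert\mathcal{V}_{\gamma}\vert$ and combine with the prefactor power $-\vert\mathcal{V}_{\mathcal{G},\mathrm{int}}\vert/2$. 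The crucial input is then Lemma~\ref{LemmaInt} with $g_{\gamma}=0$: the number of internal edges of colour $0$ equals the number of internal faces of colours $i0$, i.e.\ $\vert\mathcal{E}_{\mathcal{G},\mathrm{int},0}\vert=\sum_{i=1}^{3}\vert\mathcal{F}_{\mathcal{G},\mathrm{int},i0}\vert$. Feeding this identity and the relation $\vert\mathcal{F}_{\mathcal{G},\mathrm{int}}\vert$ expressed via $\vert\mathcal{V}_{\mathcal{G},\mathrm{int}}\vert$ and $\vert\mathcal{F}_{\gamma}\vert$ (analogous to the manipulation in the proof of Theorem~\ref{LowerBoundDegre}) into the power-counting expression, the $\vert\mathcal{V}_{\mathcal{G},\mathrm{int}}\vert$-dependence must cancel and one is left with the bound $1-\vert\mathcal{V}_{\gamma}\vert/2$, with equality exactly when every $\delta^{N}(\mathds{1})$-generating step in the power counting is actually present, which forces $\vert\mathcal{E}_{\mathcal{G},\mathrm{int},0}\vert=0=\sum_i\vert\mathcal{F}_{\mathcal{G},\mathrm{int},i0}\vert$.

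Finally, I would identify the saturating graph. From $\omega(\mathcal{G})=\vert\mathcal{E}_{\mathcal{G},\mathrm{int},0}\vert$ for manifold core graphs with spherical boundary (established in the Proposition on smallest matching graphs and degree), saturation of the bound is equivalent to $\omega(\mathcal{G})=0$, hence to $\vert\mathcal{V}_{\mathcal{G},\mathrm{int}}\vert=\vert\mathcal{V}_{\gamma}\vert$, and the unique open $(3+1)$-coloured graph with $\partial\mathcal{G}=\gamma$ and this minimal number of internal vertices is $\mathcal{G}_{\mathrm{SMG}}$. For that graph one computes the amplitude directly: there are no internal colour-$0$ edges, the single internal $123$-bubble is a $2$-sphere, the bulk integrations produce exactly $\vert\mathcal{F}_{\gamma}\vert-1$ closed-face deltas after accounting for the one redundancy, and together with the prefactor $(\lambda\overline\lambda/\delta^{N}(\mathds{1}))^{\vert\mathcal{V}_{\gamma}\vert/2}$ this gives precisely the stated closed form. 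The main obstacle I anticipate is the bookkeeping in the power-counting step: one must be careful that the maximal-tree reduction of external legs, the redundant boundary face, and the closed-face deltas are each counted once and not double-counted, and that the identity of Lemma~\ref{LemmaInt} is applied to exactly the right subset of faces; this is the point where an off-by-one in the exponent of $\delta^{N}(\mathds{1})$ could creep in, so I would verify it against the explicit amplitudes already computed for $\mathcal{G}_{1},\mathcal{G}_{2},\mathcal{G}_{3}^{1}$ in Section~\ref{SectionIV:Sphere}.
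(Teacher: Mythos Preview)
Your plan is essentially the same as the paper's proof: both use Theorem~\ref{GenSphereFact} to collapse the boundary-dependent part to the flatness constraint, both invoke Lemma~\ref{LemmaInt} with $g_{\gamma}=0$ as the crucial combinatorial identity $\vert\mathcal{E}_{\mathcal{G},\mathrm{int},0}\vert=\sum_{i}\vert\mathcal{F}_{\mathcal{G},\mathrm{int},i0}\vert$, and both identify saturation with $\vert\mathcal{E}_{\mathcal{G},\mathrm{int},0}\vert=0$, hence with the smallest matching graph. The one organizational difference is that the paper does not run an abstract power-counting in $\vert\mathcal{F}_{\mathcal{G},\mathrm{int}}\vert-\vert\mathcal{E}_{\mathcal{G},\mathrm{int}}\vert$ as you propose, but instead splits the internal faces explicitly into the faces of the unique spherical $123$-bubble (which, by the Bianchi-type reduction for a spherical $(2+1)$-graph, collapse to a single factor $\delta^{N}(\mathds{1})\cdot\delta^{N}(\overrightarrow{\prod}_{e\in f_{0}}k_{e})$) and the internal faces of colour $0i$; it then integrates one colour-$0$ variable and the $f_{0}$-delta freely and bounds the remaining $\vert\mathcal{E}_{\mathcal{G},\mathrm{int},0}\vert-1$ deltas by $\delta^{N}(\mathds{1})$. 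This concrete decomposition is precisely what resolves the ``off-by-one'' risk you flag: in your abstract count you would have to separately argue that the spherical $123$-bubble contributes exactly one net $\delta^{N}(\mathds{1})$ after Bianchi redundancies, which is what the paper's split makes manifest.
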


\begin{proof}
    To start with, let us write down the general expression of the amplitude of some open $(3+1)$-coloured core graph representing a manifold. For this, we use the same terminology as previously: group elements assigned to external legs are denoted by $\{h_{v}\}_{v\in\mathcal{V}_{\gamma}}$, bicoloured paths leading to the boundary edges by $\{H_{e}\}_{e\in\mathcal{E}_{\gamma}}$ and boundary edges by $\{g_{e}\}_{e\in\mathcal{E}_{\gamma}}$. We also denote the group elements assigned to all the internal edges of $\mathcal{G}$ by $\{k_{e}\}_{e\in\mathcal{E}_{\mathcal{G},\mathrm{int}}}$. With this notation, the amplitude can be written as
    \begin{equation}
        \begin{aligned}
            \mathcal{A}_{\mathcal{G}}^{\lambda}[\{g_{e}\}_{e\in\mathcal{E}_{\gamma}}]=\bigg(\frac{\lambda\overline{\lambda}}{\delta^{N}(\mathds{1})}\bigg)^{\frac{\vert\mathcal{V}_{\mathcal{G},\mathrm{int}}\vert}{2}}\int_{\mathrm{SU}(2)^{\vert\mathcal{V}_{\gamma}\vert+\vert\mathcal{E}_{\mathcal{G},\mathrm{int}}\vert}}\,&\bigg(\prod_{v\in\mathcal{V}_{\gamma}}\mathrm{d}h_{v}\bigg )\bigg(\prod_{e\in\mathcal{E}_{\mathcal{G},\mathrm{int}}}\mathrm{d}k_{e}\bigg )\times\\&\times\prod_{e\in\mathcal{E}_{\gamma}}\delta^{N}(g_{e}h_{t(e)}^{-1}H_{e}h_{s(e)}^{-1})\prod_{f\in\mathcal{F}_{\mathcal{G},\mathrm{int}}}\delta^{N}\bigg(\overrightarrow{\prod_{e\in f}}k_{e}^{\varepsilon(e,f)}\bigg).
        \end{aligned}
    \end{equation}
    As explained in Theorem \ref{GenSphereFact}, we can replace the first product of delta functions, which contains all the boundary group elements, by the flatness of the boundary up to a redundancy
    \begin{align}
        \prod_{e\in\mathcal{E}_{\gamma}}\delta^{N}(g_{e}h_{t(e)}^{-1}H_{e}h_{s(e)}^{-1})\to\frac{1}{\delta^{N}(\mathds{1})}\prod_{f\in\mathcal{F}_{\gamma}}\delta^{N}\bigg (\overrightarrow{\prod_{e\in f}}g_{e}^{\varepsilon(e,f)}\bigg ).
    \end{align}
    We are left with the product over internal faces. To start with, let us split the product as follows:
    \begin{align}
        \prod_{f\in\mathcal{F}_{\mathcal{G},\mathrm{int}}}\delta^{N}\bigg(\overrightarrow{\prod_{e\in f}}k_{e}^{\varepsilon(e,f)}\bigg)=\bigg\{\prod_{f\in\bigcup_{i,j\neq 0}\mathcal{F}_{\mathcal{G},\mathrm{int},ij}}\delta^{N}\bigg(\overrightarrow{\prod_{e\in f}}k_{e}^{\varepsilon(e,f)}\bigg)\bigg\}\bigg\{\prod_{f\in \bigcup_{i\neq 0}\mathcal{F}_{\mathcal{G},\mathrm{int},i0}}\delta^{N}\bigg(\overrightarrow{\prod_{e\in f}}k_{e}^{\varepsilon(e,f)}\bigg)\bigg\},
    \end{align}
    where $\mathcal{F}_{\mathcal{G},\mathrm{int},ij}$ denotes the set of internal (cyclic) faces of colour $ij$, as before. In other words, the first product only contains faces of colour $ij$ with $i\neq 0\neq j$ and the second product contains all the internal faces involving colour $0$. Now, since $\mathcal{G}$ is a core graph representing a manifold, we know that there is only one internal $3$-bubble, which has colour $123$ and which is spherical. Let us denote this bubble by $\mathcal{B}$. Hence, the first product of delta functions exactly contains all the delta functions associated to the faces of a spherical $(2+1)$-coloured graph, namely $\mathcal{B}$. Hence, we know that, after integrating over some internal edges of colour $123$, this product can be reduced to $\delta^{N}(\mathds{1})\delta^{N}(\overrightarrow{\prod_{e\in f_{0}}}k_{e}^{\varepsilon(e,f)})$ for some closed path $f_{0}$, in relation to the discrete Bianchi identity. Of course, when performing these integrations, the delta functions contained in the second product also change. Their number however stays the same.

    Now, if $\mathcal{G}$ is the smallest matching graph, the second product is empty. Therefore, we can trivially integrate over the group element associated to any internal edge contained in $f_0$. In that case, the amplitude is simply given by
    \begin{align}
        \mathcal{A}_{\mathcal{G}}^{\lambda}[\{g_{e}\}_{e\in\mathcal{E}_{\gamma}}]=\bigg(\frac{\lambda\overline{\lambda}}{\delta^{N}(\mathds{1})}\bigg)^{\frac{\vert\mathcal{V}_{\gamma}\vert}{2}}\delta^{N}(\mathds{1})\bigg\{\frac{1}{\delta^{N}(\mathds{1})}\prod_{f\in\mathcal{F}_{\gamma}}\delta^{N}\bigg (\overrightarrow{\prod_{e\in f}}g_{e}^{\varepsilon(e,f)}\bigg )\bigg\}
    \end{align}
    as claimed.

    If $\mathcal{G}$ is not the smallest matching graph, then the second product is not empty. Indeed, recall that the number of internal edges of colour $0$ is the same as the number of internal faces involving colour $0$, i.e.~the faces contained in the second product, according to Lemma \ref{LemmaInt}. Since for every graph with $\vert\mathcal{V}_{\mathcal{G},\mathrm{int}}\vert>\vert\mathcal{V}_{\gamma}\vert$ there is at least one internal edge of colour $0$, we conclude that the second product is non-empty. The total number of delta functions contained in this second product is hence
    \begin{align}
        \sum_{i=1}^{3}\vert\mathcal{F}_{\mathcal{G},\mathrm{int},i0}\vert=\vert\mathcal{E}_{\mathcal{G},\mathrm{int},0}\vert.
    \end{align}
    Since there exists at least one internal group element of colour $0$, we can freely integrate at least over one of them. Similarly, we can also freely integrate over the delta function corresponding to the face $f_{0}$, which is the only remaining delta function of the first product. Bounding all the remaining $\vert\mathcal{E}_{\mathcal{G},\mathrm{int},0}\vert-1$ delta functions simply by $\delta^{N}(\mathds{1})$, we are hence left with a maximal possible degree of divergence of
    \begin{align}
        -\frac{\vert\mathcal{V}_{\mathcal{G},\mathrm{int}}\vert}{2}+1+(\vert\mathcal{E}_{\mathcal{G},\mathrm{int},0}\vert-1)=-\frac{\vert\mathcal{V}_{\mathcal{G},\mathrm{int}}\vert}{2}+\vert\mathcal{E}_{\mathcal{G},\mathrm{int},0}\vert=-\frac{\vert\mathcal{V}_{\gamma}\vert}{2},
    \end{align}
    where we used the fact that $\vert\mathcal{E}_{\mathcal{G},\mathrm{int},0}\vert=\frac{1}{2}(\vert\mathcal{V}_{\mathcal{G},\mathrm{int}}\vert-\vert\mathcal{V}_{\gamma}\vert)$ in the last step.
\end{proof}

An immediate consequence of the previous theorem is the following corollary.

\begin{Corollary}
    Consider an arbitrary closed $(2+1)$-coloured graph $\gamma\in\overline{\mathfrak{G}}_{2}$ representing a $2$-sphere. Then the leading order contribution to the transition amplitude restricted to manifolds $\langle\mathcal{Z}_{\mathrm{cBM}}\vert\Psi\rangle_{\text{manifolds}}$ for some spin network $\Psi$ defined on $\gamma$ is the $3$-ball represented by the core equivalence class defined by the smallest matching graph.
\end{Corollary}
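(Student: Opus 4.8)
The plan is to deduce the corollary from the topological expansion of the transition amplitude together with the degree-of-divergence bound established in the preceding theorem. First I would recall from Proposition \ref{PropCoreGraph} that every open $(3+1)$-coloured graph $\mathcal{G}$ with $\partial\mathcal{G}=\gamma$ roots to a core graph $\mathcal{G}_c$ with $\mathcal{A}^{\lambda}_{\mathcal{G}}=(\lambda\overline{\lambda})^{\mathcal{B}^{[3]}-\mathcal{R}^{[3]}}\mathcal{A}^{\lambda}_{\mathcal{G}_c}$, so that the transition amplitude restricted to manifolds can be organised as
\begin{align*}
  \langle\mathcal{Z}_{\mathrm{cBM}}\vert\Psi\rangle\big\vert_{\mathrm{manifolds}}=\sum_{p\geq\vert\mathcal{V}_{\gamma}\vert/2}\ \sum_{\substack{[\mathcal{G}]\in\mathfrak{G}^{\mathrm{core}}_{p,\gamma}\\ \text{manifold}}}C^{[\mathcal{G}]}(\lambda,\overline{\lambda})\,\mathcal{A}^{\lambda}_{[\mathcal{G}]}[\Psi],
\end{align*}
where the combinatorial coefficients $C^{[\mathcal{G}]}(\lambda,\overline{\lambda})$ carry no dependence on the cutoff $N$. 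Consequently, the $N$-scaling of each term is entirely that of $\mathcal{A}^{\lambda}_{[\mathcal{G}]}$, and ``leading order'' is to be understood in the sense of the degree of divergence in $\delta^{N}(\mathds{1})$, exactly as for melonic dominance in the large $N$ expansion of the free energy.

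Next I would invoke the preceding theorem, which gives, for every core graph $\mathcal{G}$ dual to a manifold with $\partial\mathcal{G}=\gamma$,
\begin{align*}
  \vert\mathcal{A}^{\lambda}_{\mathcal{G}}[\{g_{e}\}_{e\in\mathcal{E}_{\gamma}}]\vert\leq(\lambda\overline{\lambda})^{\frac{\vert\mathcal{V}_{\mathcal{G},\mathrm{int}}\vert}{2}}\,\delta^{N}(\mathds{1})^{1-\frac{\vert\mathcal{V}_{\gamma}\vert}{2}}\bigg\{\frac{1}{\delta^{N}(\mathds{1})}\prod_{f\in\mathcal{F}_{\gamma}}\delta^{N}\bigg(\overrightarrow{\prod_{e\in f}}g_{e}^{\varepsilon(e,f)}\bigg)\bigg\},
\end{align*}
with equality only for the smallest matching graph $\mathcal{G}_{\mathrm{SMG}}$, whose amplitude is precisely the right-hand side. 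By Remark \ref{SMG} the class $[\mathcal{G}_{\mathrm{SMG}}]$ is the unique representative at the lowest order $p=\vert\mathcal{V}_{\gamma}\vert/2$, and by Corollary \ref{Cor:ManifoldVertices} every other core graph dual to a manifold satisfies $\vert\mathcal{V}_{\mathcal{G},\mathrm{int}}\vert>\vert\mathcal{V}_{\gamma}\vert$, hence possesses at least one internal edge of colour $0$ and, by the theorem, has degree of divergence strictly smaller than $1-\vert\mathcal{V}_{\gamma}\vert/2$. Therefore $C^{[\mathcal{G}_{\mathrm{SMG}}]}\mathcal{A}^{\lambda}_{[\mathcal{G}_{\mathrm{SMG}}]}[\Psi]$ dominates the sum both in the coupling and in the cutoff. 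Finally, by the proposition identifying the smallest matching graph of a spherical boundary graph with the closed $3$-ball, and by Theorem \ref{GenSphereFact}, which reduces the contraction of $\mathcal{A}^{\lambda}_{\mathcal{G}_{\mathrm{SMG}}}$ against $\Psi$ to the spin network evaluation $\psi(\{g_{e}=\mathds{1}\}_{e\in\mathcal{E}_{\gamma}})$, the leading-order contribution is the $3$-ball represented by $[\mathcal{G}_{\mathrm{SMG}}]$, with amplitude $(\lambda\overline{\lambda})^{\vert\mathcal{V}_{\gamma}\vert/2}\delta^{N}(\mathds{1})^{1-\vert\mathcal{V}_{\gamma}\vert/2}$ times the spin network evaluation.

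The one point requiring care is the precise meaning of ``dominates'' in an expansion that is an infinite sum over core classes and, before normalisation, divergent. I would make this rigorous by first stratifying the expansion by powers of $\delta^{N}(\mathds{1})$: the theorem shows that $1-\vert\mathcal{V}_{\gamma}\vert/2$ is the maximal degree of divergence among all manifold core classes and that it is saturated by the single class $[\mathcal{G}_{\mathrm{SMG}}]$, so after multiplying through by $\delta^{N}(\mathds{1})^{\vert\mathcal{V}_{\gamma}\vert/2-1}$ and letting $N\to\infty$ only $[\mathcal{G}_{\mathrm{SMG}}]$ survives, the remaining (infinitely many) manifold classes contributing at strictly lower orders in $1/\delta^{N}(\mathds{1})$. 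This is the same bookkeeping used for melonic dominance in the closed case, so no additional analytic input is needed; the main work has already been carried out in the preceding theorem and in Corollary \ref{Cor:ManifoldVertices}.
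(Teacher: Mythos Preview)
Your proposal is correct and follows essentially the same approach as the paper, which treats the corollary as ``an immediate consequence of the previous theorem'' without giving a separate proof. You have simply made explicit the chain of implications the paper leaves implicit: the topological expansion, the bound and its unique saturation by $\mathcal{G}_{\mathrm{SMG}}$ from the preceding theorem, and the identification of $\mathcal{G}_{\mathrm{SMG}}$ with the $3$-ball. One minor remark: the appeal to Corollary~\ref{Cor:ManifoldVertices} is superfluous here, since for $g_{\gamma}=0$ it only yields $\vert\mathcal{V}_{\mathcal{G},\mathrm{int}}\vert\geq\vert\mathcal{V}_{\gamma}\vert$, and the strict inequality for non-SMG core graphs is already contained in the uniqueness clause of the preceding theorem and in Remark~\ref{SMG}.
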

\section{Conclusion and Outlook}
In the present work, we analysed the transition amplitudes of 3d Riemannian quantum gravity, in the context of the simplicial coloured Boulatov GFT model. For this, techniques from crystallization theory turned out to be useful. In particular, the concept of dipole moves allowing to relate different graphs in a topology and boundary preserving way, has been central in our analysis. The boundary states of this model are spin network states and boundary observables associated to them are $\mathrm{SU}(2)$-invariant functionals of the GFT fields. These are constructed from spin network states living on a fixed boundary triangulation, and encoding quantum geometric data. The transition amplitudes of the Boulatov model are then defined to be the expectation values of these observables and can be interpreted as the corresponding probability amplitudes for a transition between two components of the given boundary complex, or in case of a connected boundary, for a transition from the (full, non-perturbative) vacuum state (similar to the Hartle-Hawking state). By construction, these amplitudes generically involve a sum over all simplicial complexes matching our given boundary triangulation, where each complex is weighted by a Ponzano-Regge spin foam amplitude. In other words, by the general existence theorems of coloured graphs of crystallization theory, the transition amplitudes includes a sum over all admissible (bulk) topologies in addition to a sum over geometries. Three-dimensional general relativity is a particular example of a topological (BF) field theory, since it has no local degrees of freedom, and hence, the sum over geometries is somewhat trivial in this case (it computes only the volume of the space of flat connections on the given topology). On the level of the Ponzano-Regge model, this is reflected by the fact that the spin foam amplitudes are invariant under the chosen bulk triangulations for some fixed topology and hence only depend on the boundary data. However, a sum over topologies in three-dimensional quantum gravity still includes non-trivial features: different topologies lead to different amplitudes and there might be a non-trivial gluing between the bulk and the boundary complex leading to non-trivial information about admissible bulk topologies. Furthermore, in the (coloured) GFT approach, one also has to consider more singular topologies than manifolds, namely pseudomanifolds. This gives additional, and a priori very different, contributions, which may not even encode flatness of the discrete connection of the boundary complex, when they involve singularities touching the boundary.
\bigskip

Then, we generalized the rooting procedure developed in the series of papers \cite{GurauLargeN1,GurauLargeN2,GurauLargeN3} in the context of the large $N$ limit of the free energy, to coloured graphs with non-empty boundaries. This procedure allows us to reduce the discussion only to core graphs, which from the geometrical point of view correspond to simplicial complexes with the minimal number of vertices in the bulk triangulation. From the graph-theoretical point of view, the rooting procedure contracts all the internal proper $1$-dipoles of a coloured graphs, which are exactly those dipoles, which leave the topology, boundary as well as the degree of divergence unchanged. The number of contractions is independent of the order in which they are contracted, which shows that each graph roots back to a unique equivalence class of core graphs, each of which having the same number of vertices, boundary and topology. In particular, graphs rooting back to some equivalence class of core graphs have the same amplitude up to a factor of the interaction coupling and a possible symmetry factor, and hence the rooting procedure allows us to write the transition amplitudes as topological expansions, where each term appearing in the sum is given by an equivalence class of core graphs representing a fixed bulk topology. Note however that, in general, the same topology appears more than once in the expansion, since there exist infinitely many core graphs for a given bulk topology.
\bigskip

To illustrate the formalism developed in the present work, we analysed the case of boundary graphs representing the simplest boundary topology, the $2$-sphere. In this case, we were able to show that every manifold and every pseudomanifold without singularities touching the boundary complex yields the same contribution from the boundary spin network state to the transition amplitude, namely the spin network evaluation, which encodes flatness of the discrete boundary connection. That is, the contribution to the transition amplitude of \textit{any} bulk topology is morally the same. The transition amplitude --when restricted to those topologies only admitting bulk singularities-- factorizes into a prefactor consisting of all the factors coming from the interaction term and some remaining contributions coming from the bulk of various topologies, times the spin network evaluation. The prefactor can of course always be cancelled by choosing an appropriate scaling of the path integral. This results is also particularly interesting from the point of view of the holographic principle. It is well known that for a certain choice of boundary state, namely the generating function of spin network, the Ponzano-Regge model is dual to two copies of the Ising model living on the spherical boundary \cite{PRBallIsing1,PRBallIsing2}. Since the Boulatov transition amplitude for a spherical boundary graph factorizes and is proportional to the spin network evaluation, the same conclusion applies also to the Boulatov model.

Therefore, the results presented in this work provide a first insight into the holographic nature of the (coloured) Boulatov model for three-dimensional quantum gravity.

However, there remain several open questions. Within the full transition amplitude, one has to take into account pseudomanifolds with singularities touching the boundary and more work is needed to understand their contribution. In particular, for a topology with boundary singularities, we do not expect to recover flatness of the boundary connection and hence to get an amplitude proportional to the spin network evaluation. In other words, these topologies will have different contributions, which need to be studied systematically. In particular, it would be interesting to study their relations to local defects and particles within the context of discrete quantum gravity models. 
\bigskip

The result for a spherical boundary topology discussed above can be explained as follows: For manifolds without boundary singularities, the model encodes the flatness of the boundary, disregarding the topology of the bulk. In the case of a sphere, there does not exist any non-trivial flat connection. That is, any spin network on the sphere respecting its flatness must collapse to its spin network evaluation. Therefore, even if the transition amplitude should in principle depend on the topology of the bulk, due to the simple choice of the boundary topology, it all collapses to the spin network evaluation whatever the bulk topology is. One might naively assume that the same is true for more complicated boundary topologies. However, it turns out not to be as simple. As we have illustrated, this intuition already fails in the case of a torus boundary. The fundamental group of the $2$-torus $T^{2}$ is given by $\pi(T^{2})\cong\mathbb{Z}^{2}$ and the corresponding generators can be interpreted as the two non-contractible cycles. When considering a manifold with torus boundary, like the handlebody of genus $1$, the solid torus, then there are a priori two possible ways to glue the bulk to the boundary, differing by the choice of which cycle becomes contractible through the bulk. Following the logic of the spherical case, one could have expected only two contributions to the Boulatov transition amplitudes. However, as we have shown with a simple example, the situation is more complicated, since we obtain, at the very least, as many contribution as independent $3$-coloured closed paths on the boundary (without taking into account any possible winding). A more detailed analysis of the structure of the transition amplitude is in progress \cite{ToAppearTorus}, and it is necessary in order to understand how the choice of boundary topology affects the transition amplitude and the possible dual theory of the GFT model. 
\bigskip

As a next step, we have shown that the leading order contribution to the transition amplitude of some spherical boundary graph, when restricted only to manifolds, is given by certain graphs representing the closed $3$-ball. More precisely, these graphs are given by the smallest open coloured graph matching our given spherical boundary graph. Furthermore, we have shown that the class of graphs rooting back to this core equivalence class is precisely the collection of graphs for which a suitable generalization of the Gurau degree to graphs with non-empty boundary is minimal. In this sense, these graphs can be viewed as a generalization of melonic diagrams, which are the leading order graphs in the expansion of the free energy in the large $N$ limit. A question which remains open is whether this result still holds when including pseudomanifolds to the discussion. In the closed case, pseudomanifolds can be shown to be bounded and suppressed \cite{GFTVertex} and hence, one could hope that a similar result can be obtained for the case of open graphs in order to generalize the statement about the leading order made above. Additionally, it would be interesting to pursue a similar analysis for more complicated topologies, for example in the case of a torus boundary.
\section*{Acknowledgements}
\addcontentsline{toc}{section}{\hspace{15pt}Acknowledgements}
The authors would like to thank the whole ``Quantum Gravity and Foundations of Physics'' research group at LMU for useful discussions and comments. CG was supported by the Alexander von Humboldt Foundation. CG and GS gratefully acknowledge funding by the Deutsche Forschungsgemeinschaft (DFG) via a seed funding from the Munich Center of Quantum Science and Technology (MCQST) for the project ``Holographic dualities in 3D discrete models of quantum gravity'' AOST 862909-9. DO acknowledges funding from DFG research grants OR432/3-1 and OR432/4-1.
\appendix
\section{Simplicial Complexes and Pseudomanifolds}
\label{AppendixA:PseudoManifolds}
In this section, we briefly recall the definition of pseudomanifolds, in order to fix the terminology and notation used throughout this paper. First of all, let us fix the following terminology: Let $\Delta$ be an (abstract) (pseudo)simplicial complex with vertex set $\mathcal{V}$. Then:
\begin{itemize}
    \item An element $v\in\mathcal{V}$ is called \textit{vertex} and an element $\sigma\in\Delta$ is called \textit{simplex}. Any non-empty subset $\tau\subset\sigma$ is called \textit{face} of $\sigma$.
    \item The \textit{dimension} of a simplex $\sigma\in\Delta$ is the number $d\in\mathbb{N}$ defined by $d:=\vert\sigma\vert-1$. A $d$-dimensional simplex is also called \textit{$d$-simplex} and a $k$-dimensional face of $\sigma$ is also called a \textit{$k$-face of $\sigma$}. Let us denote the set of all $d$-simplices by $\Delta_{d}$. Vertices are by definition $0$-simplices, i.e.~$\mathcal{V}=\Delta_{0}$. The \textit{dimension of a simplicial complex} $\Delta$ is the maximal number $d\in\mathbb{N}$ such that $\Delta_{d}\neq\emptyset$.
    \item The collection of all simplices with dimension smaller equal to some $k\in\{0,\dots,d\}$ is called the \textit{$k$-skeleton} of the complex $\Delta$.
\end{itemize}

Let now $S\subset\Delta$ be a subset of some abstract simplicial complex $\Delta$. If $S$ is by itself an abstract simplicial complex, then it is called a \textit{subcomplex of $\Delta$}. Let us further introduce the following terminology:

\begin{itemize}
    \item[(1)]The \textit{closure} $\mathrm{Cl}_{\Delta}(S)$ of $S$ is the smallest subcomplex of $\Delta$ containing $S$, i.e.
    \begin{align}
        \mathrm{Cl}_{\Delta}(S):=\{\sigma\in\Delta\mid\exists\tau\in S:\sigma\subset\tau\}.
    \end{align}
    If $S$ is a subcomplex, then clearly $\mathrm{Cl}_{\Delta}(S)=S$.
    \item[(2)]The \textit{star} of a single simplex $\sigma\in\Delta$ is defined to be set of all simplices in $\Delta$ having $\sigma$ as a face, i.e.
    \begin{align}
        \mathrm{St}_{\Delta}(\sigma):=\{\tau\in\Delta\mid\sigma\subset\tau\}.
    \end{align}
    The \textit{star of $S$} is then the union of the stars of all its simplices. Note that the star is in general not a subcomplex. Therefore, one often defines the \textit{closed star}, which is the subcomplex $\mathrm{Cl}_{\Delta}(\mathrm{St}_{\Delta}(S))$. Note that some authors define the star directly in this way.
    \item[(3)]The \textit{link} of $S$ is defined to be $\mathrm{Lk}_{\Delta}(S):=\mathrm{Cl}_{\Delta}(\mathrm{St}_{\Delta}(S))\backslash\mathrm{St}_{\Delta}(\mathrm{Cl}_{\Delta}(S))$. If $\sigma\in\Delta$ is a single simplex, then its link is given by
    \begin{align}
        \mathrm{Lk}_{\Delta}(\sigma)=\{\tau\in\mathrm{Cl}_{\Delta}(\mathrm{St}_{\Delta}(\sigma))\mid \tau\cap\sigma=\emptyset\}=\{\tau\in\Delta\mid \tau\cup\sigma\in\Delta\text{ and }\tau\cap\sigma=\emptyset\}.
    \end{align}
    The link of some subset $\mathcal{S}$ is again a subcomplex of $\Delta$. Furthermore, if $\sigma\in\Delta$ is a $k$-simplex in a $d$-dimensional abstract simplicial complex, then the dimension of $\mathrm{Lk}_{\Delta}(\sigma)$ is at most $d-(k+1)$.
\end{itemize}

Figure \ref{Fig:SimplComplexes} below shows a $2$-dimensional simplicial complex $\Delta$ as well as the star, closed star and link of a vertex $v$ of $\Delta$ drawn in blue.

\begin{figure}[H]
    \centering
    \includegraphics[scale=0.8]{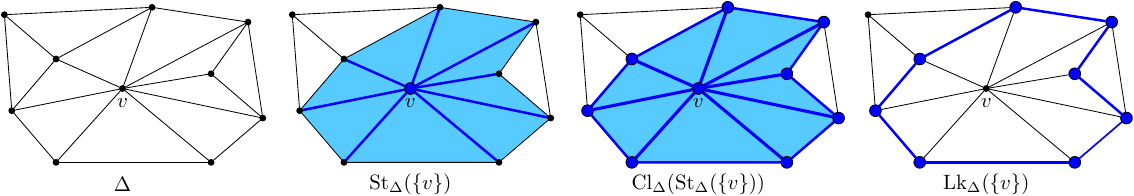}
    \caption{A $2$-dimensional simplicial complex $\Delta$ and the star $\mathrm{St}_{\Delta}(\{v\})$, closed star $\mathrm{Cl}_{\Delta}(\mathrm{St}_{\Delta}(\{v\}))$ and link $\mathrm{Lk}_{\Delta}(\{v\})$ of a vertex $v$ in $\Delta$ drawn in blue.\label{Fig:SimplComplexes}}
\end{figure}

Pseudomanifolds are topologies, which are manifolds in most of their points, but can fail to be locally-Euclidean at a finite number of isolated ``singularities''. They are defined as follows:

\begin{Definition}[Pseudomanifolds \cite{SeifertTopology}]
    Let $\Delta$ be a finite abstract $d$-dimensional simplicial complex. We call its geometric realization $\vert\Delta\vert$ a ``$d$-dimensional pseudomanifold'', if and only if the following three conditions are fulfilled:
    \begin{itemize}
        \item[(1)]$\Delta$ is ``\textit{pure}'', i.e.~every simplex $\sigma\in\Delta$ of dimension $<d$ is the face of some $d$-simplex.
        \item[(2)]$\Delta$ is ``\textit{non-branching}'', i.e.~every $(d-1)$-simplex is face of exactly one or two $d$-simplices.
        \item[(3)]$\Delta$ is ``\textit{strongly-connected}'', i.e.~for every two $d$-simplices $\sigma,\tau\in\Delta_{d}$, there is a sequence of $d$-simplices $\sigma=\sigma_{1},\sigma_{2},\dots,\sigma_{k}=\tau$ such that $\sigma_{l}\cap\sigma_{l+1}$ is a $(d-1)$-simplex $\forall l\in\{1,\dots,k-1\}$.
    \end{itemize}
\end{Definition}

The \textit{boundary of a pseudomanifold} $\Delta$, usually denoted by $\partial\Delta$, is the closure of the subset consisting of all the $(d-1)$-simplices, which are the face of only one $d$-simplex. More generally, one can define the boundary of any \textit{pure} abstract simplicial complex in this way. Furthermore, we call a pseudomanifold \textit{orientable}, if and only if there is a choice of orientation for each $d$-simplex, such that each internal $(d-1)$-simplex gets the opposite induced orientation from the two $d$-simplices to which it belongs. More generally, one can define the concept of orientability for any \textit{non-branching} abstract simplicial complex in this way. To sum up, the first two conditions in the definition allow us to talk about a boundary and about orientability. Last but not least, \textit{strongly-connectedness} tells us that a pseudomanifold can be understood as being the result of gluing $d$-simplices along their $(d-1)$-faces. \cite{SeifertTopology}
\bigskip

One can easily show that every compact, connected and triangulable manifold is a pseudomanifold. However, the converse is in general not true. As an example, pseudomanifolds may contain isolated singularities, around which they fail to be locally Euclidean. An often cited example is the \textit{pinched torus}, which is a $2$-dimensional pseudomanifold obtained by identifying two distinct points on the $2$-sphere. An important class of pseudomanifolds are ``normal pseudomanifolds'', which are defined as follows:

\begin{Definition}[Normal Pseudomanifolds]
    Let $\vert\Delta\vert$ be a $d$-dimensional pseudomanifold. We call it ``normal'' if the link of every simplex of dimension $\leq d-2$ represents a pseudomanifold.
\end{Definition}

The crucial condition in this definition is strongly-connectedness: In general, every link of a pseudomanifold is pure and non-branching, but can fail to be strongly-connected and in fact, even to be connected at all \cite{GurauColouredGFTPseudo}.

\begin{Example}
    The pinched torus is an example of a pseudomanifold, which is \textit{not} normal, since the link of its singular point consists of two distinct circles and is hence disconnected.
\end{Example}

\section{Further Details on Coloured Graphs and Crystallization Theory}\label{AppendixB:CryTheo}
In this section of the appendix, we provide some more details on crystallization theory and coloured graphs. More explicitly, we will briefly review two central theorems of crystallization theory regarding the existence of coloured graphs for manifolds. Furthermore, we will discuss a connected sum operation for graphs and its topological interpretation, which we used in the main text to show that certain types of dipole moves are proper (c.f.~Theorem \ref{DipoleProper}).

\subsection{Existence of Coloured Graphs and Crystallizations}\label{SecCryTheo}
In general, every open $(d+1)$-coloured graph represents a normal and orientable pseudomanifold with boundary, as discussed in Section \ref{SectionI:Model}. However, it is a priori not clear for which type of topologies there exists a coloured graph representing them. In this section, the goal is to review some central results from crystallization theory, which show that at least for every (PL-)manifold there is a special type of coloured graph representing it. First of all, let us introduce the notion of ``manifold crystallizations'' \cite{CTReview,Review2018,GagliardiBoundaryGraph}, the central objects of crystallization theory, which are dual to triangulations of manifolds with the smallest possible number of vertices:

\begin{Definition}[Contracted Graphs and Crystallizations]
    \begin{itemize}
        \item[]
        \item[(1)]A closed $(d+1)$-coloured graph $\mathcal{G}\in\overline{\mathfrak{G}}_{d}$ is called ``contracted'', if it admits exactly one $d$-bubble without colour $i$ for all $i\in\mathcal{C}_{d}$, i.e.~the total number of $d$-bubbles is $\mathcal{B}^{[d]}=d+1$.
        \item[(2)]Let $\mathcal{G}\in\mathfrak{G}_{d}$ be an open $(d+1)$-coloured graph with $C(\partial\mathcal{G})\in\mathbb{N}$ boundary components. Then $\mathcal{G}$ is called ``$\partial$-contracted'', if there is exactly one $d$-bubble without colour $0$ and exactly $C(\partial\mathcal{G})$ $d$-bubbles without colour $i$ for all $i\in\mathcal{C}_{d}\backslash\{0\}$, i.e.~the total number of $d$-bubbles is $\mathcal{B}^{[d]}=1+d\cdot C(\partial\mathcal{G})$.
        \item[(3)]Let $\mathcal{G}$ be a closed (resp. open) $(d+1)$-coloured graph representing a manifold $\mathcal{M}$. If $\mathcal{G}$ is contracted (resp. $\partial$-contracted), it is called a ``crystallization of $\mathcal{M}$''.
    \end{itemize}
\end{Definition}

In other words, a closed contracted graph has the smallest possible number of $d$-bubbles and hence, the corresponding simplicial complex has the smallest possible number of vertices. A $\partial$-contracted graph is a graph, for which the boundary is contracted and for which there is only a single internal $d$-bubble, or in other words, its corresponding complex has only one internal vertex and each of its boundary components has the minimal number of $d$ vertices.
\bigskip

For the case of closed manifolds, M. Pezzana was able to prove the following general existence theorem in 1974 \cite{Pezzana,Pezzana2}, which also provides the foundation of crystallization theory:

\begin{Theorem}[of Pezzana]
    Every closed and connected $d$-dimensional PL-manifold admits a crystallization representing it.
\end{Theorem}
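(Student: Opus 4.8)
The plan is to reproduce the classical two-step argument behind this theorem: first extract from an arbitrary triangulation of $\mathcal{M}$ \emph{some} closed $(d+1)$-coloured graph representing it, and then collapse that graph onto a contracted one by internal proper $1$-dipole contractions. For the first step, recall that a PL-manifold admits a (combinatorial) triangulation $K$ by definition (Appendix \ref{AppendixA:PseudoManifolds}). Passing to the first barycentric subdivision $K'$ does not change the PL-homeomorphism type, so $\vert K'\vert$ is PL-homeomorphic to $\mathcal{M}$, and $K'$ carries a canonical vertex-colouring: each of its vertices is the barycentre of a unique simplex of $K$, and we colour it by the dimension of that simplex, getting a map $\mathcal{V}_{K'}\to\mathcal{C}_d=\{0,\dots,d\}$. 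This colouring is proper on vertices, because an edge of $K'$ joins the barycentres of a simplex and of one of its proper faces, which have different dimensions; hence every $d$-simplex of $K'$ carries exactly one vertex of each colour, so the dual graph of $K'$ is a $(d+1)$-valent, properly $(d+1)$-edge-coloured graph whose dual complex is (simplicially isomorphic to) $K'$, i.e.\ $\mathcal{M}$. When $\mathcal{M}$ is orientable --- the case relevant to the bipartite graphs of Definition \ref{ClosedColouredGraph}, since for coloured graphs bipartiteness is equivalent to orientability --- a coherent orientation of the top simplices of $K$ assigns a sign to each maximal flag of $K'$ (the parity of the ordering of the vertices of its ambient $d$-simplex), and one checks that any two dual vertices joined by an edge carry opposite signs; this gives the required $2$-colouring, so the resulting graph $\mathcal{G}$ lies in $\overline{\mathfrak{G}}_d$.

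The graph $\mathcal{G}$ produced above need not be contracted, so the second step removes the surplus top-dimensional bubbles by running the bubble-rooting procedure of Subsection \ref{Subse:Rooting}: contract internal proper $1$-dipoles as long as any exist. This is legitimate because $\mathcal{G}$ is closed and represents a manifold: by Proposition \ref{ManifoldsGraphs} all its $d$-bubbles are $(d-1)$-spheres, hence by Theorem \ref{DipoleProper}(1) --- equivalently by Corollary \ref{Cor:Dipoles}(2) --- every $1$-dipole of $\mathcal{G}$ is an internal proper one, and each such contraction preserves the represented PL-manifold and the closedness while lowering $\mathcal{B}^{[d]}$ by exactly $1$ (two distinct $\hat{i}$-bubbles merge, the bubbles of the other colours being unaffected) and $\vert\mathcal{V}_{\mathcal{G}}\vert$ by $2$. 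Since $\mathcal{B}^{[d]}$ is a non-negative integer, the process terminates at a closed core graph $\mathcal{G}_c$ still representing $\mathcal{M}$ and containing no $1$-dipole at all. But a connected closed $(d+1)$-coloured graph without $1$-dipoles has, for each colour $i$, a single $d$-bubble without colour $i$ (otherwise connectedness would produce a colour-$i$ edge between two distinct such bubbles, i.e.\ a $1$-dipole); thus $\mathcal{B}^{[d]}(\mathcal{G}_c)=d+1$ and $\mathcal{G}_c$ is contracted. A closed, contracted $(d+1)$-coloured graph representing a manifold is by definition a crystallization (Appendix \ref{SecCryTheo}), so $\mathcal{G}_c$ is the desired crystallization of $\mathcal{M}$.

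The substantive point is the first step: one has to check that the barycentric subdivision (or, more economically, a pseudo-simplicial subdivision) genuinely yields an object satisfying \emph{all} the conditions of a closed $(d+1)$-coloured graph, and in particular that it is bipartite --- which is precisely where the (implicitly assumed) orientability enters, through the sign argument sketched above. In the non-orientable case the whole construction goes through verbatim once bipartiteness is dropped, i.e.\ working with the non-bipartite coloured graphs associated to the real GFT. The second step is comparatively routine once Corollary \ref{Cor:Dipoles} is in hand; the only thing needing verification there is termination of the dipole-contraction process, which is immediate from the monotonicity of $\mathcal{B}^{[d]}$ under internal proper $1$-dipole moves.
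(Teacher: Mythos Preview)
Your proof is correct and follows precisely the approach the paper sketches: the paper does not give a proof but only records that ``the idea of the proof is basically to explicitly construct a contracted triangulation out of a given piecewise-linear triangulation'' and that a sketch ``using the notion of dipole moves'' appears in \cite{GagliardiFerri}. Your two-step argument---barycentric subdivision to produce a properly vertex-coloured (hence graph-encodable) triangulation, followed by iterated internal proper $1$-dipole contractions via Corollary \ref{Cor:Dipoles}(2) until the graph is contracted---is exactly that strategy, fleshed out with the details the paper omits.
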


The idea of the proof is basically to explicitly construct a contracted triangulation out of a given piecewise-linear triangulation. The full proof can be found in the original paper by M. Pezzana \cite{Pezzana} and a sketch of the proof in English, using the notion of dipole moves, can be found in \cite{GagliardiFerri}. A generalization of the above theorem for manifolds with boundary was proven by A. Cavicchioli and C. Gagliardi in 1980 \cite{GagliardiExistence} (for the case of manifolds with connected boundary) and by C. Gagliardi in 1983 \cite{GagliardiCorbodant} (general case):

\begin{Theorem}[of Cavicchioli-Gagliardi]
    \label{ThmCavGag}
    For every crystallization $\gamma$ of the boundary of some compact and connected $d$-dimensional PL-manifold $\mathcal{M}$ with (possibly disconnected) boundary, there exists a crystallization $\mathcal{G}$ of $\mathcal{M}$ whose boundary graph is (colour-isomorphic to) $\gamma$.
\end{Theorem}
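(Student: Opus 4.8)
The plan is to prove the Cavicchioli--Gagliardi theorem by an explicit construction that builds a $\partial$-contracted crystallization $\mathcal{G}$ of $\mathcal{M}$ out of a given crystallization $\gamma$ of $\partial\mathcal{M}$, following the same overall strategy as Pezzana's proof of the closed case but adapted to non-empty boundary. First I would invoke Pezzana's theorem (applied to the manifold obtained by coning off, or more carefully to a handle decomposition) to guarantee the existence of \emph{some} open $(d+1)$-coloured graph $\mathcal{G}_0 \in \mathfrak{G}_d$ representing $\mathcal{M}$ — this is the content of the general existence statement for PL-manifolds, which I would either cite from \cite{Pezzana,Pezzana2,GagliardiExistence} or sketch by triangulating $\mathcal{M}$, taking the dual $1$-skeleton, and checking that the construction produces a bipartite, properly $(d+1)$-edge-coloured graph whose external legs can all be taken of colour $0$. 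At this stage $\partial\mathcal{G}_0$ is \emph{some} crystallization of $\partial\mathcal{M}$, but not necessarily the prescribed $\gamma$, and $\mathcal{G}_0$ need not be $\partial$-contracted.

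Next I would run a two-stage reduction using dipole moves, exploiting the machinery already developed in Section \ref{SecDipoles}. Stage one: contract internal proper $1$-dipoles of colours $i \neq 0$ in such a way as to reduce each boundary component to the minimal number of $d$-bubbles. Concretely, every non-internal proper $1$-dipole move on $\mathcal{G}_0$ induces an internal proper $1$-dipole move on $\partial\mathcal{G}_0$ (Remark following Theorem \ref{DipoleProper}), so by performing the sequence of boundary contractions that carries $\partial\mathcal{G}_0$ to $\gamma$ — possible because both are crystallizations of the same PL-manifold $\partial\mathcal{M}$ and hence related by proper dipole moves by Casali's criterion, Theorem \ref{Casali} — and lifting each such move to the appropriate move on $\mathcal{G}_0$, I obtain a graph $\mathcal{G}_1$ with $\partial\mathcal{G}_1 \cong \gamma$ that still represents $\mathcal{M}$. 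The key point here is Proposition \ref{PropDipoleBound} together with Theorem \ref{DipoleProper}(2): these moves change the boundary complex in precisely the desired way while preserving the bulk topology. Stage two: having fixed the boundary to be $\gamma$, I then contract all remaining internal proper $1$-dipoles of colour $0$ (which by Corollary \ref{Cor:Dipoles}(3) are all proper when $\mathcal{G}$ represents a manifold, and more relevantly, internal proper ones leave $\partial\mathcal{G}$ untouched by Proposition \ref{PropDipoleBound}). Using Lemma \ref{LemmaCoreGraphs} and the rooting argument of Section \ref{Subse:Rooting}, this reduces the number of internal $d$-bubbles without colour $0$ down to exactly one, yielding a $\partial$-contracted graph $\mathcal{G}$ — i.e.\ a crystallization of $\mathcal{M}$ — whose boundary graph is still colour-isomorphic to $\gamma$.

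The main obstacle, and the place where I expect the real work to lie, is controlling the \emph{interaction} between the two stages: contracting $0$-coloured dipoles in stage two must not spoil the boundary graph (this is fine, by Proposition \ref{PropDipoleBound}, since internal dipoles preserve $\partial\mathcal{G}$), but one must also verify that after stage one the graph admits \emph{enough} internal proper $1$-dipoles of colour $0$ to reach full $\partial$-contraction — that is, that the obstruction to contracting $d$-bubbles is genuinely only about spherical versus non-spherical bubbles, and that for a manifold there is never a non-spherical internal $\hat{0}$-bubble blocking the reduction. Here I would use Proposition \ref{ManifoldsGraphs}: since $\mathcal{M}$ is a manifold, all internal $d$-bubbles of $\mathcal{G}_1$ represent $(d-1)$-spheres, so every internal $1$-dipole of colour $0$ is an internal \emph{proper} one and can be contracted until a unique internal $\hat{0}$-bubble remains. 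A secondary subtlety is the edge case where stage one's lifted moves might create a disconnected piece (Remark \ref{DipoleConSum}(a)) or fail to lift when both dipole vertices carry external legs; I would handle this by observing that the boundary moves needed to pass from $\partial\mathcal{G}_0$ to $\gamma$ can always be chosen to lift to \emph{non-internal} proper dipoles on the bulk side, and that the convention discarding the spurious $0$-edge component does not affect the represented manifold. Once these bookkeeping points are settled, the theorem follows; I would also remark that this constructive proof is exactly the one given in \cite{GagliardiExistence} for connected boundary and \cite{GagliardiCorbodant} in general, so a reader wanting the fully detailed argument can be referred there.
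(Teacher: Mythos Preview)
The paper does not give its own proof of Theorem~\ref{ThmCavGag}; it is stated as a known result with a reference to \cite{GagliardiExistence,GagliardiCorbodant}. So there is no in-paper argument to compare against.

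On your proposal itself: the architecture is plausible, but the heart of stage one has a genuine gap. You want to realise a sequence of proper dipole moves on $\partial\mathcal{G}_0$ (taking it to the prescribed $\gamma$, via Casali) by \emph{lifting} each boundary move to a non-internal proper dipole move on $\mathcal{G}_0$. But a $1$-dipole in $\partial\mathcal{G}_0$ between boundary vertices $p,q$ corresponds in $\mathcal{G}_0$ only to a bicoloured $0j$-\emph{path} between the internal vertices $\tilde p,\tilde q$ carrying the associated external legs; there is no reason for $\tilde p$ and $\tilde q$ to be adjacent, let alone for an edge between them to form a dipole. The paper's own Remark after Theorem~\ref{DipoleProper} says exactly this: the induction goes one way (bulk non-internal proper dipole $\Rightarrow$ boundary proper dipole), but ``the reverse is in general not true'', and boundary dipoles lift only to Gagliardi's \emph{wound moves} of \cite{Gagliardi87}, which are not in your toolkit. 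Your claim that ``the boundary moves \dots\ can always be chosen to lift'' is precisely the statement that needs proof, and it does not follow from Proposition~\ref{PropDipoleBound} or Theorem~\ref{DipoleProper}(2).

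There is also a smaller gap in stage two: reducing to a single $\hat 0$-bubble via colour-$0$ dipoles is fine, but $\partial$-contractedness also requires exactly $C(\partial\mathcal{G})$ $\hat i$-bubbles for every $i\neq 0$, i.e.\ no \emph{closed} $\hat i$-bubble may survive. That needs further internal proper $1$-dipole contractions of colours $i\neq 0$, which you do not perform. In short, the strategy is reasonable in spirit, but as written the lifting step does not go through; the original proofs in \cite{GagliardiExistence,GagliardiCorbodant} build the bulk crystallization \emph{from} the given $\gamma$ and a triangulation of $\mathcal{M}$, rather than trying to adjust the boundary of an arbitrary starting graph.
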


\subsection{Connected Sum of Coloured Graphs}\label{Subsec:ConSum}
One way to build new manifolds out of some given manifolds is provided by performing their ``connected sum''. For two compact and connected $d$-dimensional manifolds $\mathcal{M}$ and $\mathcal{N}$ with at most one boundary component, there are two different notions one has to distinguish:

\begin{itemize}
      \item[(1)]Let us choose two closed $d$-balls $B_{1}$ and $B_{2}$ inside $\mathcal{M}$ and $\mathcal{N}$, such that they do not intersect the boundaries of $\mathcal{M}$ and $\mathcal{N}$. The ``\textit{(internal) connected sum}'' is the manifold denoted by $\mathcal{M}\#\mathcal{N}$, which is obtained by cutting out the interior of the balls from $\mathcal{M}$ and $\mathcal{N}$ and gluing\footnote{If both $\mathcal{M}$ and $\mathcal{N}$ are oriented, then we should assume in addition that the ``\textit{gluing map}'' is orientation-reversing, since the connected sum then comes equipped with a canonical orientation. Note that the connected sum in general depends on the chosen orientations, however, it does in general not depend on all the other choices as a consequence of the annulus theorem \cite{KirbyAT,Quinn}.} the two created boundary spheres together. As a consequence, it holds that $\partial (\mathcal{M}\#\mathcal{N})=(\partial\mathcal{M})\coprod (\partial\mathcal{N})$. Furthermore, note that the $d$-sphere $S^{d}$ is the neutral element of this operation, i.e.~$\mathcal{M}\# S^{d}\cong\mathcal{M}$ for all $\mathcal{M}$.
      \item[(2)]If $\partial\mathcal{M}\neq\emptyset\neq\partial\mathcal{N}$, we can choose two closed $(d-1)$-dimensional balls $B_{1}$ and $B_{2}$ inside $\partial\mathcal{M}$ and $\partial\mathcal{N}$. The ``\textit{boundary connected sum}'' is the manifold denoted by $\mathcal{M}\#_{\partial}\mathcal{N}$, which is obtained by identifying the two balls to each other. Note that it holds that $\partial (\mathcal{M}\#_{\partial} \mathcal{N})=(\partial\mathcal{M})\# (\partial\mathcal{N})$. Furthermore, note that the closed $d$-ball $B^{d}$ is the neutral element of this operation, i.e.~$\mathcal{M}\# B^{d}\cong\mathcal{M}$ for all $\mathcal{M}$.
\end{itemize}

Let us now discuss how to define the connected sum on the level of coloured graphs. To start with, let us make the following definition \cite{GagliardiBoundaryGraph,GagliardiConnectedSum}:

\begin{Definition}[Graph Connected Sum]
    Let $\mathcal{G}_{1},\mathcal{G}_{2}\in\mathfrak{G}_{d}$ be two open $(d+1)$-coloured graphs. Then, let us define the following graph: Lets take an internal vertex $v$ of $\mathcal{G}_{1}$ and an internal vertex $w$ of $\mathcal{G}_{2}$ of different types (i.e.~one black and one white). Then, we denote by $\mathcal{G}_{1}\#_{\{v,w\}}\mathcal{G}_{2}$ the open $(d+1)$-coloured graph obtained by deleting the two vertices and gluing the ``hanging'' pairs of edges together respecting their colouring. We call this graph the ``graph connected sum of $\mathcal{G}_{1}$ and $\mathcal{G}_{2}$ at $v$ and $w$''.
\end{Definition}

\begin{Remark}
    Note that if both vertices $v$ and $w$ do admit an adjacent external leg, then the procedure would produce a disconnected part containing a single edge of colour $0$ connecting two boundary vertices. In this case, we do not include this additional disconnected piece in the definition of $\mathcal{G}_{1}\#_{v,w}\mathcal{G}_{2}$, as a convention (see the example in Figure \ref{FigConSum}(a)).
\end{Remark}

The example below shows the graph connected sum of two copies of some closed $(2+1)$-coloured graph $\mathcal{G}\in\overline{\mathfrak{G}}_{2}$:

\begin{figure}[H]
    \centering
    \includegraphics[scale=1.2]{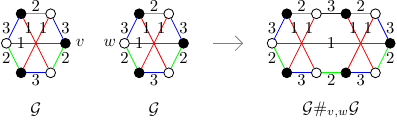}
    \caption{The graph connected sum of twice the graph $\mathcal{G}$.}
\end{figure}

The graph $\mathcal{G}$ represents the $2$-torus $T^{2}:=S^{1}\times S^{1}$. Furthermore, it is not too hard to check that the graph $\mathcal{G}\#_{v,w}\mathcal{G}$ represents the genus $g=2$ surface $\Sigma_{2}:=T^{2}\# T^{2}$, e.g.~by calculating its Euler characteristic. It turns out that the graph connected sum represents the connected sum of manifolds in more general cases. Before stating the theorem, let us introduce the following terminology: We call an internal vertex of some open $(d+1)$-coloured graph ``\textit{strictly internal}'' \cite{GagliardiConnectedSum}, if all the $d$-bubbles to which the vertex belongs, are closed. In other words, a vertex in some open coloured graph is strictly internal if and only if the corresponding $d$-simplex is not touching the boundary in the sense that all its faces of all dimensions are not contained in the boundary complex.

\begin{Theorem}
    \label{ThmConSum}
    Let $\mathcal{G}_{1},\mathcal{G}_{2}\in\mathfrak{G}_{d}$ be two open $(d+1)$-coloured graphs representing manifolds $\mathcal{M}_{1}$ and $\mathcal{M}_{2}$. Furthermore, let $v$ be an internal vertex of $\mathcal{G}_{1}$ and $w$ be an internal vertex of $\mathcal{G}_{2}$.
    \begin{itemize}
        \item[(1)]If both $v$ and $w$ admit an adjacent external leg, then $\mathcal{G}_{1}\#_{\{v,w\}}\mathcal{G}_{2}$ represents the oriented boundary connected sum $M_{1}\#_{\partial} M_{2}$.
        \item[(2)]If both vertices $v$ and $w$ do not admit an adjacent external leg and if at least one of them is strictly internal, then $\mathcal{G}_{1}\#_{\{v,w\}}\mathcal{G}_{2}$ represents the oriented internal connected sum $M_{1}\# M_{2}$.
        \item[(3)]If $v$ is an strictly internal vertex and $w$ admits an adjacent external leg, then $\mathcal{G}_{1}\#_{\{v,w\}}\mathcal{G}_{2}$ represents the manifold $(\mathcal{M}_{1}\# B^{d})\#_{\partial}\mathcal{M}_{2}$, where $B^{d}$ denotes the closed $d$-ball.
    \end{itemize}
\end{Theorem}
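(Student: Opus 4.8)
The statement to prove is Theorem \ref{ThmConSum}, which asserts that the graph connected sum $\mathcal{G}_{1}\#_{\{v,w\}}\mathcal{G}_{2}$ realizes, at the level of the dual simplicial complex, one of three topological connected sum operations depending on whether the chosen vertices are boundary-adjacent or strictly internal. The plan is to reduce everything to a single ``local'' geometric picture: deleting a vertex $v$ from a coloured graph $\mathcal{G}$ corresponds to removing the open star of the dual $d$-simplex $\sigma_{v}$ from $\Delta_{\mathcal{G}}$, and what is left exposed is exactly the link $\mathrm{Lk}_{\Delta_{\mathcal{G}}}(\sigma_{v})$, which by Proposition \ref{ManifoldsGraphs} (since $\mathcal{M}_{1},\mathcal{M}_{2}$ are manifolds) is a PL-sphere $S^{d-1}$ if $v$ is strictly internal, and a PL-ball $B^{d-1}$ if $v$ is boundary-adjacent (because then some face of $\sigma_v$ lies on $\partial\Delta_{\mathcal G}$, so the link is the link of a boundary simplex). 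First I would set up this dictionary carefully, using the one-to-one correspondence between $k$-bubbles and $(d-k)$-simplices from the earlier Proposition, and the fact that the ``hanging pairs of edges'' glued in the definition of $\#_{\{v,w\}}$ are in bijection with the $(d-1)$-faces of $\sigma_v$ and $\sigma_w$, so the gluing is precisely a simplicial identification of the two exposed links along a colour-respecting (hence simplicial) isomorphism.

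\textbf{Key steps.} (i) Establish the local picture above: $\Delta_{\mathcal{G}\setminus\{v\}}$ is obtained from $\Delta_{\mathcal{G}}$ by deleting $\mathrm{St}_{\Delta_{\mathcal G}}(\sigma_v)$, leaving the collar $\mathrm{Lk}_{\Delta_{\mathcal G}}(\sigma_v)$ as new boundary; identify its PL-type (sphere vs.\ ball) via Proposition \ref{ManifoldsGraphs}. (ii) For case (2), with (say) $v$ strictly internal: removing the open star of $\sigma_v$ from the closed manifold-point neighbourhood is exactly excising the interior of a PL $d$-ball, and gluing $\mathcal{G}_1\setminus\{v\}$ to $\mathcal{G}_2\setminus\{w\}$ along the common $S^{d-1}$ boundary sphere is the definition of the internal connected sum $\mathcal{M}_1\#\mathcal{M}_2$; orientation-reversal of the gluing follows from the bipartiteness condition (one black, one white vertex), which is the coloured-graph incarnation of an orientation-reversing identification. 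One must also check that the deleted disconnected ``$0$-coloured edge'' piece, discarded by convention when both $v,w$ are boundary-adjacent, does not arise here. (iii) For case (1), with both $v,w$ boundary-adjacent: the exposed links are PL $(d-1)$-balls $B^{d-1}\subset\partial\mathcal{M}_i$, and gluing along them is the definition of the boundary connected sum $\mathcal{M}_1\#_\partial\mathcal{M}_2$; here one must verify that the convention of discarding the spurious colour-$0$ edge does not change the resulting PL-homeomorphism type (it corresponds to collapsing a trivial bridging $d$-simplex, which I would justify by a proper $1$-dipole argument using Theorem \ref{DipoleProper}(2)). (iv) Case (3) is a hybrid: the $\sigma_v$-side contributes a ball-excision (internal connected sum with... actually with nothing nontrivial, but the combinatorics introduces an extra $d$-ball), so the result is $(\mathcal{M}_1\#B^d)\#_\partial\mathcal{M}_2$; I would handle it by composing the two local pictures, or alternatively by the observation that $\mathcal{G}_1\#_{\{v,w\}}\mathcal{G}_2$ can be obtained by first taking $\mathcal{G}_1\#_{\{v,v'\}}(\text{elementary melonic }d\text{-ball})$, which is a manifold with one extra boundary sphere, i.e.\ $\mathcal{M}_1\#B^d$, and then performing the boundary connected sum with $\mathcal{M}_2$ as in case (1).

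\textbf{Main obstacle.} The routine part is the bubble--simplex bookkeeping; the delicate part is controlling \emph{orientations} and, especially, the conventional deletion of the disconnected colour-$0$ edge. In case (1) and (3) I need to argue that discarding that piece is topologically harmless --- the cleanest route is to show that re-including it and then contracting the resulting proper $1$-dipole (its two separated bubbles each being a $(d-1)$-ball, so the dipole is non-internal proper by Theorem \ref{DipoleProper}(2)) recovers the convention-respecting graph, hence the same PL-manifold. A secondary subtlety is verifying that the colour-preserving bijection of hanging edges really induces a \emph{simplicial} isomorphism of the exposed links with reversed orientation; this should follow from the fact that the canonical black/white orientation of edges restricts correctly, but it needs to be stated precisely rather than waved at. For a full proof I would ultimately cite Proposition 5.3 and the connected-sum discussion of \cite{Gagliardi87, GagliardiConnectedSum, GagliardiBoundaryGraph}, where the closed and connected-boundary cases are treated, and merely indicate the extension to the three-part dichotomy above.
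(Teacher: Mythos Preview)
Your overall strategy matches the paper's: the paper simply cites \cite{GagliardiConnectedSum} for the full argument and offers a brief geometric sketch (for case~(2)) along your lines --- deleting a graph vertex removes the dual $d$-simplex, hence a $d$-ball, from the manifold; reconnecting the hanging edges glues the exposed pieces; and bipartiteness (one black, one white vertex) makes the gluing orientation-reversing. Your more explicit handling of cases~(1) and~(3), and of the discarded colour-$0$ edge via a non-internal proper dipole, goes beyond what the paper spells out but is in the same spirit.

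One point in your setup needs correction. What is exposed after deleting $\sigma_v$ is \emph{not} the link $\mathrm{Lk}_{\Delta_{\mathcal{G}}}(\sigma_v)$: the simplicial link of a top-dimensional simplex in a $d$-complex is empty by definition. The exposed piece is the simplicial boundary $\partial\sigma_v\cong S^{d-1}$ of the removed $d$-simplex itself (or, when $v$ carries an external leg, $\partial\sigma_v$ minus its colour-$0$ face, which is a $(d-1)$-ball). Correspondingly, Proposition~\ref{ManifoldsGraphs} is not the relevant input here --- that proposition controls the links of \emph{$0$-simplices} of $\Delta_{\mathcal{G}}$ (equivalently the $d$-bubbles of $\mathcal{G}$), not of $d$-simplices. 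The sphere/ball dichotomy you want follows immediately from the combinatorics of $\partial\sigma_v$ and needs no manifold hypothesis; the manifold assumption enters only to guarantee that the excised $d$-simplex sits as a flatly embedded ball in $|\Delta_{\mathcal{G}}|$, so that the topological connected sum is well defined.
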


\begin{proof}
    The detailed proof can be found in \cite{GagliardiConnectedSum}. As an example, in case (2), we delete an internal $d$-simplex in one of the triangulations and another $d$-simplex (possibly touching the boundary with some of its faces of dimension $<d-1$) in another complex. Now, since a $d$-simplex represents a $d$-ball, removing these simplices results into removing balls inside the corresponding manifolds. Furthermore, connecting the hanging pair of edges of the coloured graph obtained by deleting these two vertices precisely corresponds to gluing the created boundary $d$-spheres together. Taking the two vertices of different types ensures that the gluing map is orientation-reversing.
\end{proof}

The figure below shows two examples of the previous theorem. Figure (a) shows the boundary connected sum of two disks, which is again a disk, and figure (b) shows an example of the (internal) connected sum of two disks, which is homeomorphic to the cylinder $S^{1}\times [0,1]$, i.e.~the unique (up to homeomorphism) surface with genus zero and two boundary components.

\begin{figure}[H]
    \centering
    \includegraphics[scale=1.1]{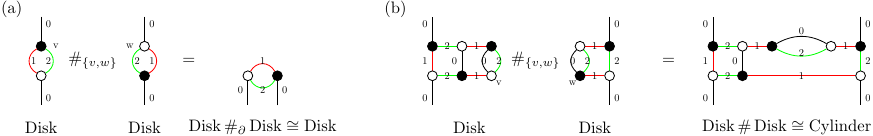}
    \caption{Two examples of graph-connected sums of open $(2+1)$-coloured graphs representing the disk and their geometric realizations according to Theorem \ref{ThmConSum}. \label{FigConSum}}
\end{figure}

The fact that these graphs indeed describe disks and cylinders can be seen by explicitly calculating their Euler characteristic as well as the number of boundary components. Let us collect two immediate consequences of the theorem above, concerning the mixed case of the graph-connected sum of a closed graph with an open graph:

\begin{Corollary}
    \label{CorConSum}
    Let $\mathcal{G}_{1}\in\overline{\mathfrak{G}}_{d}$ be a closed $(d+1)$-coloured graph representing a manifold $\mathcal{M}_{1}$ and $\mathcal{G}_{2}\in\mathfrak{G}_{d}$ be an open $(d+1)$-coloured graph representing a manifold $\mathcal{M}_{2}$. Furthermore, let $v$ be a vertex of $\mathcal{G}_{1}$ and $w$ be an internal vertex of $\mathcal{G}_{2}$. Then:
    \begin{itemize}
        \item[(1)]If $w$ is an internal vertex, which does not admit an adjacent external leg, then $\mathcal{G}_{1}\#_{\{v,w\}}\mathcal{G}_{2}$ represents the oriented internal connected sum $M_{1}\# M_{2}$.
        \item[(2)]If $\mathcal{M}_{1}\cong S^{d}$ and if $w$ is an internal vertex, which admits an adjacent external leg, then $\mathcal{G}_{1}\#_{\{v,w\}}\mathcal{G}_{2}$ represents the manifold $\mathcal{M}_{2}$.
    \end{itemize}
\end{Corollary}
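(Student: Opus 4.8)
The plan is to obtain both statements as immediate consequences of Theorem \ref{ThmConSum}, using the single observation that a closed coloured graph $\mathcal{G}_1\in\overline{\mathfrak{G}}_d$ --- viewed as an element of $\mathfrak{G}_d$, since $\overline{\mathfrak{G}}_d\subset\mathfrak{G}_d$ --- has empty boundary graph, so that \emph{every} vertex $v$ of $\mathcal{G}_1$ is an internal vertex admitting no adjacent external leg and, moreover, is \emph{strictly internal} in the sense of the definition preceding Theorem \ref{ThmConSum}: all $d$-bubbles of $\mathcal{G}_1$ are closed because $\partial\mathcal{G}_1$ is empty, hence in particular all $d$-bubbles through $v$ are closed. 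Once this is recorded, each part of the corollary falls into exactly one of the cases of Theorem \ref{ThmConSum}.

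For part (1), the vertex $w$ is an internal vertex of $\mathcal{G}_2$ with no adjacent external leg, and by the observation above $v$ is a strictly internal vertex of $\mathcal{G}_1$ with no adjacent external leg. These are precisely the hypotheses of case (2) of Theorem \ref{ThmConSum} (both vertices lack an adjacent external leg and at least one of them --- here $v$ --- is strictly internal), so one concludes directly that $\mathcal{G}_1\#_{\{v,w\}}\mathcal{G}_2$ represents the oriented internal connected sum $\mathcal{M}_1\#\mathcal{M}_2$.

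For part (2), $w$ admits an adjacent external leg while $v$, being a vertex of the closed graph $\mathcal{G}_1$, is strictly internal; this is exactly case (3) of Theorem \ref{ThmConSum}, which gives that $\mathcal{G}_1\#_{\{v,w\}}\mathcal{G}_2$ represents $(\mathcal{M}_1\#B^d)\#_\partial\mathcal{M}_2$. It then remains to simplify this under the hypothesis $\mathcal{M}_1\cong S^d$, and for this I would invoke the two neutral-element identities recalled in Appendix \ref{Subsec:ConSum}: the $d$-sphere is the neutral element of the internal connected sum, so $\mathcal{M}_1\#B^d\cong S^d\#B^d\cong B^d$, and the closed $d$-ball is the neutral element of the boundary connected sum, so $B^d\#_\partial\mathcal{M}_2\cong\mathcal{M}_2$. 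Composing these identities yields $(\mathcal{M}_1\#B^d)\#_\partial\mathcal{M}_2\cong\mathcal{M}_2$, which is the claim.

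Since the whole argument is just bookkeeping of which vertices are (strictly) internal together with the two standard identities for $S^d$ and $B^d$, there is essentially no analytic obstacle; the one point deserving a sentence of justification is the claim that a vertex of a closed graph is strictly internal, which as noted is immediate from $\partial\mathcal{G}_1=\emptyset$. I would also remark, for completeness, that --- as in the definition of the graph connected sum --- one takes $v$ and $w$ of opposite types so that the gluing map in Theorem \ref{ThmConSum} is orientation-reversing and the oriented connected sums above are well-defined, although this orientation data is irrelevant for part (2), where the answer is simply $\mathcal{M}_2$.
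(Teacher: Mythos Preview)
Your proof is correct and follows essentially the same approach as the paper: both parts are reduced to Theorem \ref{ThmConSum} via the observation that every vertex of a closed graph is strictly internal, and part (2) is finished using the neutral-element identities $S^d\# B^d\cong B^d$ and $B^d\#_\partial\mathcal{M}_2\cong\mathcal{M}_2$. Your write-up is slightly more detailed in justifying strict internality and in noting the orientation convention, but the argument is the same.
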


\begin{proof}
    Claim (1) follows directly from Theorem \ref{ThmConSum}(2), since in a closed graph every vertex is strictly internal. For claim (2), recall that the sphere is the neutral element of $\#$ whereas the ball is the neutral element of $\#_{\partial}$ and hence, by \ref{ThmConSum}(3), $\mathcal{G}_{1}\#_{\{v,w\}}\mathcal{G}_{2}$ represents the manifold $(S^{d}\# B^{d})\#_{\partial}\mathcal{M}_{2}\cong B^{d}\#_{\partial}\mathcal{M}_{2}\cong\mathcal{M}_{2}$.
\end{proof}
\section{Construction of Graphs Representing the Solid Torus}
\label{AppendixC:SolidTorusGraphs}
The goal of this section is to construct open $(3+1)$-coloured graphs representing the solid torus $D^{2}\times S^{1}$, where $D^{2}$ denotes the closed $2$-ball (=disk). In general, a $d$-dimensional simplicial complex $\Delta$ representing a manifold with boundary admits a coloured graph contained in $\mathfrak{G}_{d}$ representing it if and only if has the following two properties:\footnote{For a simplicial complex $\Delta$ triangulating a \textit{pseudo}manifold, we have to assume in addition that the (disjoint) star of every vertex is strongly-connected, since otherwise, it can happen that the complex $\Delta_{\mathcal{G}(\Delta)}$ obtained from the coloured graph $\mathcal{G}(\Delta)$ dual to $\Delta$ \textit{does not coincide} with the original complex $\Delta$, because we loose some information regarding possible pinching effects. As an example, take the complex representing the pinched torus drawn in figure \ref{MultRes} (with the vertices $v$ and $w$ identified). \cite[p.198/99]{CasaliGrasselli89}.}

\begin{itemize}
    \item[(1)]It admits a $(d+1)$-vertex-colouring, i.e.~a map $\gamma:\mathcal{V}\to\{0,\dots,d\}$, where $\mathcal{V}$ denotes the set of vertices of $\Delta$, which is injective on every $d$-simplex. Equivalently, this defines a proper face-colouring of the complex by assigning to each $(d-1)$-simplex of a $d$-simplex the colour of the vertex on the opposite side.
    \item[(2)]None of the vertices on the boundary complex has colour $0$. Equivalently in the face-coloured picture, this means that all the $(d-1)$-simplices of the boundary complex have the same colour $0$.
\end{itemize}

If such a complex represents an \textit{orientable} manifold, then it will automatically be bipartite in the sense that there are two types of $d$-simplices and only $d$-simplices of different types share a common $(d-1)$-face, because bipartiteness and orientability are equivalent for coloured graphs \cite{GagliardiCorbodant, GagliardiBoundaryGraph} and hence also for colourable complexes.
\bigskip

Let us now construct a family of open $(3+1)$-coloured graphs representing the solid torus. As a starting point, we consider the family of discretizations constructed in \cite{TorusPR2,ChristopheThesis}. For this, consider the following general discretization of the solid cylinder $D^{2}\times [0,1]$ (topologically a closed $3$-ball):

\begin{figure}[H]
    \centering
    \includegraphics[scale=0.8]{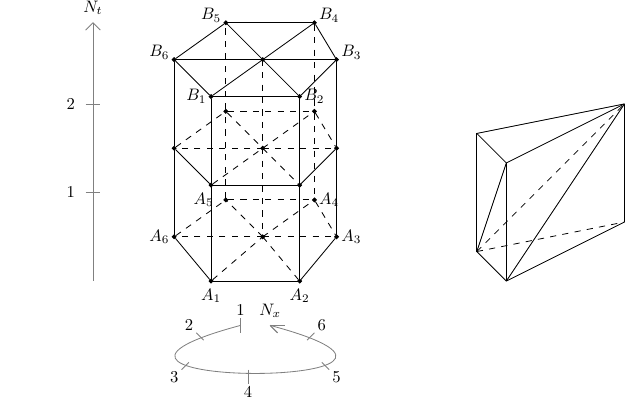}
    \caption{A cellular decomposition of the solid cylinder consisting of prisms characterized by the number of horizontal and vertical layers (l.h.s.). In order to obtain a triangulation, we have to discretize every prism by tetrahedra. The triangulation of a prism with the minimal number of tetrahedra is drawn on the right-hand side. \label{TrianSolidTorus}}
\end{figure}

The cellular complex is characterized by two natural numbers: The number of vertical layers of prisms denoted by $N_{t}\in\mathbb{N}$, as well as the number of horizontal layers, i.e.~the numbers of prisms in each horizontal slice, which we denote by $N_{x}\in\mathbb{N}$. In order to obtain a cellular decomposition of the solid torus, we have to identify the top and bottom of the complex drawn above. Note that there is some freedom in doing so, since the gluing can be done in several ways. Hence, we introduce the ``\textit{twist parameter}'' $N_{\gamma}\in\{0,\dots,N_{x}-1\}$ defined by the equation
\begin{align}
    A_{i}\doteq B_{i+N_\gamma}\hspace{1cm}\forall i\in\{1,\dots,N_{x}\},
\end{align}
where the indices in this equation have to be understood as being cyclic, e.g.~$N_{x}+i=i$, and where the ``\textit{twist angle}'' $\gamma$, corresponding to a discrete Dehn twist \cite{DehnTwist}, is defined by
\begin{align}
    \gamma:=2\pi\frac{N_{\gamma}}{N_{x}}.
\end{align}
To sum up, we have constructed general cellular decompositions of the solid torus characterised by the three numbers $N_{x},N_{t}\in\mathbb{N}$ and $N_{\gamma}\in\{0,\dots,N_{x}-1\}$.
\bigskip

In order to turn the cellular complex of the solid torus into a simplicial one, we have to triangulate each prism, as shown on the right-hand side of figure \ref{TrianSolidTorus} above. Now, it is clear that we cannot just triangulate each prism in the complex in precisely the same way, since if we glue two such prisms horizontally, the resulting complex is not proper vertex colourable. A closer analysis reveals that we need at least two vertical layers and at least two horizontal layers, where the prisms in each layer are triangulated symmetrically to each other. In other words, a colourable simplicial complex of the type introduced above consists of basic building blocks with four prisms, triangulated and coloured as shown in figure \ref{TorusDisc2} below.

\begin{figure}[H]
    \centering
    \includegraphics[scale=0.8]{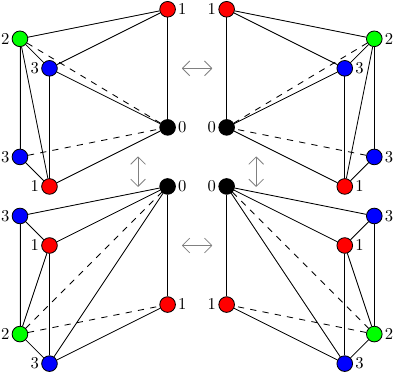}
    \caption{Basic building block of a coloured and bipartite simplicial complex of the solid cylinder consisting of four prisms. The gluing of prisms is indicated by the grey arrows.\label{TorusDisc2}}
\end{figure}

Using this triangulation, we finally arrive at general proper face-coloured and bipartite triangulations of the solid torus characterized by three numbers $N_{x},N_{t}\in 2\mathbb{N}$ and $N_{\gamma}\in\{0,2,4,\dots,N_{x}-2\}$. Note that, due to the colouring, only even twists are possible, since we are only allowed to glue basic building block consisting of two vertical and two horizontal layers together. Using the figure above, it is straightforward to draw the coloured graph corresponding to a basic building block, i.e.~see figure \ref{TorusDisc3} below.

\begin{figure}[H]
    \centering
    \includegraphics[scale=0.8]{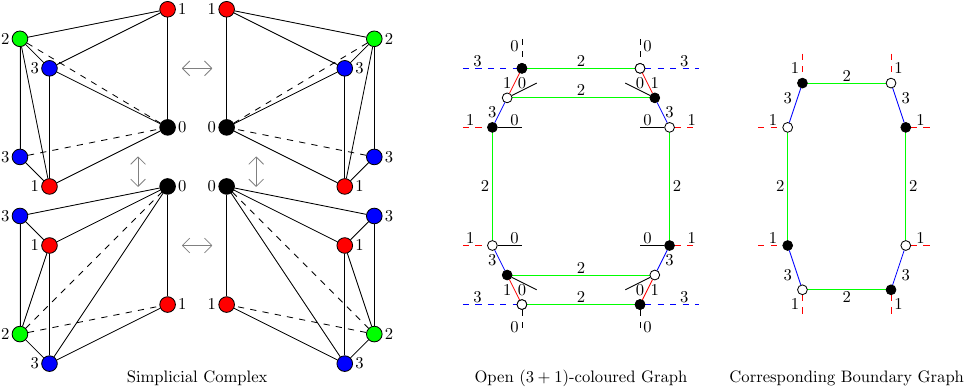}
    \caption{Basic building block as a (vertex)-coloured simplicial complex and as an open $(3+1)$-coloured graph with its corresponding boundary graph. Dotted edges are those to which we glue further building blocks. \label{TorusDisc3}}
\end{figure}

The dotted lines in the figure above are those edges, to which we glue further building blocks. Each building block has in total eight faces living on the boundary and hence, each part of the graph dual to such a building block has eight external legs of colour $0$. To sum up, we have constructed a family of open $(3+1)$-coloured graphs belonging to $\mathfrak{G}_{3}$, which are dual to the solid torus. Such a graph is labelled and uniquely determined by the three parameters $N_{x},N_{t}\in 2\mathbb{N}$ and $N_{\gamma}\in\{0,2,4,\dots,N_{x}-2\}$ and has the following general form:

\begin{figure}[H]
    \centering
    \includegraphics[scale=0.9]{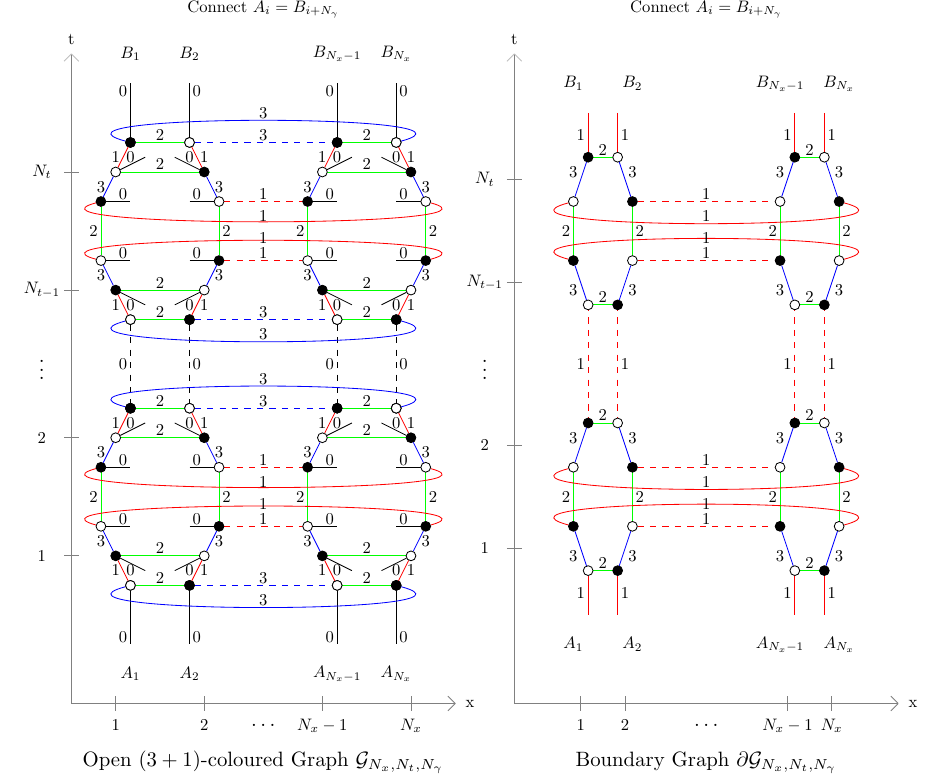}
    \caption{A family of open $(3+1)$-coloured graphs $\mathcal{G}_{N_{x},N_{t},N_{\gamma}}\in\mathfrak{G}_{3}$ labelled by three parameters $N_{x},N_{t}\in 2\mathbb{N}$ and $N_{\gamma}\in\{0,2,4,\dots,N_{x}-2\}$, each representing the solid torus, as well as their boundary graphs $\partial\mathcal{G}_{N_{x},N_{t},N_{\gamma}}\in\overline{\mathfrak{G}}_{2}$.}
\end{figure}

It is straightforward to count the number of $k$-bubbles of these graphs for $k\in\{0,1,2,3\}$, or equivalently, the number of $(3-k)$-simplices of the corresponding simplicial complex:

\begin{itemize}
    \item The number of $0$-bubbles, i.e.~internal vertices of the graph $\mathcal{G}_{N_{x},N_{t},N_{\gamma}}$, or equivalently, the number of tetrahedra of the complex $\Delta_{\mathcal{G}_{N_{x},N_{t},N_{\gamma}}}$, is given by $\mathcal{B}^{[0]}=3N_{x}N_{t}$.
    \item The number of $1$-bubbles, i.e.~edges of the graph $\mathcal{G}_{N_{x},N_{t},N_{\gamma}}$, or equivalently, the number of triangles of the complex $\Delta_{\mathcal{G}_{N_{x},N_{t},N_{\gamma}}}$, is given by $\mathcal{B}^{[1]}=7N_{x}N_{t}$ from which $2N_{x}N_{t}$ are external legs, i.e. triangles living purely on the boundary complex $\partial\Delta_{\mathcal{G}_{N_{x},N_{t},N_{\gamma}}}$.
    \item The number of $2$-bubbles, i.e.~faces of the graph $\mathcal{G}_{N_{x},N_{t},N_{\gamma}}$, or equivalently, the number of edges of the complex $\Delta_{\mathcal{G}_{N_{x},N_{t},N_{\gamma}}}$, is given by $\mathcal{B}^{[2]}=N_{t}+5N_{t}N_{x}$ from which $3N_{t}N_{x}$ are non-cyclic faces, i.e.~edges living purely on the boundary complex $\partial\Delta_{\mathcal{G}_{N_{x},N_{t},N_{\gamma}}}$.
    \item The number of $3$-bubbles of the graph $\mathcal{G}_{N_{x},N_{t},N_{\gamma}}$, or equivalently, the number of vertices of the complex $\Delta_{\mathcal{G}_{N_{x},N_{t},N_{\gamma}}}$, is given by $\mathcal{B}^{[3]}=N_{t}+N_{t}N_{x}$ from which $N_{t}N_{x}$ are open $3$-bubbles, i.e.~vertices living purely on the boundary complex $\partial\Delta_{\mathcal{G}_{N_{x},N_{t},N_{\gamma}}}$.
\end{itemize}

As a quick consistency check, let us calculate the Euler characteristic of the simplicial complex dual to the open graph $\mathcal{G}_{N_{x},N_{t},N_{\gamma}}$ as well as of the boundary complex, which gives
\begin{align}
    \chi(\Delta_{\mathcal{G}_{N_{x},N_{t},N_{\gamma}}})=&N_{t}+N_{t}N_{x}-(N_{t}+5N_{t}N_{x})+7N_{t}N_{x}-3N_{t}N_{x}=0\\\chi(\partial\Delta_{\mathcal{G}_{N_{x},N_{t},N_{\gamma}}})=&\chi(\Delta_{\partial\mathcal{G}_{N_{x},N_{t},N_{\gamma}}})=N_{t}N_{x}-3N_{t}N_{x}+2N_{t}N_{x}=0
\end{align}
as it should.
\bigskip

As an example of the family of graphs constructed above, let us consider the simplest graph, i.e.~the graph $\mathcal{G}_{N_{x},N_{t},N_{\gamma}}$ with $N_{x}=N_{t}=2$ and $N_{\gamma}=0$. The graph together with its boundary graph is drawn in figure \ref{SimpleTorusGraph} below.

\begin{figure}[H]
    \centering
    \includegraphics[scale=1.1]{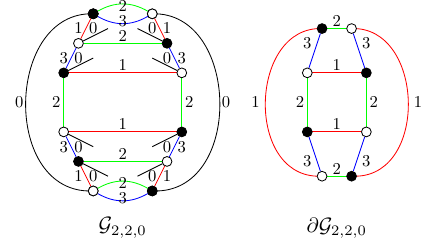}
    \caption{The solid torus graph with $N_{x}=N_{t}=2$ and $N_{\gamma}=0$ together with its boundary graph.\label{SimpleTorusGraph}}
\end{figure}

This graph is clearly not a core graph, since there are two $3$-bubbles of colour $023$, from which one represents the $2$-sphere. Hence, in order to obtain a core graph, we have to contract one internal proper $1$-dipole. There are in total four choices of edges of colour $1$, which can be contracted. The core graph obtained by contracting the upper left one looks as follows:

\begin{figure}[H]
    \centering
    \includegraphics[scale=1]{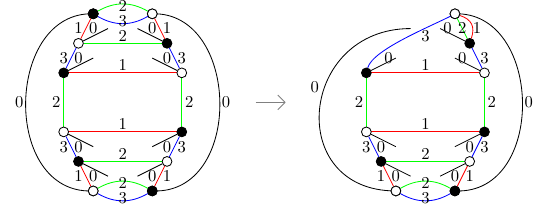}
    \caption{A core graph representing the solid torus obtained by an internal proper $1$-dipole move from $\mathcal{G}_{2,2,0}$.\label{SolidTorusCoreGraph}}
\end{figure}

The boundary graph of this core has in total eight vertices. As discussed in the main text (Subsection \ref{SubSec:Torus}), the simplest possible closed $(2+1)$-coloured graph representing a $2$-torus has only six vertices and looks as follows:

\begin{figure}[H]
    \centering
    \includegraphics[scale=1]{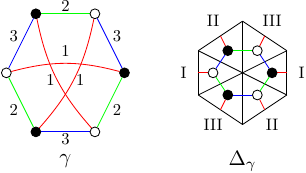}
    \caption{The smallest closed $(2+1)$-coloured graph representing the $2$-torus (l.h.s.) and its corresponding simplicial complex (r.h.s.), where the gluing of edges is as indicated by the Roman numbers.\label{TorusBoundary}}
\end{figure}

Note that this boundary graph can be obtained by performing an internal $1$-dipole move of colour $3$ within the boundary graph drawn in figure \ref{SimpleTorusGraph}. In order to obtain core graphs representing the solid torus with the simplest $2$-torus graph as its boundary graph, we have to perform a non-internal proper $1$-dipole move of colour $3$ in the core graph drawn in figure \ref{SolidTorusCoreGraph} above. There are in total four possibilities to do so. Taking the edge of colour $3$ on the top right, the result looks as follows:

\begin{figure}[H]
    \centering
    \includegraphics[scale=1]{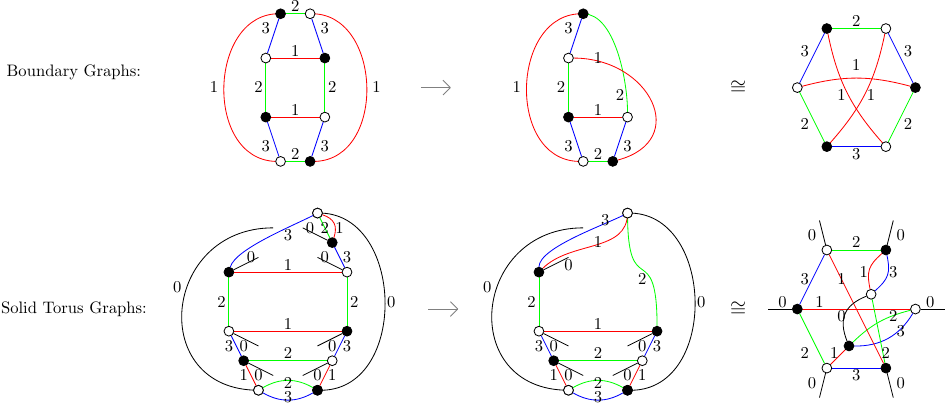}
    \caption{A core graph with the simplest torus boundary graph obtained by performing a non-internal proper $1$-dipole move in the core drawn on the r.h.s.~in figure \ref{SolidTorusCoreGraph}. The boundary graphs in each step are drawn in the top line.}
\end{figure}

Yet another core graph representing a solid torus with boundary given by the smallest $2$-torus graph can be obtained by performing a sequence of internal proper dipole moves within the core graph drawn above. An example is drawn in figure \ref{SolidTorus3} below.

\begin{figure}[H]
    \centering
    \includegraphics[clip,trim=0 1cm 0 0,scale=0.9]{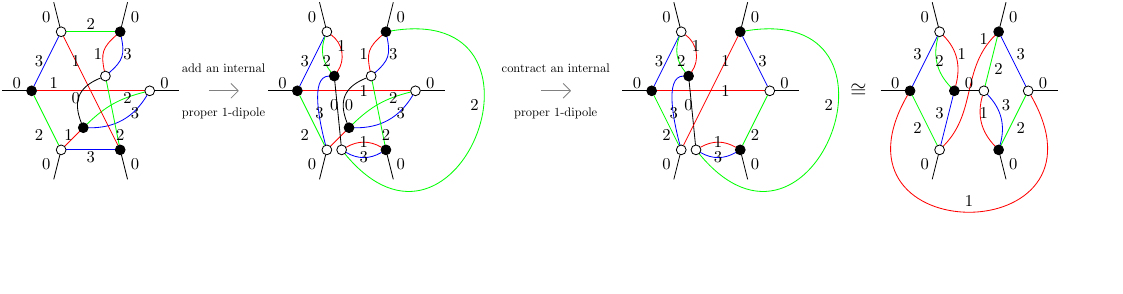}
    \vspace*{-0.2cm}
    \caption{A core graph obtained by firstly adding an internal proper $1$-dipole of colour $0$ and by cancelling an internal proper $1$-dipole of colour $0$ afterwards. The boundary graph is left untouched, since all the dipoles are internal ones.\label{SolidTorus3}}
\end{figure}

To sum up, we have found two core graphs representing the solid torus $D^{2}\times S^{1}$ (see figure \ref{SolidTorus4} below), whose boundary graphs are given by the simplest closed $(2+1)$-coloured graph representing a $2$-torus (figure \ref{TorusBoundary}):\footnote{The graph on the left-hand side can also be found as a special case in Example 11 of \cite{HandleBodiesCT}. However, the authors of this paper give no geometric construction of the corresponding complex, but rather argue that it has to represent the solid torus by its properties.}

\begin{figure}[H]
    \centering
    \includegraphics[clip,trim=0 0.5cm 0 0,scale=0.9]{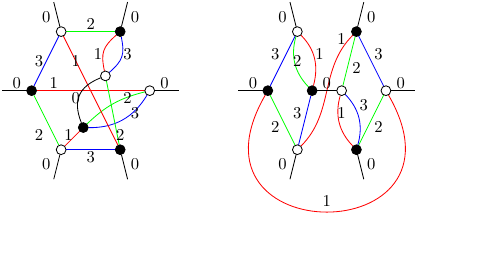}
    \vspace*{-0.2cm}
    \caption{Two core graphs representing the solid torus with boundary graph given by figure \ref{TorusBoundary}.\label{SolidTorus4}}
\end{figure}

Note also that there are no open $(3+1)$-coloured graphs contained in $\mathfrak{G}_{3}$ representing the solid torus with less than eight internal vertices, because the smallest torus boundary graph has six vertices and the smallest open graph matching this boundary graph, which is the graph obtained by adding an external leg to all the vertices of the boundary graph, is clearly a pseudomanifold (c.f.~Subsection \ref{subsec:SMG}). In the simplicial picture, this means that the smallest proper-colourable simplicial complex triangulating the solid torus with the property that all its boundary faces have the same colour consists of at least eight tetrahedra and six boundary faces. In other words, the two graphs drawn above are examples of graphs contained in $\mathfrak{G}_{3}$ representing the solid torus with the minimal possible number of vertices. Note that this observation also matches with Corollary \ref{Cor:ManifoldVertices}.
\bigskip

As a last remark, let us note that we can use the connected sum operation defined in Subsection \ref{Subsec:ConSum} as well as the solid torus graphs drawn above in order to obtain open $(3+1)$-coloured graphs representing a handlebody of genus $g$ whose boundary graph is given by the smallest closed $(2+1)$-coloured graph representing a genus $g$-surface, as shown in figure \ref{fig:HandleBody} below.

\begin{figure}[H]
    \centering
    \includegraphics[clip,trim=0 0cm 2cm 0,scale=0.9]{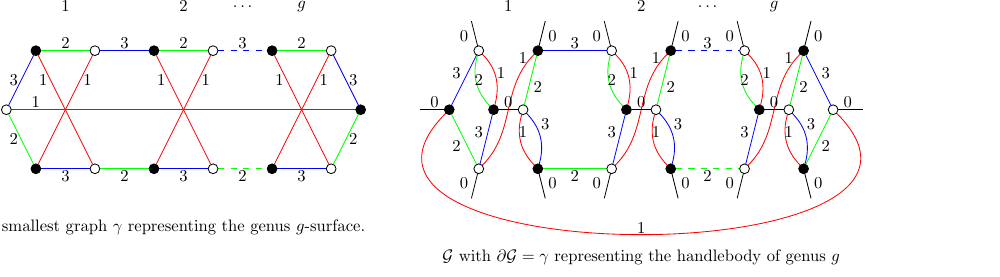}
    \vspace*{-0.2cm}
    \caption{The smallest closed $(2+1)$-coloured graph $\gamma$ representing a genus $g$-surface together with an open $(3+1)$-coloured graph $\mathcal{G}$ representing the handlebody of genus $g$.\label{fig:HandleBody}}
\end{figure}

The boundary graph $\gamma$ has in total $\vert\mathcal{V}_{\gamma}\vert=2+4g$ vertices and the corresponding handlebody graph has $\vert\mathcal{V}_{\mathcal{G},\mathrm{int}}\vert=\vert\mathcal{V}_{\gamma}\vert+2g$ internal vertices. According to Corollary \ref{Cor:ManifoldVertices}, we conclude that the graph $\mathcal{G}$ is an example of a graph representing a manifold with the minimal number of internal vertices among all the possible open $(3+1)$-coloured graphs in $\mathfrak{G}_{3}$ with boundary $\gamma$. Furthermore, this matches the result obtained in Example 11 of \cite{HandleBodiesCT} regarding the minimal possible number of internal vertices of graphs representing handlebodies.
\addcontentsline{toc}{section}{\hspace{15pt}References}
\bibliographystyle{ieeetr}
\bibliography{Bibliography.bib}
\end{document}